\def\today{\ifcase \month \or
   January \or February \or March \or April \or
   May \or June \or July \or August \or
   September \or October \or November \or December \fi
   \space\number\day , \number\year}
\begin{document}

\makeatletter
\@addtoreset{figure}{section}
\def\thefigure{\thesection.\@arabic\c@figure}
\def\fps@figure{h,t}
\@addtoreset{table}{bsection}

\def\thetable{\thesection.\@arabic\c@table}
\def\fps@table{h, t}
\@addtoreset{equation}{section}
\def\theequation{%\thesection.
\arabic{equation}}
\makeatother

\newcommand{\bfi}{\bfseries\itshape}

\newtheorem{theorem}{Theorem}
\newtheorem{acknowledgment}{Acknowledgment}
\newtheorem{claim}[theorem]{Claim}
\newtheorem{conclusion}[theorem]{Conclusion}
\newtheorem{condition}[theorem]{Condition}
\newtheorem{conjecture}[theorem]{Conjecture}
\newtheorem{construction}[theorem]{Construction}
\newtheorem{corollary}[theorem]{Corollary}
\newtheorem{criterion}[theorem]{Criterion}
\newtheorem{definition}[theorem]{Definition}
\newtheorem{example}[theorem]{Example}
\newtheorem{lemma}[theorem]{Lemma}
\newtheorem{notation}[theorem]{Notation}
\newtheorem{problem}[theorem]{Problem}
\newtheorem{proposition}[theorem]{Proposition}
\newtheorem{question}[theorem]{Question}
\newtheorem{remark}[theorem]{Remark}

\numberwithin{theorem}{section}
\numberwithin{equation}{section}

%%% Todo
\newcommand{\todo}[1]{\vspace{5 mm}\par \noindent
\framebox{\begin{minipage}[c]{0.95 \textwidth}%{0.85 \textwidth}
%\tt
#1 \end{minipage}}\vspace{5 mm}\par}
%%%

\newcommand{\1}{{\bf 1}}

\newcommand{\hotimes}{\widehat\otimes}

\newcommand{\Alt}{{\rm Alt}\,}
\newcommand{\Ci}{{\mathcal C}^\infty}
\newcommand{\comp}{\circ}
\newcommand{\D}{\text{\bf D}}
\newcommand{\ev}{{\rm ev}}
\newcommand{\id}{{\rm id}}
\newcommand{\ad}{{\rm ad}}
\newcommand{\de}{{\rm d}}
\newcommand{\dist}{{\rm dist}}
\newcommand{\Ad}{{\rm Ad}}
\newcommand{\ie}{{\rm i}}
\newcommand{\End}{{\rm End}\,}
\newcommand{\Gr}{{\rm Gr}_{\rm res}}
\newcommand{\Grr}{{\rm Gr}}
\newcommand{\Hom}{{\rm Hom}\,}
\newcommand{\Ker}{{\rm Ker}\,}
\newcommand{\Lie}{\text{\bf L}}
\newcommand{\lf}{{\rm l}}
\newcommand{\res}{{\rm res}}
\newcommand{\Ran}{{\rm Ran}\,}
\newcommand{\spann}{{\rm span}}
\newcommand{\Tr}{{\rm Tr}\,}

\newcommand{\G}{{\rm G}}
\newcommand{\U}{{\rm U}}
\newcommand{\Z}{{\rm Z}}
\newcommand{\VB}{{\rm VB}}

\newcommand{\Ac}{{\mathcal A}}
\newcommand{\Bc}{{\mathcal B}}
\newcommand{\Cc}{{\mathcal C}}
\newcommand{\Dc}{{\mathcal D}}
\newcommand{\Fc}{{\mathcal F}}
\newcommand{\Hc}{{\mathcal H}}
\newcommand{\Oc}{{\mathcal O}}
\newcommand{\Pc}{{\mathcal P}}
\newcommand{\Qc}{{\mathcal Q}}
\newcommand{\Wc}{{\mathcal W}}
\newcommand{\Xc}{{\mathcal X}}

\newcommand{\Bg}{{\mathfrak B}}
\newcommand{\Jg}{{\mathfrak J}}
\newcommand{\Lg}{{\mathfrak L}}
\newcommand{\Sg}{{\mathfrak S}}
\newcommand{\Xg}{{\mathfrak X}}
\newcommand{\ug}{{\mathfrak u}}
\newcommand{\g}{{\mathfrak g}}
\newcommand{\bg}{{\mathfrak b}}
\newcommand{\hg}{{\mathfrak h}}
\newcommand{\kg}{{\mathfrak k}}

\def\R{{\mathbb{R}}}
\def\C{{\mathbb{C}}}
\def\D{{\mathbb{D}}}
\def\N{{\mathbb{N}}}
\pagestyle{myheadings} \markboth{}{}

%\begin{document}

\date
\makeatletter
\title[Bruhat-Poisson structure of the restricted Grassmannian]{Banach Poisson--Lie groups and Bruhat-Poisson structure of the restricted Grassmannian}

\date{\today}
\author{A. B. Tumpach}

%\address{Laboratoire Paul Painlev\'e\\ Universit\'e de Lille\\Cit\'e scientifique\\59650 Villeneuve d'Ascq, \\France}
%\makeaddress
\makeatother
\maketitle

\begin{center}
\textit{Dedicated to T.S.~Ratiu for his 70's birthday and for the contagious enthusiasm he has in doing Mathematics.}
\end{center}
\vspace{.5cm}

\begin{center}
Laboratoire Paul Painlev\'e\\ Universit\'e de Lille\\Cit\'e scientifique\\59650 Villeneuve d'Ascq, \\France\\
\vspace{0.5cm}
\texttt{alice-barbora.tumpach@univ-lille.fr}\\
Orcid Identifier~: 0000-0002-7771-6758.

\end{center}

\begin{abstract}
The first part of this paper is devoted to the theory of Poisson--Lie groups in the Banach setting. Our starting point is the straightforward adaptation of the notion of Manin triples to the Banach context. The difference with the finite-dimensional case lies in the fact that a duality pairing between two non-reflexive Banach spaces is necessary weak  (as opposed to a strong pairing where one Banach space can be identified with the dual space of the other). The notion of generalized Banach Poisson manifolds introduced in this paper is compatible with weak duality pairings between the tangent space and a subspace of the dual. We investigate related notion like Banach Lie bialgebras and Banach Poisson--Lie groups, suitably generalized to the non-reflexive Banach context. 

The second part of the paper is devoted to the treatment of particular examples of Banach Poisson--Lie groups related to the restricted Grassmannian and the KdV hierarchy. More precisely, we construct a Banach Poisson--Lie group structure on the unitary restricted Banach Lie group which acts transitively on the restricted Grassmannian. A``dual'' Banach Lie group consisting of (a class of) upper triangular bounded operators admits also a Banach Poisson--Lie group structure of the same kind. We show that the restricted Grassmannian inherits a generalized Banach Poisson structure from the unitary Banach Lie group, called Bruhat-Poisson structure. Moreover the action of the triangular Banach Poisson--Lie group on it is a Poisson map. This action generates the KdV hierarchy, and its orbits are the Schubert cells of the restricted Grassmannian. 
%However, at the level of the groups, the dressing actions do not make really sense because of the lack of Iwasawa decomposition.
\\

\noindent{\it Keywords:} restricted Grassmannian; Bruhat decomposition; Poisson manifold;
coadjoint orbits; Poisson maps; Poisson--Lie groups, Lie bialgebras, Lie--Poisson spaces, Schatten ideals.

%\noindent{\it MSC 2000:} Primary 46T05; Secondary 22E67;53D17;22E65
\end{abstract}

\tableofcontents

\newpage

\section*{Introduction}
%
%\todo{rewrite intro\\
%In this Section, we show that the Banach Lie algebra $\mathfrak{g}$ of any generalized Banach Poisson--Lie group $G$ carries an natural Banach Lie bialgebra structure (see Theorem~\ref{PL_bi} below). Moreover, when the Poisson tensor is a section of $\Lambda^2TG\subset \Lambda^2T^{**}G$, then $\mathfrak{g}$ is a Banach Lie--Poisson space (see Theorem~\ref{PL_LP}).
%\\}
%
Poisson--Lie groups  and Lie bialgebras were  introduced by Drinfel'd in \cite{Dr83}. From this starting point, these notions and their relations to integrable systems were extensively studied. We refer the readers to the very well documented papers \cite{Ko04}, \cite{STS91}, \cite{Lu90} and the references therein. For a more algebraic approach to Poisson--Lie groups and their relation to quantum groups we refer to \cite{BHRS11}. For more details about dual pairs of Poisson manifolds we refer to \cite{We83}, applications to the study of equations coming from fluid dynamics were given in \cite{GBV12}, \cite{GBV15} and \cite{GBV152},  and applications to geometric quantization can be found in \cite{BW12}.
The motivation to write the present paper comes mainly from the reading of \cite{LW90}, \cite{SW85} and \cite{PS88}. In \cite{LW90}, the Bruhat-Poisson structure of finite-dimensional Grassmannians were studied. In \cite{SW85}, the relation between the infinite-dimensional restricted Grassmannian and equations of the KdV hierarchy was established. In \cite{PS88}, the Schubert cells of the restricted Grassmannian were shown to be homogeneous spaces with respect to the natural action of some triangular group, which appears to contain the group  that generates the KdV hierarchy in \cite{SW85}. It is therefore natural to ask the following questions~:
\begin{question} Does the restricted Grassmannian carry a Bruhat-Poisson structure? Can the KdV hierarchy be related to a Poisson action of a Poisson--Lie group on the restricted Grassmannian?
\end{question}
The difficulties to answer these questions come mainly from the following facts
\begin{itemize}
\item taking the upper triangular part of some infinite-dimensional matrix does not preserve the Banach space of bounded operators, nor the Banach space of trace-class operators.
\item Iwasawa decompositions may not exist in the context of infinite-dimensional Banach Lie groups (see however \cite{Bel06} and \cite{BN10} where some Iwasawa type factorisations where established).
\end{itemize}
Related papers on Poisson geometry in the infinite-dimensional setting are  \cite{DGR15},  \cite{NST14}, \cite{OR03} and \cite{Z94} (see Section~3). 
Let us mention that a hierarchy of commuting Hamiltonian equations related to the restricted Grassmannian was described in \cite{GO10}. In the aforementionned paper, the method of F.~Magri was used to generate the integrals of motions. It would be interesting to explore the link between equations studied in \cite{GO10} and the Bruhat-Poisson structure of the restricted Grassmannian introduced in the present paper. Some integrable systems on subspaces of Hilbert-Schmidt operators were also introduced in \cite{DO11}. There, the coinduction method suggested in \cite{OR08} was used to construct
Banach Lie--Poisson spaces obtained from the ideal of real Hilbert-Schmidt operators, and Hamiltonian systems related to the $k$-diagonal Toda lattice were presented. %(see also \cite{OR07} for related topics).
Last but not least, the relation between the Bruhat-Poisson structure on the restricted Grassmannian constructed in the present paper and the Poisson--Lie group of Pseudo-Differential symbols considered in \cite{KZ95} in relation to the Korteweg-de Vries hierarchy needs further study, and the link with the Poisson--Lie Grassmannian introduced in \cite{Z94} has to be clarified.

The present paper just approaches some aspects of the theory of Banach Poisson--Lie groups, and a more systematic study of the  infinite-dimensional theory would be interesting. It is written to be as self-contained as possible, and we hope that our exposition enables functional-analysts, geometers and physicists to read it. However the notions of Banach manifold and fiber bundles over Banach manifolds will not be recalled and we refer the readers to \cite{La01} for more introductory exposition. 

The paper is organized as follows. 
%In Section~2, we introduce notation used in the present paper. 
Part~1 is devoted to the general theory of Banach Poisson--Lie groups and related structures.
The exposition goes in the opposite direction of the usual exposition in the finite-dimensional setting, where the notion of finite-dimensional Poisson--Lie groups is introduced first, followed by the notion of Lie bialgebra (which is the structure that a Lie algebra of a Poisson--Lie group inherits), and at last the notion of Manin triples. Here we start with the notion of Banach Manin triples, since it is a notion of linear algebra that is easy to adapted to the Banach context, and which provides a good entry point into the theory of (Banach) Poisson--Lie groups.  This point of view allows us to introduce little by little notation and notions that are fundamental for the paper~:  the notion of duality pairing is recalled in  Section~1.1, the notion of coadjoint action on bounded multilinear maps on subspaces of the dual is defined in Section~2.3, and the notion of $1$-cocycles on a Banach Lie group or a Banach Lie algebra is  explained in Section~2.5. Generalized Banach Poisson manifolds are defined in Section~3.1. In Section~3.2 we show that weak symplectic Banach manifolds are examples of generalized Banach Poisson manifolds. In Section~3.3, we adapt the notion of Banach Lie--Poisson spaces introduced in \cite{OR03} to the case of an arbitrary duality pairing between two Banach Lie algebras, and show that they are generalized Banach Poisson manifolds (Theorem~\ref{4.2}). The notion of Banach Lie bialgebras is introduced in Section~4, and its relation to the notion of Banach Manin triples is given by the following Theorem~:
\begin{theorem}[Theorem \ref{bialgebra_to_manin}]
Consider two Banach Lie algebras 
 $\left(\mathfrak{g}_+, [\cdot,\cdot]_{\mathfrak{g}_+}\right)$ and  $\left(\mathfrak{g}_-, [\cdot,\cdot]_{\mathfrak{g}_-}\right)$ in duality.  Denote by $\mathfrak{g}$ the Banach space $\mathfrak{g} = \mathfrak{g}_+\oplus \mathfrak{g}_-$ with norm $\|\cdot\|_{\mathfrak{g}} = \|\cdot\|_{\mathfrak{g}_+}+\|\cdot\|_{\mathfrak{g}_-}$. The following assertions are equivalent.
\begin{itemize}
\item[(1)] $\mathfrak{g}_+$ is a Banach Lie--Poisson space and a Banach Lie bialgebra with respect to $\mathfrak{g}_-$; 
\item[(2)] $(\mathfrak{g}, \mathfrak{g}_+, \mathfrak{g}_-)$ is a Banach Manin triple for the non-degenerate symmetric bilinear map given by
$$
\begin{array}{lcll}
\langle\cdot,\cdot\rangle_{\mathfrak{g}}~: & \mathfrak{g}\times\mathfrak{g}&\rightarrow & \mathbb{K}\\
& (x,\alpha)\times (y, \beta) & \mapsto & \langle x, \beta\rangle_{\mathfrak{g}_+, \mathfrak{g}_-} + \langle y, \alpha\rangle_{\mathfrak{g}_+, \mathfrak{g}_-}.
\end{array}
$$
\end{itemize}
\end{theorem}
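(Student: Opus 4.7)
The plan is to carry out the standard Drinfeld/Manin double construction, but with bookkeeping adapted to the weak duality pairing that is the main novelty of the Banach setting. On the vector space $\mathfrak{g}=\mathfrak{g}_+\oplus\mathfrak{g}_-$ I want to define a candidate bracket by the usual formula
\begin{equation*}
[(x,\alpha),(y,\beta)]_{\mathfrak{g}} \;=\; \bigl([x,y]_{\mathfrak{g}_+}+\mathrm{ad}^{*}_{\alpha}y-\mathrm{ad}^{*}_{\beta}x,\; [\alpha,\beta]_{\mathfrak{g}_-}+\mathrm{ad}^{*}_{x}\beta-\mathrm{ad}^{*}_{y}\alpha\bigr),
\end{equation*}
where the coadjoint actions are the ones defined in Section~2.3 using the weak pairing $\langle\cdot,\cdot\rangle_{\mathfrak{g}_+,\mathfrak{g}_-}$. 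Once this bracket makes sense as a bounded bilinear map on $\mathfrak{g}$, checking that $\mathfrak{g}_\pm$ are Lie subalgebras and are totally isotropic for $\langle\cdot,\cdot\rangle_{\mathfrak{g}}$ (which is clearly symmetric, bounded and non-degenerate from its definition) is immediate, and the whole content of the theorem is concentrated in two points: boundedness/well-definedness of the mixed terms, and the Jacobi identity together with $\mathrm{ad}$-invariance of $\langle\cdot,\cdot\rangle_{\mathfrak{g}}$.

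For the direction (1)$\Rightarrow$(2), I would first use that $\mathfrak{g}_+$ is a Banach Lie--Poisson space with respect to $\mathfrak{g}_-$ to recover a Banach Lie algebra structure $[\cdot,\cdot]_{\mathfrak{g}_-}$ on $\mathfrak{g}_-$ (this is essentially the content of Theorem~\ref{4.2} read in the opposite direction), and to ensure that $\mathrm{ad}^{*}_{\alpha}:\mathfrak{g}_+\to\mathfrak{g}_+$ is a bounded derivation rather than just a map into the algebraic dual. Symmetrically, the fact that $\mathfrak{g}_+$ is a Banach Lie bialgebra with respect to $\mathfrak{g}_-$ gives that the cobracket is a $1$-cocycle (Section~2.5) and hence that $\mathrm{ad}^{*}_{x}:\mathfrak{g}_-\to\mathfrak{g}_-$ is bounded, so all four terms in the formula above land in the correct summand and the bracket is a bounded bilinear map. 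The $\mathrm{ad}$-invariance of $\langle\cdot,\cdot\rangle_{\mathfrak{g}}$ then reduces to the defining identities of $\mathrm{ad}^{*}$. The Jacobi identity is the one place where a real calculation is required: after splitting $[[\cdot,\cdot],\cdot]_{\mathfrak{g}}$ by components, the pure $\mathfrak{g}_+$ and pure $\mathfrak{g}_-$ pieces follow from the two ordinary Jacobi identities, while the mixed pieces collapse precisely to the $1$-cocycle condition of the Lie bialgebra, dualized through the pairing.

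For the converse (2)$\Rightarrow$(1), starting from the Manin triple structure I would first remark that the $\mathrm{ad}$-invariance of $\langle\cdot,\cdot\rangle_{\mathfrak{g}}$ together with the isotropy of $\mathfrak{g}_\pm$ forces the mixed components of $[x,\alpha]_{\mathfrak{g}}$ (for $x\in\mathfrak{g}_+$, $\alpha\in\mathfrak{g}_-$) to coincide with $\mathrm{ad}^{*}_{\alpha}x$ and $-\mathrm{ad}^{*}_{x}\alpha$, so the formula above is not a choice but a consequence. Boundedness of these mixed components as maps into $\mathfrak{g}_+$ and $\mathfrak{g}_-$ respectively provides exactly the boundedness of the coadjoint actions needed for $\mathfrak{g}_+$ to be a Banach Lie--Poisson space in the sense of Section~3.3; the cocycle identity that makes $\mathfrak{g}_+$ a Banach Lie bialgebra is then read off from the Jacobi identity of $[\cdot,\cdot]_{\mathfrak{g}}$ applied to two elements in one summand and one in the other, after pairing with a test element and using ad-invariance of $\langle\cdot,\cdot\rangle_{\mathfrak{g}}$.

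The main obstacle I expect is purely analytic rather than algebraic: in the non-reflexive Banach situation one cannot freely identify $\mathfrak{g}_+^{*}$ with $\mathfrak{g}_-$, so the well-definedness of the expressions $\mathrm{ad}^{*}_{\alpha}y\in\mathfrak{g}_+$ and $\mathrm{ad}^{*}_{x}\beta\in\mathfrak{g}_-$ is a genuine hypothesis hidden in the definitions of Banach Lie--Poisson space and Banach Lie bialgebra. I would therefore state up front exactly which boundedness conditions are being invoked (as established in Sections~2.3, 2.5 and 3.3) and verify that the candidate bracket on $\mathfrak{g}$ is a bounded bilinear map with respect to the norm $\|\cdot\|_{\mathfrak{g}_+}+\|\cdot\|_{\mathfrak{g}_-}$ before launching into the algebraic verification of Jacobi and invariance; once that bookkeeping is in place, the algebraic identities follow exactly as in the finite-dimensional Drinfeld argument.
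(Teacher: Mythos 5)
Your overall plan coincides with the paper's proof: for (1)$\Rightarrow$(2) one equips $\mathfrak{g}_+\oplus\mathfrak{g}_-$ with the double bracket, gets continuity of the four terms from the two hypotheses (the Lie--Poisson hypothesis gives that $\ad^*_{\alpha}$ maps $\mathfrak{g}_+$ to $\mathfrak{g}_+$ continuously, condition (1) in the definition of a Banach Lie bialgebra gives that $\ad^*_{x}$ maps $\mathfrak{g}_-$ to $\mathfrak{g}_-$ continuously), checks invariance of the pairing, and reduces the Jacobi identity, component by component and after pairing with test elements and using non-degeneracy, to the cocycle condition plus the Jacobi identities of $\mathfrak{g}_\pm$; the converse is exactly Theorem~\ref{Manin_to_bialgebra}, where the mixed components of the bracket are forced by invariance and isotropy. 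Two small remarks on your bookkeeping: the Lie algebra structure on $\mathfrak{g}_-$ is part of the hypothesis (``two Banach Lie algebras in duality''), so nothing has to be ``recovered'' from Theorem~\ref{4.2}; and not all mixed pieces of Jacobi collapse to the cocycle condition --- only one projection does, the remaining ones being deduced from the factor Jacobi identities through the pairing, as in equations \eqref{jacobi1-} and \eqref{jacobi2-} of the paper.

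There is, however, one concrete step that fails as written: the signs in your candidate bracket are incompatible with the convention you invoke. You declare that $\ad^*$ is the coadjoint map of Section~2.3, for which $\langle y,\ad^*_x\beta\rangle_{\mathfrak{g}_+,\mathfrak{g}_-}=\langle [x,y]_{\mathfrak{g}_+},\beta\rangle_{\mathfrak{g}_+,\mathfrak{g}_-}$ (the action being $-\ad^*$), but you take the mixed terms to be $+\ad^*_{\alpha}y-\ad^*_{\beta}x$ and $+\ad^*_{x}\beta-\ad^*_{y}\alpha$, so that $[x,\beta]_{\mathfrak{g}}=(-\ad^*_{\beta}x,\,\ad^*_{x}\beta)$. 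With this bracket the invariance you claim ``reduces to the defining identities of $\ad^*$'' does not hold: for $x,y\in\mathfrak{g}_+$ and $\beta\in\mathfrak{g}_-$ one finds $\langle [x,y]_{\mathfrak{g}},\beta\rangle_{\mathfrak{g}}+\langle y,[x,\beta]_{\mathfrak{g}}\rangle_{\mathfrak{g}}=2\,\langle [x,y]_{\mathfrak{g}_+},\beta\rangle_{\mathfrak{g}_+,\mathfrak{g}_-}$ instead of $0$ --- the same factor $2$ that appears in \eqref{notbi} of Example~\ref{not_Banach_Lie_bialgebra}. The correct mixed terms for the paper's convention are those of \eqref{brak2}, i.e.\ $[x,\beta]_{\mathfrak{g}}=(\ad^*_{\beta}x,\,-\ad^*_{x}\beta)$ as in \eqref{decbracket}; your formula is the standard one for the Lu--Weinstein convention $\langle \ad^*_x\beta,y\rangle=-\langle\beta,[x,y]\rangle$, which is the opposite of the convention fixed in Section~2.3 (the paper points out precisely this sign discrepancy in the remark following Theorem~\ref{Manin_to_bialgebra}). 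The slip is easily repaired --- your own observation in the converse direction, that the mixed components are dictated by invariance and isotropy, yields the right signs --- but as stated the invariance verification, and hence the subsequent reduction of Jacobi to the cocycle identity, would break down.
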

\hspace{-.45cm}Finally Section~5 is devoted to the notion of Banach Poisson--Lie groups. Basic examples are given in Section~5.3. In Section~5.4, we prove that the Lie algebra $\mathfrak{g}$ of a Banach Poisson--Lie group $(G, \mathbb{F}, \pi)$ carries a natural structure of Banach Lie bialgebra with respect to $\mathbb{F}_e$, and, with an additional condition on the Poisson tensor, is a Banach Lie--Poisson space with respect to $\mathbb{F}_e$.% with respect to $\mathbb{F}_e$.

The generalized notion of Banach Poisson manifolds introduced in Part~1 is adapted to the particular examples of Poisson--Lie groups we present in Part~2. Examples of Banach Poisson--Lie group in our sense include the restricted unitary group $\operatorname{U}_{\res}(\mathcal{H})$ and the restricted triangular group $\operatorname{B}_{\res}^+(\mathcal{H})$, which are modelled on non-reflexive Banach spaces (see Section~7.3). 
%We end Section~3 by proving that there is no Iwasawa Manin triple associated to the unitary restricted Banach Lie algebra $\mathfrak{u}_{\res}$ nor to its predual $\mathfrak{u}_{1,2}$. 
%This negative result is a consequence of the fact that the triangular truncation is unbounded on the Banach space of bounded operators over a Hilbert space, as well as on the Banach space of trace class operators. 
In Section~8, we show that the restricted Grassmannian viewed as homogeneous space under $\operatorname{U}_{\res}(\mathcal{H})$ inherits a Poisson structure in analogy to the finite-dimensional picture developped in \cite{LW90} and called Bruhat-Poisson structure. Moreover, the natural action of the Poisson--Lie group $\operatorname{B}_{\res}^+(\mathcal{H})$ on the restricted Grassmannian  is a Poisson map, and its orbits are the Schubert cells described in \cite{PS88}. 
These results are summarized in the following Theorem (see Theorem \ref{Poisson}, Theorem \ref{actionB}, and Theorem \ref{cells}).
\begin{theorem}
The restricted Grassmannian $$\operatorname{Gr}_{\res}(\mathcal{H}) =  \operatorname{U}_{\res}(\mathcal{H})/\operatorname{U}(\mathcal{H}_+)\times \operatorname{U}(\mathcal{H}_-) = \operatorname{GL}_{\res}(\mathcal{H})/\operatorname{P}_{\res}(\mathcal{H})$$ carries a natural Poisson structure such that~:
\begin{enumerate}
\item the canonical projection $p~: \operatorname{U}_{\res}(\mathcal{H}) \rightarrow \operatorname{Gr}_{\res}(\mathcal{H})$ is a Poisson map,
\item the natural action of $\operatorname{U}_{\res}(\mathcal{H})$ on $\operatorname{Gr}_{\res}(\mathcal{H})$ by left translations is a Poisson map,
\item the following right action of $\operatorname{B}_{\res}^{+}(\mathcal{H})$ on $\operatorname{Gr}_{\res}(\mathcal{H}) = \operatorname{GL}_{\res}(\mathcal{H})/\operatorname{P}_{\res}(\mathcal{H})$ is a  Poisson map~:
$$
\begin{array}{cll}
 \operatorname{Gr}_{\res}(\mathcal{H})\times \operatorname{B}_{\res}^{+}(\mathcal{H})& \rightarrow & \operatorname{Gr}_{\res}(\mathcal{H})\\
(g \operatorname{P}_{\res}(\mathcal{H}), b) & \mapsto & (b^{-1}g) \operatorname{P}_{\res}(\mathcal{H}).
\end{array}
$$
\item the symplectic leaves of $\operatorname{Gr}_{\res}(\mathcal{H})$ are the Schubert cells and are the orbits of $\operatorname{B}_{\res}^{+}(\mathcal{H})$.
\end{enumerate}
\end{theorem}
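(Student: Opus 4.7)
The strategy mirrors the Lu--Weinstein construction of \cite{LW90} for finite-dimensional Grassmannians, implemented in the Banach Poisson--Lie framework developed in Part~1. Sections~5.3 and~7.3 equip $\operatorname{U}_{\res}(\mathcal{H})$ and $\operatorname{B}_{\res}^+(\mathcal{H})$ with Banach Poisson--Lie structures arising from a Banach Manin triple $(\mathfrak{g}, \mathfrak{u}_{\res}, \mathfrak{b}_{\res}^+)$, and Theorem~\ref{bialgebra_to_manin} gives the matching Banach Lie bialgebra and Banach Lie--Poisson structures on $\mathfrak{u}_{\res}$ paired with $\mathfrak{b}_{\res}^+$. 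I would define the Poisson tensor on $\operatorname{Gr}_{\res}(\mathcal{H})$ as the pushforward $\pi_{\operatorname{Gr}} := p_*\pi_{\operatorname{U}_{\res}}$ of the Poisson tensor on $\operatorname{U}_{\res}(\mathcal{H})$.

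For $\pi_{\operatorname{Gr}}$ to be well-defined and for $p$ to be Poisson, one must verify that the isotropy subgroup $\operatorname{U}(\mathcal{H}_+) \times \operatorname{U}(\mathcal{H}_-)$ is coisotropic in $\operatorname{U}_{\res}(\mathcal{H})$. Under the Manin triple pairing of Theorem~\ref{bialgebra_to_manin} this reduces to showing that the annihilator of the block-diagonal subalgebra $\mathfrak{u}(\mathcal{H}_+) \oplus \mathfrak{u}(\mathcal{H}_-)$ inside $\mathfrak{b}_{\res}^+$ is a Lie subalgebra, which is immediate because this annihilator is precisely the strictly block-off-diagonal part of $\mathfrak{b}_{\res}^+$, closed under the bracket of upper-triangular operators. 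Claim~(1) then holds by construction. Claim~(2) follows because the action map $\operatorname{U}_{\res}(\mathcal{H}) \times \operatorname{Gr}_{\res}(\mathcal{H}) \to \operatorname{Gr}_{\res}(\mathcal{H})$ is the composition of $\operatorname{id} \times p$ (a Poisson submersion by claim~(1)) with the projection through $p$ of the multiplication $m$ on $\operatorname{U}_{\res}(\mathcal{H})$ (Poisson by the Poisson--Lie hypothesis). For claim~(3) I would identify the right action of $\operatorname{B}_{\res}^+(\mathcal{H})$ on $\operatorname{Gr}_{\res}(\mathcal{H})$ with the dressing action arising from the Manin triple structure, whose Poisson property is the standard consequence of the cocycle identity relating the action and the Poisson tensor, with Theorem~\ref{4.2} providing the infinitesimal Lie--Poisson model.

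The main obstacle is claim~(4). The Pressley--Segal description \cite{PS88} realises Schubert cells as $\operatorname{B}_{\res}^+(\mathcal{H})$-orbits, so claim~(3) immediately gives that each cell is contained in one symplectic leaf; the reverse inclusion requires that the rank of $\pi_{\operatorname{Gr}}$ at each point of a cell equal the cell dimension. In finite dimensions this follows directly from the Iwasawa decomposition $\operatorname{GL} = \operatorname{U} \cdot \operatorname{B}^+$, which the Introduction flags as unavailable in full generality in the present Banach setting. I would circumvent this by fixing in each Schubert cell its Pressley--Segal base point and invoking the local Iwasawa-type factorizations of \cite{Bel06}, \cite{BN10} in a neighbourhood of this base point, expressing $\pi_{\operatorname{Gr}}$ explicitly in local coordinates; the resulting rank computation, transported along the $\operatorname{B}_{\res}^+(\mathcal{H})$-orbit via claim~(3), identifies the symplectic leaves with the Schubert cells and closes the proof.
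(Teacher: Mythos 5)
Your construction rests on a premise that the paper itself refutes. You take the Poisson--Lie structures on $\operatorname{U}_{\res}(\mathcal{H})$ and $\operatorname{B}^{+}_{\res}(\mathcal{H})$ to arise from a Banach Manin triple $(\mathfrak{g},\mathfrak{u}_{\res},\mathfrak{b}^{+}_{\res})$, with Theorem~\ref{bialgebra_to_manin} supplying the matching Lie--Poisson and bialgebra structures on the pair $(\mathfrak{u}_{\res},\mathfrak{b}^{+}_{\res})$. But no invariant trace-type pairing even exists on $\mathfrak{u}_{\res}(\mathcal{H})\oplus\mathfrak{b}^{+}_{\res}(\mathcal{H})$ (neither factor is trace class), the available pairings are the weak ones of Proposition~\ref{duality_U_b} between $\mathfrak{u}_{\res}$ and $\mathfrak{b}^{\pm}_{1,2}$ (resp. $\mathfrak{b}^{\pm}_{\res}$ and $\mathfrak{u}_{1,2}$), and Theorems~\ref{ccl} and~\ref{cc2} show that for these pairings there is \emph{no} Banach Manin triple: the unboundedness of triangular truncation on trace-class operators makes the coadjoint action of $\mathfrak{b}^{\pm}_{\res}(\mathcal{H})$ on $\mathfrak{u}_{1,2}(\mathcal{H})$ (and of $\mathfrak{u}_{\res}(\mathcal{H})$ on $\mathfrak{b}^{+}_{1,2}(\mathcal{H})$) unbounded, so the Lie--Poisson hypothesis of Theorem~\ref{bialgebra_to_manin} fails. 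The actual Poisson--Lie structures (Theorems~\ref{BresPoisson} and~\ref{UresPoisson}) are carried by subbundles whose fibers are the quotient spaces $\operatorname{L}_{1,2}(\mathcal{H})/\mathfrak{b}^{+}_{1,2}(\mathcal{H})$ and $\operatorname{L}_{1,2}(\mathcal{H})/\mathfrak{u}_{1,2}(\mathcal{H})$, with tensors written through the Hilbert--Schmidt projections $p_{\mathfrak{u}_2^+}$, $p_{\mathfrak{b}_2^+}$ of the auxiliary Manin triple $(\operatorname{L}_2(\mathcal{H}),\mathfrak{u}_2(\mathcal{H}),\mathfrak{b}_2^{+}(\mathcal{H}))$, and both Jacobi and multiplicativity are verified by hand via Lemma~\ref{lemmaJacobi}. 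Consequently your reduction of claim~(1) to the statement that the annihilator of $\mathfrak{u}(\mathcal{H}_+)\oplus\mathfrak{u}(\mathcal{H}_-)$ inside $\mathfrak{b}^{+}_{\res}$ is a subalgebra, and your treatment of claim~(3) as a standard dressing action of a dual group in a double Lie group, are not available here. What the paper does instead is check directly that $\Pi_r^{\operatorname{U}_{\res}}(h)$ annihilates $\mathfrak{h}^0\subset\operatorname{L}_{1,2}(\mathcal{H})/\mathfrak{u}_{1,2}(\mathcal{H})$ for $h\in H$ (so $H$ is a Poisson--Lie subgroup), push the tensor down and prove $H$-invariance with the cocycle identity, and prove claim~(3) by an explicit computation that uses the factorization $b^{-1}g=g_1p_1$ with $g_1\in\operatorname{U}_{\res}(\mathcal{H})$, $p_1\in\operatorname{P}_{\res}(\mathcal{H})$, which comes from the two presentations of $\operatorname{Gr}_{\res}(\mathcal{H})$ as a homogeneous space, not from the local Iwasawa-type factorizations of \cite{Bel06}, \cite{BN10}.

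Your sketch of claim~(4) also misses the genuinely delicate point. In infinite dimensions the existence of symplectic leaves, i.e. the integrability of the characteristic distribution, is not automatic, and a pointwise rank computation in local coordinates near a base point does not by itself yield maximal integral manifolds; moreover the inclusion of a $\operatorname{B}^{+}_{\res}$-orbit in a leaf already presupposes that the leaves exist. The paper obtains integrability from the Hilbert-manifold structure of $\operatorname{Gr}_{\res}(\mathcal{H})$ via Theorem~6 of \cite{P12}, identifies the leaves with the $\operatorname{B}^{+}_{\res}(\mathcal{H})$-orbits as in Theorem~4.6 of \cite{LW90}, and then identifies these orbits with the Schubert cells by Proposition~\ref{CsB}, which rests on Propositions~7.3.3 and~7.4.1 of \cite{PS88} together with the observation that the diagonal part of an element of $\operatorname{B}^{\pm}_{\res}(\mathcal{H})$ acts trivially. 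Any repair of your argument for (4) would need to supply an integrability statement of this kind rather than a rank count.
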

\hspace{-.45cm}Let us mention that the infinite-dimensional abelian subgroup of $\operatorname{B}_{\res}^+(\mathcal{H})$ generated by the shift induces the KdV hierarchy as explained in \cite{SW85}. 
\newpage

\part{Banach Poisson--Lie groups and related structures}
%\section{The restricted Grassmannian as a Poisson homogeneous space}\label{section1}

This Part is devoted to the general theory of Banach Poisson--Lie groups that will be needed in Part~2.  Examples of Banach Poisson--Lie groups are given in Section~5.3. The Banach Lie bialgebra struture of the Lie algebra of a Banach Poisson--Lie group is constructed in Section~5.4.

\section{Manin triples in the infinite-dimensional setting}

We start in this Section with the easiest notion related to Poisson--Lie groups, namely the notion of Manin triples. It will allow us to set up some notation used in the present paper, and recall the notion of duality pairing, which is crucial for the following Sections. The unboundedness of the triangular truncation on the space of trace class operators and on the space of bounded operators (see Section~\ref{Triangular truncations of operators}) will have important consequences in Section~\ref{no_Manin}. Examples of Banach Manin triples coming from Iwasawa decompositions are given in Section~\ref{Iwasawap}. In particular, the Manin triple $\left(L_2(\mathcal{H}), \mathfrak{u}_2(\mathcal{H}), \mathfrak{b}_2^+(\mathcal{H})\right)$ of Hilbert-Schmidt operators will have a key r\^ole in the proofs of most Theorems in Part~2.

\subsection{Duality pairings of Banach spaces}
In this paper, we will consider real or complex Banach spaces, and we will denote by $\mathbb{K} \in \{\mathbb{R}, \mathbb{C}\}$ the scalar field.
The dual $\mathfrak{g}^*$ of a Banach space $\mathfrak{g}$ will mean the continuous dual, i.e. the Banach space of bounded linear forms with values in $\mathbb{K}$. In a lot of applications, the dual of a Banach space $\mathfrak{g}$ is to big to work with, and one uses proper subspaces of $\mathfrak{g}^*$. 
A duality pairing between two Banach spaces allows to identify one Banach space with a subspace of the dual of the other. Additional structures on one of the Banach spaces (like a Lie bracket for instance) give rise to additional structures on the other Banach space via duality.
\subsubsection{Definition of strong and weak duality pairings}
Let us recall the notion of duality pairing in the infinite-dimensional setting (see \cite{AMR88}, supplement 2.4.C). 
\begin{definition}
Let $\mathfrak{g}_{1}$ and $\mathfrak{g}_{2}$ be two normed vector spaces over the same field $\mathbb{K}\in\{\mathbb{R}, \mathbb{C}\}$, and let $$\langle\cdot,\cdot\rangle_{\mathfrak{g}_1, \mathfrak{g}_2}~:\mathfrak{g}_1\times\mathfrak{g}_2\rightarrow \mathbb{K}$$ be a continuous bilinear map. One says that the map $\langle\cdot,\cdot\rangle_{\mathfrak{g}_1, \mathfrak{g}_2}$ is a \textbf{duality pairing} between $\mathfrak{g}_1$ and $\mathfrak{g}_2$ if and only if it is \textbf{non-degenerate}, i.e. if the following two conditions hold~:
$$\left(\langle x, y \rangle_{\mathfrak{g}_1, \mathfrak{g}_2} = 0,~~ \forall x\in\mathfrak{g}_1 \right) \Rightarrow y = 0\quad \textrm{and}\quad
\left(\langle x, y \rangle_{\mathfrak{g}_1, \mathfrak{g}_2} = 0,~~ \forall y\in\mathfrak{g}_2\right) \Rightarrow x = 0.$$
\end{definition}
\begin{definition}
A duality pairing $\langle\cdot,\cdot\rangle_{\mathfrak{g}_1, \mathfrak{g}_2}$ is a  \textbf{strong duality pairing} between $\mathfrak{g}_1$ and $\mathfrak{g}_2$ if and only if the two continuous linear maps 
\begin{equation}\label{duality_pairing_strong}
\begin{array}{lrl} \mathfrak{g}_1& \longrightarrow & \mathfrak{g}_2^*\\ x & \longmapsto &  \langle x,\cdot\rangle_{\mathfrak{g}_1, \mathfrak{g}_2}\end{array}
\quad
\textrm{and}\quad
\begin{array}{lrl} \mathfrak{g}_2& \longrightarrow & \mathfrak{g}_1^*\\ y & \longmapsto &  \langle \cdot,y\rangle_{\mathfrak{g}_1, \mathfrak{g}_2}\end{array}
\end{equation}
%$x\mapsto \langle x,\cdot\rangle_{\mathfrak{g}_1, \mathfrak{g}_2}$ from $\mathfrak{g}_1$ to $\mathfrak{g}_2^*$, and $y\mapsto \langle \cdot,y\rangle_{\mathfrak{g}_1, 
%\mathfrak{g}_2} $ from $\mathfrak{g}_2 $ to $\mathfrak{g}_1^*$ 
are isomorphisms. In all other cases, the duality pairing is called \textbf{weak}.
\end{definition}
The non-degenerate condition of a duality pairing implies that the maps \eqref{duality_pairing_strong} are injective. In other words, the existence of a duality pairing between $\mathfrak{g}_1$ and $\mathfrak{g}_2$ allows to identify $\mathfrak{g}_1$ with a subspace (not necessary closed!) of the continuous dual $\mathfrak{g}_2^*$ of $\mathfrak{g}_2$, and $\mathfrak{g}_2$ with a subspace of $\mathfrak{g}_1^*$, wheras a strong duality pairing gives isomorphisms 
$\mathfrak{g}_1\simeq \mathfrak{g}_2^*$ and $\mathfrak{g}_{2}\simeq \mathfrak{g}_1^*$. Therefore the existence of a strong duality pairing between $\mathfrak g_1$ and $\mathfrak g_2$ implies that $\mathfrak g_1$ and $\mathfrak g_2$ are reflexive Banach spaces. Note that in the finite-dimensional case, a count of the dimensions shows that any duality pairing is a strong duality pairing.

\begin{remark}{\rm
By Hahn-Banach Theorem, the natural pairing between a Banach space $\mathfrak{g}$ and its continuous dual $\mathfrak{g}^*$ is a duality pairing.  It is a strong duality pairing in the reflexive case $\mathfrak{g}^{**} = \mathfrak{g}$. 
}
\end{remark}

\subsubsection{Notation and Examples}
In order to give examples of duality pairings, let us introduce some notation used in the present paper.  
%A tabular summarizing the notation used is given at the end of the paper.
The letter $\mathcal{H}$ will refer to a general complex separable infinite-dimensional Hilbert space. The inner product in $\mathcal{H}$ will be denoted by $\langle \cdot | \cdot \rangle~:\mathcal{H}\times\mathcal{H}\rightarrow \mathcal{H}$ and will be complex-linear in the second variable, and conjugate-linear in the first variable.

\subsubsection*{Banach algebra $\operatorname{L}_{\infty}(\mathcal H)$ of bounded operators over a Hilbert space $\mathcal{H}$} 
 Denote by $\operatorname{L}_{\infty}(\mathcal H)$ the space of bounded linear maps from $\mathcal{H}$ into itself. It is a Banach space for the norm of operators 
$\|A\|_\infty :=  \sup_{\|x\| \leq 1} \|A x\| $ and a Banach Lie algebra for the bracket given by the commutator of operators ~:
$[A, B]  = A\circ B - B\circ A$, for $A$, $B \in L_\infty(\mathcal{H})$. In the following, we will denote the composition $A\circ B$  of the operators $A$ and $B$ simply by $A B$.

\subsubsection*{Hilbert algebra $\operatorname{L}_2(\mathcal{H})$ of Hilbert-Schmidt operators} 
A bounded operator $A$ admits an adjoint $A^*$ which is the bounded linear operator defined by $\langle A^*x| y\rangle = \langle x| Ay\rangle$. A positive operator is a bounded operator such that $\langle \varphi| A \varphi\rangle\geq 0$ for any $\varphi\in\mathcal{H}$. By polarization, if $A$ is positive then $A^* = A$. The trace of a positive operator $A$ is defined as $$\Tr A := \sum_{n= 1}^{+\infty} \langle \varphi_n| A \varphi_n\rangle\in [0, +\infty],$$ where $\varphi_n$ is any orthonormal basis of $\mathcal{H}$ (the right hand side does not depend on the choice of orthonormal basis, see Theorem~2.1 in \cite{Sim79}). The Schatten class $\operatorname{L}_2(\mathcal{H})$ of Hilbert-Schmidt operators is the subspace of $\operatorname{L}_{\infty}(\mathcal H)$ consisting of bounded operators $A$ such that  $\|A\|_2 := \left(\Tr (A^*A)\right)^{\frac{1}{2}}$ is finite. It is a Banach Lie algebra  for $\|\cdot \|_2$ and for the bracket given by the commutator of operators. It is also an ideal of $\operatorname{L}_{\infty}(\mathcal H)$ in the sense that for any $A\in \operatorname{L}_2(\mathcal{H})$ and any $B\in \operatorname{L}_{\infty}(\mathcal{H})$, one has $A B \in \operatorname{L}_2(\mathcal{H})$ and $B A \in \operatorname{L}_2(\mathcal{H})$.

\subsubsection*{Banach algebra $\operatorname{L}_{1}(\mathcal{H})$ of trace-class operators} 
For  a bounded linear operator $A$, the square root of $A^*A$ is well defined, and denoted by $(A^*A)^{\frac{1}{2}}$ (see Theorem VI.9 in \cite{RS1}).
The Schatten class $\operatorname{L}_{1}(\mathcal{H})$ of trace class operators is the subspace of $\operatorname{L}_{\infty}(\mathcal H)$ consisting of bounded operators $A$ such that  $\|A\|_1 := \Tr (A^*A)^{\frac{1}{2}}$ is finite.  It is a Banach Lie algebra  for $\|\cdot \|_1$ and for the bracket given by the commutator of operators.  We recall that  for any $A \in \operatorname{L}_{1}(\mathcal{H})$ (not necessarly positive),  the trace of $A$ is defined as
$$
\Tr A := \sum_{n=1}^{\infty} \langle \varphi_n| A \varphi_n\rangle,
$$
where $\{\varphi_n\}$ is any orthonormal basis of $\mathcal{H}$ (the right hand side does not depend on the orthonormal basis, see Theorem~3.1 in \cite{Sim79}) and that we have
$$
|\Tr A | \leq \|A\|_1.
$$ 
 Moreover  $\operatorname{L}_{1}(\mathcal{H})$ is an ideal of $\operatorname{L}_{\infty}(\mathcal H)$, i.e. for any $A \in \operatorname{L}_{1}(\mathcal{H})$ and any $B \in \operatorname{L}_{\infty}(\mathcal{H})$, $A B \in \operatorname{L}_{1}(\mathcal{H})$ and $B A \in \operatorname{L}_{1}(\mathcal{H})$, and furthermore $\Tr AB = \Tr BA$. Finally for $A$ and $B$ in $\operatorname{L}_{2}(\mathcal{H})$, one has $A B \in \operatorname{L}_{1}(\mathcal{H})$, $B A \in \operatorname{L}_1(\mathcal{H})$, and $\Tr AB = \Tr BA$ (see Corollary 3.8 in \cite{Sim79}).
 
 \subsubsection*{Banach algebras $\operatorname{L}_{p}(\mathcal{H})$} 
For any $1<p<\infty$, the Schatten class $\operatorname{L}_{p}(\mathcal{H})$ is the subspace of $\operatorname{L}_{\infty}(\mathcal H)$ consisting of bounded operators $A$ such that  $$\|A\|_p := \left(\Tr (A^*A)^{\frac{p}{2}}\right)^{\frac{1}{p}}$$ is finite. It is a Banach algebra for the norm $\|\cdot\|_p$ and for the bracket given by the commutator of operators. Moreover $\operatorname{L}_{p}(\mathcal{H})$ is an ideal of $\operatorname{L}_{\infty}(\mathcal H)$~: for any $A \in \operatorname{L}_{p}(\mathcal{H})$ and any $B \in \operatorname{L}_{\infty}(\mathcal{H})$, $A B \in \operatorname{L}_{p}(\mathcal{H})$ and $B A \in \operatorname{L}_{p}(\mathcal{H})$.

\begin{remark}{\rm
For $1<p<2<q<\infty$, one has 
$$
\operatorname{L}_1(\mathcal{H})\hookrightarrow \operatorname{L}_p(\mathcal{H})\hookrightarrow \operatorname{L}_2(\mathcal{H})\hookrightarrow \operatorname{L}_q(\mathcal{H})\hookrightarrow \operatorname{L}_\infty(\mathcal{H}),
$$
where each injection is a continuous map between Banach spaces. In the following, we will repeatedly use these inclusions.
}
\end{remark}

%\newpage
Let us now give some examples of duality pairings.
\begin{example}\label{ex_duality}
{\rm 
The trace of the product of two operators $(A, B)\mapsto \Tr AB$ is a strong duality pairing between $\operatorname{L}_2(\mathcal{H})$ and itself.}
\end{example}
\begin{example}\label{exDualityL1L2}{\rm
Since $\operatorname{L}_1(\mathcal{H})$ is a dense subspace of $\operatorname{L}_2(\mathcal{H})$, one obtains a weak duality pairing between $\operatorname{L}_1(\mathcal{H})$ and  $\operatorname{L}_2(\mathcal{H})$ by considering the bilinear map  $(A, B)\mapsto \Tr AB$ with $A\in \operatorname{L}_1(\mathcal{H})$ and $B\in \operatorname{L}_{2}(\mathcal H)$.
}
\end{example}
\begin{example}{\rm
 Since the dual of $\operatorname{L}_1(\mathcal{H})$ can be identified with $\operatorname{L}_{\infty}(\mathcal H)$ using the trace, one has  a weak duality pairing between $\operatorname{L}_1(\mathcal{H})$ and $\operatorname{L}_{\infty}(\mathcal H)$ by considering the bilinear map $(A, B)\mapsto \Tr AB$ with $A\in \operatorname{L}_1(\mathcal{H})$ and $B\in \operatorname{L}_{\infty}(\mathcal H)$.  Note that the dual space of $\operatorname{L}_{\infty}(\mathcal H)$ stricktly contains $\operatorname{L}_1(\mathcal{H})$ as a closed subspace.
 }
 \end{example}
 \begin{example}\label{ex_duality2}{\rm
 For $1<p<\infty$, define $1<q<\infty$ by the relation $\frac{1}{p}+\frac{1}{q}=1$. For any $A \in \operatorname{L}_{p}(\mathcal{H})$ and any $B \in \operatorname{L}_{q}(\mathcal{H})$, $A B \in \operatorname{L}_{1}(\mathcal{H})$ and $B A \in \operatorname{L}_{1}(\mathcal{H})$ with
$$
\|AB\|_1\leq \|A\|_p\|B\|_q \quad\quad\textrm{  and  }\quad\quad \|BA\|_1\leq \|A\|_p\|B\|_q,
$$
(see Proposition~5, page~41 in \cite{RS2}) and furthermore $\Tr AB = \Tr BA$. Moreover the trace of the product of two operators $(A, B)\mapsto \Tr AB$ is a strong duality pairing between $\operatorname{L}_p(\mathcal{H})$ and $\operatorname{L}_q(\mathcal{H})$ and gives rise to the following identifications (see Proposition~7, page~43 in \cite{RS2} and Theorem~VI.26, page~212 in \cite{RS1})~:
$$
\left(\operatorname{L}_p(\mathcal{H})\right)^* \simeq \operatorname{L}_q(\mathcal{H})\quad\quad\textrm{ and }\quad\quad \left(\operatorname{L}_q(\mathcal{H})\right)^* \simeq \operatorname{L}_p(\mathcal{H})
$$
}
\end{example}

\subsection{Duals and injection of Banach spaces}
Suppose that $\mathfrak{h}$ is a Banach space that injects continuously into another Banach space $\mathfrak{g}$, i.e. one has a continuous injection $\iota~:\mathfrak{h}\hookrightarrow \mathfrak{g}$. Then one can consider two different dual spaces :  the dual space $\mathfrak{h}^*$ which consists of  linear forms on the Banach space $\mathfrak{h}$ which are continuous with respect to the operator norm associated to the Banach norm $\|\cdot\|_{\mathfrak{h}}$ on $\mathfrak{h}$, and the norm dual $\iota(\mathfrak{h})^*$ of the subspace $\iota(\mathfrak{h})\subset\mathfrak{g}$ endowed with the norm $\|\cdot\|_{\mathfrak{g}}$ of $\mathfrak{g}$, consisting of continuous linear forms on the normed vector space $(\iota(\mathfrak{h}), \|\cdot\|_{\mathfrak{g}})$. Note that, since $\mathbb{R}$ is complete, $\iota(\mathfrak{h})^*$ is complete even if $\iota(\mathfrak{h})$ is not closed in $\mathfrak{g}$ (see for instance \cite{Bre10} section~1.1). Let us compare these two duals : $\mathfrak{h}^*$ on one hand and $\iota(\mathfrak{h})^*$ on the other hand.
First note that one gets a well-defined map
$$
\begin{array}{llll}
\iota^*~:& \mathfrak{g}^*& \rightarrow &\mathfrak{h}^*\\
& f & \mapsto & f\circ \iota
\end{array}
$$
since $f\circ \iota $ is continuous for the operator norm induced by the norm of $\mathfrak{h}$ whenever $f$ is continuous for the operator norm induced by the norm on $\mathfrak{g}$.
Note that $\iota^*$ is surjective if and only if any continuous form on $\mathfrak{h}$ can be extended to a continuous form on $\mathfrak{g}$. On the other hand, $\iota^*$ is injective if and only if the only continuous form on $\mathfrak{g}$ that vanishes on $\iota(\mathfrak{h})$ is the zero form. 

%Below we will consider the two cases, where the surjectivity (resp. injectivity) of $\iota^*$ can be established.

 Suppose that the range of $\iota~:\mathfrak{h}\hookrightarrow \mathfrak{g}$ is closed. Then $\iota(\mathfrak{h})$ endowed with the norm of $\mathfrak{g}$ is a Banach space. It follows that $\iota$ is a continuous bijection from the Banach space $\mathfrak{h}$ onto the Banach space $\iota(\mathfrak{h})$, therefore by the open mapping theorem, it is an isomorphism of Banach spaces  (see for instance Corollary~2.7 in \cite{Bre10}). In this case, any continuous form on $\mathfrak{h}$ is continuous for the norm of $\mathfrak{g}$ i.e. one has $\mathfrak{h}^* = \iota(\mathfrak{h})^*$. By Hahn-Banach theorem, any continuous form on $\iota(\mathfrak{h})$ can be extended to a continuous form on $\mathfrak{g}$ with the same norm (see Corollary 1.2 in \cite{Bre10}). Therefore the dual map $\iota^*~:\mathfrak{g}^* \rightarrow \mathfrak{h}^*$ is surjective. Its kernel is the annihilator $\iota(\mathfrak{h})^0$ of $\iota(\mathfrak{h})$  and $\mathfrak{h}^*$ is isomorphic to the quotient space $\mathfrak{g}^*/\iota(\mathfrak{h})^0$. 

\begin{example}{\rm
The injection of the Banach space of compact operators $\mathcal{K}(\mathcal{H})$ on a separable Hilbert space $\mathcal{H}$ into the Banach space of bounded operators $\operatorname{L}_\infty(\mathcal{H})$ is closed. The dual map $\iota^*~: \operatorname{L}_\infty(\mathcal{H})^* \rightarrow \mathcal{K}(\mathcal{H})^* $ is surjective and $\mathcal{K}(\mathcal{H})^* $ can be identified with the space $\operatorname{L}_1(\mathcal{H})$ of trace class operators on $\mathcal{H}$ using the trace. Therefore $\operatorname{L}_1(\mathcal{H})$ is isomorphic to the quotient space $\operatorname{L}_\infty(\mathcal{H})^*/\mathcal{K}(\mathcal{H})^0$. 
}
\end{example}

 Suppose now  that the range of $\iota~:\mathfrak{h}\hookrightarrow \mathfrak{g}$ is dense in $\mathfrak{g}$. In this case, any continuous form on $\iota(\mathfrak{h})$ extends in a unique way to a continuous form on $\mathfrak{g}$ with the same norm i.e. $\iota(\mathfrak{h})^* = \mathfrak{g}^*$. The kernel of $\iota^*$ consists of continuous maps on $\mathfrak{g}$ that vanish on the dense subspace $\iota(\mathfrak{h})$, hence is reduced to $0$. In other words $\iota^*~:\mathfrak{g}^*\rightarrow \mathfrak{h}^*$ is injective (see also Corollary 1.8 in \cite{Bre10}).

\begin{example}{\rm
%In fact, one has $\operatorname{L}_\infty(\mathcal{H})^* = \operatorname{L}_1(\mathcal{H}) \oplus \mathcal{K}(\mathcal{H})^0$.
Consider the inclusion $\iota~:\operatorname{L}_1(\mathcal{H})\hookrightarrow \operatorname{L}_2(\mathcal{H})$ of the space of trace-class operators into the space of Hilbert-Schmidt operators on $\mathcal{H}$. Then the range of $\iota$ is dense. This leads to the injection $\iota^*~: \operatorname{L}_2(\mathcal{H})^*= \operatorname{L}_2(\mathcal{H}) \hookrightarrow \operatorname{L}_1(\mathcal{H})^* = \operatorname{L}_{\infty}(\mathcal{H})$.}
\end{example}

\subsection{Definition of Banach Manin triples}
The notion of Manin triple is a notion of linear algebra that can be adapted in a straightforward way to the Banach context.
\begin{definition}
A Banach Manin triple consists of a triple of Banach Lie algebras $(\mathfrak{g}, \mathfrak{g}_+, \mathfrak{g}_-)$ over a field $\mathbb{K}$ and a \textbf{non-degenerate symmetric bilinear} continuous map $\langle\cdot, \cdot\rangle_{\!\mathfrak{g}}$ on $\mathfrak{g}$ such that 
\begin{enumerate}
\item the bilinear map $\langle\cdot,\cdot\rangle_{\!\mathfrak{g}}$ is invariant with respect to the bracket $[\cdot, \cdot]_{\mathfrak{g}}$ of $\mathfrak{g}$, i.e.
\begin{equation}\label{invariance_bilinear_map}
\langle [x, y]_{\mathfrak{g}}, z\rangle_{\!\mathfrak{g}}  + \langle y, [x, z]_{\mathfrak{g}}\rangle_{\!\mathfrak{g}} = 0,~~\forall x, y, z \in \mathfrak{g};
\end{equation}
\item $\mathfrak{g} = \mathfrak{g}_+\oplus\mathfrak{g}_-$ as Banach spaces;
\item both $\mathfrak{g}_+$ and $\mathfrak{g}_-$ are Banach Lie subalgebras of $\mathfrak{g}$;
\item both $\mathfrak{g}_+$ and $\mathfrak{g}_-$ are isotropic with respect to the bilinear map $\langle\cdot,\cdot\rangle_{\!\mathfrak{g}}$.
\end{enumerate}
\end{definition}
\hspace{-0.45cm}Note that in the Banach context, it is important to ask for the continuity of the bilinear map $\langle\cdot, \cdot\rangle_{\!\mathfrak{g}}$, as well as for a decomposition $\mathfrak{g} = \mathfrak{g}_+\oplus\mathfrak{g}_-$ of $\mathfrak{g}$ into the sum of two closed Banach subspaces. Let us make some remarks which are simple consequences of the definition of a Manin triple.

\begin{remark}
{\rm 
 Given a Manin triple $(\mathfrak{g}, \mathfrak{g}_+, \mathfrak{g}_-)$, condition (2) implies that any continuous linear form $\alpha$ on $\mathfrak{g}$ decomposes in a continuous way as 
$$
\alpha = \alpha\circ p_{\mathfrak{g}_+} + \alpha\circ p_{\mathfrak{g}_-},
$$
where $p_{\mathfrak{g}_+}$ (resp. $p_{\mathfrak{g}_-}$) is the continuous projection onto $\mathfrak{g}_+$ (resp.  $\mathfrak{g}_-$) with respect to the decomposition $\mathfrak{g} = \mathfrak{g}_+\oplus\mathfrak{g}_-$. In other words, one has a decomposition of the continuous dual $\mathfrak{g}^*$ of $\mathfrak{g}$ as
$$
\mathfrak{g}^* = \mathfrak{g}_-^0 \oplus \mathfrak{g}_+^0,
$$
where $\mathfrak{g}_\pm^0$ is the annihilator of $\mathfrak{g}_{\pm}$, i.e.
$$
\mathfrak{g}_\pm^0 := \{\alpha\in\mathfrak{g}^*~: \alpha(x) = 0,~~~\forall x \in \mathfrak{g}_{\pm}\}.
$$
Moreover any continuous linear form $\beta$ on $\mathfrak{g}_+$ can be extended in a unique way to a continuous linear form on $\mathfrak{g}$ belonging to $\mathfrak{g}_-^0$ 
by $\beta\mapsto \beta\circ p_{+}$. It follows that one has an isomorphism 
$$
\mathfrak{g}_+^* \simeq \mathfrak{g}_-^0,$$ and similarly
$$
\mathfrak{g}_-^* \simeq \mathfrak{g}_+^0.
$$
}
\end{remark}
\begin{remark}{\rm
Given a Manin triple $(\mathfrak{g}, \mathfrak{g}_+, \mathfrak{g}_-)$ where $\langle\cdot,\cdot\rangle_{\mathfrak{g}}$ is a \textbf{strong} duality pairing, any continuous linear form on $\mathfrak{g}$ can be written as $\langle x, \cdot\rangle_{\mathfrak{g}}$ for some $x\in\mathfrak{g}$. 
In particular, for any subspace $\mathfrak{h}\subset\mathfrak{g}$, one has
$$
\mathfrak{h}^0 \simeq \mathfrak{h}^{\perp},
$$
where
$$
\mathfrak{h}^{\perp} := \{ x\in\mathfrak{g}~:\langle x, y\rangle_{\mathfrak{g}} = 0, ~~\forall y \in \mathfrak{h}\}.
$$
Moreover, any continuous linear form $\beta$ on $\mathfrak{g}_+$ can be represented as $\beta(x) = \langle x, y\rangle_{\mathfrak{g}}$ for a unique element $y\in \mathfrak{g}_-$. Therefore, in this case,
$$
\mathfrak{g}_- \simeq \mathfrak{g}^*_+
$$
and similarly
$$
\mathfrak{g}_+ \simeq \mathfrak{g}^*_-.
$$
}
\end{remark}

%\subsection{Example of Manin triple : Iwasawa Manin triple of Hilbert-Schmidt operators}\label{example_manin}
%Let us introduce some notation in order to be able to give a first example of Manin triple in the infinite-dimensional context.
\subsection{Triangular truncations of operators}\label{Triangular truncations of operators}
Endow the separable complex Hilbert space $\mathcal{H}$ with an orthonormal basis $\{|n\rangle\}_{n\in\mathbb{Z}}$ ordered according to decreasing values of $n$.  For $1\leq p \leq \infty$, consider the following Banach Lie subalgebras of $\operatorname{L}_p(\mathcal{H})$
\begin{equation}\label{triangularLp}
\begin{array}{l}
\operatorname{L}_p(\mathcal{H})_{-} := \{x \in \operatorname{L}_p(\mathcal{H})~: x |n\rangle \in \textrm{span}\{|m\rangle, m\leq n\}\}\\   \quad \quad\quad \quad\quad \quad \textrm{(lower triangular operators)}\\ \\ 
\operatorname{L}_p(\mathcal{H})_{++} := \{x \in \operatorname{L}_p(\mathcal{H})~: x |n\rangle \in \textrm{span}\{|m\rangle, m> n\}\}\\  \quad \quad \quad \quad\quad\quad  \textrm{(strictly upper triangular operators)}.
\end{array}
\end{equation}
and
\begin{equation}\label{triangularLp+}
\begin{array}{l}
\operatorname{L}_{p}(\mathcal{H})_{+} := \{\alpha\in \operatorname{L}_{p}(\mathcal{H})~: \alpha |n\rangle \in \textrm{span}\{|m\rangle, m \geq n\}\}\\  \quad \quad \quad \quad\quad\quad\textrm{(upper triangular operators)}\\ \\
\operatorname{L}_{p}(\mathcal{H})_{--} := \{\alpha\in \operatorname{L}_{p}(\mathcal{H})~: \alpha |n\rangle \in \textrm{span}\{|m\rangle, m<n\}\}\\  \quad \quad \quad \quad\quad  \quad\textrm{(strictly lower triangular operators)}.
\end{array}
\end{equation}
The linear transformation $T_-$  consisting in taking the lower triangular part of an operator with respect to the orthonormal basis $\{|n\rangle\}_{n\in\mathbb{Z}}$ of $\mathcal{H}$  is called a triangular truncation or triangular projection  (see \cite{A78}) and is defined as follows~:
\begin{equation}\label{T-}
\langle m | T_-(A) n\rangle := \left\{\begin{array}{cc} \langle m | A n\rangle &  \quad \textrm{if} \quad m\leq n\\ 0 & \quad\textrm{if}\quad m> n\end{array}\right.
\end{equation}
Similarly, the linear transformation $T_{++}$ consisting in taking the stricktly upper triangular part of an operator with respect to $\{|n\rangle\}_{n\in\mathbb{Z}}$ is defined as follows~:
\begin{equation}\label{T++}
\langle m | T_{++}(A) n\rangle := \left\{\begin{array}{cc} \langle m | A n\rangle &  \quad \textrm{if} \quad m> n\\ 0 & \quad\textrm{if}\quad m\leq n\end{array}\right.
\end{equation}
The linear transformation $D$ consisting in taking the diagonal part of a linear operator is defined by
\begin{equation}\label{diagonal}
\langle m | D(A) n\rangle := \left\{\begin{array}{cc} \langle m | A n\rangle &  \quad \textrm{if} \quad n=m\\ 0 & \quad\textrm{if}\quad n\neq m\end{array}\right.
\end{equation}
%It is a bounded linear operator on the space of Hilbert-Schmidt operators. 
\begin{remark}\label{prendre_triangle}{\rm
The triangular truncations $T_-$ and $T_{++}$ are unbounded on $ \operatorname{L}_{\infty}(\mathcal{H})$ and on $ \operatorname{L}_{1}(\mathcal{H})$, but are bounded on $ \operatorname{L}_{p}(\mathcal{H})$ for $1<p<\infty$  (see \cite{M61}, \cite{KP70}, \cite{GK70} as well as Proposition~4.2 in \cite{A78} for the proof and more detail on the subject).
%\begin{proposition}[\cite{M61}, \cite{KP70}, \cite{GK70}]\label{triangular_troncation}
%The triangular projection $T_-$ is bounded in the Schatten class $\operatorname{L}_p(\mathcal{H})$ if and only if $1<p<+\infty$.
%\end{proposition}
%
%In particular, the triangular truncations $T_-$ and $T_{++}$ defined in Section~\ref{example_manin} are bounded on Schatten ideals $\operatorname{L}_p(\mathcal{H})$, for $1<p<+\infty$. It will be of importance in the present paper that the truncation operator $T_-$ is unbounded on the space of trace class operators $\operatorname{L}_1(\mathcal{H})$, as well as on the space of bounded operators $\operatorname{L}_{\infty}(\mathcal{H})$ (see also Proposition~4.2 in \cite{A78}).
%
See also \cite{D88} for an example of bounded operator whose triangular truncation is unbounded (Hilbert matrix). %and is defined as the limit of the following operators
%$$
%A_n = \left(\begin{array}{ccccccc} 0 & 1 & \frac{1}{2} & \frac{1}{3} & & & \frac{1}{n-1}\\
%-1 & 0 & 1 & \frac{1}{2} & & & \frac{1}{n-2}\\
%-\frac{1}{2} & -1 & 0 & 1 & & & \\
%-\frac{1}{3} & -\frac{1}{2} & -1 & 0 & & &\\
%& & & & 0 & 1 & \frac{1}{2}\\
%& & & & -1 & 0 & 1\\
%-\frac{1}{n-1} & -\frac{1}{n-2} & & & -\frac{1}{2} & -1 & 0
% \end{array}\right).
%$$
As far as we know the existence and construction of a trace class operator whose triangular projection is not trace class is an open problem.
We refer the reader to \cite{Bel11} for related functional-analytic issues in the theory of Banach Lie groups.
}
\end{remark}
Denote by $T_+ = T_{++} + D$ (resp. $T_{--} = T_{-} - D$) the linear transformation consisting in taking the upper triangular part (resp. strictly lower triangular part) of an operator. One has for $1<p<\infty$,
\begin{equation}\label{lppm}
\operatorname{L}_{p}(\mathcal{H}) = \operatorname{L}_{p}(\mathcal{H})_{+}\oplus \operatorname{L}_{p}(\mathcal{H})_{--},
\end{equation} 
and
\begin{equation}\label{lpmp}
\operatorname{L}_{p}(\mathcal{H}) = \operatorname{L}_{p}(\mathcal{H})_{-}\oplus \operatorname{L}_{p}(\mathcal{H})_{++}.
\end{equation} 

%where $\operatorname{L}_{2}(\mathcal{H})_{+} = \Ker T_{--}$ and $\operatorname{L}_{2}(\mathcal{H})_{--} = \Ker T_+$.
%In fact, the previous decompositions are orthogonal decompositions with respect to the natural Hilbert space structure of the space $\operatorname{L}_2(\mathcal{H})$ given by the trace.

%
%
%Recall that for any $A\in \operatorname{L}_{2}(\mathcal{H})$, 
%$$
%\|A\|_2^2 = \sum_{n,m\in\mathbb{Z}} |\langle m | A n\rangle |^2,
%$$ 
%hence
%$$
%\|A\|_2^2 = \|T_-(A)\|_2^2 + \|T_{++}(A)\|_2^2.
%$$
%It follows that both  $T_-$ and $T_{++}$ are bounded when acting on the space of Hilbert-Schmidt operators. Consequently one has the following decompositions into sums of closed subalgebras
%\begin{equation*}
%\operatorname{L}_2(\mathcal{H}) = \operatorname{L}_2(\mathcal{H})_{-} \oplus \operatorname{L}_2(\mathcal{H})_{++},
%\end{equation*}
%where $\operatorname{L}_2(\mathcal{H})_{-} = \Ker T_{++}$ and $\operatorname{L}_2(\mathcal{H})_{++} = \Ker T_-$.

%\subsubsection{Triangular Banach Lie subalgebras $\mathfrak{b}^\pm_2(\mathcal{H})$}
%
%%Let us endow $\mathcal{H}$ with an orthonormal  basis $\{|n\rangle\}_{n\in\mathbb{Z}}$ ordered according to decreasing values of $n$. 
%Define the following triangular subalgebras of $\operatorname{L}_2(\mathcal{H})$~:
%
\subsection{Example of Iwasawa Manin triples}\label{Iwasawap}

The Iwasawa decomposition of a finite-dimensional semi-simple Lie group is a generalization of the decomposition of $GL(n, \mathbb{C})$ as the product of $SU(n)\times A\times N$, where $A$ is the abelian group of diagonal matrices with positive real coefficients, and $N$ is the group of triangular matrices whose diagonal entries are all equal to $1$. The product $A\times N$ is often denoted by $B$ for Borel subgroup. At the level of Lie algebras,  the Iwasawa decomposition gives rise to the decomposition $M(n, \mathbb{C}) = \mathfrak{u}(n) \oplus \mathfrak{b}(n)$, where $\mathfrak{b}(n)$ is the Lie algebra of complex triangular matrices with real coefficients on the diagonal. Since the triangular truncation defined in Section~\ref{Triangular truncations of operators} is bounded on $L^p(\mathcal{H})$ for $1<p<\infty$, we can generalize this decomposition to the Banach context (see Lemma~\ref{decub}). As explained in \cite{LW90}, $\left(M(n, \mathbb{C}), \mathfrak{u}(n), \mathfrak{b}(n)\right)$ is an example of Manin triple, where the duality pairing is given by the imaginary part of the trace.  This duality pairing can be defined on $L^p(\mathcal{H})$ for $1<p\leq 2$ because in this case $L^p(\mathcal{H})$ injects into its dual. This gives rise to Banach Manin triples, that we will call Iwasawa Manin triples (see Proposition~\ref{triples} below).

%Denote by $ \operatorname{L}_2(\mathcal{H})$ the Hilbert-space of Hilbert-Schmidt operators on a separable Hilbert space $\mathcal{H}$.
We will use the following notation.
The real Banach Lie algebra $\mathfrak{u}_{p}(\mathcal{H})$ is the Lie algebra of skew-Hermitian operators in $ \operatorname{L}_p(\mathcal{H})$~:
\begin{equation}\label{u2}
\mathfrak{u}_p(\mathcal{H}) := \{A\in  \operatorname{L}_{p}(\mathcal H)~: A^* = -A\}.\quad\quad\quad\quad\quad\quad\quad\quad\quad\quad\quad\quad\quad\quad\quad\quad\quad\quad\quad
\end{equation} 
The real Banach subalgebras $\mathfrak{b}^+_p(\mathcal{H})$ and $\mathfrak{b}_p^-(\mathcal{H})$  of $\operatorname{L}_p(\mathcal{H})$ are the triangular Banach algebras defined as follows~: %$$
%\mathfrak{b}^+_2(\mathcal{H}) = \{\alpha\in \operatorname{L}_2(\mathcal{H}), \alpha\left(|n\rangle\right) \in~ \textrm{span}\{|m\rangle, m\geq n\}~\textrm{and}~\langle n|\alpha|n\rangle\in\mathbb{R}, \textrm{for}~ n\in\mathbb{Z}\}.
%$$  
%$$
%\mathfrak{b}^-_2(\mathcal{H}) = \{\alpha\in L_2(\mathcal{H}), \alpha\left(|n\rangle\right) \in~ \textrm{span}\{|m\rangle, m\leq n\}~\textrm{and}~\langle n|\alpha|n\rangle\in\mathbb{R}, \textrm{for}~ n\in\mathbb{Z}\}.
%$$
\begin{equation}\label{b2pm}
\begin{array}{l}
\mathfrak{b}^+_p(\mathcal{H}) := \{\alpha\in \operatorname{L}_p(\mathcal{H})~: \alpha |n\rangle \in~ \textrm{span}\{|m\rangle, m\geq n\}~\textrm{and}~\langle n|\alpha|n\rangle\in\mathbb{R}, \textrm{for}~ n\in\mathbb{Z}\},
\\
\mathfrak{b}^-_p(\mathcal{H}) := \{\alpha\in \operatorname{L}_p(\mathcal{H})~: \alpha |n\rangle \in~ \textrm{span}\{|m\rangle, m\leq n\}~\textrm{and}~\langle n|\alpha|n\rangle\in\mathbb{R}, \textrm{for}~ n\in\mathbb{Z}\}.
\end{array}
\end{equation}

\begin{lemma}\label{decub}
For $1<p<\infty$, one has the following direct sum decompositions of $\operatorname{L}_p(\mathcal{H})$ into the sum of closed subalgebras
\begin{equation}\label{b+}
\operatorname{L}_p(\mathcal{H}) = \mathfrak{u}_p(\mathcal{H})\oplus\mathfrak{b}^+_p(\mathcal{H}),
\end{equation}
and
\begin{equation}\label{b-}
\operatorname{L}_p(\mathcal{H}) = \mathfrak{u}_p(\mathcal{H})\oplus\mathfrak{b}^-_p(\mathcal{H}).
\end{equation}
The projection $p_{\mathfrak{u}_p,+}$  onto $\mathfrak{u}_p(\mathcal{H})$ with respect to the decomposition \eqref{b+} reads
\begin{equation}\label{projectionu+}
p_{\mathfrak{u}_{p},+}(A) = T_{--}(A)- T_{--}(A)^* +\frac{1}{2}\left[D(A) - D(A)^*\right], \quad\textrm{where}\quad A\in \operatorname{L}_p(\mathcal{H}).
\end{equation}
%for any $A\in \operatorname{L}_2(\mathcal{H})$, and
%$
%p_{\mathfrak{b}_2^+} = \textrm{Id} - p_{\mathfrak{u}_{2}^+}.
%$
%
Similarly, the projection $p_{\mathfrak{u}_p,-}$   onto $\mathfrak{u}_p(\mathcal{H})$ with respect to the decomposition \eqref{b-} reads~:
\begin{equation}\label{projectionu-}
p_{\mathfrak{u}_{p},-}(A) = T_{++}(A)- T_{++}(A)^* +\frac{1}{2}\left[D(A) - D(A)^*\right], \quad\textrm{where}\quad  A\in \operatorname{L}_p(\mathcal{H}).
\end{equation}
%for any $A\in \operatorname{L}_2(\mathcal{H})$ and
%$
%p_{\mathfrak{b}_2^-} = \textrm{Id} - p_{\mathfrak{u}_{2}^-}.
%$

\end{lemma}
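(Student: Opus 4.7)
The plan is to establish both direct sum decompositions by exhibiting explicit continuous projections and checking that the two closed summands intersect trivially. First I would show closedness of the three Banach subspaces involved. The space $\mathfrak{u}_p(\mathcal{H})$ is the kernel of the continuous $\mathbb{R}$-linear map $A\mapsto A+A^{*}$ on $\operatorname{L}_p(\mathcal{H})$, hence closed; the subspaces $\mathfrak{b}_p^{\pm}(\mathcal{H})$ are the kernels of continuous linear maps built from $\operatorname{id}-T_{+}$ (resp.\ $\operatorname{id}-T_{-}$) together with the $\mathbb{R}$-linear constraint $D(A)=D(A)^{*}$ on the diagonal, so they are closed as well. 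The fact that $T_{++}$, $T_{--}$, $T_{\pm}$ and $D$ are all bounded on $\operatorname{L}_p(\mathcal{H})$ for $1<p<\infty$, recalled in Remark~\ref{prendre_triangle}, will be used repeatedly and is precisely what forces the restriction to $1<p<\infty$.

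Next I would check that $\mathfrak{u}_p(\mathcal{H})\cap \mathfrak{b}_p^{+}(\mathcal{H})=\{0\}$. An element $A$ in the intersection is upper triangular, and since $A^{*}=-A$, its adjoint is lower triangular; thus $A$ is simultaneously upper and lower triangular, i.e.\ diagonal. The diagonal entries are real (from $\mathfrak{b}_p^{+}$) and purely imaginary (from skew-Hermiticity), hence zero. The same argument handles $\mathfrak{u}_p(\mathcal{H})\cap\mathfrak{b}_p^{-}(\mathcal{H})$.

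For existence of the decomposition I would verify the formula \eqref{projectionu+} directly. The operator
\[
p_{\mathfrak{u}_{p},+}(A)=T_{--}(A)-T_{--}(A)^{*}+\tfrac{1}{2}\bigl[D(A)-D(A)^{*}\bigr]
\]
is visibly skew-Hermitian, so it lies in $\mathfrak{u}_p(\mathcal{H})$, and using $A=T_{--}(A)+D(A)+T_{++}(A)$ one computes
\[
A-p_{\mathfrak{u}_{p},+}(A)=T_{++}(A)+T_{--}(A)^{*}+\tfrac{1}{2}\bigl[D(A)+D(A)^{*}\bigr],
\]
which is strictly upper triangular plus a self-adjoint (hence real) diagonal, therefore an element of $\mathfrak{b}_p^{+}(\mathcal{H})$. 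Boundedness of $p_{\mathfrak{u}_{p},+}$ follows at once from the boundedness of $T_{--}$ and $D$ on $\operatorname{L}_p(\mathcal{H})$ and from isometric invariance of $\|\cdot\|_p$ under adjunction. Combined with the trivial intersection above, this proves \eqref{b+}; the decomposition \eqref{b-} and the formula \eqref{projectionu-} are obtained by the identical argument with $T_{--}$ replaced by $T_{++}$.

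Finally, I would note that $\mathfrak{u}_p(\mathcal{H})$ and $\mathfrak{b}_p^{\pm}(\mathcal{H})$ are indeed Banach Lie subalgebras of $\operatorname{L}_p(\mathcal{H})$: the commutator of two skew-Hermitian operators is again skew-Hermitian, products of upper (resp.\ lower) triangular operators remain upper (resp.\ lower) triangular, and a short index computation shows that the diagonal of $[A,B]$ vanishes when $A,B$ lie in $\mathfrak{b}_p^{\pm}(\mathcal{H})$, so in particular it is real. The only genuine obstacle in the whole argument is the availability of bounded triangular truncation on $\operatorname{L}_p(\mathcal{H})$ for $1<p<\infty$, which is imported from Remark~\ref{prendre_triangle}; without it the very same formulas would fail to define bounded projections at $p=1$ or $p=\infty$, consistently with the restriction in the lemma.
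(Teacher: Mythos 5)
Your proof is correct and follows the same route as the paper: the paper's proof simply observes that $T_+$, $T_{++}$ and hence $D$ are bounded on $\operatorname{L}_p(\mathcal{H})$ for $1<p<\infty$ and then states that "the Lemma follows as in the finite-dimensional case," which is exactly the algebra you carry out (trivial intersection, verification of the explicit projection formula, closedness, and the subalgebra property). You have merely made explicit the details the paper leaves to the reader.
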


\begin{proof}
Since the triangular truncations $T_+~: \operatorname{L}_p(\mathcal{H})\rightarrow \operatorname{L}_p(\mathcal{H})$ and $T_{++}~: \operatorname{L}_p(\mathcal{H})\rightarrow \operatorname{L}_p(\mathcal{H})$ are bounded for $1<p<\infty$ (see remark~\ref{prendre_triangle}), the same is true for the operator $D = T_+-T_{++}$.  The Lemma follows as in the finite-dimensional case.
\end{proof}

\begin{proposition}\label{triples}
For $1<p\leq 2$, the triples of Banach Lie algebras $(\operatorname{L}_p(\mathcal{H}), \mathfrak{u}_{p}(\mathcal{H}), \mathfrak{b}^+_p(\mathcal{H}))$ and 
$(\operatorname{L}_p(\mathcal{H}), \mathfrak{u}_{p}(\mathcal{H}), \mathfrak{b}^-_p(\mathcal{H}))$ are real Banach Manin triples with respect to the pairing given by the imaginary part of the trace
\begin{equation}\label{imparttrace}
\begin{array}{lcll}
\langle\cdot,\cdot\rangle_{\mathbb{R}}~: &\operatorname{L}_p(\mathcal{H})\times \operatorname{L}_p(\mathcal{H})& \longrightarrow &\mathbb{R}\\
& (x, y) & \longmapsto & \Im \Tr\left(x y\right).
\end{array}
\end{equation}

\end{proposition}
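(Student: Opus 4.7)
\smallskip

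\noindent\textbf{Proof plan.} The strategy is to check each of the four defining axioms of a Banach Manin triple for the pairing \eqref{imparttrace}, together with the preliminary facts that the pairing is well defined, continuous, symmetric and non-degenerate. The direct sum decomposition axiom will be free from Lemma~\ref{decub}; the rest of the work is an exercise in the cyclicity of the trace and the structure of the subalgebras $\mathfrak{u}_p(\mathcal{H})$ and $\mathfrak{b}_p^{\pm}(\mathcal{H})$.

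First I would verify that the pairing is well defined and continuous. Since $1<p\leq 2$, the continuous inclusion $\operatorname{L}_p(\mathcal{H})\hookrightarrow \operatorname{L}_2(\mathcal{H})$ combined with the fact that the product of two Hilbert--Schmidt operators is trace-class (Example~\ref{ex_duality2} or the earlier estimate $\|AB\|_1\leq \|A\|_2\|B\|_2$) shows that $xy\in \operatorname{L}_1(\mathcal{H})$ whenever $x,y\in \operatorname{L}_p(\mathcal{H})$, and that $|\Tr(xy)|\leq C\|x\|_p\|y\|_p$ for a universal constant $C$. Symmetry follows from $\Tr(xy)=\Tr(yx)$. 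For non-degeneracy, suppose $\Im\Tr(xy)=0$ for all $y\in\operatorname{L}_p(\mathcal{H})$; replacing $y$ by $\ie y$ also gives $\Re\Tr(xy)=0$, so $\Tr(xy)=0$ for all $y\in\operatorname{L}_p(\mathcal{H})$. Choosing $y=x^*$ (which lies in $\operatorname{L}_p(\mathcal{H})$ because this space is stable under adjoint) yields $\|x\|_2^2=\Tr(xx^*)=0$, whence $x=0$. Invariance under the bracket is immediate from cyclicity: $\Tr([x,y]z)+\Tr(y[x,z])=\Tr(xyz-yxz+yxz-yzx)=0$, and taking imaginary parts gives \eqref{invariance_bilinear_map}.

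The axiom $\mathfrak{g}=\mathfrak{g}_+\oplus\mathfrak{g}_-$ is exactly the content of the decompositions \eqref{b+} and \eqref{b-} in Lemma~\ref{decub}. To see that $\mathfrak{u}_p(\mathcal{H})$ is a real Banach Lie subalgebra, I compute $[A,B]^*=B^*A^*-A^*B^*=BA-AB=-[A,B]$ when $A^*=-A$ and $B^*=-B$, showing closure under the commutator. For $\mathfrak{b}_p^+(\mathcal{H})$, upper triangularity is obviously preserved by the commutator (product of upper triangular operators is upper triangular), and the diagonal entries of the commutator vanish: writing $(AB)_{nn}=\sum_m A_{nm}B_{mn}$ and using that upper triangularity forces both $m\geq n$ and $n\geq m$, one gets $(AB)_{nn}=A_{nn}B_{nn}$, so $[A,B]_{nn}=A_{nn}B_{nn}-B_{nn}A_{nn}=0\in\mathbb{R}$. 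The same argument works for $\mathfrak{b}_p^-(\mathcal{H})$.

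Finally, isotropy is the step requiring the most care. For $A,B\in\mathfrak{u}_p(\mathcal{H})$, cyclicity and skew-adjointness give $\overline{\Tr(AB)}=\Tr(B^*A^*)=\Tr((-B)(-A))=\Tr(BA)=\Tr(AB)$, so $\Tr(AB)\in\mathbb{R}$ and $\Im\Tr(AB)=0$. For $A,B\in\mathfrak{b}_p^+(\mathcal{H})$, upper triangularity and the matrix computation in the previous paragraph show that $AB$ is again upper triangular with diagonal $(AB)_{nn}=A_{nn}B_{nn}\in\mathbb{R}$; since $AB\in\operatorname{L}_1(\mathcal{H})$, the trace is computed in the basis $\{|n\rangle\}$ and equals $\sum_n A_{nn}B_{nn}\in\mathbb{R}$, so $\Im\Tr(AB)=0$. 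The case of $\mathfrak{b}_p^-(\mathcal{H})$ is symmetric. The main obstacle, if any, is bookkeeping about the restriction $1<p\leq 2$: this range is exactly what is needed so that products in $\operatorname{L}_p(\mathcal{H})$ land in $\operatorname{L}_1(\mathcal{H})$ and the trace of $xx^*$ is finite, which is what drives the non-degeneracy argument.
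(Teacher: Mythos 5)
Your proof is correct, and its overall architecture matches the paper's: invariance via cyclicity of the trace (using $\operatorname{L}_p\cdot\operatorname{L}_p\subset\operatorname{L}_1$ for $1<p\leq 2$), the direct sum axiom quoted from Lemma~\ref{decub}, and then non-degeneracy and isotropy. The one step where you genuinely diverge is non-degeneracy: the paper realifies, writing $A$ in block form with $\Re A$, $\Im A$ acting on $\mathcal{H}_{\mathbb{R}}$, proves $\Im\Tr(AB)=\Tr(\Re A\,\Im B+\Im A\,\Re B)$, and then invokes density of $\operatorname{L}_p(\mathcal{H})$ in $\operatorname{L}_2(\mathcal{H})$ together with the strong duality of the trace pairing on $\operatorname{L}_2(\mathcal{H}_{\mathbb{R}})$; you instead exploit that $\operatorname{L}_p(\mathcal{H})$ is a complex space, replace $y$ by $\ie y$ to upgrade $\Im\Tr(xy)=0$ to $\Tr(xy)=0$ for all $y$, and then test against $y=x^*$ to get $\Tr(xx^*)=\|x\|_2^2=0$. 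Your route is shorter and avoids both the realification bookkeeping and the density argument, at the modest cost of using stability of $\operatorname{L}_p$ under adjoints; it also has the small virtue of spelling out the isotropy of $\mathfrak{b}_p^{\pm}(\mathcal{H})$ (diagonal of a product of triangular operators) and the subalgebra property, which the paper leaves as ``easy to show.'' Both arguments are sound, and your restriction to $1<p\leq 2$ is used exactly where the paper uses it, namely to ensure products land in $\operatorname{L}_1(\mathcal{H})$.
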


\begin{proof}
\begin{itemize}
\item Let us show that the bilinear form on  $\operatorname{L}_p(\mathcal{H})$ given by the imaginary part of the trace is invariant with respect to the bracket given by the commutator.  Set $q:= \frac{p}{p-1}$. Then $1<p\leq2\leq q<\infty$. For any $x, y, z \in \operatorname{L}_p(\mathcal{H})$, recall that $\operatorname{L}_p(\mathcal{H})\cdot \operatorname{L}_p(\mathcal{H})\subset \operatorname{L}_p(\mathcal{H})$, $\operatorname{L}_p(\mathcal{H})\subset L_q(\mathcal{H})$, and $\operatorname{L}_p(\mathcal{H})\cdot L_q(\mathcal{H})\subset L_1(\mathcal{H})$.  Therefore one has
$$
\begin{array}{ll}
\Tr ([x, y] z) &= \Tr (x y z - y x z) = \Tr (x y z) - \Tr (y x z) \\&= \Tr (y z x) - \Tr (y x z) = -\Tr y [x, z],\end{array}
$$
where the second equality follows from the fact that both $x y z$ and $y x z$ are in $\operatorname{L}_1(\mathcal{H})$, and the third is justified since $y z$ belongs to $\operatorname{L}_1(\mathcal{H})$ and $x$ is bounded. Taking the imaginary part of the trace preserves this invariance.
Hence $\langle\cdot,\cdot\rangle_{\mathbb{R}}$ is invariant with respect to the Lie bracket of $\operatorname{L}_p(\mathcal{H})$.

\item By Lemma~\ref{decub}, one has the direct sum decompositions $$\operatorname{L}_p(\mathcal{H}) = \mathfrak{u}_p(\mathcal{H})\oplus\mathfrak{b}^\pm_p(\mathcal{H}).$$ 

\item Note that $\langle\cdot,\cdot\rangle_{\mathbb{R}}$ is well-defined because $L_p(\mathcal{H})\subset L_q(\mathcal{H})$ for $1<p\leq 2$. It is clearly symmetric and continuous. Let us show that $\langle\cdot,\cdot\rangle_{\mathbb{R}}$ is a non-degenerate bilinear form on $\operatorname{L}_p(\mathcal{H})$.
%, where $\operatorname{L}_p(\mathcal{H})$ is viewed as a real Banach space. 
Denote by $\mathcal{H}_{\mathbb{R}}$ the real Hilbert space generated by $\{|n\rangle\}_{n\in\mathbb{Z}}$. Any bounded linear operator $A$ on the complex Hilbert space $\mathcal{H} = \mathcal{H}_{\mathbb{R}} + i \mathcal{H}_{\mathbb{R}}$ 
 can be written in blocks as
$$
A = \left(\begin{array}{cc} \Re A & -\Im A\\ \Im A & \Re A\end{array} \right).
$$
where $\Re A~:\mathcal{H}_{\mathbb{R}}\rightarrow \mathcal{H}_{\mathbb{R}}$ and $\Im A~:\mathcal{H}_{\mathbb{R}}\rightarrow i\mathcal{H}_{\mathbb{R}}$. In particular, $A\in L_p(\mathcal{H})$ is the $\mathbb{C}$-linear extension of $\Re A + \textrm{i} \Im A$ (note that this is not the decomposition of $A$ into its symmetric and skew-symmetric parts).
% are the real and imaginary parts of the restriction of $A\in L_{\infty}(\mathcal{H})$ to the real Hilbert space $\mathcal{H}_{\mathbb{R}}$. 
%One has $A = \Re A + i \Im A$. 
%Note that $A\in L^p(\mathcal{H}_{\mathbb{C}})$ if and only if $\Re A\in L^p(\mathcal{H}_{\mathbb{R}})$ and $\Im A \in L^p(\mathcal{H}_{\mathbb{R}})$. 
%Indeed $A\in L^p(\mathcal{H}_{\mathbb{C}})$ implies $A^*\in L^p(\mathcal{H}_{\mathbb{C}})$, therefore $\Re A = \frac{1}{2}(A+A^*)\in L^p(\mathcal{H}_{\mathbb{R}})$, and $\Im A = \frac{1}{2}(A-A^*) \in L^p(\mathcal{H}_{\mathbb{R}})$.
Therefore, for any $A, B \in \operatorname{L}_p(\mathcal{H})$,
$$
 \Im \Tr\left(AB\right) = \Tr\left(\Re A \Im B + \Im A \Re B\right).
$$
Suppose that $ \Im \Tr\left(AB\right) = 0$ for any $B\in \operatorname{L}_p(\mathcal{H})$. Since $\operatorname{L}_p(\mathcal{H})$ is dense in $\operatorname{L}_2(\mathcal{H})$, 
this implies that $\Tr \Re A\cdot C = 0$ for any operator $C\in \operatorname{L}_2(\mathcal{H}_\mathbb{R})$, and $\Tr \Im A \cdot D = 0$ for any $D\in \operatorname{L}_2(\mathcal{H}_{\mathbb{R}})$. It follows that $\Re A = 0$ and $\Im A = 0$ because the trace is a strong duality pairing between $ \operatorname{L}_2(\mathcal{H}_{\mathbb{R}})$ and itself.
%Denote by $\mathcal{H}_{\mathbb{R}}$ the real Hilbert space generated by the orthonormal basis $\{|n\rangle\}_{n\in\mathbb{Z}}$. 
%Then $\Re A = 0$ and $\Im A = 0$ because the trace is a strong duality pairing between $L_p(\mathcal{H}_{\mathbb{R}})$ and $L_q(\mathcal{H}_{\mathbb{R}})$
% %The non-degeneracy of $\langle\cdot,\cdot\rangle_{\mathfrak{u}_p, \mathfrak{b}_p}$ follows from the non-degeneracy of the trace on $\operatorname{L}_p(\mathcal{H})$ viewed as real Hilbert space.
%%for any $x\in \operatorname{L}_2(\mathcal{H})$ and any $y \in \operatorname{L}_2(\mathcal{H})$. Here $x$ is the $\mathbb{C}$-linear extension of $\Re x + \textrm{i} \Im x$, and $y$ is the $\mathbb{C}$-linear extension of $\Re y +\textrm{i}\Im y$. 
%Since $\operatorname{L}_2(\mathcal{H}_{\mathbb{R}})$ is a real Hilbert space for the inner product defined by the trace $\Tr~: \left(A, B\right) \mapsto \Tr A B$, it follows that $\langle\cdot,\cdot\rangle_{\mathfrak{u}_2, \mathfrak{b}_2}$ is a strong duality pairing between $\operatorname{L}_2(\mathcal{H})$ and itself viewed as real Banach space. 
%
\item It is easy to show that $\mathfrak{u}_p(\mathcal{H})\subset \left(\mathfrak{u}_{p}(\mathcal{H})\right)^\perp$, $\mathfrak{b}_p^+(\mathcal{H})\subset \left(\mathfrak{b}_p^+(\mathcal{H})\right)^\perp$ and 
$\mathfrak{b}_p^-(\mathcal{H})\subset \left(\mathfrak{b}_p^-(\mathcal{H})\right)^\perp$, in other words $\mathfrak{u}_{p}(\mathcal{H})$, $\mathfrak{b}_p^+(\mathcal{H})$ and $\mathfrak{b}_p^-(\mathcal{H})$ are isotropic subspaces with respect to the pairing $\langle\cdot,\cdot\rangle_{\mathbb{R}}$.
\end{itemize}
\end{proof}

\begin{remark}{\rm
In the previous Proposition, the condition $1<p\leq 2$ is necessary in order to define the trace of the product of two elements in $\operatorname{L}_p(\mathcal{H})$ ($\operatorname{L}_p(\mathcal{H})$ is contained in its dual $L_q(\mathcal{H})$ for $1<p\leq 2$).
}
\end{remark}

%\begin{remark}{\rm
%
%}
%\end{remark}
%\begin{remark}{\rm
%Since the trace of the product of two operators  is a strong duality pairing between $\operatorname{L}_2(\mathcal{H})$ and itself, one has
%$$
%\mathfrak{u}_2(\mathcal{H}) \simeq  \left(\mathfrak{u}_{2}(\mathcal{H})\right)^\perp \simeq \left(\mathfrak{u}_{2}(\mathcal{H})\right)^0,
%$$
%and
%$$
%\mathfrak{b}^\pm_2(\mathcal{H}) \simeq  \left(\mathfrak{b}^\pm_{2}(\mathcal{H})\right)^\perp \simeq \left(\mathfrak{b}^\pm_{2}(\mathcal{H})\right)^0.
%$$
%Therefore
%$$
%\left(\mathfrak{b}_{2}^\pm(\mathcal{H})\right)^* \simeq L_2(\mathcal{H})/\left(\mathfrak{b}_{2}^\pm(\mathcal{H})\right)^0 \simeq \mathfrak{u}_2(\mathcal{H}),
%$$
%and
%$$
%\left(\mathfrak{u}_2(\mathcal{H})\right)^* \simeq L_2(\mathcal{H})/\left(\mathfrak{u}_{2}(\mathcal{H})\right)^0
%$$
%can be identified either with $\mathfrak{b}_{2}^+(\mathcal{H})$ or with $\mathfrak{b}_{2}^-(\mathcal{H})$.
%}
%\end{remark}

%\section{Corresponding Hilbert Lie bialgebras}

\section{From Manin triples to $1$-cocycles}
The existence of a Lie bracket on a Banach space $\mathfrak{g}_+$ has consequences on any Banach space $\mathfrak{g}_-$ in duality with $\mathfrak{g}_+$. Under some stability and continuity conditions (see Section~\ref{Coadjoint action on a subspace of the dual}), $\mathfrak{g}_+$ will act on $\mathfrak{g}_-$ by coadjoint action, as well as  on the space of bounded multilinear maps on $\mathfrak{g}_-$ (see Section~\ref{adrep}). When $\mathfrak{g}_+$ and $\mathfrak{g}_-$ form a Banach Manin triple, 
a natural  $1$-cocycle with respect to the action of $\mathfrak{g}_+$ on the space of skew-symmetric bilinear maps on $\mathfrak{g}_-$ can be defined  (see Section~\ref{Manin_cocycle}).
%In this Section, we define coadjoint actions of a Banach Lie algebra 
%In order to make the link between Banach Manin triples and Banach Lie bialgebra, we will need some additional notion.

\subsection{Adjoint and coadjoint actions}
Recall that a Banach Lie algebra $\mathfrak{g}_+$ acts on itself, its continuous dual $\mathfrak{g}_+^*$ and bidual $\mathfrak{g}_+^{**}$   by the adjoint and coadjoint actions~:
$$
\begin{array}{llll}
\ad~: &\mathfrak{g}_+\times\mathfrak{g}_+&\longrightarrow &\mathfrak{g}_+\\
& (x, y) & \longmapsto & \ad_x y := [x , y ],
\end{array}
$$
$$
\begin{array}{llll}
-\ad^*~: &\mathfrak{g}_+\times\mathfrak{g}_+^*&\longrightarrow &\mathfrak{g}_+^*\\
& (x, \alpha) & \longmapsto & -\ad^*_x \alpha := -\alpha\circ \ad_x,
\end{array}
$$
and
$$
\begin{array}{llll}
\ad^{**}~: &\mathfrak{g}_+\times\mathfrak{g}_+^{**}&\longrightarrow &\mathfrak{g}_+^{**}\\
& (x, \mathcal{F}) & \longmapsto & \ad^{**}_x \mathcal{F} := \mathcal{F}\circ \ad^*_x.
\end{array}
$$
%The sign in the second action is there in order to have a Lie algebra morphism (by opposition to a Lie algebra anti-morphism) of $\mathfrak{g}_+$ into the Lie algebra of endomorphisms of $\mathfrak{g}_+^*$, 
Here the notation $\ad^*_x~:\mathfrak{g}_+^*\rightarrow\mathfrak{g}_+^*$ means the dual map of $\ad_x~:\mathfrak{g}_+\rightarrow \mathfrak{g}_+$. Remark that the actions $\ad$ and $\ad^{**}$ coincide on the subspace $\mathfrak{g}_+$ of $\mathfrak{g}_+^{**}$. These actions extend in a natural way to spaces of bounded multilinear maps from any Banach product of copies of $\mathfrak{g}_+$  and $\mathfrak{g}_+^*$. 
For Banach spaces  $\mathfrak{g}_1, \dots, \mathfrak{g}_k$ and $\mathfrak{h}$, we will use the notation $\operatorname{L}(\mathfrak{g}_1,\mathfrak{g}_2, \dots\mathfrak{g}_k;\mathfrak{h})$ to denote the \textbf{Banach space of continuous $k$-multilinear maps} from the product Banach space $\mathfrak{g}_1\times\dots\times\mathfrak{g}_k$ to the Banach space $\mathfrak{h}$ (note the semi-colon separating the initial Banach spaces from the final one). 
Let us recall (see Proposition 2.2.9 in \cite{AMR88}) that one has the following  isometric isomorphisms of Banach spaces
\begin{equation}\label{isomorphism_dual}
\operatorname{L}(\mathfrak{g}_+^*; \operatorname{L}(\mathfrak{g}_+,\mathfrak{g}_+;\mathbb{K}))\simeq \operatorname{L}(\mathfrak{g}_+^*,\mathfrak{g}_+,\mathfrak{g}_+; \mathbb{K}) \simeq \operatorname{L}(\mathfrak{g}_+,\mathfrak{g}_+^*; \operatorname{L}(\mathfrak{g}_+;\mathbb{K}))  \simeq \operatorname{L}(\mathfrak{g}_+,\mathfrak{g}_+^*;\mathfrak{g}_+^*).
\end{equation}
In particular, since the map $\ad~:\mathfrak{g}_+\times\mathfrak{g}_+\rightarrow \mathfrak{g}_+$ is bilinear and continuous, its dual map $\ad^*$  is continuous as a map from $\mathfrak{g}^*_+$ to $\operatorname{L}(\mathfrak{g}_+,\mathfrak{g}_+;\mathbb{K})$ 
%(see Theorem~4.10 in \cite{Rud91}) 
and, following the sequence of isomorphisms in \eqref{isomorphism_dual}, it follows that $\ad^*~:\mathfrak{g}_+\times\mathfrak{g}^*_+\rightarrow\mathfrak{g}_+^*$ is continuous.
Similarly, using the following isometric isomorphisms of Banach spaces
$$
\operatorname{L}(\mathfrak{g}_+^{**}; \operatorname{L}(\mathfrak{g}_+,\mathfrak{g}_+^*;\mathbb{K})) \simeq \operatorname{L}(\mathfrak{g}_+^{**},\mathfrak{g}_+, \mathfrak{g}_+^*; \mathbb{K}) \simeq \operatorname{L}(\mathfrak{g}_+,\mathfrak{g}_+^{**}; \operatorname{L}(\mathfrak{g}_+^*;\mathbb{K}))  \simeq \operatorname{L}(\mathfrak{g}_+,\mathfrak{g}_+^{**};\mathfrak{g}_+^{**}),
$$
it follows that $\ad^{**}~:\mathfrak{g}_+\times\mathfrak{g}_+^{**}\rightarrow \mathfrak{g}_+^{**}$ is continuous.

\subsection{Coadjoint action on a subspace of the dual}\label{Coadjoint action on a subspace of the dual}
Suppose that we have a continuous injection from a Banach space $\mathfrak{g}_-$ into the dual space $\mathfrak{g}_+^*$ of a Banach Lie algebra $\mathfrak{g}_+$, in such a way that $\mathfrak{g}_-$ is stable by the coadjoint action of $\mathfrak{g}_+$ on its dual, i.e. is such that
\begin{equation}\label{stability_coadjoint}
\ad_{x}^* \alpha \in \mathfrak{g}_-, ~~\forall x \in \mathfrak{g}_+, \forall \alpha\in \mathfrak{g}_-.
\end{equation}
Then the coadjoint action $-\ad^*~:\mathfrak{g}_+\times\mathfrak{g}_+^*\rightarrow\mathfrak{g}_+^*$ restricts to a continuous bilinear map $-\ad^*_{|\mathfrak{g}_-}~:\mathfrak{g}_+\times\mathfrak{g}_-\rightarrow\mathfrak{g}_+^*$, where $\mathfrak{g}_+\times\mathfrak{g}_-$ is endowed with the Banach structure of the product of Banach spaces $\mathfrak{g}_+$ and $\mathfrak{g}_-$. In other words
$$-\ad^*_{|\mathfrak{g}_-}\in L(\mathfrak{g}_+,\mathfrak{g}_-;\mathfrak{g}_+^*) \simeq L(\mathfrak{g}_+;L(\mathfrak{g}_-;\mathfrak{g}_+^*)).
$$
Moreover, condition \eqref{stability_coadjoint} implies that $-\ad^*$ takes values in $\mathfrak{g}_-$, i.e. that one gets a well-defined action
$$
\begin{array}{llll}
-\ad^*_{|\mathfrak{g}_-}~: &\mathfrak{g}_+\times\mathfrak{g}_-&\longrightarrow &\mathfrak{g}_-\\
& (x, \alpha) & \longmapsto & -\ad^*_x \alpha := -\alpha\circ \ad_x.
\end{array}
$$
However, this action will in general not be continuous if one endows the target space with its Banach space topology. Nevertheless it is continuous if the target space is equipped with the topology induced from $\mathfrak{g}_+^*$. Under the additional assumption that $-\ad^{*}_{|\mathfrak{g}_-}~:\mathfrak{g}_+\times\mathfrak{g}_-\rightarrow \mathfrak{g}_-$ is continuous with respect to the Banach space topologies of $\mathfrak{g}_+$ and $\mathfrak{g}_-$ (for instance in the case where $\mathfrak{g}_-$ is a closed subspace of the dual $\mathfrak{g}_+^*$), $\mathfrak{g}_+$ acts also continuously on $\mathfrak{g}_-^*$ by 
$$
\begin{array}{llll}
(\ad^{*}_{|\mathfrak{g}_-})^*~: &\mathfrak{g}_+\times\mathfrak{g}_-^{*}&\longrightarrow &\mathfrak{g}_-^{*}\\
& (x, \mathcal{F}) & \longmapsto & \mathcal{F}\circ \ad^*_x.
\end{array}
$$

\subsection{Adjoint action on the space of continuous multilinear  maps}\label{adrep}
Suppose that we have a continuous injection from a Banach space $\mathfrak{g}_-$ into the dual space $\mathfrak{g}_+^*$ of a Banach Lie algebra $\mathfrak{g}_+$ and that $\mathfrak{g}_+$ acts continously on $\mathfrak{g}_-$ by coadjoint action, i.e. suppose that $-\ad^*_{|\mathfrak{g}_-}$ takes values in $\mathfrak{g}_-$ and that $-\ad^*_{|\mathfrak{g}_-}~:\mathfrak{g}_+\times \mathfrak{g}_-\rightarrow\mathfrak{g}_-$ is continuous. In order to simplify notation, we will write just $\ad^*$ for $\ad^*_{|\mathfrak{g}_-}$ and $\ad^{**}$ for $(\ad^{*}_{|\mathfrak{g}_-})^*$. In order to compactify notations, let us 
denote by $L^{r,s}(\mathfrak{g}_-, \mathfrak{g}_+; \mathbb{K})$ the \textbf{Banach space of continuous multilinear maps} from $\mathfrak{g}_-\times\dots\times\mathfrak{g}_-\times\mathfrak{g}_+\times\dots \times \mathfrak{g}_+$ to $\mathbb{K}$, where $\mathfrak{g}_-$ is repeated \textbf{$r$-times} and $\mathfrak{g}_+$ is repeated \textbf{$s$-times}. Since $\mathfrak{g}_+$ acts continuously by adjoint action on itself and by coadjoint action on $\mathfrak{g}_-$, one can define a continuous linear action of $\mathfrak{g}_+$ on  $L^{r,s}(\mathfrak{g}_-, \mathfrak{g}_+; \mathbb{K})$, also called adjoint action,  by
\begin{align*}
 \ad^{(r, s)}_x \textbf{t} (\alpha_1, \dots, \alpha_r, x_1, \dots, x_s)= \sum_{i =1}^{r} \textbf{t}(\alpha_1, \dots, \ad^*_x \alpha_i, \dots, \alpha_r, x_1, \dots, x_s)\\ - \sum_{i=1}^{s} \textbf{t}(\alpha_1, \dots,  \alpha_r, x_1, \dots, \ad_x x_i, \dots x_s),
\end{align*}
where $\textbf{t}\in L^{r,s}(\mathfrak{g}_-, \mathfrak{g}_+; \mathbb{K})$, for $i\in\{1, \dots, r\}$, $\alpha_i\in \mathfrak{g}_-$, and for $i \in\{1, \dots, s\}$, $x_i\in \mathfrak{g}_+$. 
In particular, the adjoint action of $\mathfrak{g}_+$ on $L^{2,0}(\mathfrak{g}_-, \mathfrak{g}_+; \mathbb{K}) := L(\mathfrak{g}_-, \mathfrak{g}_-;\mathbb{K})$ reads~:
\begin{equation}\label{ad2}
\ad^{(2,0)}_x \textbf{t}(\alpha_1, \alpha_2) = \textbf{t}(\ad^*_x\alpha_1, \alpha_2) + \textbf{t}(\alpha_1, \ad^*_x \alpha_2).
\end{equation}
%There is a Banach isomorphism between $L(\mathfrak{g}_-, \mathfrak{g}_-;\mathbb{K})$ and the Banach space $L(\mathfrak{g}_-; \mathfrak{g}_-^{*})$ of continuous linear maps from $\mathfrak{g}_-$ into its continuous dual $\mathfrak{g}_-^{*}$, given by
%$$
%\begin{array}{cll}
%L(\mathfrak{g}_-, \mathfrak{g}_-;\mathbb{K})& \longmapsto & L(\mathfrak{g}_-; \mathfrak{g}_-^{*})\\
%\textbf{t}(\cdot, \cdot) & \longmapsto & \left( \alpha\mapsto \textbf{t}(\alpha, \cdot) \right).
%\end{array}
%$$
%Moreover, this isomorphism is equivariant with respect to the adjoint action $\ad^{(2, 0)}$ of $\mathfrak{g}_+$ on $L(\mathfrak{g}_-, \mathfrak{g}_-;\mathbb{K})$ and the following natural action of $\mathfrak{g}_+$ on $L(\mathfrak{g}_-; \mathfrak{g}_-^{*})$~:
%\begin{equation}\label{ad*,**}
%\ad^{(*,**)}_x \textbf{t}(\alpha) = \textbf{t}(\ad^*_x \alpha) + \ad^{**}_x\textbf{t}(\alpha),
%\end{equation}
%where $\textbf{t}\in L(\mathfrak{g}_-;\mathfrak{g}_-^{*})$, $x\in\mathfrak{g}_+$, and $\alpha\in\mathfrak g_-$.

\subsection{Subspaces of skew-symmetric bilinear maps}\label{lambda_g_+}
Note that the adjoint action $\ad^{(2,0)}$ defined in \eqref{ad2} preserves the subspace of skew-symmetric continuous bilinear maps on $\mathfrak{g}_-$, denoted by $\Lambda^2\mathfrak{g}_-^{*}$~:
$$\Lambda^2\mathfrak{g}_-^{*} := \left\{\textbf{t}\in L(\mathfrak{g}_-, \mathfrak{g}_-;\mathbb{K})~:~\forall e_1, e_2\in \mathfrak{g}_-, \textbf{t}(e_1, e_2) = -\textbf{t}(e_2, e_1)\right\}.$$
For any subspace $\mathfrak{g}_+\subset \mathfrak{g}_-^*$, the subspace  $\Lambda^2\mathfrak{g}_+\subset\Lambda^2\mathfrak{g}_-^{*}$ refers to the subspace consisting of elements $\textbf{t}\in \Lambda^2\mathfrak{g}_-^{*}$ such that, for $\alpha \in \mathfrak{g}_-$, the maps $\alpha\mapsto \textbf{t}(e_1, \alpha)$ belong to  $\mathfrak{g}_+\subset\mathfrak{g}_-^{*}$ for any $e_1\in\mathfrak{g}_-$. 
$$
\Lambda^2\mathfrak{g}_+ := \left\{\textbf{t}\in \Lambda^2\mathfrak{g}_-^{*}~:~\forall e_1\in \mathfrak{g}_-,  \textbf{t}(e_1, \cdot) \in \mathfrak{g}_+\right\}.
$$
%The space $\Lambda^2\mathfrak{g}_+$ is therefore a space of maps acting on $\mathfrak{g}_-$, and we will keep writing $\mathfrak{g}_-$ in parenthesis in order to avoid confusions.

\subsection{Definition of 1-Cocycles} \label{cocycle_section}
Let us recall the notion of $1$-cocycle. Let $G$ be a Banach Lie group, and consider an affine action of $G$ on a Banach space $V$, i.e. a group morphism $\Phi$ of $G$ into the Affine group $\textrm{Aff}(V)$ of transformations of $V$. Using the isomorphism $\textrm{Aff}(V) = \textrm{GL}(V) \rtimes V$, $\Phi$ decomposes into $(\varphi, \Theta)$ where $\varphi~: G\rightarrow \textrm{GL}(V)$ and $\Theta~: G \rightarrow V$. The condition that $\Phi$ is a group morphism implies that $\varphi$ is a group morphism and that $\Theta$ satisfies~:
\begin{equation}\label{group_cocycle}
\Theta(gh) = \Theta(g) + \varphi(g)(\Theta(h)),
\end{equation}
where $g,h\in G$.
One says that $\Theta$ is a \textbf{$1$-cocycle on $G$ relative to $\varphi$}. 
The derivative $d\Phi$ of $\Phi$ at the unit element of $G$ is a Lie algebra morphism of the Lie algebra $\mathfrak{g}$ of $G$ into the Lie algebra $\mathfrak{aff}(V)$ of $\textrm{Aff}(V)$.  By the isomorphism $\mathfrak{aff}(V) = \mathfrak{gl}(V)\rtimes V$, $d\Phi$ decomposes into $(d\varphi, d\Theta)$ where $d\varphi~:\mathfrak{g}\rightarrow \mathfrak{gl}(V)$ is the Lie algebra morphism induced by $\varphi$ and $d\Theta~:\mathfrak{g}\rightarrow V$ satisfies~:
\begin{equation}\label{ptitcocycle}
d\Theta\left([x, y]\right) = d\varphi(x)\left(d\Theta(y)\right) - d\varphi(y)\left(d\Theta(x)\right),
\end{equation}
for $x, y \in \mathfrak{g}$. 
%Indeed, one has for $h,g \in G$,
%$$
%\Theta\left(ghg^{-1}\right) = \Theta(g) + \varphi(g)\left(\Theta(h)\right) + \varphi(ghg^{-1}) \varphi(g)\left(\Theta(g^{-1})\right),
%$$
%and for $g\in G$ and $y\in \mathfrak{g}$,
%$$
%\Theta\left(e^{t \textrm{Ad}(g)y}\right) = \Theta(g) + \varphi(g)\left(\Theta(e^{ty})\right) + \varphi(e^{t\textrm{Ad}(g)y})\varphi(g)\left(\Theta(g^{-1})\right).
%$$
%Differentiating the last equality with respect to $t$ leads to
%$$
%d\Theta(\textrm{Ad}(g)y) = \varphi(g)\left(d\Theta(y)\right) + d\varphi\left(\textrm{Ad}(g)(y)\right)\varphi(g)\left(\Theta(g^{-1})\right),
%$$
%where $\textrm{Ad}$ denotes the adjoint action of $G$ on its Lie algebra.
%Letting $g = e^{sx}$, with $x\in\mathfrak{g}$, and differentiating with respect to $s$ gives
%$$
%d\Theta([x, y]) = d\varphi(x)\left(d\Theta(y)\right) + d\varphi([x, y])\left(\Theta(e)\right) +d\varphi(y)d\varphi(x)\left(\Theta(e)\right) - d\varphi(y)\left(d\Theta(x)\right).
%$$
%The cocycle identity \eqref{ptitcocycle} then follows from 
%$$
%\Theta(e) = \Theta(e\cdot e) = 0.
%$$
One says that $d\Theta$ is a \textbf{$1$-cocycle on $\mathfrak{g}$ relative to $d\varphi$}.

\begin{example}{\rm
Let us consider in particular the Banach space $V = L(\mathfrak{g}_-,\mathfrak{g}_-;\mathbb{K})$ of bilinear maps on $\mathfrak{g}_-$, where $\mathfrak{g}_-$ is a Banach space that injects continuously in the dual space $\mathfrak{g}_+^*$ of a Banach Lie algebra $\mathfrak{g}_+$, is stable under the coadjoint action of $\mathfrak{g}_+$, and such that the coadjoint action of $\mathfrak{g}_+$ on $\mathfrak{g}_-$ is continuous.  A $1$-cocycle $\theta$ on $\mathfrak{g}_+$ relative to the natural action $\ad^{(2,0)}$ of $\mathfrak{g}_+$ on $L(\mathfrak{g}_-, \mathfrak{g}_-;\mathbb{K})$ given by \eqref{ad2} is a map $\theta~:\mathfrak{g}_+\rightarrow L(\mathfrak{g}_-, \mathfrak{g}_-;\mathbb{K})$ which satisfies~:
\begin{equation*}\begin{array}{ll}
\theta\left([x, y]\right) & = \ad^{(2,0)}_x\left(\theta(y)\right) - \ad^{(2,0)}_y\left(\theta(x)\right)\\
\end{array}
\end{equation*}
where $x, y\in \mathfrak{g}_+$. For $\alpha$ and $\beta$ in $\mathfrak{g}_-$, previous condition reads
\begin{equation}\label{cocycle2}
\theta\left([x, y]\right)(\alpha, \beta)  = \theta(y)(\ad^*_x\alpha, \beta) + \theta(y)(\alpha, \ad^*_x\beta) -\theta(x)(\ad^*_y\alpha, \beta) - \theta(x)(\alpha, \ad^*_y\beta).
\end{equation}
}
\end{example}

\begin{remark}{\rm
A continuous map $\theta~:\mathfrak{g}_+\rightarrow L(\mathfrak{g}_-, \mathfrak{g}_-;\mathbb{K})$ from a Banach Lie algebra $\mathfrak{g}_+$ to the Banach space of bilinear maps on $\mathfrak{g}_-$ satisfying equation~\eqref{cocycle2} defines an affine action of  $\mathfrak{g}_+$  on $L(\mathfrak{g}_-,\mathfrak{g}_-;\mathbb{K})$ whose linear part is the adjoint action $\ad^{(2,0)}$ given by equation \eqref{ad2}.
}
\end{remark}

%If $\theta$ takes values in the subspace $\Lambda^2\mathfrak{g}_+(\mathfrak{g}_+^*)$ of $L(\mathfrak{g}_+^*;\mathfrak{g}_+^{**})\simeq L(\mathfrak{g}_+^*, \mathfrak{g}_+^*;\mathbb{K})$, then this condition reads~:
%\begin{equation}\label{cocycle}
%\theta\left([x, y]\right)  = \left(\theta(y)\circ \ad^*_x + \ad_x\theta(y)\right) - \left(\theta(x)\circ\ad^*_y + \ad_y\theta(x)\right),
%\end{equation}
%where $x, y\in \mathfrak{g}_+$. 
%

\subsection{Manin triples and associated $1$-cocycles}\label{Manin_cocycle}

The following proposition enable to define $1$-cocycles naturally associated to a Manin triple.

\begin{theorem}\label{Manin_to_bialgebra}
Let $(\mathfrak{g}, \mathfrak{g}_+, \mathfrak{g}_-)$ be a Manin triple for a non-degenerate symmetric bilinear continuous map $\langle\cdot, \cdot\rangle_{\mathfrak{g}}~:\mathfrak{g}\times\mathfrak{g}\rightarrow \mathbb{K}$. Then
\begin{enumerate}
\item The map $\langle\cdot, \cdot\rangle_{\mathfrak{g}}$ restricts to a duality pairing $\langle\cdot, \cdot\rangle_{\mathfrak{g}_+, \mathfrak{g}_-}~:\mathfrak{g}_+\times\mathfrak{g}_-\rightarrow \mathbb{K}$.

\item The subspace $\mathfrak{g}_+\hookrightarrow \mathfrak{g}_-^*$ is stable under the coadjoint action of $\mathfrak{g}_-$ on $\mathfrak{g}_-^*$ and
$$\ad^*_\alpha(x) = -p_{\mathfrak{g}_+}\left([\alpha, x]_{\mathfrak{g}}\right)$$  for any $x\in\mathfrak{g}_+$ and $\alpha\in\mathfrak{g}_-$. In particular, the map 
\begin{equation*}
\begin{array}{llll}
\ad^*_{\mathfrak{g}_-}~:&\mathfrak{g}_-\times\mathfrak{g}_+&\rightarrow& \mathfrak{g}_+\\
& (\alpha, x)&\mapsto & -p_{\mathfrak{g}_+}\left([\alpha, x]_{\mathfrak{g}}\right)
\end{array}
\end{equation*}
is continuous.

\item The subspace $\mathfrak{g}_-\hookrightarrow \mathfrak{g}_+^*$ is stable under the coadjoint action of $\mathfrak{g}_+$ on $\mathfrak{g}_+^*$ 
and  $$\ad^*_x(\alpha) = -p_{\mathfrak{g}_-}\left([x, \alpha]_{\mathfrak{g}}\right)$$ for any $x\in\mathfrak{g}_+$ and $\alpha\in\mathfrak{g}_-$. In particular, the map
\begin{equation*}
\begin{array}{llll}
\ad^*_{\mathfrak{g}_+}~:&\mathfrak{g}_+\times\mathfrak{g}_-&\rightarrow& \mathfrak{g}_-\\
& (x, \alpha)&\mapsto & -p_{\mathfrak{g}_-}\left([x, \alpha]_{\mathfrak{g}}\right)
\end{array}
\end{equation*}
is continuous.

\item The dual map to the bracket $[\cdot, \cdot]_{\mathfrak{g}_-}$ restricts to a $1$-cocycle $\theta_+~: \mathfrak{g}_+\rightarrow \Lambda^2\mathfrak{g}_+$ with respect to the adjoint action $\ad^{(2,0)}$ of $\mathfrak{g}_+$ on  $\Lambda^2\mathfrak{g}_+\subset \Lambda^2\mathfrak{g}_-^*$.

\item The dual map to the bracket $[\cdot, \cdot]_{\mathfrak{g}_+}$ restricts to a $1$-cocycle $\theta_-~: \mathfrak{g}_-\rightarrow \Lambda^2\mathfrak{g}_-$ with respect to the adjoint action $\ad^{(2,0)}$ of $\mathfrak{g}_-$ on  $\Lambda^2\mathfrak{g}_-\subset \Lambda^2\mathfrak{g}_+^*$.
\end{enumerate}

\end{theorem}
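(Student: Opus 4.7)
My approach is to handle the five assertions in order, using two central tools throughout: the invariance of $\langle\cdot,\cdot\rangle_{\mathfrak{g}}$ combined with the isotropy of $\mathfrak{g}_\pm$, and the Jacobi identity in $\mathfrak{g}$ together with the fact that $\mathfrak{g}_\pm$ are Lie subalgebras (so that the $\mathfrak{g}$-bracket of two elements of $\mathfrak{g}_\pm$ coincides with $[\cdot,\cdot]_{\mathfrak{g}_\pm}$).

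For assertion (1), if $x\in\mathfrak{g}_+$ satisfies $\langle x,\alpha\rangle_{\mathfrak{g}} = 0$ for every $\alpha\in\mathfrak{g}_-$, then isotropy of $\mathfrak{g}_+$ yields the same vanishing for every $y\in\mathfrak{g}_+$; since $\mathfrak{g} = \mathfrak{g}_+\oplus\mathfrak{g}_-$, this forces $\langle x,\cdot\rangle_{\mathfrak{g}}\equiv 0$ on $\mathfrak{g}$, hence $x=0$ by non-degeneracy. The symmetric argument gives the other half.

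For assertions (2) and (3), the key computation is the same on both sides. Fix $x\in\mathfrak{g}_+$, $\alpha\in\mathfrak{g}_-$; for any $y\in\mathfrak{g}_+$, invariance gives
\[
\langle \ad^*_x\alpha,\, y\rangle_{\mathfrak{g}_+,\mathfrak{g}_-} \;=\; \langle \alpha,\, [x,y]_{\mathfrak{g}}\rangle_{\mathfrak{g}} \;=\; -\langle [x,\alpha]_{\mathfrak{g}},\, y\rangle_{\mathfrak{g}} \;=\; -\langle p_{\mathfrak{g}_-}([x,\alpha]_{\mathfrak{g}}),\, y\rangle_{\mathfrak{g}},
\]
where the last equality discards $p_{\mathfrak{g}_+}([x,\alpha]_{\mathfrak{g}})$ by isotropy of $\mathfrak{g}_+$. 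By (1) this identifies $\ad^*_x\alpha$ with $-p_{\mathfrak{g}_-}([x,\alpha]_{\mathfrak{g}})\in\mathfrak{g}_-$; continuity of $\ad^*_{\mathfrak{g}_+}$ follows from the continuity of $[\cdot,\cdot]_{\mathfrak{g}}$ and of the projections $p_{\mathfrak{g}_\pm}$. Exchanging the roles of $\mathfrak{g}_+$ and $\mathfrak{g}_-$ proves (2).

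For (4) (and (5) by symmetry), I define $\theta_+(x)(\alpha,\beta):=\langle x, [\alpha,\beta]_{\mathfrak{g}_-}\rangle_{\mathfrak{g}_+,\mathfrak{g}_-}$. This is continuous and skew-symmetric. Membership in $\Lambda^2\mathfrak{g}_+$ is verified by exactly the same invariance/isotropy maneuver as in (2)--(3): for fixed $\alpha$, the functional $\beta\mapsto\theta_+(x)(\alpha,\beta)$ is represented by $-p_{\mathfrak{g}_+}([\alpha,x]_{\mathfrak{g}})\in\mathfrak{g}_+$, which by (2) equals $\ad^*_\alpha x$. The cocycle identity \eqref{cocycle2} is the principal work. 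Since $\mathfrak{g}_\pm$ are subalgebras, I may compute with the $\mathfrak{g}$-bracket throughout. Invariance rewrites the left-hand side $\langle [x,y]_{\mathfrak{g}},[\alpha,\beta]_{\mathfrak{g}}\rangle_{\mathfrak{g}}$ as $-\langle y,[x,[\alpha,\beta]_{\mathfrak{g}}]_{\mathfrak{g}}\rangle_{\mathfrak{g}}$, and Jacobi in $\mathfrak{g}$ expands the inner bracket as $[[x,\alpha]_{\mathfrak{g}},\beta]_{\mathfrak{g}}+[\alpha,[x,\beta]_{\mathfrak{g}}]_{\mathfrak{g}}$. Splitting $[x,\alpha]_{\mathfrak{g}} = p_{\mathfrak{g}_+}([x,\alpha]_{\mathfrak{g}})-\ad^*_x\alpha$ (and similarly for $[x,\beta]_{\mathfrak{g}}$) according to (3), the $\mathfrak{g}_-$-parts immediately produce the terms $\theta_+(y)(\ad^*_x\alpha,\beta)$ and $\theta_+(y)(\alpha,\ad^*_x\beta)$ of the right-hand side of \eqref{cocycle2}. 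The remaining $p_{\mathfrak{g}_+}$-contributions are matched, after performing the mirror expansion of $-\theta_+(x)(\ad^*_y\alpha,\beta)-\theta_+(x)(\alpha,\ad^*_y\beta)$, to the identity
\[
\langle [x,y]_{\mathfrak{g}},[\alpha,\beta]_{\mathfrak{g}}\rangle_{\mathfrak{g}} \;=\; \langle \ad^*_\alpha y,\,\ad^*_x\beta\rangle + \langle \ad^*_y\alpha,\,\ad^*_\beta x\rangle - \langle \ad^*_\alpha x,\,\ad^*_y\beta\rangle - \langle \ad^*_x\alpha,\,\ad^*_\beta y\rangle,
\]
which follows from applying Jacobi once more to $[[x,y],\alpha]_{\mathfrak{g}}$, splitting $[y,\alpha]_{\mathfrak{g}}$ and $[x,\beta]_{\mathfrak{g}}$ via Manin, and killing the cross-pairings $\langle \mathfrak{g}_+,\mathfrak{g}_+\rangle_{\mathfrak{g}}$ and $\langle \mathfrak{g}_-,\mathfrak{g}_-\rangle_{\mathfrak{g}}$ using isotropy. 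The hard part is precisely this bookkeeping of the $p_{\mathfrak{g}_+}$-remainders: they do not cancel term-by-term, but reassemble into the missing two summands of \eqref{cocycle2} only after a second use of Jacobi in $\mathfrak{g}$.
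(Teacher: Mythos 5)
Your proposal is correct and follows essentially the same route as the paper: (1)--(3) are the identical isotropy/invariance computation, and for (4)--(5) you use the same invariance-plus-Jacobi-plus-Manin-splitting argument, merely organizing the bookkeeping around the intermediate identity expressed in pairings of coadjoint actions rather than the paper's equation \eqref{weinstein} with mixed brackets. The two intermediate normal forms are interchangeable via the same invariance/isotropy rewritings, so there is no substantive difference.
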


\begin{proof}
\begin{enumerate}
\item
Let us show that the restriction of the non-degenerate bilinear form $\langle\cdot,\cdot\rangle_{\mathfrak{g}}~:\mathfrak{g}\times\mathfrak{g}\rightarrow \mathbb{K}$ to $\mathfrak{g}_+\times\mathfrak{g}_-$ denoted by
$$\langle \cdot, \cdot \rangle_{\mathfrak{g}_+, \mathfrak{g}_-}~:\mathfrak{g}_+\times\mathfrak{g}_-\rightarrow \mathbb{K}$$
is a non-degenerate duality pairing between $\mathfrak{g}_+$ and $\mathfrak{g}_-$. Suppose that there exists $x\in\mathfrak{g}_+$ such that $\langle x, \alpha\rangle_{\mathfrak{g}_+, \mathfrak{g}_-} = 0$ for all $\alpha\in\mathfrak{g}_-$. Then, since $\mathfrak{g}_+$ is isotropic for  $\langle\cdot,\cdot\rangle_{\mathfrak{g}}$, one has  $\langle x, y\rangle_{\mathfrak{g}} = 0$ for all $y\in\mathfrak{g}$, and the non-degeneracy of 
$\langle\cdot,\cdot\rangle_{\mathfrak{g}}$ implies that $x=0$. The same argument apply interchanging $\mathfrak{g}_+$ and $\mathfrak{g}_-$, thus $\langle \cdot, \cdot \rangle_{\mathfrak{g}_+, \mathfrak{g}_-}$ is non-degenerate. As a consequence, one obtains two continuous injections 
$$
\begin{array}{lll}
\mathfrak{g}_-&\hookrightarrow & \mathfrak{g}^*_+\\
\alpha & \mapsto & \langle \cdot, \alpha \rangle_{\mathfrak{g}_+, \mathfrak{g}_-},
\end{array}
\quad
\textrm{and}\quad
\begin{array}{lll}
\mathfrak{g}_+&\hookrightarrow & \mathfrak{g}^*_-\\
x & \mapsto & \langle x, \cdot \rangle_{\mathfrak{g}_+, \mathfrak{g}_-}.
\end{array}
$$
\item[(2)-(3)]
Let us show that  both $$\mathfrak{g}_+\subset  \mathfrak{g}^*_-$$ and $$\mathfrak{g}_-\subset  \mathfrak{g}^*_+$$ are stable under the coadjoint action of $\mathfrak{g}_-$ on $\mathfrak{g}^*_-$ and $\mathfrak{g}_+$ on $\mathfrak{g}^*_+$ respectively. Indeed, the invariance of the bilinear form $\langle\cdot,\cdot\rangle_{\mathfrak{g}}$ with respect to the bracket $[\cdot,\cdot]_{\mathfrak{g}}$ implies
that for any $x\in\mathfrak{g}_+$ and $\alpha\in\mathfrak{g}_-$,
$$
\langle x, [\alpha, \cdot]_{\mathfrak{g}}\rangle_{\mathfrak{g}} = - \langle [\alpha, x]_{\mathfrak{g}},  \cdot\rangle_{\mathfrak{g}}.
$$
Hence, since $\mathfrak{g}_-$ is isotropic, 
$$
\langle x, [\alpha, \cdot]_{\mathfrak{g}}\rangle_{\mathfrak{g}_+, \mathfrak{g}_-} = - \langle p_{\mathfrak{g}_+}\left([\alpha, x]_{\mathfrak{g}}\right),  \cdot\rangle_{\mathfrak{g}_+, \mathfrak{g}_-}, 
$$
for any $x\in \mathfrak{g}_+$ and any $\alpha\in\mathfrak{g}_-$. It follows that
 $$\ad^*_\alpha(x) = -p_{\mathfrak{g}_+}\left([\alpha, x]_{\mathfrak{g}}\right)$$ and  similarly $$\ad^*_x(\alpha) = -p_{\mathfrak{g}_-}\left([x, \alpha]_{\mathfrak{g}}\right)$$ for any $x\in\mathfrak{g}_+$ and $\alpha\in\mathfrak{g}_-$.
The continuity of the corresponding adjoint maps follows from the continuity of the bracket $[\cdot, \cdot]_{\mathfrak{g}}$ and of the projections $p_{\mathfrak{g}_+}$ and $p_{\mathfrak{g}_-}$.

 \item[(4)-(5)] Let us prove that the dual map of the Lie bracket on $\mathfrak{g}_-$ restricts to a $1$-cocycle with respect to the adjoint action of $\mathfrak{g}_+$ on $\Lambda^2\mathfrak{g}_+$.
The dual map $$[\cdot, \cdot]_{\mathfrak{g}_-}^*~:  \mathfrak{g}_-^{*} \rightarrow  L(\mathfrak{g}_-, \mathfrak{g}_- ; \mathbb{K})  $$
%\simeq L(\mathfrak{g}_-; \mathfrak{g}^{*}_-)$$ 
to the bilinear map $[\cdot, \cdot]_{\mathfrak{g}_-}$ 
assigns to $\mathcal{F}(\cdot)\in \mathfrak{g}_-^{*}$ the bilinear form $\mathcal{F}\left([\cdot, \cdot]_{\mathfrak{g}_-}\right)$ and takes values in $\Lambda^2\mathfrak{g}_-^{*}$. 
%which is identified with the map $\alpha \mapsto \mathcal{F}\left([\alpha, \cdot]_{\mathfrak{g}_-}\right) = \ad^*_{\alpha}\mathcal{F}(\cdot)$ in $L(\mathfrak{g}_-; \mathfrak{g}^{*}_-)$
%$$
%\begin{array}{lclcl}
%[\cdot, \cdot]_{\mathfrak{g}_-}^*~: & \mathfrak{g}_-^{*}& \longrightarrow & L(\mathfrak{g}_-, \mathfrak{g}_- ; \mathbb{K}) & \simeq L(\mathfrak{g}_-; \mathfrak{g}^{*}_-)\\
%& \mathcal{F}(\cdot)&  \longmapsto& \mathcal{F}\left([\cdot, \cdot]_{\mathfrak{g}_-}\right) & \mapsto  \left(\alpha \mapsto \mathcal{F}\left([\alpha, \cdot]_{\mathfrak{g}_-}\right) = \ad^*_{\alpha}\mathcal{F}(\cdot)\right),
%\end{array}
%$$
 Since by (2), $\mathfrak{g}_-\subset  \mathfrak{g}^*_+$ is stable under the coadjoint action of $\mathfrak{g}_+$ and since the coadjoint action $\ad^*~:\mathfrak{g}_+\times\mathfrak{g}_-\rightarrow \mathfrak{g}_-$ is continuous, one can consider the adjoint action of $\mathfrak{g}_+$ on $\Lambda^2\mathfrak{g}_-^{*}$ defined by \eqref{ad2}.
Since the duality pairing $\langle\cdot, \cdot\rangle_{\mathfrak{g}_+, \mathfrak{g}_-}$ induces a continuous  injection $\mathfrak{g}_+\hookrightarrow \mathfrak{g}_-^*$, one can consider the subspace $\Lambda^2\mathfrak{g}_+$ of $\Lambda^2\mathfrak{g}_-^*$ defined in Section \ref{lambda_g_+}.
Denote by $\theta_+~:\mathfrak{g}_+ \rightarrow  L(\mathfrak{g}_-, \mathfrak{g}_- ; \mathbb{K})$ the restriction of $[\cdot, \cdot]_{\mathfrak{g}_-}^*$ to the subspace $\mathfrak{g}_+\subset \mathfrak{g}_-^{*}$~:
$$
\theta_+(x) =  \langle x, [\cdot, \cdot]_{\mathfrak{g}_-}\rangle_{\mathfrak{g}_+, \mathfrak{g}_-}.
$$
Using the identification $L(\mathfrak{g}_-, \mathfrak{g}_- ; \mathbb{K})  \simeq L(\mathfrak{g}_-; \mathfrak{g}_-^{*})$, one has 
$$
\theta_+(x)(\alpha) = \langle x, [\alpha, \cdot]_{\mathfrak{g}_-}\rangle_{\mathfrak{g}_+, \mathfrak{g}_-} = \ad^*_{\alpha}x(\cdot).
$$
%$$
%\begin{array}{lclcl}
%\theta~: & \mathfrak{g}_+& \longrightarrow & L(\mathfrak{g}_-, \mathfrak{g}_- ; \mathbb{K}) & \simeq L(\mathfrak{g}_-; \mathfrak{g}_-^{*})\\
%&x&  \longmapsto& \langle x, [\cdot, \cdot]_{\mathfrak{g}_-}\rangle_{\mathfrak{g}_+, \mathfrak{g}_-}& \mapsto  \left(\alpha \mapsto \langle x, [\alpha, \cdot]_{\mathfrak{g}_-}\rangle_{\mathfrak{g}_+, \mathfrak{g}_-} = \ad^*_{\alpha}x(\cdot)\right).
%\end{array}
%$$
One sees immediately that the map  $\theta_+$ takes values in $\Lambda^2\mathfrak{g}_+$ if and only if $\ad^*_{\alpha}x\in\mathfrak{g}_+$ for any $\alpha\in\mathfrak{g}_-$ and for any $x\in\mathfrak{g}_+$, which is verified by (2).
Using the fact that the duality pairing $\langle\cdot, \cdot \rangle_{\mathfrak{g}_+, \mathfrak{g}_-}$ is the restriction of $\langle\cdot, \cdot \rangle_{\mathfrak{g}}$ and that $\langle\cdot, \cdot \rangle_{\mathfrak{g}}$ is invariant with respect to the bracket $[\cdot, \cdot]_{\mathfrak{g}}$, one has  $$\langle [x, y], [\alpha, \beta]\rangle_{\mathfrak{g}_-, \mathfrak{g}_+} = -\langle [\alpha, [x,y]], \beta\rangle_{\mathfrak{g}},$$ and the Jacobi identity verified by $ [\cdot, \cdot]_{\mathfrak{g}}$ implies
$$
\langle [x, y], [\alpha, \beta]\rangle_{\mathfrak{g}_-, \mathfrak{g}_+} = -\langle [[\alpha, x], y], \beta\rangle_{\mathfrak{g}} - \langle [x, [\alpha, y]], \beta\rangle_{\mathfrak{g}}.
$$
Using the decomposition $$-[\alpha, x] = -p_{\mathfrak{g}_-}[\alpha, x] - p_{\mathfrak{g}_+}[\alpha, x] = -\ad^*_x \alpha + \ad^*_\alpha x,$$ and similarly
$$-[\alpha, y] = -p_{\mathfrak{g}_-}[\alpha, y] - p_{\mathfrak{g}_+}[\alpha, y]= -\ad^*_y \alpha + \ad^*_\alpha y,$$ one gets
\begin{equation*}
\quad\quad\langle [x, y], [\alpha, \beta]\rangle_{\mathfrak{g}_+, \mathfrak{g}_-} = \langle [\ad^*_\alpha x-\ad^*_x\alpha, y], \beta\rangle_{\mathfrak{g}} + \langle [x,  \ad^*_\alpha y -\ad^*_y\alpha], \beta\rangle_{\mathfrak{g}},
\end{equation*}
hence
\begin{equation}\label{weinstein}
\begin{array}{ll}
\langle [x, y], [\alpha, \beta]\rangle_{\mathfrak{g}_+, \mathfrak{g}_-}   &=  \langle [ \ad^*_\alpha x, y], \beta\rangle_{\mathfrak{g}}+ \langle [x, \ad^*_\alpha y], \beta\rangle_{\mathfrak{g}}  \\
&\quad+ \langle y, [\ad^*_x\alpha, \beta]\rangle_{\mathfrak{g}}  - \langle x, [\ad^*_y \alpha, \beta]\rangle_{\mathfrak{g}}.
\end{array}
\end{equation}
It follows that
\begin{equation*}
\ad^*_{\alpha}[x, y] = [\ad^*_\alpha x, y] + [x, \ad^*_\alpha y] + \ad^*_{\ad^*_x\alpha}y  - \ad^*_{\ad^*_y\alpha}x.
\end{equation*}
%(This is exactly the formula given in \cite{LW90} page 507, but with the opposite sign convention for the coadjoint map $\ad^*$).
On the other hand, the condition \eqref{cocycle2} that $\theta_+$ is a $1$-cocycle reads~:
\begin{equation}\label{2.7}
\begin{array}{ll}
\langle [x, y], [\alpha, \beta]\rangle_{\mathfrak{g}_+, \mathfrak{g}_-}& = +\langle y, [\ad^*_x\alpha, \beta]\rangle_{\mathfrak{g}_+, \mathfrak{g}_-} +\langle y, [\alpha, \ad^*_x\beta]\rangle_{\mathfrak{g}_+, \mathfrak{g}_-} \\ & \quad-\langle x, [\ad^*_y\alpha, \beta]\rangle_{\mathfrak{g}_+, \mathfrak{g}_-} - \langle x, [\alpha, \ad^*_y\beta]\rangle_{\mathfrak{g}_+, \mathfrak{g}_-}.
\end{array}
\end{equation}
The first and third terms in the RHS of \eqref{2.7} equal the last two terms in the RHS of \eqref{weinstein}. Using the invariance ~\eqref{invariance_bilinear_map} of the bilinear form $\langle\cdot,\cdot\rangle_{\mathfrak{g}}$ with respect to the bracket $[\cdot, \cdot]_{\mathfrak{g}}$,  the last term in the RHS of \eqref{2.7} reads
\begin{equation*}
\begin{array}{ll}
- \langle x, [\alpha, \ad^*_y\beta]\rangle_{\mathfrak{g}_+, \mathfrak{g}_-}  & =\langle [\alpha, x], \ad^*_y\beta\rangle_{\mathfrak{g}} =  \langle p_{\mathfrak{g}_+}([\alpha, x]), \ad^*_y\beta\rangle_{\mathfrak{g}_+, \mathfrak{g}_-}\\ & =  - \langle \ad^*_\alpha x, \ad^*_y\beta\rangle_{\mathfrak{g}_+, \mathfrak{g}_-} = -\langle [y, \ad^*_\alpha x], \beta\rangle_{\mathfrak{g}_+, \mathfrak{g}_-},\end{array}
\end{equation*}
and similarly the second term in the RHS of \eqref{2.7} reads
\begin{equation*}
\langle y , [\alpha,  \ad^*_x\beta]\rangle_{\mathfrak{g}_+, \mathfrak{g}_-} = \langle [x, \ad^*_\alpha y], \beta\rangle_{\mathfrak{g}_+, \mathfrak{g}_-}.
\end{equation*}
Hence the equivalence between  \eqref{2.7} and  \eqref{weinstein} follows. By interchanging the roles of $\mathfrak{g_+}$ and $\mathfrak{g}_-$, one proves (5) in a similar way.
 \end{enumerate}
 \end{proof}

In the proof of Theorem~\ref{Manin_to_bialgebra}, we have showed the following~:
\begin{proposition}
Let $\mathfrak{g} = \mathfrak{g}_+\oplus\mathfrak{g}_-$  be a decomposition of a Banach Lie algebra $\mathfrak{g}$ into the direct sum of two Banach Lie subalgebras, and suppose that $\mathfrak{g}$ is endowed with a non-degenerate symmetric bilinear map $\langle\cdot,\cdot\rangle_{\mathfrak{g}}$, invariant with respect to the Lie bracket in $\mathfrak{g}$. Then the cocycle condition~\eqref{cocycle2}
for the restriction $\theta_+$ of $[\cdot, \cdot]_{\mathfrak{g}_-}^*~:\mathfrak{g}_-^*\rightarrow \Lambda^2\mathfrak{g}_-^*$ to the subspace $\mathfrak{g}_+\subset \mathfrak{g}_-^{*}$ reads
\begin{equation}\label{cocycle_weinstein}
\ad^*_{\alpha}[x, y] = [\ad^*_\alpha x, y] + [x, \ad^*_\alpha y] + \ad^*_{\ad^*_x\alpha}y  - \ad^*_{\ad^*_y\alpha}x,
\end{equation}
where $x, y \in\mathfrak{g}_+$ and $\alpha \in\mathfrak{g}_-$. 
\end{proposition}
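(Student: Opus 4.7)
The plan is to extract the proposition directly from the computation already carried out between equations \eqref{weinstein} and \eqref{2.7} in the proof of Theorem~\ref{Manin_to_bialgebra}, now isolating it as an intrinsic equivalence between the cocycle identity \eqref{cocycle2} for $\theta_+$ and the operator identity \eqref{cocycle_weinstein}. The main mechanism will be to pair both sides of \eqref{cocycle_weinstein} with an arbitrary $\beta \in \mathfrak{g}_-$ via $\langle\cdot,\beta\rangle_{\mathfrak{g}_+,\mathfrak{g}_-}$ and then strip $\beta$ off at the end using non-degeneracy.

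First I would unpack the cocycle condition \eqref{cocycle2} applied to $\theta_+(x) = \langle x, [\cdot,\cdot]_{\mathfrak{g}_-}\rangle_{\mathfrak{g}_+,\mathfrak{g}_-}$ on a pair $(\alpha,\beta) \in \mathfrak{g}_- \times \mathfrak{g}_-$; this is exactly equation \eqref{2.7}. Then, using the formulas $\ad^*_\alpha z = -p_{\mathfrak{g}_+}[\alpha,z]_{\mathfrak{g}}$ and $\ad^*_z \alpha = -p_{\mathfrak{g}_-}[z,\alpha]_{\mathfrak{g}}$ from parts (2)--(3) of Theorem~\ref{Manin_to_bialgebra}, together with the isotropy of $\mathfrak{g}_\pm$ and the invariance \eqref{invariance_bilinear_map}, I would rewrite each of the four terms on the right-hand side of \eqref{cocycle_weinstein} paired with $\beta$: the pair $\ad^*_{\ad^*_x\alpha}y$ and $-\ad^*_{\ad^*_y\alpha}x$ converts directly into the first and third terms of the right-hand side of \eqref{2.7}, while $[\ad^*_\alpha x, y]$ and $[x,\ad^*_\alpha y]$ paired with $\beta$ reproduce the second and fourth terms via the chain of identities already displayed in the proof of Theorem~\ref{Manin_to_bialgebra}. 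A parallel simplification of $\langle\ad^*_\alpha[x,y],\beta\rangle_{\mathfrak{g}_+,\mathfrak{g}_-}$ using invariance matches the left-hand side of \eqref{2.7}.

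Non-degeneracy of the pairing $\langle\cdot,\cdot\rangle_{\mathfrak{g}_+,\mathfrak{g}_-}$ in its second argument then upgrades this $\beta$-wise equality to the bare operator identity \eqref{cocycle_weinstein}, establishing the desired equivalence. The main obstacle is purely bookkeeping: keeping the signs consistent between the two defining formulas for $\ad^*$ on the two sides of the pairing, and making sure each application of invariance moves the bracket onto the correct factor. No new analytic input is required beyond what was used in Theorem~\ref{Manin_to_bialgebra}; the proposition simply records, for later reference, the Banach-algebraic form of Drinfel'd's compatibility condition expressed element-wise in $\alpha \in \mathfrak{g}_-$, which will be the form needed when introducing Banach Lie bialgebras in Section~4.
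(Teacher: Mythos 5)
Your proposal is correct and is essentially the paper's own argument: the paper proves this Proposition precisely by pointing back to the computation in the proof of Theorem~\ref{Manin_to_bialgebra}, where the cocycle condition \eqref{2.7} is rewritten term by term (using the formulas $\ad^*_\alpha x = -p_{\mathfrak{g}_+}[\alpha,x]_{\mathfrak{g}}$, $\ad^*_x\alpha = -p_{\mathfrak{g}_-}[x,\alpha]_{\mathfrak{g}}$, isotropy and invariance) into \eqref{weinstein}, and then $\beta$ is stripped off by non-degeneracy of the pairing to obtain \eqref{cocycle_weinstein}. Your term-by-term matching and the final use of non-degeneracy coincide with that computation, so there is nothing new to add.
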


\begin{remark}{\rm
Equation~\eqref{cocycle_weinstein} is exactly the formula given in \cite{LW90} page 507, but with the opposite sign convention for the coadjoint map $\ad^*$.
}
\end{remark}

\section{Generalized Banach Poisson manifolds and related notions}

In this Section, we generalize the definition of  Poisson manifolds to the Banach context (Section~\ref{def_Poisson}). Example of generalized Banach Poisson manifolds are Banach symplectic manifolds (Section~\ref{symp}) and Banach Lie--Poisson spaces (Section~\ref{symp}). 

\subsection{Definition of generalized Banach Poisson manifolds}\label{def_Poisson}

The notions of Banach Poisson manifolds and Banach Lie--Poisson spaces were introduced in \cite{OR03}. The notion of sub  Poisson structures in the Banach context was introduced in \cite{CP12}.  %and is equivalent to the notion of generalized Poisson manifold we define below. 
 In the case of locally convex spaces, an analoguous definition of weak Poisson manifold structure was defined in \cite{NST14}.
In the symplectic case, related notions were introduced in \cite{DGR15} enabling the study of the orbital stability of some Hamiltonian PDE's. In the present paper, we restrict ourselves to the Banach setting but generalize slightly these notions to the case where an arbitrary duality pairing is considered, and where the existence of Hamiltonian vector fields is not assumed (this last point is assumed in \cite{NST14} and \cite{CP12}). Moreover, instead of working with subalgebras of the space of smooth functions $\mathcal{C}^{\infty}(M)$ on a Banach manifold $M$, we will work with subbundles of the cotangent bundle (see Remark~\ref{remark_Poisson} below). 

%We will denote by $\mathcal{C}^{\infty}(M)$ the algebra of smooth real functions on a Banach manifold $M$.

%Recall that a function $f~: E\rightarrow F$ between two Banach spaces is called Fr\'echet differentiable at $p\in E$ if there exists a bounded linear operator $df_p$ from $E$ to $F$ such that 
%$$
%\lim_{x\rightarrow 0}\frac{\|f(p+x) - f(p) - df_p(x)\|_F}{\|x\|_E} = 0.
%$$
%A function is called Fr\'echet differentiable on $E$ if it is Fr\'echet differentiable at every $p\in E$. In that case, the Fr\'echet differential $df~:E \rightarrow L(E,F)$ may itself be differentiable leading to the notion of $\mathcal{C}^2$ functions  between the  Banach spaces $E$ and $F$. By induction, one can define the notion of smooth functions between two Banach spaces.  A smooth real function on a Banach manifold $M$ is a function which is smooth in any chart of $M$. 

\begin{definition}
Consider a unital subalgebra $\mathcal{A}\subset\mathcal{C}^{\infty}(M)$ of smooth functions on a Banach manifold $M$, i.e. $\mathcal{A}$ is a vector subspace of $\mathcal{C}^{\infty}(M)$ containing the constants and  stable under pointwise multiplication.
An $\mathbb{R}$-bilinear operation $\{\cdot,\cdot\}~: \mathcal{A}\times\mathcal{A}
\rightarrow\mathcal{A}$ is called a \textbf{Poisson bracket} on $M$ if it satisfies~:
\begin{enumerate}
\item[(i)] anti-symmetry ~: $\{f, g\} = -\{g, f\}~;$
\item[(ii)] Jacobi identity~: $\{\{f, g\}, h\} + \{\{g, h\}, f\} + \{\{h, f\}, g\} =  0$~;
\item[(iii)] Leibniz formula~: $\{f, gh\} = \{f, g\}h + g\{f, h\}$~;
\end{enumerate}
\end{definition}
%\todo{cite beltita-golinski-tumpach}
\begin{remark}\label{remark_Poisson}{\rm
\begin{enumerate}
\item
Note that the Leibniz rule implies that for any $f\in\mathcal{A}$, $\{f, \cdot\}$ acts by derivations on the subalgebra $\mathcal{A}\subset \mathcal{C}^{\infty}(M)$. 
When $M$ is finite-dimensional and $\mathcal{A} = \mathcal{C}^{\infty}(M)$, this condition implies that $\{f, \cdot\}$ is a smooth vector field $X_f$ on $M$, called the Hamiltonian vector field associated to $f$, uniquely defined by its action on $\mathcal{C}^{\infty}(M)$~:
$$
X_f(h) = dh(X_f) = \{f, h\}.
$$
It is worth noting that on an  infinite-dimensional Hilbert space, 
there exists derivations of order greater than 1, i.e. that do not depend only on the differentials of functions
(see Lemma 28.4 in \cite{KM97}, chapter VI).
%
%
%. Indeed, recall that the dual space of the Banach space $L^{\infty}(\mathcal{H})$ of bounded linear operators over an Hilbert space $\mathcal{H}$ equals $L^{1}(\mathcal{H}) \oplus \mathcal{K}(\mathcal{H})^0$. Take any non-zero element $l$ in $\mathcal{K}(\mathcal{H})^0$, and consider the map which associates to any smooth function $f\in \mathcal{C}^{\infty}(\mathcal{H})$, the function $a\mapsto l(d^2f(a))$, where $d^2f(a)~: \mathcal{H}\rightarrow \mathcal{H}^*$ is identified with an element in $L^{\infty}(\mathcal{H})$. Then, (section~28.3 in \cite{KM97}), one has
%$$
%\begin{array}{ll}
%l(d^2(fg)(a)) &= l\left(d^2f(a)\cdot g(a) + df(a)\otimes dg(a) + dg(a)\otimes df(a) + d^2g(a)\cdot f(a)\right)\\ & = l\left(d^2f(a)\cdot g(a)\right) + l\left(d^2g(a)\cdot f(a)\right),
%\end{array}
%$$
%since $df(a)\otimes dg(a)~: \mathcal{H}\rightarrow \mathcal{H}^*$ is of finite rank, and $l$ vanishes on compact operators. Therefore, for any $a\in\mathcal{H}$, $l(d^2f(a))$ is a derivation of order $2$. 
 It follows that, contrary to the finite-dimensional case, one may not be able to associate a Poisson tensor (see Definition~\ref{Poisson_tensor} below) to a given Poisson bracket. Examples of Poisson brackets not given by Poisson tensors were constructed in \cite{BGT18}. 
%However, Poisson tensors are at the center of our considerations, therefore we will restrict ourself to the case where a Poisson tensor exists. 
%This corresponds to restricting our attention to derivations of order one. Indeed for a Banach manifold $M$ admitting partitions of unity, derivations of order 1 are sections of the co-cotangent bundle $T^{**}M$ which is a fiber bundle over $M$ whose fiber over $p\in M$ is the bidual Banach space of the tangent space $T_pM$ at~$p$.
\item Given a covector $\xi\in T^*_pM$, it is always possible to extend it to a locally defined $1$-form $\alpha$ with $\alpha_p = \xi$ (for instance by setting $\alpha$ equal to a constant in a chart around $p\in M$). However, it may not be possible to extend it to a smooth $1$-form on $M$. It may therefore not be possible to find a smooth real function on $M$ whose differential equals $\xi$ at $p\in M$. The difficulty resides in defining smooth bump functions, which are, in the finite dimensional Euclidean case, usually constructed using the differentiability of the norm. In \cite{R64}, it was shown that a Banach space admits a $\mathcal{C}^1$-norm away from the origin if and only if its dual is separable.
Remark that $L_\infty(\mathcal{H})$ is not separable (since it contains the nonseparable Banach space $l_{\infty}$ as the space of diagonal operators). It follows that the dual of $L_\infty(\mathcal{H})$ is nonseparable (since by Theorem~III.7 in \cite{RS1}, if the dual of a Banach space is separable, so is the Banach space itself). Therefore working with unital subalgebras of smooth functions on a Banach manifold modelled on $L_\infty(\mathcal{H})$ (or on $L_{\res}(\mathcal{H})$ and $\mathfrak{u}_{\res}(\mathcal{H})$ defined below) may lead to unexpected difficulties. 
%One way to get around these difficulties is to work with sheafes of smooth functions. Another way, which we will follow, is to consider subbundles of the cotangent bundle. 
For this reason, we will adapt the definition of Banach Poisson manifold and work with local sections of subbundles of the cotangent bundle.
The link between unital subalgebras of $\mathcal{C}^{\infty}(M)$ and subbundles of the cotangent bundle is given by next definition.
\end{enumerate}
}
\end{remark}

\begin{definition}
Let $M$ be a Banach manifold and $\mathcal{A}$ be a unital subalgebra of $\mathcal{C}^{\infty}(M)$. The first jet of $\mathcal{A}$, denoted by $\operatorname{J}^1(\mathcal{A})$ is the subbundle of the cotangent bundle $T^*M$ whose fiber over $p\in M$ is the space of differentials of functions in $\mathcal{A}$,
$$
\operatorname{J}^1(\mathcal{A})_p =\{ df_p~: f\in\mathcal{A} \}.
$$
%We will say that the unital subalgebra $\mathcal{A}$ is \textbf{large} if the duality pairing between $TM$ and $T^*M$ restricts to a duality pairing between $TM$ and $\operatorname{J}^1(\mathcal{A})$.
\end{definition}
%\todo{IV. find reference. AMR p 332: Restrepo [1964] separable Hilbert space admits $C^1$-norm away from the origine iff its dual is separable. Verifier que $L_1$ separable mais $L^\infty$ pas.}

%\begin{remark}{\rm
%A unital subalgebra $\mathcal{A}$ of $\mathcal{C}^{\infty}(M)$ is large if and only if, for every $p\in M$, the pairing
%$$
%\begin{array}{lcll}
%\langle \cdot, \cdot\rangle_{J^1(\mathcal{A}), TM}~:& J^1(\mathcal{A})_p\times T_pM&\rightarrow &\mathbb{R}\\
%& (df_p, X)& \mapsto &\langle df_p, X\rangle_{TM^*, TM} 
%\end{array}
%$$
%is non-degenerate.
%This is the case if and only if $\operatorname{J}^1(\mathcal{A})_p$  separates points in the tangent space $T_pM$ for any $p\in M$, i.e. 
%if for every $p\in M$, a tangent vector $X\in T_pM$ such that $$\langle df_p, X\rangle_{T^*M, TM}= 0,\quad \forall f\in\mathcal{A}$$ necessarly vanishes. 
%This is exactly   condition P2 in Definition~2.1 in \cite{NST14}.
%}
%\end{remark}

\begin{definition}
%Endow each fiber $\operatorname{J}^1(\mathcal{A})_p $ of the subbundle $\operatorname{J}^1(\mathcal{A})$ with the norm of the dual space $T_p^*M$, $p\in M$. 
We will say that $\mathbb{F}$ is a subbundle  of $T^*M$ \textbf{in duality} with the tangent bundle to $M$ if, for every $p\in M$, 
\begin{enumerate}
\item $\mathbb{F}_p$ is an injected Banach space of $T_p^*M$, i.e. $\mathbb{F}_p$ admits a Banach space structure such that the injection $\mathbb{F}_p\hookrightarrow T_p^*M$ is continuous,
\item the natural duality pairing between $T_p^*M$ and $T_pM$ restricts to a duality pairing between $\mathbb{F}_p$ and $T_pM$, i.e.  $\mathbb{F}_p$ separates points in $T_pM$.
\end{enumerate}
%Note that $\mathbb{F}_p$ is complete if and only if it is closed in $T_p^*M$. Recall that, since $\mathbb{R}$ is complete,  the dual space $\mathbb{F}_p^*$ of $\mathbb{F}_p$ is complete, even if 
%$\mathbb{F}_p$ isn't (see for instance \cite{Bre10} section~1.1).correspondence
\end{definition}

We will denote by $\Lambda^2\mathbb{F}^{*}$ the vector bundle over $M$ whose fiber over $p$ is the Banach space of continuous skew-symmetric bilinear maps on the subspace $\mathbb{F}_p$ of $T_p^*M$.

\begin{definition}\label{Poisson_tensor}
Let $M$ be a Banach manifold and $\mathbb{F}$ a subbundle of $T^*M$ in duality with $TM$. A smooth section $\pi$ of $\Lambda^2\mathbb{F}^*$  is called a \textbf{Poisson tensor} on $M$ with respect to $\mathbb{F}$ if~:
\begin{enumerate}
\item for any closed local sections $\alpha$, $\beta$ of  $\mathbb{F}$, the differential $d\left(\pi(\alpha, \beta)\right)$ is a local section of $\mathbb{F}$;
\item (Jacobi) for any closed local sections $\alpha$, $\beta$, $\gamma$ of $\mathbb{F}$,
\begin{equation}\label{Jacobi_Poisson}
\pi\left(\alpha, d\left(\pi(\beta, \gamma)\right)\right) + \pi\left(\beta, d\left(\pi(\gamma, \alpha)\right)\right) + \pi\left(\gamma, d\left(\pi(\alpha, \beta)\right) \right)= 0.
\end{equation}
\end{enumerate}
\end{definition}

\begin{remark}{\rm
\begin{enumerate}
\item The first condition in Definition \ref{Poisson_tensor} is necessary in order to make sence of equation~\eqref{Jacobi_Poisson} since the Poisson tensor is defined only on local sections of $\mathbb{F}$.
\item  Consider  a unital subalgebra $\mathcal{A}$ of $\mathcal{C}^{\infty}(M)$ and set $\mathbb{F} = \operatorname{J}^1(\mathcal{A})$ the first jet of functions in $\mathcal{A}$. 
Then equation \eqref{Jacobi_Poisson} for a Poisson tensor $\pi$ on $M$ with respect to $\mathbb{F}$ is equivalent to the Jacobi identity for the Poisson bracket  defined for $f,g\in\mathcal{A}$ by $\{f, g\} = \pi(df, dg)$.
\end{enumerate}
}
\end{remark}

\begin{definition}\label{Poisson-Manifold}
A \textbf{generalized Banach Poisson manifold} is a triple $(M, \mathbb{F}, \pi)$ consisting of a smooth Banach manifold $M$, a subbundle $\mathbb{F}$ of the cotangent bundle $T^*M$ in duality with $TM$, and a Poisson tensor $\pi$ on $M$ with respect to $\mathbb{F}$.
\end{definition}

\begin{remark}
{\rm
Let us make the link between our definition of generalized Banach Poisson manifold and related notions in the literature. Consider  a unital subalgebra $\mathcal{A}$ of $\mathcal{C}^{\infty}(M)$, set $\mathbb{F} = \operatorname{J}^1(\mathcal{A})$ the first jet of functions in $\mathcal{A}$, and consider a Poisson bracket on $\mathcal{A}$ given by a Poisson tensor~:
$\{f, g\} = \pi(df, dg)$. Our definition of generalized Banach Poisson manifold differs from the one given in \cite{NST14} and the definition of sub  Poisson manifold given in \cite{CP12} by the fact that we do not assume the existence of Hamiltonian vector fields associated to functions $f\in \mathcal{A}$ (condition P3 in Definition~2.1 in \cite{NST14} and condition $P~:T^\flat M\rightarrow TM$ in \cite{CP12}). In other words, for $f\in\mathcal{A}$, $\{f, \cdot\}$ is a derivation on $\mathcal{A}\subset\mathcal{C}^{\infty}(M)$ that may not --with our definition of Poisson manifold-- be given by a smooth vector field on $M$. However, since the Poisson bracket is given by a smooth Poisson tensor, $\{f, \cdot\}$ is a smooth section of the bundle $\operatorname{J}^1(\mathcal{A})^*$ whose fiber over $p\in M$ is the dual Banach space to $\operatorname{J}^1(\mathcal{A})_p$. Moreover, in order to stay in the Banach context, we suppose that $\mathbb{F}_p$ has a structure of Banach space.
%Moreover, since our definition does not use subalgebras of smooth functions defined on the whole manifold, it may be more general than the definition of a sub Banach Lie Poisson manifold given in \cite{CP12}.
}\end{remark}

\subsection{Banach Symplectic manifolds}\label{symp}
An important class of finite-dimensional Poisson manifolds is provided by symplectic manifolds. As we will see below, this is also the case in the Banach setting, i.e. general Banach symplectic manifolds (not necessarily strong symplectic) are particular examples of generalized Banach Poisson manifolds.
Let us recall the following definitions. The exterior derivative $d$ associates to a  $n$-form on a Banach manifold $M$ a  $(n+1)$-form on $M$. In particular, for any $2$-form $\omega$ on a Banach manifold $M$, the exterior derivative of $\omega$ is the $3$-form $d\omega$ defined by~:
$$
\begin{array}{ll}
d\omega_p(X, Y, Z) = &\!\!\! - \omega_p([\tilde{X}, \tilde{Y}], \tilde{Z}) + \omega_p([\tilde{X}, \tilde{Z}], \tilde{Y}) - \omega_p([\tilde{Y}, \tilde{Z}], \tilde{X})
 +\left\langle d_p\left(\omega(\tilde{Y}, \tilde{Z})\right), \tilde{X}\right\rangle_{T_p^*M, T_pM} \\&\!\!\! -  \left\langle d_p\left(\omega(\tilde{X}, \tilde{Z})\right), \tilde{Y}\right\rangle_{T_p^*M, T_pM} \!\!\! +  \left\langle d_p\left(\omega(\tilde{X}, \tilde{Y})\right), \tilde{Z}\right\rangle_{T_p^*M, T_pM},
\end{array}
$$
where $\tilde{X}, \tilde{Y}, \tilde{Z}$ are any smooth extensions of $X$, $Y$ and $Z\in T_pM$ around $p\in M$. An expression of this formula in a chart shows that it does not depend on the extensions $\tilde{X}, \tilde{Y}, \tilde{Z}$, but only on the values of these vector fields at $p\in M$, i.e. it defines an tensor (see Proposition~3.2, chapter~V in \cite{La01}). The contraction or interior product $i_X\omega$ of a $n$-form $\omega$ with a vector field $X$ is the $(n-1)$-form defined by $$i_X\omega(Y_1, \cdots, Y_{n-1}) := \omega(X, Y_1, \cdots, Y_{n-1}).$$ The Lie derivative $\mathcal{L}_X$ with respect to a vector field $X$ can be defined using the Cartan formula
\begin{equation}\label{Cartan_formula}
\mathcal{L}_X = i_X d + d\, i_X.
\end{equation}
The Lie derivative, the bracket $[X, Y]$ of two vector fields $X$ and $Y$, and the interior product satisfy the following relation  (see Proposition~5.3, chapter~V in \cite{La01})~:
\begin{equation}\label{relation_Lie_bracket_i}
i_{[X, Y]} = \mathcal{L}_{X}i_Y - i_Y\mathcal{L}_X.
\end{equation}
Let us recall the definition of a Banach (weak) symplectic manifold.
\begin{definition}
A \textbf{Banach symplectic manifold} is a Banach manifold $M$ endowed with a $2$-form $\omega\in \Gamma\left(\Lambda^2T^*M\right)$ such that
\begin{enumerate}
\item $\omega$ is non-degenerate~: $\omega_p^\sharp~: T_pM \rightarrow  T_p^*M$, $ X \mapsto  i_X\omega := \omega(X, \cdot)$ 
%$$\begin{array}{llll}\omega_p^\sharp~: &T_pM &\rightarrow & T_p^*M\\ & X& \mapsto & i_X\omega := \omega(X, \cdot)\end{array}$$
 is injective $\forall p\in M$~;
\item $\omega$ is closed~: $d\omega = 0$.
\end{enumerate}
\end{definition}

\begin{lemma}\label{lemma_symplectic_vector_field}
Let $(M, \omega)$ be a Banach symplectic manifold. 
Consider $\alpha$ and $\beta$  two closed local sections of $\omega^\sharp(T M)$, i.e. $d\alpha = d\beta = 0$, $\alpha = \omega(X_\alpha, \cdot)$ and $\beta = \omega(X_\beta, \cdot)$ for some local vector fields $X_{\alpha}$ and $X_\beta$. Then 
\begin{enumerate}
\item  $X_\alpha$ and $X_{\beta}$ are symplectic vector fields~: $\mathcal{L}_{X_{\alpha}}\omega = 0 = \mathcal{L}_{X_{\beta}}\omega$
\item $i_{[X_{\alpha}, X_\beta]}\omega = -d(\omega(X_\alpha, X_\beta)).$
\end{enumerate}
\end{lemma}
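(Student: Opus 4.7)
The plan is to derive both assertions directly from Cartan's magic formula \eqref{Cartan_formula} together with the commutation identity \eqref{relation_Lie_bracket_i}, treating $\alpha = i_{X_\alpha}\omega$ and $\beta = i_{X_\beta}\omega$ as closed $1$-forms. No nontrivial analytic obstacle is expected here: the lemma is a purely formal consequence of the Cartan calculus on a Banach manifold, and all the required identities are stated in the excerpt just above.

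For part (1), I would apply the Cartan formula \eqref{Cartan_formula} to $\omega$ with the vector field $X_\alpha$:
\[
\mathcal{L}_{X_\alpha}\omega = i_{X_\alpha}d\omega + d(i_{X_\alpha}\omega).
\]
Since $\omega$ is closed by hypothesis, $i_{X_\alpha}d\omega = 0$; since $i_{X_\alpha}\omega = \alpha$ and $d\alpha = 0$ by assumption, the second term also vanishes. Hence $\mathcal{L}_{X_\alpha}\omega = 0$, and the same argument applied to $X_\beta$ gives $\mathcal{L}_{X_\beta}\omega = 0$.

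For part (2), the natural approach is to use identity \eqref{relation_Lie_bracket_i} applied to $X_\alpha$, $X_\beta$ acting on $\omega$:
\[
i_{[X_\alpha, X_\beta]}\omega = \mathcal{L}_{X_\alpha}\bigl(i_{X_\beta}\omega\bigr) - i_{X_\beta}\bigl(\mathcal{L}_{X_\alpha}\omega\bigr).
\]
By part (1) the second term on the right vanishes, so only the first term remains. Rewriting $i_{X_\beta}\omega = \beta$ and applying Cartan's formula again,
\[
\mathcal{L}_{X_\alpha}\beta = i_{X_\alpha}d\beta + d\bigl(i_{X_\alpha}\beta\bigr) = d\bigl(\beta(X_\alpha)\bigr) = d\bigl(\omega(X_\beta, X_\alpha)\bigr),
\]
since $d\beta = 0$. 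Using the antisymmetry of $\omega$, this equals $-d\bigl(\omega(X_\alpha, X_\beta)\bigr)$, which is the desired identity. The only mild point to keep in mind is that $X_\alpha$ and $X_\beta$ are only locally defined, but Cartan's formula and \eqref{relation_Lie_bracket_i} are local statements, so this causes no trouble.
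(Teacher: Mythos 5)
Your proof is correct and follows essentially the same route as the paper: Cartan's formula plus closedness of $\omega$ and $\alpha$ for part (1), and the identity $i_{[X_\alpha,X_\beta]} = \mathcal{L}_{X_\alpha} i_{X_\beta} - i_{X_\beta}\mathcal{L}_{X_\alpha}$ followed by a second application of Cartan's formula to the closed $1$-form $\beta = i_{X_\beta}\omega$ for part (2). The sign bookkeeping via antisymmetry of $\omega$ matches the paper's computation as well.
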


\begin{proof}
\begin{enumerate}
\item
Using the Cartan formula \eqref{Cartan_formula}, one has $\mathcal{L}_{X_{\alpha}}\omega = i_{X_\alpha}d\omega + d\,i_{X_\alpha}\omega = d\,i_{X_\alpha}\omega$, since $\omega$ is closed. But by definition $i_{X_\alpha}\omega = \alpha$ is closed. Using $d\circ d = 0$ (see Supplement~6.4A in \cite{AMR88} for a proof of this identity in the Banach context), it follows that $\mathcal{L}_{X_{\alpha}}\omega=0$. Similarly $\mathcal{L}_{X_{\beta}}\omega = 0$.
\item
By relation \eqref{relation_Lie_bracket_i}, one has 
$$
i_{[X_\alpha, X_\beta]}\omega = \mathcal{L}_{X_\alpha}i_{X_\beta}\omega - i_{X_\beta}\mathcal{L}_{X_\alpha}\omega,
$$
where the second term in the RHS vanishes by (1). Using Cartan formula, one gets
$$
i_{[X_\alpha, X_\beta]}\omega = d\,i_{X_\alpha}i_{X_\beta}\omega + i_{X_\alpha}d\,(i_{X_\beta}\omega) = d\,i_{X_\alpha}i_{X_\beta}\omega = d\left(\omega(X_{\beta}, X_{\alpha})\right) =- d\left(\omega(X_{\alpha}, X_{\beta})\right),
$$
where we have used that $i_{X_\beta}\omega = \beta $ is closed.
\end{enumerate}
\end{proof}

\begin{proposition}
Any  Banach symplectic manifold $(M, \omega)$ is naturally a  generalized Banach Poisson manifold $(M, \mathbb{F}, \pi)$ with
\begin{enumerate}
\item $\mathbb{F} =  \omega^\sharp(T M)$;
\item  $\pi~:\omega^\sharp(TM)\times \omega^\sharp(TM) \rightarrow \mathbb{R}$ defined by  $(\alpha, \beta)  \mapsto  \omega(X_{\alpha}, X_{\beta})$
%
%$$\begin{array}{lcll}
%\pi_p &~:\omega^\sharp(T_pM)\times \omega^\sharp(T_pM) &\rightarrow &\mathbb{R}\\
%& (\alpha, \beta) & \mapsto & \omega(X_{\alpha}, X_{\beta}),
%\end{array}
%$$
where $X_\alpha$ and $X_{\beta}$ are uniquely defined by $\alpha = \omega(X_\alpha, \cdot)$ and $\beta = \omega(X_{\beta}, \cdot)$.
\end{enumerate}
\end{proposition}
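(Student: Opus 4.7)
The plan is to dispatch the two clauses of Definition~\ref{Poisson-Manifold} separately, reducing everything to the non-degeneracy and closedness of $\omega$ via Lemma~\ref{lemma_symplectic_vector_field}.

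First I would verify that $\mathbb{F}=\omega^\sharp(TM)$ is a subbundle in duality with $TM$. Non-degeneracy of $\omega$ makes $\omega_p^\sharp:T_pM\to T_p^*M$ a continuous linear injection, so I transport the Banach norm of $T_pM$ along it and obtain a Banach-space structure on $\mathbb{F}_p$ for which the inclusion into $T_p^*M$ is continuous; the induced pairing separates vectors in $T_pM$ because $\omega(X,Y)=0$ for every $X\in T_pM$ forces $Y=0$. The same injectivity makes the vector field $X_\alpha$ associated to a local section $\alpha$ of $\mathbb{F}$ unique, so $\pi(\alpha,\beta):=\omega(X_\alpha,X_\beta)$ is well-defined, bilinear and skew-symmetric, and it assembles into a smooth section of $\Lambda^2\mathbb{F}^*$ because $\omega$ is smooth and $(\omega_p^\sharp)^{-1}:\mathbb{F}_p\to T_pM$ was built to be a topological isomorphism.

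For the first axiom of Definition~\ref{Poisson_tensor}, given closed local sections $\alpha=i_{X_\alpha}\omega$ and $\beta=i_{X_\beta}\omega$ of $\mathbb{F}$, Lemma~\ref{lemma_symplectic_vector_field}(2) yields
\begin{equation*}
d(\pi(\alpha,\beta))\;=\;-\,i_{[X_\alpha,X_\beta]}\omega\;\in\;\omega^\sharp(TM)=\mathbb{F},
\end{equation*}
which settles clause~(1) and, as a byproduct, identifies $X_{d(\pi(\alpha,\beta))}=-[X_\alpha,X_\beta]$; the resulting $1$-form is automatically closed since $d^2=0$, so it is again an admissible input for $\pi$.

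This reduces the Jacobi identity~\eqref{Jacobi_Poisson} to the cyclic identity
\begin{equation*}
\omega(X_\alpha,[X_\beta,X_\gamma])+\omega(X_\beta,[X_\gamma,X_\alpha])+\omega(X_\gamma,[X_\alpha,X_\beta])=0.
\end{equation*}
I would prove this by evaluating the intrinsic formula for $d\omega$ recalled just before Lemma~\ref{lemma_symplectic_vector_field} on $(X_\alpha,X_\beta,X_\gamma)$ and converting each of the three derivation terms $X_i(\omega(X_j,X_k))$ into a bracket term via Lemma~\ref{lemma_symplectic_vector_field}(2), e.g.\ $X_\gamma(\omega(X_\alpha,X_\beta))=-\omega([X_\alpha,X_\beta],X_\gamma)$. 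The six summands collapse pairwise, and $d\omega(X_\alpha,X_\beta,X_\gamma)=0$ becomes, up to the factor $2$, exactly the cyclic sum above. The only real subtlety is sign bookkeeping; the geometric content is entirely packaged in Lemma~\ref{lemma_symplectic_vector_field}, itself a Cartan-calculus consequence of $d\omega=0$ and the closedness of $\alpha,\beta$.
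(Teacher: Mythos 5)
Your proposal is correct and follows essentially the same route as the paper: clause (1) of the Poisson-tensor definition via Lemma~\ref{lemma_symplectic_vector_field}(2), and the Jacobi identity by evaluating the intrinsic formula for $d\omega$ on $(X_\alpha,X_\beta,X_\gamma)$, converting the derivation terms into bracket terms, and invoking $d\omega=0$. The only addition is your explicit check that $\mathbb{F}=\omega^\sharp(TM)$ is a subbundle in duality with $TM$ (transporting the norm along $\omega^\sharp$), which the paper leaves implicit but which is a welcome precision.
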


\begin{proof}
\begin{enumerate}
\item By Lemma~\ref{lemma_symplectic_vector_field}, for any closed local sections $\alpha$ and $\beta$ of $\mathbb{F}$, with $\alpha = \omega(X_\alpha, \cdot)$ and $\beta = \omega(X_\beta, \cdot)$, one has 
$$
d\left(\pi(\alpha, \beta)\right) := d\left(\omega(X_{\alpha}, X_\beta)\right) = -i_{[X_\alpha, X_\beta]}\omega,
$$
hence is a local section of $\mathbb{F} = \omega^\sharp(T M)$.
\item
Let us show that $\pi$ satisfies the Jacobi identity~\eqref{Jacobi_Poisson}. Consider closed local sections $\alpha, \beta$ and $\gamma$ of $\mathbb{F}$  and define the local vector fields $X_\alpha$, $X_\beta$ and $X_\gamma$ by 
$\alpha = i_{X_\alpha}\omega$, $\beta = i_{X_\beta}\omega$ and $\gamma = i_{X_\gamma}\omega$. 
Using Lemma~\ref{lemma_symplectic_vector_field}, the differential of $\omega$ satisfies
$$
\begin{array}{lll}
d\omega(X_\alpha, X_\beta, X_\gamma)& = &2\left( - \omega([{X}_\alpha, X_\beta], X_\gamma) + \omega([X_\alpha, X_\gamma], X_\beta) - \omega([X_\beta, X_\gamma], X_\alpha)\right)
\\& = &2\left(\pi\left(d\left(\pi(\alpha, \beta), \gamma\right) \right)\right)+ \pi\left(d\left(\pi( \gamma, \alpha)\right), \beta\right) +  \pi\left( d\left(\pi(\beta, \gamma)\right), \alpha\right).
\end{array}
$$
Since $\omega$ is closed, the Jacobi identity~\eqref{Jacobi_Poisson} is satisfied.
\end{enumerate}
\end{proof}

\subsection{Banach Lie--Poisson spaces}\label{LP}
Banach Lie--Poisson spaces were introduced in \cite{OR03}. Here we extend this notion to an arbitrary duality pairing.
\begin{definition}
Consider a duality pairing $\langle\cdot, \cdot\rangle_{\mathfrak{g}_+, \mathfrak{g}_-}~:\mathfrak{g}_+\times\mathfrak{g}_-\rightarrow\mathbb{K}$ between two Banach spaces.
We will say that $\mathfrak{g}_+$ is a  \textbf{Banach Lie--Poisson space with respect to} $\mathfrak{g}_-$ if $\mathfrak{g}_-$ is a Banach Lie algebra $(\mathfrak{g}_-, [\cdot,\cdot]_{\mathfrak{g}_-})$ which acts continuously on $\mathfrak{g}_+\hookrightarrow \mathfrak{g}_-^*$ by coadjoint action, i.e.
$$
\ad^*_\alpha x \in\mathfrak{g}_+, 
$$
for all $x \in \mathfrak{g}_+$ and $\alpha\in\mathfrak{g}_-$, and $\ad^*~: \mathfrak{g}_-\times\mathfrak{g}_+\rightarrow \mathfrak{g}_+$ is continuous.
\end{definition}

\begin{remark}
{\rm
A Banach Lie--Poisson space $\mathfrak g_+$ with respect to its continuous dual space $\mathfrak{g}_+^*$ is a Banach Lie--Poisson space in the sense of Definition~4.1 in \cite{OR03}.
}
\end{remark}

%Denote by $S(\mathfrak{g}_-)$  the symmetric algebra of $\mathfrak{g}_-$, i.e. the quotient of the tensor algebra of $\mathfrak{g}_-$ by the ideal generated by the differences of products of the form $\alpha\otimes\beta - \beta\otimes\alpha$, $\alpha, \beta \in \mathfrak{g}_-$. 
%\begin{definition}
% Consider a subspace $\mathfrak{g}_{-}$ of the dual space $\mathfrak{g}_+^*$ of a Banach space $\mathfrak{g}_+$. The \textbf{unital subalgebra of $\mathcal{C}^{\infty}(\mathfrak{g}_+)$ generated by $\mathfrak{g}_-$} is the smallest vector subspace of 
%$\mathcal{C}^{\infty}(\mathfrak{g}_+)$ containing $\mathfrak{g}_{-}\subset\mathcal{C}^{\infty}(\mathfrak{g}_+)$ and the constants, and stable under pointwise multiplication.
%%Consider a duality pairing $\langle\cdot, \cdot\rangle_{\mathfrak{g}_+, \mathfrak{g}_-}~:\mathfrak{g}_+\times\mathfrak{g}_-\rightarrow\mathbb{K}$ between two Banach spaces. Since $\mathfrak{g}_{-}$ can be identified with a subspace of $\mathfrak{g}_+^*$,  $S(\mathfrak{g}_-)$ can be identified with a subalgebra (of polynomials) of $\mathcal{C}^{\infty}(\mathfrak{g}_+)$ called the unital subalgebra of $\mathcal{C}^{\infty}(\mathfrak{g}_+)$ generated by $\mathfrak{g}_-$.
%\end{definition}
%
%

The following Theorem is a generalization of Theorem~4.2 in \cite{OR03} to the case of an arbitrary duality pairing between two Banach spaces $\mathfrak{g}_+$ and $\mathfrak{g}_-$. See also Corollary~2.11 in \cite{NST14} for an analogous statement. %For the proof, we refer the reader to the proof of Theorem~4.2 in \cite{OR03}.
We will include the proof for sake of completeness.
%\todo{adapter terminologie}

\begin{theorem}\label{4.2}
Consider a duality pairing $\langle\cdot, \cdot\rangle_{\mathfrak{g}_+, \mathfrak{g}_-}~:\mathfrak{g}_+\times\mathfrak{g}_-\rightarrow\mathbb{K}$ between two Banach spaces, and suppose that  \textbf{$\mathfrak{g}_+$ is a Banach Lie--Poisson space  with respect to $\mathfrak{g}_-$. }

Denote by $\mathbb{F}$ the subbundle of $T^*\mathfrak{g}_+\simeq \mathfrak{g}_+\times\mathfrak{g}_+^*$ whose fiber at   $x\in\mathfrak{g}_+$ is given by
$$
\mathbb{F}_x = \{x\}\times\mathfrak{g}_-\subset \{x\}\times\mathfrak{g}_+^*\simeq T_x^*\mathfrak{g}_+.
$$ 
For $\alpha$ and $\beta$ any two local sections of $\mathbb{F}$, define a tensor $\pi\in\Lambda^2\mathbb{F}^*$ by~:
$$
\pi_x(\alpha, \beta) := \left\langle x, [\alpha(x), \beta(x)]_{\mathfrak{g}_-}\right\rangle_{\mathfrak{g}_+, \mathfrak{g}_-}.
$$
Then $(\mathfrak g_+, \mathbb{F}, \pi)$ is a generalized Banach Poisson manifold, and $\pi$ takes values in $\Lambda^2\mathfrak{g}_+\subset\Lambda^2\mathbb{F}^*$.

Let $\mathcal{A}$ be the unital subalgebra of $\mathcal{C}^{\infty}(\mathfrak{g}_+)$ consisting of all  functions with differentials in $\mathfrak{g}_-$~:
$$
\mathcal{A} :=\{ f \in \mathcal{C}^\infty(\mathfrak{g}_+)~: d_xf \in \mathfrak{g}_-\subset \mathfrak{g}_+^* \textrm{ for any }x\in\mathfrak{g}_+\}.
$$
%$\mathcal{A}$ the unital subalgebra of $\mathcal{C}^{\infty}(\mathfrak{g}_+)$ generated by $\mathfrak{g}_-$.
Define the bracket of two functions $f, h$  in  $\mathcal{A}$ by
\begin{equation}\label{pipoisson}
\{f, h\}(x) := \pi_x(df_x, dh_x) = \left\langle x, [df_x, dh_x]_{\mathfrak{g}_-}\right\rangle_{\mathfrak{g}_+, \mathfrak{g}_-}, 
\end{equation}
where $x\in\mathfrak{g}_+$, and $df$ and $dh$ denote the Fr\'echet derivatives of $f$ and $h$ respectively. Then $\{\cdot,\cdot\}~:\mathcal{A}\times\mathcal{A}\rightarrow\mathcal{A}$ is a Poisson bracket on $\mathfrak{g}_+$. If $h$ is a smooth function on $\mathfrak{g}_+$ belonging to $\mathcal{A}$, the associated Hamiltonian vector field is given by
$$
X_h(x) = -\ad^*_{dh_x}x \in\mathfrak{g}_+.
$$
\end{theorem}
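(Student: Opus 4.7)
The plan is to verify four items in turn: (a) $\pi_x(\alpha,\beta):=\langle x,[\alpha,\beta]_{\mathfrak{g}_-}\rangle$ defines a smooth section of $\Lambda^2\mathbb{F}^*$ taking values in $\Lambda^2\mathfrak{g}_+$; (b) the Hamiltonian vector field is $X_h(x)=-\ad^*_{dh_x}x$; (c) for closed local sections $\alpha,\beta$ of $\mathbb{F}$, the differential $d(\pi(\alpha,\beta))$ is again a section of $\mathbb{F}$; (d) the Jacobi identity for $\{\cdot,\cdot\}$. The central algebraic tool is the coadjoint identity
\[
\langle x,[\alpha,\beta]_{\mathfrak{g}_-}\rangle_{\mathfrak{g}_+,\mathfrak{g}_-}=\langle\ad^*_\alpha x,\beta\rangle_{\mathfrak{g}_+,\mathfrak{g}_-},
\]
which is well-defined precisely because $\ad^*_\alpha x\in\mathfrak{g}_+$ by the Lie--Poisson assumption.

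For (a), continuity and antisymmetry of $[\cdot,\cdot]_{\mathfrak{g}_-}$ give $\pi_x\in\Lambda^2\mathfrak{g}_-^*=\Lambda^2\mathbb{F}_x^*$; the coadjoint identity shows $\pi_x(\alpha,\cdot)$ is represented by $\ad^*_\alpha x\in\mathfrak{g}_+$, so $\pi_x\in\Lambda^2\mathfrak{g}_+$, with linear (hence smooth) dependence on $x$. For (b), antisymmetry combined with the coadjoint identity yields
\[
\{f,h\}(x)=\langle x,[df_x,dh_x]\rangle=-\langle\ad^*_{dh_x}x,df_x\rangle=df_x(-\ad^*_{dh_x}x),
\]
so $X_h(x)=-\ad^*_{dh_x}x\in\mathfrak{g}_+$ represents the Hamiltonian; smoothness follows from continuity of $\ad^*:\mathfrak{g}_-\times\mathfrak{g}_+\to\mathfrak{g}_+$ together with smoothness of $x\mapsto dh_x$.

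For (c), a direct differentiation gives, for $v\in\mathfrak{g}_+$,
\[
d(\pi(\alpha,\beta))_x(v)=\langle v,[\alpha(x),\beta(x)]\rangle+\langle x,[D\alpha_x(v),\beta(x)]\rangle+\langle x,[\alpha(x),D\beta_x(v)]\rangle,
\]
where $D\alpha_x,D\beta_x:\mathfrak{g}_+\to\mathfrak{g}_-$ are the Fr\'echet derivatives of $\alpha,\beta$ viewed as smooth maps into $\mathfrak{g}_-$. Applying the coadjoint identity to the last two terms and then the closedness symmetry $\langle w,D\alpha_x(v)\rangle=\langle v,D\alpha_x(w)\rangle$ (equivalent to $d\alpha=0$) reexpresses the sum as the value at $v$ of the $\mathfrak{g}_-$-element
\[
[\alpha(x),\beta(x)]-D\alpha_x(\ad^*_{\beta(x)}x)+D\beta_x(\ad^*_{\alpha(x)}x),
\]
proving that $d(\pi(\alpha,\beta))_x\in\mathfrak{g}_-$. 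Specializing to $\alpha=df$, $\beta=dh$ then gives $\{f,h\}\in\mathcal{A}$.

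For (d), substitute $v=X_h(x)$ into the display of (c) with $\alpha=df$, $\beta=dg$ and sum cyclically over $(f,g,h)$. The triple-bracket pieces collapse to
\[
-\Big\langle x,\sum_{\mathrm{cyclic}}[dh_x,[df_x,dg_x]]\Big\rangle,
\]
which vanishes by Jacobi in $\mathfrak{g}_-$. The six remaining Hessian pieces, after one more application of the coadjoint identity, all take the form $\pm Hf_x(X_g(x),X_h(x))$, $\pm Hg_x(X_f(x),X_h(x))$, or $\pm Hh_x(X_f(x),X_g(x))$ with the symmetric bilinear forms $Hf_x=d^2f_x$, and they pair off with opposite signs to cancel. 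The main obstacle is precisely this last bookkeeping: three symmetries are in play---antisymmetry of $[\cdot,\cdot]_{\mathfrak{g}_-}$, symmetry of the Hessians, and the coadjoint identity---and they must be applied in the right order. A clean organizational device is to rewrite every Hessian term in the form $\pm Hf_i(X_{f_j}(x),X_{f_k}(x))$ \emph{before} forming the cyclic sum, after which the three cancellations are transparent.
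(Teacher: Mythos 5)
Your proof is correct and follows essentially the same route as the paper: the central step in both is expressing $d(\pi(\alpha,\beta))_x = [\alpha(x),\beta(x)]_{\mathfrak{g}_-} - D\alpha_x(\ad^*_{\beta(x)}x) + D\beta_x(\ad^*_{\alpha(x)}x)\in\mathfrak{g}_-$ via the coadjoint identity and the symmetry forced by closedness, and then verifying Jacobi by a cyclic sum in which the triple brackets vanish by Jacobi in $\mathfrak{g}_-$ and the Hessian terms cancel pairwise by symmetry. The only cosmetic difference is that the paper invokes the Poincar\'e lemma to write closed sections as differentials and then uses symmetry of the second derivative, whereas you use the equivalent symmetry of $D\alpha_x$ directly (and specialize to exact sections in the Jacobi step, which is harmless since closed local one-forms are locally exact); nothing of substance changes.
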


\begin{proof}
Let $\alpha$ and $\beta$ be any closed local sections of $\mathbb{F}$.  Then $\alpha$ and $\beta$ are functions from $\mathfrak{g}_+$ to $\mathfrak{g}_-$, and we will denote by $T_x\alpha~:T_x\mathfrak{g}_+\simeq\mathfrak{g}_+\rightarrow \mathfrak{g}_-\simeq T_{\alpha(x)}\mathfrak{g}_-$ and similarly $T_x\beta~: \mathfrak{g}_+\rightarrow \mathfrak{g}_-$ their derivatives at $x\in\mathfrak{g}_+$.
%Using the trivialisation $T^*\mathbb{g}_+\simeq \mathfrak{g}_+\times\mathfrak{g}_+^*$, one can write $\alpha$ and $\beta$ as~:
For any tangent vector $X\in T_x\mathfrak{g}_+ \simeq \mathfrak{g}_+ $, one has
$$
\begin{array}{ll}
d_x\pi\left(\alpha, \beta\right)(X) %& = \mathcal{L}_{\tilde{X}}\left\langle x, [\alpha, \beta]_{\mathfrak{g}_-}\right\rangle_{\mathfrak{g}_+, \mathfrak{g}_-}\\
&= \left\langle X, [\alpha(x), \beta(x)]_{\mathfrak{g}_-}\right\rangle_{\mathfrak{g}_+, \mathfrak{g}_-} +  \left\langle x, [T_x\alpha(X), \beta]_{\mathfrak{g}_-}\right\rangle_{\mathfrak{g}_+, \mathfrak{g}_-} + \left\langle x, [\alpha, T_x\beta(X)]_{\mathfrak{g}_-}\right\rangle_{\mathfrak{g}_+, \mathfrak{g}_-}\\
& = \left\langle X, [\alpha(x), \beta(x)]_{\mathfrak{g}_-}\right\rangle_{\mathfrak{g}_+, \mathfrak{g}_-} -  \left\langle \ad^*_{\beta}x, T_x\alpha(X) \right\rangle_{\mathfrak{g}_+, \mathfrak{g}_-} + \left\langle \ad^*_{\alpha}x, T_x\beta(X)\right\rangle_{\mathfrak{g}_+, \mathfrak{g}_-}
\end{array}
$$
Since $\alpha$ and $\beta$ are closed local sections of $\mathbb{F}\subset T^*\mathfrak{g}_+$, by Poincar\'e Lemma (see Theorem~4.1 in \cite{La02}), there exist locally real valued smooth functions $f$ and $g$ on $\mathfrak{g}_+$ such that $\alpha = df$ and $\beta = dg$. It follows that $T_x\alpha \in L\left(\mathfrak{g}_+; L(\mathfrak{g}_+, \mathbb{R})\right)\simeq L^2(\mathfrak{g}_+;\mathbb{R})$ is the second derivative $d_x^2f$ of $f$ at $x\in\mathfrak{g}_+$ and is symmetric (see Proposition~3.3 in \cite{La02}). Similarly $T_x\beta = d_x^2g$ is a symmetric bilinear map on $\mathfrak{g}_+$. Consequently
$$
-  \left\langle \ad^*_{\beta}x, T_x\alpha(X) \right\rangle_{\mathfrak{g}_+, \mathfrak{g}_-}  = - \left\langle X, T_x\alpha(\ad^*_{\beta}x) \right\rangle_{\mathfrak{g}_+, \mathfrak{g}_-} 
$$
and 
$$
\left\langle \ad^*_{\alpha}x, T_x\beta(X)\right\rangle_{\mathfrak{g}_+, \mathfrak{g}_-} = \left\langle X, T_x\beta(\ad^*_{\alpha}x)\right\rangle_{\mathfrak{g}_+, \mathfrak{g}_-}.
$$
Therefore, for any closed local section $\alpha$ and $\beta$ of $\mathbb{F}$, and any $x\in\mathfrak{g}_+$,
\begin{equation}\label{dpi}
d_x\pi\left(\alpha, \beta\right)  = [\alpha(x), \beta(x)]_{\mathfrak{g}_-} - T_x\alpha(\ad^*_{\beta}x) + T_x\beta(\ad^*_{\alpha}x)
\end{equation}
belongs to $\mathfrak{g}_-$. It follows that  $d\pi\left(\alpha, \beta\right)$ a local section of $\mathbb{F}$.
Let us show that $\pi$ satisfies the Jacobi identity~\eqref{Jacobi_Poisson}. One has 
$$
\begin{array}{ll}
\pi_x\left(\alpha, d\left(\pi(\beta, \gamma)\right)\right) %&=  \left\langle x, [\alpha_x,  [\beta_x, \gamma_x]_{\mathfrak{g}_-}- d_x\beta(\ad^*_{\gamma}x) + d_x\gamma(\ad^*_{\beta}x)]_{\mathfrak{g}_-}\right\rangle_{\mathfrak{g}_+, \mathfrak{g}_-}\\
%& =  \left\langle x, [\alpha_x,  [\beta_x, \gamma_x]_{\mathfrak{g}_-} ]_{\mathfrak{g}_-}\right\rangle_{\mathfrak{g}_+, \mathfrak{g}_-} 
%- \left\langle x, [\alpha_x, \mathcal{L}_{\ad^*_{\gamma_x}x} \beta  ]_{\mathfrak{g}_-}\right\rangle_{\mathfrak{g}_+, \mathfrak{g}_-} 
%+ \left\langle x, [\alpha_x, \mathcal{L}_{\ad^*_{\beta_x}x} \gamma]_{\mathfrak{g}_-}\right\rangle_{\mathfrak{g}_+, \mathfrak{g}_-}\\
 = & \left\langle x, [\alpha(x),  [\beta(x), \gamma(x)]_{\mathfrak{g}_-} ]_{\mathfrak{g}_-}\right\rangle_{\mathfrak{g}_+, \mathfrak{g}_-} 
- \left\langle \ad^*_{\alpha}x,  T_x\beta(\ad^*_{\gamma}x)\right\rangle_{\mathfrak{g}_+, \mathfrak{g}_-} 
\\&+ \left\langle \ad^*_\alpha x, T_x\gamma(\ad^*_{\beta}x) ]_{\mathfrak{g}_-}\right\rangle_{\mathfrak{g}_+, \mathfrak{g}_-} 
%\mathcal{L}_{\ad^*_{\gamma_x}x} \beta\left(\ad^*_{\alpha_x} x\right) + \mathcal{L}_{\ad^*_{\beta_x}x} \gamma\left(\ad^*_{\alpha_x} x\right).
\end{array}
$$
By the Jacobi identity for the Lie bracket $[\cdot, \cdot]_{\mathfrak{g}_-}$ and by the symmetry of $T_x\alpha$, $T_x\beta$ and $T_x\gamma$, the Jacobi identity for $\pi$ is satisfied.
%Moreover by \eqref{relation_Lie_bracket_i}, 
Moreover, for any local section $\alpha$ of $\mathbb{F}$, $\pi_x(\alpha, \cdot) = \ad^*_{\alpha} x$ belongs to $\mathfrak{g}_+$ since $\mathfrak{g}_+$ is a Banach Lie--Poisson space  with respect to $\mathfrak{g}_-$. Therefore $\pi\in\Lambda^2\mathfrak{g}_+\subset\Lambda^2\mathbb{F}^*$.

The bracket \eqref{pipoisson} of two functions $f,g\in\mathcal{A}$ takes values in $\mathcal{A}$ because, by equation~\eqref{dpi}, $d_x\{f, g\}$ belongs to $\mathfrak{g}_-$.
By definition $\{\cdot,\cdot\}$ is skew-symmetric and satisfies the Leibniz rule. The Jacobi identity for $\{\cdot,\cdot\}$ follows from the Jacobi identity for $\pi$. The expression of the hamiltonian vector field associated to $h\in\mathcal{A}$ is straightforward.
\end{proof}

We give below some examples 
 of Banach Lie--Poisson spaces (see \cite{OR03}, \cite{OR07}, 
 and \cite{BRT07} for more information on these spaces). %For the notation, see examples \ref{def}.

\begin{example}\label{exLpLq}{\rm
\textit{Dual Banach Lie algebras of operators.} 
Let  $p$ and $q$ be such that $1< p\leq q< \infty$ and $\frac{1}{p}+\frac{1}{q} = 1$. Then $\operatorname{L}_p(\mathcal{H})^* \simeq L_q(\mathcal{H})$ and $L_q(\mathcal{H})^* \simeq \operatorname{L}_p(\mathcal{H})$ where the duality pairing is given by the trace (see example~\ref{ex_duality2}).
%Recall that for $x\in L^p(\mathcal{H}_{\mathbb{R}})$ and $\alpha\in L^q(\mathcal{H}_{\mathbb{R}})$, the operator $x\alpha$ is trace class, and that 
%$$
%\|x\alpha\|_1\leq \|x\|_p\|\alpha\|_q,
%$$
%(see Proposition~5, page~41 in \cite{RS2}). In particular, for $x\in L^p(\mathcal{H}_{\mathbb{R}})$ and $\alpha\in L^q(\mathcal{H}_{\mathbb{R}})$, $\Tr(x\alpha)$ is well-defined and finite.
%Moreover  $L^p(\mathcal{H}_{\mathbb{R}})^* = L^{q}(\mathcal{H}_{\mathbb{R}})$ and $L^{q}(\mathcal{H}_{\mathbb{R}})^* = L^p(\mathcal{H}_{\mathbb{R}})$ by the strong duality pairing  
%given by the trace
%$$
%\begin{array}{lcll}
%\Tr~: & L^p(\mathcal{H}_{\mathbb{R}})\times L^{q}(\mathcal{H}_{\mathbb{R}})& \longrightarrow & \mathbb{R}\\
%& (x, \alpha) & \longmapsto & \Tr\left(x\alpha\right),
%\end{array}
%$$
%(see Proposition~7, page~43 in \cite{RS2} and Theorem~VI.26, page~212 in \cite{RS1}). Using that $\Tr(AB) = \Tr(BA)$ for $A\in L^1(\mathcal{H}_{\mathbb{R}})$ and $B\in L^\infty(\mathcal{H}_{\mathbb{R}})$ (see Theorem~VI.25, page~212 in \cite{RS1}), for any $\alpha, \beta\in L^{q}(\mathcal{H}_{\mathbb{R}})$ and any $x\in L^p(\mathcal{H}_{\mathbb{R}})$, one has
Moreover
\begin{align*}
\ad^*_\alpha x(\beta) = \Tr \left(x [\alpha, \beta]_{L_{q}(\mathcal{H})}\right) = \Tr\left(x\alpha\beta- x\beta\alpha\right)=  \Tr\left(x\alpha\beta- \alpha x\beta\right) = \Tr\left([x, \alpha]\beta\right),
\end{align*}
where the first bracket is the Lie bracket of the dual space $L_{q}(\mathcal{H})$, and the second is the commutator of the bounded linear operators $x\in L_{p}(\mathcal{H})$ and $\alpha\in L_{q}(\mathcal{H})$. Since $\operatorname{L}_p(\mathcal{H})$ is an ideal of $L_{\infty}(\mathcal{H})$,  $[x, \alpha]\in \operatorname{L}_p(\mathcal{H})$, and the pairing given by the trace being non-degenerate, one has  $$\ad^*_\alpha x = [x, \alpha]\in \operatorname{L}_p(\mathcal{H})$$ for any $x\in \operatorname{L}_p(\mathcal{H})$ and any $\alpha\in L_{q}(\mathcal{H})$. Therefore $\operatorname{L}_p(\mathcal{H})$ is a Banach Lie--Poisson space with respect to $L_{q}(\mathcal{H})$. 
In the same manner, one has for any $x\in \operatorname{L}_p(\mathcal{H})$ and any $\alpha\in L_{q}(\mathcal{H})$ 
 $$\ad^*_x\alpha = [\alpha, x]\in L_{q}(\mathcal{H}),$$ 
hence $L_{q}(\mathcal{H})$ is a Banach Lie--Poisson space with respect to $L_{p}(\mathcal{H})$.
}\end{example}

\begin{example}\label{exL1Linfty}{\rm \textit{Trace class operators and bounded operators.} 
For the same reasons as in the previous example, the Banach Lie algebra $L_1(\mathcal{H})$ is a Banach Lie--Poisson space with respect to  $L_{\infty}(\mathcal{H})$ and $L_{\infty}(\mathcal{H})$ is a Banach Lie--Poisson space with respect to $L_1(\mathcal{H})$, the (weak) duality pairing being given by the trace. 
}\end{example}

\begin{example}\label{exL1L2}{\rm \textit{Trace class operators and Hilbert-Schmidt operators.}
Since the trace is a weak duality pairing between $L_1(\mathcal{H})$ and $L_2(\mathcal{H}) \subset L_{\infty}(\mathcal{H})$ (see Example~\ref{exDualityL1L2}), one can consider the coadjoint action of $L_1(\mathcal{H})$ on $L_2(\mathcal{H})$ and vice-versa. For any $x\in \operatorname{L}_1(\mathcal{H})$ and any $\alpha\in L_{2}(\mathcal{H})$, one has
$$\ad^*_x\alpha  = - \ad^*_\alpha x = [\alpha, x] \in  \operatorname{L}_1(\mathcal{H})\cap L_{2}(\mathcal{H}),$$ therefore $L_1(\mathcal{H})$ is Banach Lie--Poisson space with respect to $L_2(\mathcal{H})$, and $L_2(\mathcal{H})$ is a Banach Lie--Poisson space with respect to $L_1(\mathcal{H})$. Using~\eqref{pipoisson}, one obtains a Poisson bracket on $L_1(\mathcal{H})$ defined on the algebra of functions on $L_1(\mathcal{H})$ with differentials in $L_{2}(\mathcal{H})\subset L_{\infty}(\mathcal{H})$, as well as a  Poisson bracket on $L_{2}(\mathcal{H})$ defined on those functions on $L_{2}(\mathcal{H})$ which have their differential in $L_1(\mathcal{H})\subset L_{2}(\mathcal{H})$.
}\end{example}

\begin{example}{\rm
 \textit{Banach Lie algebras of upper and lower triangular operators.}\label{exUpLow}
For $1<p<\infty$, consider the Banach algebra $\operatorname{L}_p(\mathcal{H})_-$  of lower triangular operators in $\operatorname{L}_p(\mathcal{H})$ defined by \eqref{triangularLp+} and its complement $\operatorname{L}_p(\mathcal{H})_{++}$ consisting in stricktly upper triangular operators in $\operatorname{L}_p(\mathcal{H})$.  % Endow the separable real Hilbert space $\mathcal{H}_{\mathbb{R}}$ with an orthonormal basis $\{|n\rangle\}_{n=1}^{\infty}$. Consider the following Banach Lie subalgebras of $L^p(\mathcal{H}_{\mathbb{R}})$ and $L^{q}(\mathcal{H}_{\mathbb{R}})$ respectively
One can identify $\operatorname{L}_p(\mathcal{H})_{-} ^*$ with $\operatorname{L}_p(\mathcal{H})^*/ \left(\operatorname{L}_p(\mathcal{H})_{-}\right)^0$ where 
$$
\left(\operatorname{L}_p(\mathcal{H})_{-}\right)^0 := \{\alpha \in L_{q}(\mathcal{H}), \Tr\left(\alpha x\right) = 0,~~\forall x\in \operatorname{L}_p(\mathcal{H})_{-}\}
$$
Recall that $\operatorname{L}_p(\mathcal{H})^* \simeq L_q(\mathcal{H})$ where  $\frac{1}{p}+\frac{1}{q}=1$, the duality pairing being given by the trace. 
It is easy to see that $\left(\operatorname{L}_p(\mathcal{H})_{-}\right)^0 $ is isomorphic to the Banach space $L_q(\mathcal{H})_{--}$  of stricktly lower triangular operators in $L_q(\mathcal{H})$.
Therefore, by the direct sum decomposition  \eqref{lppm}, one has
$$\operatorname{L}_p(\mathcal{H})_{-} ^*\simeq L_{q}(\mathcal{H})_{+}.$$
%\simeq L_{q}(\mathcal{H})/ L_{q}(\mathcal{H})_{--}
The coadjoint action of an element $\alpha\in L_{q}(\mathcal{H})_{+}$ on $x\in \operatorname{L}_p(\mathcal{H})_{-}\subset \left(\operatorname{L}_p(\mathcal{H})_{-}\right)^{**}$ reads
\begin{align*}
\ad_{\alpha}^*x(\beta) = \Tr \left(x [\alpha, \beta]_{L_{q}(\mathcal{H})_{+}}\right) = \Tr\left([x, \alpha]\beta\right),
\end{align*}
where $\beta$ is an arbitrary element in $L_{q}(\mathcal{H})_{+}$.
Since $\operatorname{L}_p(\mathcal{H})$ and $L_q(\mathcal{H})$ are ideals in $L_\infty(\mathcal{H})$, one has $$[x, \alpha] \in [\operatorname{L}_p(\mathcal{H}), L_q(\mathcal{H})]\subset \operatorname{L}_p(\mathcal{H})\cap L_q(\mathcal{H}).$$ The relation $\operatorname{L}_p(\mathcal{H})_{++} \subset \left(L_{q}(\mathcal{H})_{+}\right)^0$ then implies
 \begin{align*}
\ad_{\alpha}^*x(\beta) = \Tr\left(p_{\operatorname{L}_p(\mathcal{H})_{-}}\left([x, \alpha]\right)\beta\right),~~\forall \beta \in L_{q}(\mathcal{H})_{+},
\end{align*}
where $p_{\operatorname{L}_p(\mathcal{H})_{-}}$ is the projection onto $\operatorname{L}_p(\mathcal{H})_{-}$ with respect to the direct sum decomposition \eqref{lpmp}.
From $\operatorname{L}_p(\mathcal{H})_-\subset (L_q(\mathcal{H})_{--})^0$ and from the direct sum decomposition~\eqref{lppm}, it follows that 
$$\ad_{\alpha}^*x = p_{\operatorname{L}_p(\mathcal{H})_{-}}\left([x, \alpha]\right).$$ 
In particular, $\ad_{\alpha}^*x\in \operatorname{L}_p(\mathcal{H})_{-}$ for any $x\in \operatorname{L}_p(\mathcal{H})_{-}$ and any $\alpha \in L_{q}(\mathcal{H})_{+}$. Therefore $\operatorname{L}_p(\mathcal{H})_{-}$ is a Banach Lie--Poisson space with respect to $L_{q}(\mathcal{H})_{+}$. Similarly one has $$\ad_{x}^*\alpha = p_{L_q(\mathcal{H})_{+}}\left([\alpha, x]\right),$$ for any $x\in \operatorname{L}_p(\mathcal{H})_{-}$ and any $\alpha \in L_q(\mathcal{H})_{+}$. Therefore $L_q(\mathcal{H})_{+}$ is a Banach Lie--Poisson space with respect to $\operatorname{L}_p(\mathcal{H})_{-}$. Note that the existence of the projections $p_{\operatorname{L}_p(\mathcal{H})_{-}}$ and $p_{L_q(\mathcal{H})_{+}}$ is crucial in this example. This is the reason why we have excluded the case $p=1$ and $q=\infty$.
}\end{example}

\begin{example}\label{exIwasawa}{\rm
 \textit{Iwasawa Banach Lie algebras.}
For $1<p<\infty$, consider the unitary algebra $\mathfrak{u}_p(\mathcal{H})$ defined by \eqref{u2}, and its complement $\mathfrak{b}^+_p(\mathcal{H})$ defined by \eqref{b2pm}.  For $q := \frac{p}{p-1}$, let us denote by  $\langle\cdot,\cdot\rangle_{\mathbb{R}}$ the continuous bilinear map given by the imaginary part of the trace~:
$$
\begin{array}{lcll}
\langle\cdot,\cdot\rangle_{\mathbb{R}}~: & \operatorname{L}_p(\mathcal{H})\times L_q(\mathcal{H})& \longrightarrow &\mathbb{R}\\
& (x, \alpha) & \longmapsto & \Im \Tr\left(x\alpha\right).
\end{array}
$$
It is a strong duality pairing between $ \operatorname{L}_p(\mathcal{H})$ and $L_q(\mathcal{H})$ viewed as real Banach spaces. By Lemma~\ref{decub}, one has the direct sum decomposition $$\operatorname{L}_p(\mathcal{H}) = \mathfrak{u}_p(\mathcal{H})\oplus\mathfrak{b}^+_p(\mathcal{H}).$$
%
%Denote by $\Re A~:\mathcal{H}_{\mathbb{R}}\rightarrow \mathcal{H}_{\mathbb{R}}$ and $\Im A~:\mathcal{H}_{\mathbb{R}}\rightarrow \mathcal{H}_{\mathbb{R}}$ the real and imaginary parts of the restriction of the bounded linear operator $A\in L^{\infty}(\mathcal{H}_{\mathbb{C}})$ to the real Hilbert space $\mathcal{H}_{\mathbb{R}}$. 
%Note that $A\in L^p(\mathcal{H}_{\mathbb{C}})$ if and only if $\Re A\in L^p(\mathcal{H}_{\mathbb{R}})$ and $\Im A \in L^p(\mathcal{H}_{\mathbb{R}})$. Indeed $A\in L^p(\mathcal{H}_{\mathbb{C}})$ implies $A^*\in L^p(\mathcal{H}_{\mathbb{C}})$, therefore $\Re A = \frac{1}{2}(A+A^*)\in L^p(\mathcal{H}_{\mathbb{R}})$, and $\Im A = \frac{1}{2}(A-A^*) \in L^p(\mathcal{H}_{\mathbb{R}})$.
%Remark that one has
%$$
% \Im \Tr\left(x\alpha\right) = \Tr\left(\Re x \Im \alpha + \Im x \Re \alpha\right),
%$$
%for any $x\in L^p(\mathcal{H}_{\mathbb{C}})$ and any $\alpha \in L^q(\mathcal{H}_{\mathbb{C}})$. Here $x$ is the $\mathbb{C}$-linear extension of $\Re x + \textrm{i} \Im x$, and $\alpha$ is the $\mathbb{C}$-linear extension of $\Re \alpha +\textrm{i}\Im \alpha$. It follows from the duality pairing between $L^p(\mathcal{H}_{\mathbb{R}})$ and $L^q(\mathcal{H}_{\mathbb{R}})$ that $\langle\cdot,\cdot\rangle_{L^p, L^q}$ is a strong duality pairing (see next section for a precise definition of this terminology) between $L^p(\mathcal{H}_{\mathbb{C}})$ and $L^q(\mathcal{H}_{\mathbb{C}})$ viewed as real Banach spaces. 
Since $\left(\mathfrak{u}_{p}(\mathcal{H})\right)^0 \simeq \mathfrak{u}_q(\mathcal{H})$ and $\left(\mathfrak{b}_p^+(\mathcal{H})\right)^0 \simeq \mathfrak{b}_q^+(\mathcal{H})$, one has 
$$\mathfrak{u}_{p}(\mathcal{H})^* \simeq L_q(\mathcal{H})/\left(\mathfrak{u}_{p}(\mathcal{H})\right)^0 \simeq L_q(\mathcal{H})/\mathfrak{u}_q(\mathcal{H}) \simeq \mathfrak{b}^+_q(\mathcal{H})$$ and similarly $$\mathfrak{b}_{p}^+(\mathcal{H})^* = \mathfrak{u}_q(\mathcal{H}).$$ 
Consider the coadjoint action of an element $\alpha\in\mathfrak{b}^+_q(\mathcal{H})$ on an element $x\in\mathfrak{u}_p(\mathcal{H})\subset\mathfrak{u}_p(\mathcal{H})^{**}$
\begin{align*}
\ad_{\alpha}^*x(\beta)  = \langle x, [\alpha, \beta]\rangle_{L^p, L^q} = \Im\Tr \left(x [\alpha, \beta]_{\mathfrak{b}_q}\right) = \Im\Tr\left([x, \alpha]\beta\right),
%\\ = \Im\Tr\left(p_{\mathfrak{u}_p}\left([x, \alpha]\right)\beta\right) = \langle p_{\mathfrak{u}_p}\left([x, \alpha]\right), \beta\rangle_{L^p, L^q},~~\forall  \beta \in \mathfrak{b}_q, 
\end{align*}
where $\beta$ is an arbitrary element in $\mathfrak{b}^+_q(\mathcal{H})$.
Since $\mathfrak{b}^+_p(\mathcal{H}) \subset \left(\mathfrak{b}^+_q(\mathcal{H})\right)^0$ and $[\operatorname{L}_p(\mathcal{H}), L_q(\mathcal{H})]\in \operatorname{L}_p(\mathcal{H})\cap L_q(\mathcal{H})$, one has
\begin{align*}
\ad_{\alpha}^*x(\beta)  %= \langle x, [\alpha, \beta]\rangle_{L^p, L^q} = \Im\Tr \left(x [\alpha, \beta]_{\mathfrak{b}_q}\right) = \Im\Tr\left([x, \alpha]\beta\right)\\ 
= \Im\Tr\left(p_{\mathfrak{u}_p,+}\left([x, \alpha]\right)\beta\right) = \langle p_{\mathfrak{u}_p,+}\left([x, \alpha]\right), \beta\rangle_{L^p, L^q},~~\forall  \beta \in \mathfrak{b}_q^+(\mathcal{H}), 
\end{align*}
where $p_{\mathfrak{u}_p,+}$ is the projection onto $\mathfrak{u}_p(\mathcal{H})$ defined by \eqref{projectionu+}.  Therefore
$$\ad^*_{\alpha} x = p_{\mathfrak{u}_p,+}\left([x, \alpha]\right).$$ Analogously one has
$$
\ad^*_{x}\alpha = p_{\mathfrak{b}_q^+}\left([\alpha, x]\right),
$$
for any $x\in\mathfrak{u}_p(\mathcal{H})$ and any $\alpha\in \mathfrak{b}_q^+(\mathcal{H})$. Consequently $\mathfrak{u}_p(\mathcal{H})$ and $\mathfrak{b}_q^+(\mathcal{H})$ are dual Banach Lie--Poisson spaces. Similarly $\mathfrak{u}_p(\mathcal{H})$ and $\mathfrak{b}_q^-(\mathcal{H})$ are dual Banach Lie--Poisson spaces. 
%\end{enumerate}
}
\end{example}

\section{Banach Lie bialgebras}

In the finite dimensional case,  a couple $(\mathfrak{g}, \mathfrak{g}^*)$ of Lie algebras is a Lie bialgebra if and only if  the triple of Lie algebras $(\mathfrak{g}\oplus \mathfrak{g}^*, \mathfrak{g}, \mathfrak{g}^*)$ form a Manin triple. In that case, $(\mathfrak{g}^*, \mathfrak{g})$ is also a Lie bialgebra.
The symmetry of the situation comes from the fact that $\mathfrak{g}^{**} = \mathfrak{g}$ for finite dimensional spaces. 
In Section~\ref{Definition of Banach Lie bialgebras}, we introduce the notion of Banach Lie bialgebra with respect to an arbitrary duality pairing. 
In Section~\ref{Banach Lie bialgebras versus Manin triples}, we show that a Banach Lie bialgebra $\mathfrak{g}_+$ with respect to a Banach Lie algebra $\mathfrak{g}_-$ gives rise to a Manin triple $(\mathfrak{g}_+\oplus\mathfrak{g}_-, \mathfrak{g}_+, \mathfrak{g}_-)$ if and only if $\mathfrak{g}_+$  is also a Banach Lie--Poisson space with respect to $\mathfrak{g}_-$ (see Theorem~\ref{bialgebra_to_manin}).
 
\subsection{Definition of Banach Lie bialgebras}\label{Definition of Banach Lie bialgebras}

Let us introduce the notion of Banach Lie bialgebras. We refer the reader to \cite{LW90} for the corresponding notion in the finite-dimensional case.

\begin{definition}\label{Bialgebra_def}
Let $\mathfrak{g}_+$ be a Banach Lie algebra over the field $\mathbb{K}\in\{\mathbb{R}, \mathbb{C}\}$, and consider a duality pairing $\langle\cdot,\cdot\rangle_{\mathfrak{g}_+, \mathfrak{g}_-}$ 
between $\mathfrak{g}_+$ and a Banach space $\mathfrak{g}_-$.  One says that $\mathfrak{g}_+$ is a \textbf{Banach Lie bialgebra with respect to} $\mathfrak{g}_-$ 
  if 
  %there is given a Lie algebra structure $[\cdot, \cdot]_{\mathfrak{g}_-}$ on $\mathfrak{g}_-$ such that 
 \begin{enumerate}
 \item $\mathfrak{g}_+$ acts continuously by coadjoint action on $\mathfrak{g}_-\subset \mathfrak{g}_+^*$~; %(recall that $\mathfrak{g}_-$ can be viewed as a subspace of $\mathfrak{g}_+^*$). 
 \item there is given a Banach Lie algebra structure  on $\mathfrak{g}_-$ such that the dual map of the Lie bracket $[\cdot, \cdot]_{\mathfrak{g}_-}~:\mathfrak{g}_-\times\mathfrak{g}_-\rightarrow\mathfrak{g}_-$ restricts to
  a $1$-cocycle $\theta~: \mathfrak{g}_+\rightarrow \Lambda^2\mathfrak{g}_-^*$ with respect to the adjoint action $\ad^{(2,0)}$ of $\mathfrak{g}_+$ on  $\Lambda^2\mathfrak{g}_-^*$ (recall that $\mathfrak{g}_+$ can be viewed as a subspace of $\mathfrak{g}_-^*$).
   \end{enumerate} 
\end{definition}

\begin{remark}{\rm
A finite-dimensional Lie bialgebra $(\mathfrak{g}, \mathfrak{g}^*)$ (see Definition~1.7 in \cite{LW90})  is a Banach Lie bialgebra $\mathfrak{g}$ with respect to its dual space $\mathfrak{g}^*$,  where the duality pairing is the natural pairing between $\mathfrak{g}$ and $\mathfrak{g}^*$.
}
\end{remark}
\begin{remark}{\rm
\begin{enumerate}
\item The first condition in Definition~\ref{Bialgebra_def} means that
$\mathfrak{g}_-$ is preserved by the coadjoint action of $\mathfrak{g}_+,$ i.e $$\ad^*_x\mathfrak{g}_-\subset \mathfrak{g}_-\subset\mathfrak{g}_+^*$$ for any $x\in\mathfrak{g}_+,$ and that  the action map
 $$
 \begin{array}{cll}
 \mathfrak{g}_+\times\mathfrak{g}_-& \rightarrow & \mathfrak{g}_-\\ (x, \alpha) & \mapsto & \ad^*_x\alpha
 \end{array}
 $$
 is continuous. This condition is necessary in order to define the action of $\mathfrak{g}_+$ on the space $\Lambda^2\mathfrak{g}_-^*$ of continuous skew-symmetric maps on $\mathfrak{g}_-$ by \eqref{ad2}.
 \item 
% The dual map of the bilinear map $[\cdot, \cdot]_{\mathfrak{g}_-}$ is given by
%$$
%\begin{array}{lclcl}
%[\cdot, \cdot]_{\mathfrak{g}_-}^*~: & \mathfrak{g}^{*}_-& \longrightarrow & L(\mathfrak{g}_-, \mathfrak{g}_- ; \mathbb{K}) & \simeq L(\mathfrak{g}_-; \mathfrak{g}^{*}_-)\\
%& \mathcal{F}(\cdot)&  \longmapsto& \mathcal{F}\left([\cdot, \cdot]_{\mathfrak{g}_-}\right) & \mapsto  \left(\alpha \mapsto \mathcal{F}\left([\alpha, \cdot]_{\mathfrak{g}_-}\right) = \ad^*_{\alpha}\mathcal{F}(\cdot)\right).
%\end{array}
%$$
%Denote by $\theta$ its restriction to the subspace $\mathfrak{g}_+$ of $\mathfrak{g}^{*}_-$~:
%$$
%\begin{array}{lclcl}
%\theta~: & \mathfrak{g}_+& \longrightarrow & L(\mathfrak{g}_-, \mathfrak{g}_- ; \mathbb{K}) & \simeq L(\mathfrak{g}_-; \mathfrak{g}^{*}_-)\\
%&x&  \longmapsto& \langle x, [\cdot, \cdot]_{\mathfrak{g}_-}\rangle_{\mathfrak{g}_+, \mathfrak{g}_-}& \mapsto  \left(\alpha \mapsto \langle x, [\alpha, \cdot]_{\mathfrak{g}_-}\rangle_{\mathfrak{g}_+, \mathfrak{g}_-} = \ad^*_{\alpha}x(\cdot)\right).
%\end{array}
%$$
The map $\theta$ is a $1$-cocycle on $\mathfrak{g}_+$ if it satisfies~:
\begin{equation*}\begin{array}{ll}
\theta\left([x, y]\right) & = \ad^{(2,0)}_x\left(\theta(y)\right) - \ad^{(2,0)}_y\left(\theta(x)\right)\\
\end{array}
\end{equation*}
where $x, y\in \mathfrak{g}_+$. 
The second condition in Definition~\ref{Bialgebra_def} means therefore that (see section~\ref{cocycle_section})
  \begin{equation}\label{cocycle}
\theta\left([x, y]\right)(\alpha, \beta)  = \theta(y)(\ad^*_x\alpha, \beta) + \theta(y)(\alpha, \ad^*_x\beta) -\theta(x)(\ad^*_y\alpha, \beta) - \theta(x)(\alpha, \ad^*_y\beta),
\end{equation}
 for any $x, y$ in $\mathfrak{g}_+$ and any $\alpha, \beta$ in $\mathfrak{g}_-$.
In a more explicite form, the cocycle condition reads
\begin{equation}\label{cocycle_mitte}
\begin{array}{ll}
\langle[x, y]_{\mathfrak{g}_+},[\alpha, \beta]_{\mathfrak{g}_-}\rangle_{\mathfrak{g}_+, \mathfrak{g}_-}
= &\langle y, [\ad^*_x\alpha, \beta]_{\mathfrak{g}_-}\rangle_{\mathfrak{g}_+, \mathfrak{g}_-} 
+ \langle y, [\alpha, \ad^*_x\beta]_{\mathfrak{g}_-}\rangle_{\mathfrak{g}_+, \mathfrak{g}_-} \\
&- \langle x, [\ad^*_y\alpha, \beta]_{\mathfrak{g}_-}\rangle_{\mathfrak{g}_+, \mathfrak{g}_-} 
- \langle x, [\alpha, \ad^*_y\beta]_{\mathfrak{g}_-} \rangle_{\mathfrak{g}_+, \mathfrak{g}_-} ,
\end{array}
\end{equation}
%  \begin{equation}\label{cocycle_explicite}
%\begin{array}{ll}
%\langle [x, y], [\alpha, \beta]\rangle_{\mathfrak{g}_+, \mathfrak{g}_-}& = +\langle y, [\ad^*_x\alpha, \beta]\rangle_{\mathfrak{g}_+, \mathfrak{g}_-} +\langle y, [\alpha, \ad^*_x\beta]\rangle_{\mathfrak{g}_+, \mathfrak{g}_-} \\ & \quad-\langle x, [\ad^*_y\alpha, \beta]\rangle_{\mathfrak{g}_+, \mathfrak{g}_-} - \langle x, [\alpha, \ad^*_y\beta]\rangle_{\mathfrak{g}_+, \mathfrak{g}_-},
%\end{array}
%\end{equation}
  for any $x, y$ in $\mathfrak{g}_+$ and any $\alpha, \beta$ in $\mathfrak{g}_-$.
  \item Let us remark that we do not assume that the cocycle $\theta$ takes values in the subspace $\Lambda^2\mathfrak{g}_+$ of $\Lambda^2\mathfrak{g}_-^{*}$. This is related to the generalized notion of Poisson manifolds given in Definition \ref{Poisson-Manifold}.  
  %Note also that we do not assume that $\mathfrak{g}_-$ is complete. %This is related to the weak notion of Poisson Lie group defined in Section~\ref{section_Poisson_lie_group} and enabling Theorem~\ref{Lie_de_Poisson}.
%  \todo{verifier la numerotation}
  \end{enumerate}
 }
\end{remark}

Let us first give examples of Banach Lie algebras which are Banach Lie--Poisson spaces (see Section~\ref{LP})  but \textbf{not}  Banach Lie bialgebras.
\begin{example}\label{not_Banach_Lie_bialgebra}{\rm
%\begin{enumerate}
%\item[(i)] %\textit{Dual Schatten ideals.}  
For $1\leq p\leq \infty$, 
%For $p = 1$ and $q = \infty$ or $p$ and $q$ such that $1< p <\infty$, $1<q<\infty$ and $\frac{1}{p}+\frac{1}{q} =1$, 
consider $\mathfrak{g}_+ :=\operatorname{L}_p(\mathcal{H})$ and its dual space $\mathfrak{g}_-:=\operatorname{L}_q(\mathcal{H})$, the duality pairing $\langle\cdot, \cdot\rangle_{\mathfrak{g}_+, \mathfrak{g}_-} $ being given by the trace.
By example \ref{exLpLq}, $\operatorname{L}_p(\mathcal{H})$ is a Banach Lie--Poisson space with respect to $\operatorname{L}_q(\mathcal{H})$. For $x\in \operatorname{L}_p(\mathcal{H})$ and $\alpha\in L_q(\mathcal{H})$, one has
$\ad^*_\alpha x = [x, \alpha]\in \operatorname{L}_p(\mathcal{H})$ and $\ad^*_x\alpha = [\alpha, x]\in \operatorname{L}_q(\mathcal{H})$. Therefore, for any $\alpha, \beta\in \operatorname{L}_q(\mathcal{H})$ and $x, y\in \operatorname{L}_p(\mathcal{H})$, 
one has
\begin{equation}\label{notbi}
\begin{array}{l}
\langle y, [\ad^*_x\alpha, \beta]_{\mathfrak{g}_-}\rangle_{\mathfrak{g}_+, \mathfrak{g}_-} 
+ \langle y, [\alpha, \ad^*_x\beta]_{\mathfrak{g}_-}\rangle_{\mathfrak{g}_+, \mathfrak{g}_-} 
- \langle x, [\ad^*_y\alpha, \beta]_{\mathfrak{g}_-}\rangle_{\mathfrak{g}_+, \mathfrak{g}_-} 
- \langle x, [\alpha, \ad^*_y\beta]_{\mathfrak{g}_-} \rangle_{\mathfrak{g}_+, \mathfrak{g}_-} \\
= 2~\langle\ad^*_{\alpha}[x, y], \beta \rangle_{\mathfrak{g}_+, \mathfrak{g}_-}.
\end{array}
\end{equation}
%$$
% [\ad^*_\alpha x, y] + [x, \ad^*_\alpha y] + \ad^*_{\ad^*_x\alpha}y  - \ad^*_{\ad^*_y\alpha}x =\textbf{2}~\ad^*_{\alpha}[x, y].
%$$
%
%$$
%\ad^*_{\alpha}[x, y] = [\ad^*_\alpha x, y] + [x, \ad^*_\alpha y] + \ad^*_{\ad^*_x\alpha}y  - \ad^*_{\ad^*_y\alpha}x,
%$$
This implies that $\operatorname{L}_p(\mathcal{H})$ is not a  Banach Lie bialgebra with respect to $\operatorname{L}_q(\mathcal{H})$ (compare with the cocycle condition~\eqref{cocycle_mitte}).
}
%\end{enumerate} }
\end{example}
\begin{example}{\rm
By example \ref{exL1Linfty}, $\operatorname{L}_1(\mathcal{H})$ is a Banach Lie--Poisson space with respect to $\operatorname{L}_\infty(\mathcal{H})$. A computation analoguous as in previous example shows that $\operatorname{L}_1(\mathcal{H})$ is not a Banach Lie bialgebra with respect to $\operatorname{L}_\infty(\mathcal{H})$.}
\end{example}

\begin{example}{\rm
By example \ref{exL1L2}, $\operatorname{L}_1(\mathcal{H})$ is a Banach Lie--Poisson space with respect to $\operatorname{L}_2(\mathcal{H})$. It is easy to see that equation~\eqref{notbi} is satisfied for any $\alpha, \beta\in \operatorname{L}_2(\mathcal{H})$ and $x, y\in \operatorname{L}_1(\mathcal{H})$, 
hence $\operatorname{L}_1(\mathcal{H})$ is not a Banach Lie bialgebra with respect to $\operatorname{L}_2(\mathcal{H})$.}
\end{example}
%\todo{Compare with YKS}
Let us now give examples of Banach Lie--Poisson spaces which are also Banach Lie bialgebras. In Example~\ref{lp-Bialgebra} and Example~\ref{Iwasawa_Banach_Lie_bialgebra}, the cocycle condition can be checked by hand using the expression of the coadjoint actions.
\begin{example}\label{lp-Bialgebra}{\rm \textit{Banach Lie bialgebra of upper and lower triangular operators.}
For $1<p<\infty$, consider the Banach algebra $\operatorname{L}_p(\mathcal{H})_-$  of lower triangular operators in $\operatorname{L}_p(\mathcal{H})$ defined by \eqref{triangularLp+} and its dual space $\operatorname{L}_{q}(\mathcal{H})_{+}$, where $\frac{1}{p}+\frac{1}{q} = 1$ and where  the duality pairing is given by the trace. Then $\operatorname{L}_p(\mathcal{H})_-$ is a Banach Lie bialgebra with respect to $\operatorname{L}_{q}(\mathcal{H})_{+}$.
%complement $\operatorname{L}_p(\mathcal{H})_{++}$ consisting of stricktly upper triangular operators in $\operatorname{L}_p(\mathcal{H})$. 
%Therefore $\operatorname{L}_p(\mathcal{H})_{-}$ is a Banach Lie--Poisson space with respect to $L_{q}(\mathcal{H})_{+}$. Similarly one has $$\ad_{x}^*\alpha = p_{L_q(\mathcal{H})_{+}}\left([\alpha, x]\right),$$ for any $x\in \operatorname{L}_p(\mathcal{H})_{-}$ and any $\alpha \in L_q(\mathcal{H})_{+}$. Therefore $L_q(\mathcal{H})_{+}$ is a Banach Lie--Poisson space with respect to $\operatorname{L}_p(\mathcal{H})_{-}$. 
}
\end{example}

\begin{example}\label{Iwasawa_Banach_Lie_bialgebra}{\rm 
\textit{Iwasawa Banach Lie bialgebras.}
Let $p$ and $q$ be such that $1<p<\infty$, $1<q<\infty$ and $\frac{1}{p}+\frac{1}{q} = 1$.
Consider the Banach Lie algebra $\mathfrak{u}_p(\mathcal{H})$ and its dual Banach space $\mathfrak{b}_q^+(\mathcal{H})$, endowed with its natural Banach Lie algebra structure, which makes $\mathfrak{u}_p(\mathcal{H})$ into a Banach Lie--Poisson space (see example~\ref{exIwasawa}). In this case the duality pairing is given by the imaginary part of the trace.  Then $\mathfrak{u}_p(\mathcal{H})$ is a Banach Lie bialgebra with respect to  $\mathfrak{b}^+_q(\mathcal{H})$.}
\end{example}

\subsection{Banach Lie bialgebras versus Manin triples}\label{Banach Lie bialgebras versus Manin triples}
In the finite-dimensional case, the notion of Lie bialgebra is equivalent to the notion of Manin triple (see \cite{Dr83} or section~1.6 in \cite{Ko04}). In the infinite-dimensional case the notion of Banach Lie--Poisson space comes into play.

%\begin{theorem}
%There is a one-to-one correspondence between the set of Banach Lie bialgebras that are also Banach Lie--Poisson spaces and the set of Banach Manin triples.
%\end{theorem}
%This correspondence is detailed in the following Theorem (based on Theorem~\ref{Manin_to_bialgebra}).
\begin{theorem}\label{bialgebra_to_manin}
Consider two Banach Lie algebras 
 $\left(\mathfrak{g}_+, [\cdot,\cdot]_{\mathfrak{g}_+}\right)$ and  $\left(\mathfrak{g}_-, [\cdot,\cdot]_{\mathfrak{g}_-}\right)$ in duality.  Denote by $\mathfrak{g}$ the Banach space $\mathfrak{g} = \mathfrak{g}_+\oplus \mathfrak{g}_-$ with norm $\|\cdot\|_{\mathfrak{g}} = \|\cdot\|_{\mathfrak{g}_+}+\|\cdot\|_{\mathfrak{g}_-}$. The following assertions are equivalent.
\begin{itemize}
\item[(1)] $\mathfrak{g}_+$ is a Banach Lie--Poisson space and a Banach Lie bialgebra with respect to $\mathfrak{g}_-$ with cocycle $\theta_+ := [\cdot, \cdot]_{\mathfrak{g}_-}^*~:\mathfrak{g}_+ \rightarrow \Lambda^2\mathfrak{g}_-^*$;
\item[(2)] $(\mathfrak{g}, \mathfrak{g}_+, \mathfrak{g}_-)$ is a Manin triple for the natural non-degenerate symmetric bilinear map
$$
\begin{array}{lcll}
\langle\cdot,\cdot\rangle_{\mathfrak{g}}~: & \mathfrak{g}\times\mathfrak{g}&\rightarrow & \mathbb{K}\\
& (x,\alpha)\times (y, \beta) & \mapsto & \langle x, \beta\rangle_{\mathfrak{g}_+, \mathfrak{g}_-} + \langle y, \alpha\rangle_{\mathfrak{g}_+, \mathfrak{g}_-}
\end{array}
$$
with bracket given by
\begin{equation}\label{brak2}
\begin{array}{lcll}
[\cdot, \cdot]_{\mathfrak{g}}~:&\mathfrak{g}\times\mathfrak{g}&\rightarrow&\mathfrak{g}=\mathfrak{g}_+\oplus\mathfrak{g}_-\\
& (x, \alpha)\times(y, \beta) & \mapsto & \left([x, y]_{\mathfrak{g}_+}+\ad^*_{\beta}x -\ad^*_{\alpha}y,\quad [\alpha, \beta]_{\mathfrak{g}_-} + \ad^*_y\alpha - \ad^*_x\beta\right).
\end{array}
\end{equation}
\item[(3)] $\mathfrak{g}_-$ is a Banach Lie--Poisson space and a Banach Lie bialgebra with respect to $\mathfrak{g}_+$ with cocycle $\theta_- := [\cdot, \cdot]_{\mathfrak{g}_+}^*~:\mathfrak{g}_-\rightarrow \Lambda^2\mathfrak{g}^*_+$; 
\end{itemize}
\end{theorem}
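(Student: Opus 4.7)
The plan is to establish (1) $\Leftrightarrow$ (2); because statement (2) is manifestly symmetric under exchanging $\mathfrak{g}_+$ with $\mathfrak{g}_-$ (the pairing and the bracket \eqref{brak2} are invariant under the swap), the equivalence (2) $\Leftrightarrow$ (3) will follow with no additional work. The direction (2) $\Rightarrow$ (1) is already contained in Theorem \ref{Manin_to_bialgebra}: part (3) yields the Banach Lie--Poisson stability $\ad^*_\alpha x\in\mathfrak{g}_+$, while part (4) produces the 1-cocycle $\theta_+$ that makes $\mathfrak{g}_+$ a Banach Lie bialgebra with respect to $\mathfrak{g}_-$.

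For the converse (1) $\Rightarrow$ (2), I would define the bracket on $\mathfrak{g}=\mathfrak{g}_+\oplus\mathfrak{g}_-$ by \eqref{brak2}. The four crossed terms are well defined thanks to the two stability conditions packaged in (1): $\ad^*_\beta x,\ad^*_\alpha y\in\mathfrak{g}_+$ comes from the Lie--Poisson hypothesis, and $\ad^*_x\beta,\ad^*_y\alpha\in\mathfrak{g}_-$ comes from the coadjoint stability built into the bialgebra definition. Continuity, antisymmetry, the fact that $\mathfrak{g}_\pm$ are Banach Lie subalgebras (obtained by setting one summand to zero), and the symmetry, continuity, isotropy and non-degeneracy of $\langle\cdot,\cdot\rangle_\mathfrak{g}$ are all immediate. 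Invariance of $\langle\cdot,\cdot\rangle_\mathfrak{g}$ under the bracket follows by a direct expansion using only the defining identities $\langle\ad^*_\alpha x,\beta\rangle_{\mathfrak{g}_+,\mathfrak{g}_-}=\langle x,[\alpha,\beta]_{\mathfrak{g}_-}\rangle_{\mathfrak{g}_+,\mathfrak{g}_-}$ and $\langle y,\ad^*_x\beta\rangle_{\mathfrak{g}_+,\mathfrak{g}_-}=\langle[x,y]_{\mathfrak{g}_+},\beta\rangle_{\mathfrak{g}_+,\mathfrak{g}_-}$; after substitution all terms cancel pairwise, with no appeal to the cocycle.

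The main obstacle is the Jacobi identity on $\mathfrak{g}$. By multilinearity it suffices to check triples of each type: triples drawn entirely from $\mathfrak{g}_+$ or entirely from $\mathfrak{g}_-$ reduce to Jacobi in the corresponding subalgebra. For the mixed triple $((x,0),(0,\alpha),(0,\beta))$, the $\mathfrak{g}_+$-component of the Jacobiator collapses, after using the defining relations, to an expression involving $\ad^*_\alpha\ad^*_\beta x$, $\ad^*_\beta\ad^*_\alpha x$ and $\ad^*_{[\alpha,\beta]_{\mathfrak{g}_-}}x$ that vanishes by Jacobi in $\mathfrak{g}_-$; the $\mathfrak{g}_-$-component, however, reduces to the non-trivial operator identity
$$
\ad^*_x[\alpha,\beta]_{\mathfrak{g}_-}=[\ad^*_x\alpha,\beta]_{\mathfrak{g}_-}+[\alpha,\ad^*_x\beta]_{\mathfrak{g}_-}+\ad^*_{\ad^*_\alpha x}\beta-\ad^*_{\ad^*_\beta x}\alpha.
$$
Symmetrically, the other mixed triple $((x,0),(y,0),(0,\alpha))$ reduces to the identity \eqref{cocycle_weinstein} in $\mathfrak{g}_+$. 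Both of these operator identities are therefore needed for Jacobi to hold.

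The delicate final step is to verify that both identities are consequences of the single cocycle hypothesis \eqref{cocycle_mitte}. Pairing either identity with an arbitrary element of the opposite summand and substituting the coadjoint defining relations transforms it term by term into \eqref{cocycle_mitte}; the converse direction uses the non-degeneracy of $\langle\cdot,\cdot\rangle_{\mathfrak{g}_+,\mathfrak{g}_-}$. Here the coadjoint stability on both sides of the duality is essential, since the derivation manipulates composite expressions like $\ad^*_{\ad^*_\alpha x}\beta$ that must be well-defined elements of the correct subspace -- this is precisely why the Lie--Poisson condition is packaged together with the bialgebra condition in (1). With both operator identities established, all mixed contributions to the Jacobiator cancel, completing the verification of Jacobi on $\mathfrak{g}$ and hence (1) $\Rightarrow$ (2).
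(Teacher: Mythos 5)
Your proposal is correct and follows essentially the same route as the paper: the converse via Theorem~\ref{Manin_to_bialgebra}, the equivalence with (3) by the symmetry of (2), the direct verification of continuity, isotropy and invariance of $\langle\cdot,\cdot\rangle_{\mathfrak{g}}$ without the cocycle, and the reduction of the Jacobi identity on $\mathfrak{g}$ to the two mixed operator identities (namely \eqref{cocycle_weinstein} and its $\mathfrak{g}_-$-valued counterpart), each shown equivalent to the cocycle condition \eqref{cocycle_mitte} by pairing with the opposite summand and invoking non-degeneracy, with the remaining projections handled by the Jacobi identities of $\mathfrak{g}_+$ and $\mathfrak{g}_-$. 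The only difference is organizational: the paper works with the paired identity \eqref{jacobi} and the projections $p_{\mathfrak{g}_\pm}$, whereas you state the two operator identities directly, but the content is the same.
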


\begin{proof}
$(2)\Rightarrow(1)$ follows from Theorem \ref{Manin_to_bialgebra}. Let us prove $(1)\Rightarrow(2)$. 
\begin{itemize}
\item
Since $\mathfrak{g}_+$ is a Banach Lie--Poisson space with respect to $\mathfrak{g}_-$,  $\mathfrak{g}_-$ is a Banach Lie algebra $(\mathfrak{g}_-, [\cdot,\cdot]_{\mathfrak{g}_-})$ such that the coadjoint action of $\mathfrak{g}_-$ on $\mathfrak{g}_-^*$ preserves the subspace $\mathfrak{g}_+\subset \mathfrak{g}_-^*$ and the map 
$$
\begin{array}{lcll}
\ad^*_{\mathfrak{g}_-}~: & \mathfrak{g}_-\times\mathfrak{g}_+&\rightarrow & \mathfrak{g}_+\\
& (\alpha, x)& \mapsto & \ad^*_\alpha x,
\end{array}
$$
is continuous. Since $\mathfrak{g}_+$ is a Banach Lie bialgebra, the coadjoint action of $\mathfrak{g}_+$ on $\mathfrak{g}_+^*$ preserves the subspace $\mathfrak{g}_-\subset \mathfrak{g}_+^*$ and the map 
$$
\begin{array}{lcll}
\ad^*_{\mathfrak{g}_+}~: & \mathfrak{g}_+\times\mathfrak{g}_-&\rightarrow & \mathfrak{g}_-\\
& (x, \alpha)& \mapsto & \ad^*_x \alpha,
\end{array}
$$
is continuous. Therefore  bracket \eqref{brak2} is continuous on $\mathfrak{g} = \mathfrak{g}_+\oplus\mathfrak{g}_-$.

\item Let us show that the symmetric non-degenerate pairing $\langle\cdot,\cdot\rangle_{\mathfrak{g}}$ is invariant with respect to the bracket $[\cdot, \cdot]_{\mathfrak{g}}$. For this, we will use the fact that $\mathfrak{g}_+$ and $\mathfrak{g}_-$ are isotropic subspaces for $\langle \cdot, \cdot\rangle_{\mathfrak{g}}$. For $x\in\mathfrak{g}_+$ and $\alpha\in\mathfrak{g}_-$, one has 
\begin{equation}\label{decbracket}
[x, \alpha]_{\mathfrak{g}} = (\ad^*_{\alpha}x, -\ad^*_x\alpha).
\end{equation}
Therefore, for any $x\in\mathfrak{g}_+$ and any $\alpha, \beta\in\mathfrak{g}_-$, one has
$$
\begin{array}{ll}
\langle [x, \alpha]_{\mathfrak{g}}, \beta\rangle_{\mathfrak{g}} &= \langle \ad^*_{\alpha}x, \beta\rangle_{\mathfrak{g}} = \langle x , \ad_{\alpha}\beta\rangle_{\mathfrak{g}} = \langle x, [\alpha, \beta]_{\mathfrak{g}}\rangle_{\mathfrak{g}} \\&= - \langle x, [\beta, \alpha]_{\mathfrak{g}}\rangle_{\mathfrak{g}} = -\langle \ad^*_{\beta}x, \alpha\rangle_{\mathfrak{g}} = \langle [\beta, x]_{\mathfrak{g}}, \alpha\rangle_{\mathfrak{g}}.
\end{array}
$$
Similarly, for any $x, y\in\mathfrak{g}_+$ and any $\beta\in\mathfrak{g}_-$, one has
$$
\begin{array}{ll}
\langle [x, y]_{\mathfrak{g}}, \beta\rangle_{\mathfrak{g}} &= \langle y,  \ad^*_{x}\beta\rangle_{\mathfrak{g}} = \langle y, [\beta, x]_{\mathfrak{g}}\rangle_{\mathfrak{g}} = 
-\langle \ad^*_y\beta, x\rangle_{\mathfrak{g}} = \langle [y, \beta]_{\mathfrak{g}}, x\rangle_{\mathfrak{g}}.
\end{array}
$$
%\todo{verifier crochet alpha beta}
By linearity, it follows that $\langle\cdot, \cdot\rangle_{\mathfrak{g}}$ is invariant with respect to $[\cdot, \cdot]_{\mathfrak{g}}$.

\item It remains to verify that $[\cdot, \cdot]_{\mathfrak{g}}$ satisfies the Jacobi identity. Let us first show that for any $x,y\in\mathfrak{g}_+$ and any $\alpha\in\mathfrak{g}_-$,
$$
[\alpha, [x, y]] = [[\alpha, x], y] + [x, [\alpha, y]].
$$
 The dual map $[\cdot, \cdot]_{\mathfrak{g}_-}^*~: \mathfrak{g}_-^*\rightarrow \Lambda^2\mathfrak{g}_-^*$  of the bilinear map $[\cdot, \cdot]_{\mathfrak{g}_-}~: \Lambda^2\mathfrak{g}_-\rightarrow \mathfrak{g}_-$ is 
 $$
 [\cdot, \cdot]^*_{\mathfrak{g}_-}(\mathcal{F}) = \mathcal{F}([\cdot, \cdot]_{\mathfrak{g}_-}).
 $$
 In particular, its restriction $\theta_+~: \mathfrak{g}_+\rightarrow \Lambda^2\mathfrak{g}_-^*$  to $\mathfrak{g}_+\subset \mathfrak{g}_-^*$ reads
 $$
 \theta(x)(\alpha, \beta) = \langle x, [\alpha, \beta]_{\mathfrak{g}_-}\rangle = \langle [x, \alpha]_\mathfrak{g}, \beta\rangle_\mathfrak{g} = \langle \ad^*_{\alpha}x, \beta\rangle_{\mathfrak{g}}.
 $$
%$$
%\begin{array}{lclcl}
%[\cdot, \cdot]_{\mathfrak{g}_-}^*~: & \mathfrak{g}^{*}_-& \longrightarrow & L(\mathfrak{g}_-, \mathfrak{g}_- ; \mathbb{K}) & \simeq L(\mathfrak{g}_-; \mathfrak{g}^{*}_-)\\
%& \mathcal{F}(\cdot)&  \longmapsto& \mathcal{F}\left([\cdot, \cdot]_{\mathfrak{g}_-}\right) & \mapsto  \left(\alpha \mapsto \mathcal{F}\left([\alpha, \cdot]_{\mathfrak{g}_-}\right) = \ad^*_{\alpha}\mathcal{F}(\cdot)\right).
%\end{array}
%$$
%Denote by $\theta$ its restriction to the subspace $\mathfrak{g}_+$ of $\mathfrak{g}^{*}_-$~:
%$$
%\begin{array}{lclcl}
%\theta~: & \mathfrak{g}_+& \longrightarrow & L(\mathfrak{g}_-, \mathfrak{g}_- ; \mathbb{K}) & \simeq L(\mathfrak{g}_-; \mathfrak{g}^{*}_-)\\
%&x&  \longmapsto& \langle x, [\cdot, \cdot]_{\mathfrak{g}_-}\rangle_{\mathfrak{g}_+, \mathfrak{g}_-}& \mapsto  \left(\alpha \mapsto \langle x, [\alpha, \cdot]_{\mathfrak{g}_-}\rangle_{\mathfrak{g}_+, \mathfrak{g}_-} = \ad^*_{\alpha}x(\cdot)\right).
%\end{array}
%$$
Since $\mathfrak{g}_+$ is a Banach Lie--Poisson space, the cocycle $\theta_+ = [\cdot, \cdot]_{\mathfrak{g}_-}^*$ restricted to $\mathfrak{g}_+\subset \mathfrak{g}_-^*$ takes values in $\Lambda^2\mathfrak{g}_+ $. The cocycle condition \eqref{cocycle} reads
 \begin{equation}\label{cocycle_explicite}
\begin{array}{ll}
\langle [x, y], [\alpha, \beta]\rangle_{\mathfrak{g}_+, \mathfrak{g}_-}& = +\langle y, [\ad^*_x\alpha, \beta]\rangle_{\mathfrak{g}_+, \mathfrak{g}_-} +\langle y, [\alpha, \ad^*_x\beta]\rangle_{\mathfrak{g}_+, \mathfrak{g}_-} \\ & \quad-\langle x, [\ad^*_y\alpha, \beta]\rangle_{\mathfrak{g}_+, \mathfrak{g}_-} - \langle x, [\alpha, \ad^*_y\beta]\rangle_{\mathfrak{g}_+, \mathfrak{g}_-},
\end{array}
\end{equation}
%$$
%\begin{array}{ll}
%\langle [x, y], [\alpha, \beta]\rangle_{\mathfrak{g}_+, \mathfrak{g}_-}& = +\langle y, [\ad^*_x\alpha, \beta]\rangle_{\mathfrak{g}_+, \mathfrak{g}_-} +\langle y, [\alpha, \ad^*_x\beta]\rangle_{\mathfrak{g}_+, \mathfrak{g}_-} \\ & \quad-\langle x, [\ad^*_y\alpha, \beta]\rangle_{\mathfrak{g}_+, \mathfrak{g}_-} - \langle x, [\alpha, \ad^*_y\beta]\rangle_{\mathfrak{g}_+, \mathfrak{g}_-},
%\end{array}
%$$
where $x,y\in\mathfrak{g}_+$ and $\alpha, \beta\in\mathfrak{g}_-$. Using the definition of the bracket $\langle\cdot, \cdot\rangle_{\mathfrak{g}}$ and its invariance with respect to $[\cdot, \cdot]_{\mathfrak{g}}$, this is equivalent to
$$
\begin{array}{ll}
-\langle [\alpha, [x, y]], \beta\rangle_{\mathfrak{g}}& = -\langle [\ad^*_x\alpha, y], \beta\rangle_{\mathfrak{g}} -\langle [\alpha, y], \ad^*_x\beta\rangle_{\mathfrak{g}} \\ & \quad+\langle [\ad^*_y\alpha, x], \beta\rangle_{\mathfrak{g}} + \langle [\alpha, x],  \ad^*_y\beta\rangle_{\mathfrak{g}}.
\end{array}
$$
Using the fact that $\mathfrak{g}_+$ and $\mathfrak{g}_-$ are isotropic subspaces for $\langle\cdot, \cdot\rangle_{\mathfrak{g}}$ and relation \eqref{decbracket}, one gets
$$
\begin{array}{ll}
-\langle [\alpha, [x, y]], \beta\rangle_{\mathfrak{g}}& = -\langle [\ad^*_x\alpha, y], \beta\rangle_{\mathfrak{g}} +\langle \ad^*_\alpha y, \ad^*_x\beta\rangle_{\mathfrak{g}_+, \mathfrak{g}_-} \\ & \quad+\langle [\ad^*_y\alpha, x], \beta\rangle_{\mathfrak{g}}  -\langle \ad^*_\alpha x,  \ad^*_y\beta\rangle_{\mathfrak{g}_+, \mathfrak{g}_-} .
\end{array}
$$
Using the definition of the coadjoint actions, one obtains
$$
\begin{array}{ll}
-\langle [\alpha, [x, y]], \beta\rangle_{\mathfrak{g}}& = -\langle [\ad^*_x\alpha, y], \beta\rangle_{\mathfrak{g}} +\langle [x, \ad^*_\alpha y], \beta\rangle_{\mathfrak{g}_+, \mathfrak{g}_-} \\ & \quad+\langle [\ad^*_y\alpha, x], \beta\rangle_{\mathfrak{g}}  -\langle [y, \ad^*_\alpha x],  \beta\rangle_{\mathfrak{g}_+, \mathfrak{g}_-},
\end{array}
$$
or, in a more compact manner,
$$
\begin{array}{ll}
-\langle [\alpha, [x, y]], \beta\rangle_{\mathfrak{g}}& = \langle [\ad^*_{\alpha} x-\ad^*_x\alpha, y], \beta\rangle_{\mathfrak{g}} +\langle [x, \ad^*_\alpha y-\ad^*_y\alpha], \beta\rangle_{\mathfrak{g}}.
\end{array}
$$
Using  $[x, \alpha]_{\mathfrak{g}} = \ad^*_{\alpha}x -\ad^*_x\alpha$, and $[y, \alpha]_{\mathfrak{g}} = \ad^*_{\alpha}y -\ad^*_y\alpha$, one eventually gets
\begin{equation}\label{jacobi}
-\langle [\alpha, [x, y]], \beta\rangle_{\mathfrak{g}} = -\langle [[\alpha, x], y], \beta\rangle_{\mathfrak{g}} - \langle [x, [\alpha, y]], \beta\rangle_{\mathfrak{g}},
\end{equation}
for any $x,y\in\mathfrak{g}_+$ and any $\alpha, \beta\in\mathfrak{g}_-$.
Since $\langle\cdot, \cdot\rangle_{\mathfrak{g}}$ restricts to the duality pairing between $\mathfrak{g}_+$ and $\mathfrak{g}_-$, it follows that 
\begin{equation}\label{jacobi1+}
p_{\mathfrak{g}_+}[\alpha, [x, y]] = p_{\mathfrak{g}_+}[[\alpha, x], y] + p_{\mathfrak{g}_+}[x, [\alpha, y]],
\end{equation}
for any $x,y\in\mathfrak{g}_+$ and any $\alpha\in\mathfrak{g}_-$. On the other hand, considering the projection on $\mathfrak{g}_-$ one has
$$
p_{\mathfrak{g}_-}[\alpha, [x, y]] = \ad^*_{[x,y]}\alpha,
$$ as well as
$$
p_{\mathfrak{g}_-}[[\alpha, x], y]  = \ad^*_y\ad^*_x\alpha,
$$ 
and
$$
p_{\mathfrak{g}_-}[x, [\alpha, y]] = -\ad^*_x\ad^*_y\alpha.
$$
Since the bracket in $\mathfrak{g}_+$ satisfied Jacobi identity, it follows that
$$
\langle \alpha, [[x, y], z]\rangle_{\mathfrak{g}_+, \mathfrak{g}_-} =  \langle\alpha, [x, [y, z]]\rangle_{\mathfrak{g}_+, \mathfrak{g}_-} -\langle \alpha, [y, [x, z]]\rangle_{\mathfrak{g}_+, \mathfrak{g}_-},
$$
therefore
\begin{equation}\label{jacobi1-}
p_{\mathfrak{g}_-}[\alpha, [x, y]] = p_{\mathfrak{g}_-}[[\alpha, x], y] + p_{\mathfrak{g}_-}[x, [\alpha, y]],
\end{equation}
for any $x,y\in\mathfrak{g}_+$ and any $\alpha\in\mathfrak{g}_-$.
Combining \eqref{jacobi1+} and \eqref{jacobi1-}, it follows that
$$
[\alpha, [x, y]] = [[\alpha, x], y] + [x, [\alpha, y]],
$$
for any $x,y\in\mathfrak{g}_+$ and any $\alpha\in\mathfrak{g}_-$.

\item It remains to show that for any $x\in\mathfrak{g}_+$ and any $\alpha, \beta\in\mathfrak{g}_-$,
$$
[x, [\alpha, \beta]] = [[x, \alpha], \beta] + [\alpha, [x, \beta]].
$$
Since the bracket in $\mathfrak{g}_-$ satisfies Jacobi identity, similarly to \eqref{jacobi1-} remplacing $\mathfrak{g}_-$ by $\mathfrak{g}_+$, one has
\begin{equation}\label{jacobi2+}
p_{\mathfrak{g}_+}[x, [\alpha, \beta]] = p_{\mathfrak{g}_+}[[x, \alpha], \beta] + p_{\mathfrak{g}_+}[\alpha, [x, \beta]].
\end{equation}
Let us show that 
$$
p_{\mathfrak{g}_-}[x, [\alpha, \beta]] = p_{\mathfrak{g}_-}[[x, \alpha], \beta] + p_{\mathfrak{g}_-}[\alpha, [x, \beta]],
$$
for any $x\in\mathfrak{g}_+$ and any $\alpha, \beta\in\mathfrak{g}_-$. For any $x, y\in\mathfrak{g}_+$ and any $\alpha, \beta\in\mathfrak{g}_-$, one has
$$
\langle y, p_{\mathfrak{g}_-}[x, [\alpha, \beta]] \rangle_{\mathfrak{g}_+, \mathfrak{g}_-} = -\langle y, \ad^*_x[\alpha, \beta]\rangle_{\mathfrak{g}_+, \mathfrak{g}_-} = -\langle [x, y], [\alpha, \beta]\rangle_{\mathfrak{g}_+, \mathfrak{g}_-} = \langle [\alpha, [x, y]], \beta\rangle_{\mathfrak{g}}.
$$
On the other hand, for any $x, y\in\mathfrak{g}_+$ and any $\alpha, \beta\in\mathfrak{g}_-$, one has
$$
\langle y, p_{\mathfrak{g}_-}[[x, \alpha], \beta] \rangle_{\mathfrak{g}_+, \mathfrak{g}_-}  = \langle  y, [[x, \alpha], \beta] \rangle_{\mathfrak{g}} = \langle   [[\alpha, x], y],  \beta\rangle_{\mathfrak{g}},
$$
and
$$
\langle y, p_{\mathfrak{g}_-}[\alpha, [x, \beta]]  \rangle_{\mathfrak{g}_+, \mathfrak{g}_-}  = \langle y, [\alpha, [x, \beta]] \rangle_{\mathfrak{g}} = \langle [x, [\alpha, y]], \beta\rangle_{\mathfrak{g}}.
$$
By \eqref{jacobi}, it follows that 
\begin{equation}\label{jacobi2-}
p_{\mathfrak{g}_-}[x, [\alpha, \beta]] = p_{\mathfrak{g}_-}[[x, \alpha], \beta] + p_{\mathfrak{g}_-}[\alpha, [x, \beta]].
\end{equation}
Combining \eqref{jacobi2+} and \eqref{jacobi2-}, it follows that
$$
[x, [\alpha, \beta]] = [[x, \alpha], \beta] + [\alpha, [x, \beta]],
$$
for any $x\in\mathfrak{g}_+$ and any $\alpha, \beta\in\mathfrak{g}_-$. 
This ends the proof of $(1)\Rightarrow(2)$. The equivalence with $(3)$ follows by symmetry of $(2)$ with respect to exchange of $\mathfrak{g}_+$ into $\mathfrak{g}_-$.
\end{itemize}
\end{proof}

\begin{remark}
{\rm
It is noteworthy that the cocycle condition needs only to be verified for one of the Banach Lie algebra $\mathfrak{g}_+$ or $\mathfrak{g}_-$. The following Corollary is therefore a direct consequence of the proof of Theorem~\ref{bialgebra_to_manin}.
}
\end{remark}

\begin{corollary}\label{vice-versa}
Consider two Banach Lie algebras 
 $\left(\mathfrak{g}_+, [\cdot,\cdot]_{\mathfrak{g}_+}\right)$ and  $\left(\mathfrak{g}_-, [\cdot,\cdot]_{\mathfrak{g}_-}\right)$ in duality. If $\mathfrak{g}_+$ is a Banach Lie--Poisson space and a Banach Lie bialgebra with respect to $\mathfrak{g}_-$,  then $\mathfrak{g}_-$ is a Banach Lie--Poisson space and a Banach Lie bialgebra with respect to $\mathfrak{g}_+$.

\end{corollary}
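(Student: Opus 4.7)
The plan is to deduce the corollary directly from Theorem~\ref{bialgebra_to_manin} by exploiting the symmetric role played by $\mathfrak{g}_+$ and $\mathfrak{g}_-$ in the definition of a Manin triple. Starting from the hypothesis, the implication $(1) \Rightarrow (2)$ of Theorem~\ref{bialgebra_to_manin} (which has been proved in detail in the preceding argument) furnishes a Banach Manin triple $(\mathfrak{g}, \mathfrak{g}_+, \mathfrak{g}_-)$ for the non-degenerate symmetric pairing $\langle\cdot,\cdot\rangle_{\mathfrak{g}}$ and the bracket $[\cdot,\cdot]_{\mathfrak{g}}$ of formula \eqref{brak2}.

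The next step is the key observation: both the pairing
$$\langle (x,\alpha),(y,\beta)\rangle_{\mathfrak{g}} = \langle x,\beta\rangle_{\mathfrak{g}_+,\mathfrak{g}_-} + \langle y,\alpha\rangle_{\mathfrak{g}_+,\mathfrak{g}_-}$$
and the bracket \eqref{brak2} are manifestly invariant under the involution that swaps the two summands, i.e.\ $(x,\alpha)\leftrightarrow (\alpha,x)$. For the pairing this is evident; for the bracket one reads off from \eqref{brak2} that the $\mathfrak{g}_+$-component and the $\mathfrak{g}_-$-component play interchangeable structural roles (Lie bracket of like elements, plus the two coadjoint-correction terms). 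Consequently the same data define a Banach Manin triple $(\mathfrak{g}, \mathfrak{g}_-, \mathfrak{g}_+)$ with $\mathfrak{g}_-$ playing the role previously played by $\mathfrak{g}_+$.

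Applying now the implication $(2)\Rightarrow(1)$ of Theorem~\ref{bialgebra_to_manin} to this relabelled Manin triple yields that $\mathfrak{g}_-$ is a Banach Lie--Poisson space and a Banach Lie bialgebra with respect to $\mathfrak{g}_+$, whose associated $1$-cocycle is $\theta_-:=[\cdot,\cdot]_{\mathfrak{g}_+}^*$ restricted to $\mathfrak{g}_-\subset\mathfrak{g}_+^*$. This is precisely the conclusion of the corollary. No real obstacle arises: the entire content of the corollary is that the structure extracted in Theorem~\ref{bialgebra_to_manin} is symmetric in $\mathfrak{g}_+$ and $\mathfrak{g}_-$, so once the equivalence $(1)\Leftrightarrow(2)$ is available, the statement follows by invoking it twice, with no need to re-verify the cocycle identity on the $\mathfrak{g}_-$ side.
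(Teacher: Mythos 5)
Your proposal is correct and follows essentially the same route as the paper: the paper deduces the corollary from Theorem~\ref{bialgebra_to_manin} by noting that assertion (2) — the Manin triple with the pairing $\langle\cdot,\cdot\rangle_{\mathfrak{g}}$ and bracket \eqref{brak2} — is symmetric under exchanging $\mathfrak{g}_+$ and $\mathfrak{g}_-$, so $(1)\Rightarrow(2)$ followed by (the swapped) $(2)\Rightarrow(1)$, i.e.\ $(2)\Rightarrow(3)$ via Theorem~\ref{Manin_to_bialgebra}, gives the conclusion without re-verifying the cocycle condition on the $\mathfrak{g}_-$ side. This is exactly the observation recorded in the remark preceding the corollary and in the last line of the proof of Theorem~\ref{bialgebra_to_manin}.
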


\begin{example}{\rm
By Proposition~\ref{triples}, the triple $\left(\operatorname{L}_p(\mathcal{H}), \mathfrak{u}_p(\mathcal{H}), \mathfrak{b}^+_p(\mathcal{H})\right)$ is a Banach Manin triple for $1<p\leq 2$. Under this condition on $p$, it follows from Theorem~\ref{bialgebra_to_manin} that $\mathfrak{u}_p(\mathcal{H})$ is a Banach Lie--Poisson space and a Banach Lie bialgebra with respect to $\mathfrak{b}^+_p(\mathcal{H})$, and $\mathfrak{b}^+_p(\mathcal{H})$ is a Banach Lie--Poisson space and a Banach Lie bialgebra with respect to $\mathfrak{u}_p(\mathcal{H})$.}
\end{example}

\begin{example}\label{exUpBpManin}{\rm
For $1<p<\infty$, by Example~\ref{exIwasawa},  $\mathfrak{u}_p(\mathcal{H})$  is a Banach Lie--Poisson space with respect to  $\mathfrak{b}_{q}^+(\mathcal{H})$, where $\frac{1}{p}+\frac{1}{q}=1$.
By Example~\ref{Iwasawa_Banach_Lie_bialgebra}, $\mathfrak{u}_p(\mathcal{H})$ is a Banach Lie bialgebra with respect to $\mathfrak{b}^+_q(\mathcal{H})$. We deduce from Theorem~\ref{bialgebra_to_manin} that $\left(\mathfrak{u}_p(\mathcal{H})\oplus \mathfrak{b}_{q}^+(\mathcal{H}), \mathfrak{u}_p(\mathcal{H}), \mathfrak{b}_{q}^+(\mathcal{H})\right) $ form a Banach Manin triple, and that $\mathfrak{b}^+_q(\mathcal{H})$ is a Banach Lie bialgebra with respect to $\mathfrak{u}_p(\mathcal{H})$.
}
\end{example}

\begin{example}{\rm
From Example~\ref{exUpLow}, we know that $\operatorname{L}_p(\mathcal{H})_-$ is a Banach Lie--Poisson space with respect to $L_q(\mathcal{H})_+$.
By Example~\ref{lp-Bialgebra}, $\operatorname{L}_p(\mathcal{H})_-$ is a Banach Lie bialgebra with respect to $L_q(\mathcal{H})_+$.
%It follows from previous corollary and  example~\ref{lp-Bialgebra} that $L_q(\mathcal{H})_+$ is a Banach Lie bialgebra with respect to $\operatorname{L}_p(\mathcal{H})_-$. 
By Theorem~\ref{bialgebra_to_manin}, the triple of Banach Lie algebras $\left(\operatorname{L}_p(\mathcal{H})_-\oplus L_q(\mathcal{H})_+, \operatorname{L}_p(\mathcal{H})_-, L_q(\mathcal{H})_+\right)$ is a Banach Manin triple.
By corollary~\ref{vice-versa}, $L_q(\mathcal{H})_+$ is a Banach Lie bialgebra with respect to $\operatorname{L}_p(\mathcal{H})_-$.
}\end{example}

\section{Banach Poisson--Lie groups}

This Section is devoted to the notion of Banach Poisson--Lie groups in the general framework of generalized Banach Poisson manifolds (see Section~\ref{def_Poisson}). We start in Section~\ref{section_Poisson_lie_group} with the definition, and show that the compatibility condition between the Poisson tensor and the multiplication on the group gives rise to a $1$-cocycle on the group. In Section~\ref{Jacobi_local}, we use the triviality of the tangent and cotangent bundles  in order to write the Jacobi identity for a Poisson tensor on a group at the level of the Lie algebra (Theorem~\ref{lemmaJacobi}). This allows us to give examples of Banach Poisson--Lie groups in Section~\ref{example_Up_Bp}. Finally, in Section~\ref{tangent_bialgebra}, we prove that the tangent space at the unit element $e$ of a Banach Poisson--Lie group $(G, \mathbb{F}, \pi)$ admits a natural Banach Lie bialgebra structure with respect to $\mathbb{F}_e$, and, in the case when the Poisson tensor $\pi$ is a section of $\Lambda^2TG$, is also a Banach Lie--Poisson space. The integrability problem of a Banach Lie bialgebra into a Banach Poisson--Lie group remains open.

\subsection{Definition of Banach Poisson--Lie groups}\label{section_Poisson_lie_group}
In order to be able to define the notion of Banach Poisson--Lie groups, we need to recall the construction of a Poisson structure on the product of two Poisson manifolds. The following Proposition is straightforward.

\begin{proposition}
Let $(M_1, \mathbb{F}_1, \pi_1)$ and $(M_2, \mathbb{F}_2, \pi_2)$ be two generalized Banach Poisson manifolds. Then the  product $M_1\times M_2$ carries a natural generalized Banach Poisson manifold structure $\left(M_1\times M_2, \mathbb{F}, \pi \right)$ where
\begin{enumerate}
\item $M_1\times M_2$ carries the product Banach manifold structure, in particular the tangent bundle of $M_1\times M_2$ is isomorphic to the direct sum $TM_1\oplus TM_2$ of the vector bundles $TM_1$ and $TM_2$ and the cotangent bundle  of $M_1\times M_2$ is isomorphic to $T^*M_1\oplus T^*M_2$,
\item $\mathbb{F}$ is the subbundle of $T^*M_1\oplus T^*M_2$ defined as 
$$
\mathbb{F}_{(p, q)} = (\mathbb{F}_1)_p\oplus(\mathbb{F}_2)_q, 
$$
\item $\pi$ is defined on $\mathbb{F}$ by
$$
\pi(\alpha_1+\alpha_2, \beta_1 + \beta_2) = \pi_1(\alpha_1, \beta_1) + \pi_2(\alpha_2, \beta_2), \quad \alpha_1, \beta_1 \in \mathbb{F}_1, \alpha_2, \beta_2 \in \mathbb{F}_2.
$$
%where $\alpha_1, \beta_1 \in \mathbb{F}_1$ and $\alpha_2, \beta_2 \in \mathbb{F}_2$.
\end{enumerate}
\end{proposition}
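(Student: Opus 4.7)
The plan is to verify successively the four structural items implicit in the definition of a generalized Banach Poisson manifold: that $\mathbb{F}$ is a subbundle of $T^*(M_1\times M_2)$ in duality with its tangent bundle; that $\pi$ is a smooth section of $\Lambda^2\mathbb{F}^*$; that $d(\pi(\alpha,\beta))$ is a local section of $\mathbb{F}$ whenever $\alpha,\beta$ are closed local sections of $\mathbb{F}$; and that $\pi$ satisfies the Jacobi identity~\eqref{Jacobi_Poisson}. The first two items are essentially formal consequences of the product Banach manifold structure.

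Under the canonical identifications $T_{(p,q)}(M_1\times M_2)\simeq T_pM_1\oplus T_qM_2$ and $T^*_{(p,q)}(M_1\times M_2)\simeq T^*_pM_1\oplus T^*_qM_2$, the fiber $\mathbb{F}_{(p,q)}=(\mathbb{F}_1)_p\oplus(\mathbb{F}_2)_q$ inherits a direct-sum Banach space norm and injects continuously into $T^*_{(p,q)}(M_1\times M_2)$ because each factor injection is continuous. The induced pairing $(\alpha_1+\alpha_2)(X_1+X_2)=\alpha_1(X_1)+\alpha_2(X_2)$ separates points of the tangent space, since each $\mathbb{F}_i$ separates points of $TM_i$ by hypothesis. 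Smoothness, bilinearity and skew-symmetry of $\pi$ are immediate from the corresponding properties of $\pi_1$ and $\pi_2$ together with the fact that $\pi$ is built as a block-diagonal tensor with respect to the decomposition of $\mathbb{F}$.

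For the two Poisson conditions, my approach is to split the exterior derivative on $M_1\times M_2$ as $d=d_1+d_2$ according to the factors, and to use Poincar\'e's Lemma (Theorem~4.1 in \cite{La02}) to write any closed local section of $\mathbb{F}$ as $df$ with $d_1f\in\mathbb{F}_1$ and $d_2f\in\mathbb{F}_2$. Writing $\alpha=df$, $\beta=dg$, one has
$$\pi(\alpha,\beta)(p,q)=\pi_1|_p(d_1f,d_1g)+\pi_2|_q(d_2f,d_2g).$$
The ``pure'' pieces $d_1[\pi_1|_p(d_1f,d_1g)]$ and $d_2[\pi_2|_q(d_2f,d_2g)]$ are local sections of $\mathbb{F}_1$ and $\mathbb{F}_2$ respectively, by applying the Poisson condition of $\pi_i$ to the exact sections $d(f(\cdot,q))$, $d(g(\cdot,q))$ on $M_1$ and $d(f(p,\cdot))$, $d(g(p,\cdot))$ on $M_2$. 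The ``mixed'' pieces, such as $d_2[\pi_1|_p(d_1f,d_1g)]$, expand by the chain rule into expressions of the form $\pi_1|_p(\partial_Xd_1f,d_1g)+\pi_1|_p(d_1f,\partial_Xd_1g)$ for $X\in T_qM_2$. The key observation is that $\partial_Xd_1f(p,q)\in(\mathbb{F}_1)_p$, since $d_1f$ is smooth into the Banach space $(\mathbb{F}_1)_p$, and that, by the symmetry of the second derivative $d_1d_2f=d_2d_1f$ in the Banach setting, the $X$-dependence may be reabsorbed into a directional derivative of $d_2f$, which stays in $\mathbb{F}_2$. Hence $d(\pi(\alpha,\beta))$ indeed takes values in $\mathbb{F}=p_1^*\mathbb{F}_1\oplus p_2^*\mathbb{F}_2$.

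For the Jacobi identity, substituting the decomposition of $\pi(\alpha,\beta)$ into the cyclic sum~\eqref{Jacobi_Poisson} with $\alpha=df$, $\beta=dg$, $\gamma=dh$ produces three families of terms: those built solely from $\pi_1$ and $d_1$, which vanish by Jacobi on $(M_1,\mathbb{F}_1,\pi_1)$; those built solely from $\pi_2$ and $d_2$, which vanish by Jacobi on $(M_2,\mathbb{F}_2,\pi_2)$; and mixed terms coupling both factors, which cancel in pairs by the skew-symmetry of $\pi_1,\pi_2$ together with the symmetry of the mixed partials. The main obstacle is precisely this last piece of bookkeeping: it is the only point where the product structure couples the two Poisson components, and the cancellations rely on a careful application of Schwarz's theorem in the Banach context.
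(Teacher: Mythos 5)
The paper offers no written proof of this proposition (it is asserted to be ``straightforward''), so your proposal must stand on its own; its overall architecture — the formal verification that $\mathbb{F}$ is a subbundle in duality with $T(M_1\times M_2)$, the splitting $d=d_1+d_2$, Poincar\'e's Lemma to replace closed sections by exact ones, and the observation that the ``pure'' terms are handled by the hypotheses on each factor — is exactly the natural route. The genuine gap is in the step you yourself single out as the crux, namely the mixed terms. To show that $X\mapsto \pi_1|_p(\partial_X d_1f,d_1g)+\pi_1|_p(d_1f,\partial_X d_1g)$ lies in $(\mathbb{F}_2)_q$, you invoke Schwarz symmetry $d_1d_2f=d_2d_1f$ to ``reabsorb the $X$-dependence into a directional derivative of $d_2f$''. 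But symmetry of the second derivative only converts \emph{evaluation of $\partial_Xd_1f$ on a tangent vector} $Y\in T_pM_1$ into $\partial_Yd_2f(X)$; in your expression $\partial_Xd_1f$ is not evaluated on a tangent vector, it is paired with $d_1g$ through $\pi_1$, i.e. against the functional $\pi_1|_p(\cdot,d_1g)\in(\mathbb{F}_1)_p^*$. The reabsorption therefore works only if this functional is represented by a tangent vector, that is if $\pi_1^\sharp$ takes values in $TM_1\subset\mathbb{F}_1^*$ — precisely the existence of Hamiltonian vector fields, which Definition~\ref{Poisson-Manifold} deliberately does \emph{not} assume (compare the extra hypothesis in Theorem~\ref{PL_LP} and in Lemma~\ref{lemmaJacobi}(3)). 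Without that assumption, condition (1) of Definition~\ref{Poisson_tensor} for the product, i.e. the membership of $d_2\bigl[\pi_1(d_1f,d_1g)\bigr]$ in $(\mathbb{F}_2)_q$ (and its mirror image), is not established by your argument, and the same unproved representability is silently used again when you claim the mixed terms in the cyclic sum \eqref{Jacobi_Poisson} ``cancel in pairs by skew-symmetry together with Schwarz'': that cancellation is the standard one for products of Poisson manifolds, but its verification differentiates along the Hamiltonian directions $\pi_i^\sharp(d_ig)$ and is not a pairwise skew-symmetry bookkeeping.

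Concretely, your proof becomes correct if you add (or first reduce to) the hypothesis $\pi_i^\sharp(\mathbb{F}_i)\subset TM_i$, which holds in all the examples the paper actually uses (weak symplectic manifolds, Banach Lie--Poisson spaces, and the Poisson--Lie groups of Part~2, where this is imposed explicitly); under it, $\pi_1|_p(\partial_Xd_1f,d_1g)=-\partial_Xd_1f\bigl(\pi_1^\sharp(d_1g)\bigr)=-\partial_{\pi_1^\sharp(d_1g)}d_2f(X)$, which does lie in $(\mathbb{F}_2)_q$, and the Jacobi cross-terms can then be cancelled by the classical computation. In the stated generality you need a different argument for (or a reformulation of) the mixed-term membership; as written, the step is a jump. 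A minor further caveat, consistent with the paper's own conventions in the proof of Theorem~\ref{4.2} so not a real objection: writing $\partial_Xd_1f\in(\mathbb{F}_1)_p$ presumes that sections of $\mathbb{F}$ are smooth into the fiber Banach topology, not merely into $T^*M$, since $(\mathbb{F}_1)_p$ is only continuously injected (not necessarily closed) in $T_p^*M_1$; you should state this convention explicitly.
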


\begin{definition}\label{poisson_map_def}
Let $(M_1, \mathbb{F}_1, \pi_1)$ and $(M_2, \mathbb{F}_2, \pi_2)$ be two generalized Banach Poisson manifolds and $F~:M_1\rightarrow M_2$ a smooth map. One says that $F$ is a \textbf{Poisson map} at $p\in M_1$ if
\begin{enumerate}
\item the tangent map $T_pF~:T_pM_1 \rightarrow T_{F(p)}M_2$ satisfies $T_pF^*(\mathbb{F}_2)_{F(p)}\subset (\mathbb{F}_1)_p$, i.e. for any covector $\alpha \in (\mathbb{F}_2)_{F(p)}$, the covector $\alpha\circ T_{p}F$ belongs to $(\mathbb{F}_1)_p$~;
\item   $(\pi_1)_p\left(\alpha\circ T_pF, \beta\circ T_pF\right) = (\pi_2)_{F(p)}\left(\alpha, \beta\right)$ for any $\alpha, \beta \in (\mathbb{F}_2)_{F(p)}$.
\end{enumerate}
One says that $F$ is a Poisson map if it is a Poisson map at any $p\in M_1$.
\end{definition}
\begin{definition}
A \textbf{Banach Poisson--Lie group} $G$ is a Banach Lie group  equipped with a generalized Banach Poisson manifold structure such that the group multiplication $m: G \times G \rightarrow G$ is a Poisson map, where $G \times G$ is endowed with the product Poisson structure.
\end{definition}
The compatibility condition between the multiplication in the group and the Poisson tensor can be checked at the level of the Lie algebra. To see this, let us introduce some notation.
Denote by $L_g~:G\rightarrow G$ and $R_g~: G\rightarrow G$ the left and right translations  by $g\in G$. By abuse of notation, we will also denote by $L_g$ and $R_g$ the induced actions of $g\in G$ on the tangent bundle $TG$. The induced actions on the cotangent bundle $T^*G$ will be denoted by  $L_g^*$ and $R_g^*$, and on the dual $T^{**}G$ of the cotangent bundle  by $L_g^{**}$ and $R_g^{**}$. In particular, for $g\in G$ and $\alpha\in T_u^*G$, $L^*_g\alpha\in T^*_{g^{-1}u}G$ is defined by $L^*_g\alpha(X) := \alpha(L_g X)$. The smooth adjoint action of  $G$ on its Lie algebra $\mathfrak{g}$ will be denoted by $\Ad_g = L_g\circ R_g^{-1}$,  the induced smooth coadjoint action of $G$ on the dual space $\mathfrak{g}^*$ by $\Ad^*_g = L_g^*\circ R_{g^{-1}}^*$, and the induced smooth action of $G$ on the bidual space $\mathfrak{g}^{**}$ by $\Ad^{**}_g = L_g^{**}\circ R_{g^{-1}}^{**}$. For any subspace $\mathfrak{g}_-\subset \mathfrak{g}^*$ invariant under the coadjoint action of $G$, the restriction $$\Ad^*~: G \times \mathfrak{g}_- \rightarrow  \mathfrak{g}_-$$  which maps  $(g, \beta)\in G \times \mathfrak{g}_-$ to the element $\Ad^*(g)\beta\in \mathfrak{g}_-$
%$$
%\begin{array}{llll}
%\Ad^*~:& G \times \mathfrak{g}_-& \rightarrow & \mathfrak{g}_-\\
%& (g, \beta) & \mapsto & \Ad^*(g)\beta,
%\end{array}
%$$
is continuous when $\mathfrak{g}_-$ is endowed with the norm of $\mathfrak{g}^*$. In that case,
one can 
define the coadjoint action $\Ad^{**}(g)$ of $g\in G$ on $\Lambda^2\mathfrak{g}_-^*$  by
$$
\Ad^{**}(g)\textrm{\textbf{t}}:= \textrm{\textbf{t}}(\Ad(g)^*\cdot, \Ad(g)^*\cdot), \quad\quad\textrm{ for } \textrm{\textbf{t}}\in \Lambda^2\mathfrak{g}_-^*.
$$
By abuse of notation, we will also denote by 
 $L_g^{**}$  the action of $g\in G$ on a section $\pi$ of $\Lambda^2T^{**}G$~:
 $$L_g^{**}\pi_u(\alpha, \beta) = \pi_u(L_g^*\alpha, L_g^*\beta), \quad\textrm{with}\quad \alpha, \beta \in T^*_{gu}G.$$ 
 Similarly, one defines $$R_u^{**}\pi_g(\alpha, \beta) = \pi_g(R_u^*\alpha, R_u^*\beta), \quad\textrm{with}\quad \alpha, \beta \in T^*_{gu}G.$$

%$$
%\begin{array}{llll}
%\Ad^{**}~: & G & \longrightarrow & GL(\Lambda^2\mathfrak{g}_-^*)\\
%& g & \longmapsto &  \textrm{\textbf{t}}(\cdot, \cdot) \mapsto \Ad^{**}(g)\textrm{\textbf{t}}:= \textrm{\textbf{t}}(\Ad(g)^*\cdot, \Ad(g)^*\cdot).
%\end{array}
%$$

\begin{proposition}\label{GPoisson}
A Banach Lie group $G$ equipped with a generalized Banach Poisson structure $(G, \mathbb{F}, \pi)$  is a Banach Poisson--Lie group if and only if 
\begin{enumerate}
\item $\mathbb{F}$ is invariant under left and right translations by elements in $G$
\item the Poisson tensor $\pi$ is a section of $\Lambda^2\mathbb{F}^*$ satisfying
\begin{equation}\label{Poisson_equation}
\pi_{gu} = L_g^{**}\pi_u + R_u^{**}\pi_g,\quad \forall g,u \in G.
\end{equation}
\end{enumerate}
\end{proposition}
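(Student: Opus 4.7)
The plan is to directly unpack Definition~\ref{poisson_map_def} applied to the multiplication map $m\colon G\times G\to G$ at an arbitrary point $(g,u)\in G\times G$, translating both conditions of the Poisson map into algebraic statements about $\mathbb{F}$ and $\pi$.

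First I will compute the tangent map of multiplication. For $X\in T_gG$ and $Y\in T_uG$, the standard Leibniz-type identity for Lie group multiplication gives
$$T_{(g,u)}m\,(X,Y) = R_u X + L_g Y \in T_{gu}G,$$
where by abuse of notation $R_u$ and $L_g$ denote also the induced maps on tangent vectors. Dualising, for any covector $\alpha\in T_{gu}^*G$ and using the canonical identification $T^*_{(g,u)}(G\times G)\simeq T_g^*G\oplus T_u^*G$, the pullback reads
$$T_{(g,u)}m^{*}\alpha = (R_u^{*}\alpha,\; L_g^{*}\alpha).$$
With this formula in hand, condition (1) of Definition~\ref{poisson_map_def}, namely $T_{(g,u)}m^{*}\mathbb{F}_{gu}\subset \mathbb{F}_g\oplus \mathbb{F}_u$, is equivalent to $R_u^{*}\alpha\in \mathbb{F}_g$ and $L_g^{*}\alpha\in \mathbb{F}_u$ for every $\alpha\in\mathbb{F}_{gu}$. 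Letting $(g,u)$ range over all of $G\times G$, this is exactly the requirement that $\mathbb{F}$ be invariant under all left and right translations, yielding item (1) of the Proposition.

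Next I will handle condition (2). Given the invariance established above, both $L_g^{*}\alpha$ and $R_u^{*}\alpha$ make sense as elements of fibers of $\mathbb{F}$, so the product Poisson tensor evaluates to
$$\pi^{\mathrm{prod}}_{(g,u)}\bigl(T_{(g,u)}m^{*}\alpha,\,T_{(g,u)}m^{*}\beta\bigr) = \pi_g(R_u^{*}\alpha,R_u^{*}\beta) + \pi_u(L_g^{*}\alpha,L_g^{*}\beta),$$
which by the very definitions of $L_g^{**}$ and $R_u^{**}$ equals $R_u^{**}\pi_g(\alpha,\beta)+L_g^{**}\pi_u(\alpha,\beta)$. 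Thus the compatibility $\pi^{\mathrm{prod}}_{(g,u)}\circ (T_{(g,u)}m^{*}\otimes T_{(g,u)}m^{*})=\pi_{gu}$ becomes precisely \eqref{Poisson_equation}, proving the equivalence in one direction and, reading the argument backwards, in the other. For the converse I will also briefly verify that, once $\mathbb{F}$ is left- and right-invariant and $\pi$ is a section of $\Lambda^{2}\mathbb{F}^{*}$ satisfying \eqref{Poisson_equation}, each of the two conditions of Definition~\ref{poisson_map_def} holds pointwise, so that $m$ is a Poisson map.

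This proof is essentially a bookkeeping exercise; there is no genuine obstacle. The only point requiring mild care is that the expression $L_g^{**}\pi_u + R_u^{**}\pi_g$ defines an element of $\Lambda^{2}\mathbb{F}^{*}_{gu}$ only after the invariance of $\mathbb{F}$ under translations has been established, so the two assertions of the Proposition must be read in tandem. Writing out the computation in the order above -- first derive invariance of $\mathbb{F}$ from condition (1) of the Poisson map, and only then interpret condition (2) as \eqref{Poisson_equation} -- keeps all pullbacks and tensors within well-defined Banach spaces at every step.
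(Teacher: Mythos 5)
Your proposal is correct and follows essentially the same route as the paper's own proof: computing $T_{(g,u)}m(X,Y)=R_uX+L_gY$, identifying condition (1) of the Poisson-map definition with the left/right invariance of $\mathbb{F}$, and rewriting condition (2) via the product Poisson structure as equation~\eqref{Poisson_equation}. Your closing remark about establishing the invariance of $\mathbb{F}$ before interpreting $L_g^{**}\pi_u+R_u^{**}\pi_g$ is a sensible precaution that is implicit in the paper's argument.
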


\begin{proof}
 The tangent map $T_{(g,u)}m~:T_gG \oplus T_uG \rightarrow T_{gu}G$ to the multiplicatin $m$ in $G$ maps $(X_g, X_u)$ to $T_gR_u(X_g) + T_uL_g(X_u)$.
The invariance of $\mathbb{F}$ by left and right translations means that for any  $\alpha\in\mathbb{F}_{gu}$, the covector $\alpha\circ T_uL_g$ belongs to $\mathbb{F}_u\subset T_u^*G$ and the covector $\alpha\circ T_gR_u$ belongs to $\mathbb{F}_g \subset T_g^*G$. This is equivalent to the first condition in definition~\ref{poisson_map_def}. The second condition in definition~\ref{poisson_map_def} reads 
$$
 \pi_{G\times G}\left(\alpha\circ T_{(g,u)}m, \beta \circ T_{(g,u)}m\right)= \pi_{gu}(\alpha, \beta),
$$
for any $\alpha$ and $\beta$ in $\mathbb{F}_{gu}$.
By definition of the Poisson structure on the product manifold $G\times G$, one has~:
$$
 \pi_{G\times G}\left(\alpha\circ T_{(g,u)}m, \beta \circ T_{(g,u)}m\right)= \pi_u\left(\alpha\circ T_uL_g, \beta\circ T_uL_g\right) + \pi_g\left(\alpha\circ T_gR_u, \beta\circ T_gR_u\right),
 $$
  hence $m$ is a Poisson map if and only if \eqref{Poisson_equation} is satisfied.
\end{proof}

\begin{corollary}\label{vanishing_pi}
The Poisson tensor $\pi$ of a Banach Poisson--Lie group vanishes at the unit element.
\end{corollary}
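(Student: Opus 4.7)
The plan is to apply the compatibility equation \eqref{Poisson_equation} from Proposition~\ref{GPoisson} at the single point $g=u=e$. Since $L_e$ and $R_e$ are the identity transformations on $G$, their induced actions $L_e^{**}$ and $R_e^{**}$ on $\Lambda^2 T^{**}G$ act as the identity on tensors at $e$. Substituting $g=u=e$ in
\begin{equation*}
\pi_{gu} = L_g^{**}\pi_u + R_u^{**}\pi_g
\end{equation*}
therefore yields $\pi_e = \pi_e + \pi_e$, from which $\pi_e = 0$ follows immediately.

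No obstacle is expected: this is a one-line consequence of the multiplicativity property already established in Proposition~\ref{GPoisson}, and the argument mirrors the classical finite-dimensional fact that any multiplicative Poisson tensor vanishes at the identity. The only subtlety worth noting is that the identification of $L_e^{**}$ and $R_e^{**}$ with the identity on $\Lambda^2\mathbb{F}_e^*$ is valid because, by hypothesis, $\mathbb{F}$ is stable under left and right translations, so the restriction of $L_e^{**}$ and $R_e^{**}$ to sections of $\Lambda^2\mathbb{F}^*$ makes sense and, being induced by the identity on $T^*_eG$, acts trivially on $\pi_e \in \Lambda^2 \mathbb{F}_e^*$.
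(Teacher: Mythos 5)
Your proof is correct and follows exactly the paper's own argument: setting $g=u=e$ in equation~\eqref{Poisson_equation} gives $\pi_e = \pi_e + \pi_e$, hence $\pi_e = 0$. The extra remark about $L_e^{**}$ and $R_e^{**}$ acting trivially on $\Lambda^2\mathbb{F}_e^*$ is a harmless elaboration of the same one-line proof.
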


\begin{proof}
By equation~\eqref{Poisson_equation}, one has $\pi_e = \pi_e + \pi_e$, hence $\pi_e = 0$.
\end{proof}

\begin{proposition}\label{cela2}
Let  $(G, \mathbb{F}, \pi)$ be a Banach Poisson--Lie group. Then the fiber $\mathbb{F}_e\subset T_e^*G$ over the unit element $e\in G$ is stable by the coadjoint action of $G$.
\end{proposition}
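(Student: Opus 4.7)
The plan is to derive this purely from the invariance of $\mathbb{F}$ under left and right translations, which is the content of item~(1) of Proposition~\ref{GPoisson}, together with the factorisation $\Ad^*_g = L_g^* \circ R_{g^{-1}}^*$ already recorded in the paragraph preceding that proposition.

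First I would unpack what the invariance of $\mathbb{F}$ under translations means at the level of fibers. Using that for $g,u\in G$ the pullback $L_g^*$ maps $T^*_{gu}G$ to $T^*_u G$ and $R_u^*$ maps $T^*_{gu}G$ to $T^*_g G$, the condition that $\alpha\circ T_uL_g\in \mathbb{F}_u$ and $\alpha\circ T_gR_u\in \mathbb{F}_g$ for every $\alpha\in \mathbb{F}_{gu}$ (extracted from the first item of Proposition~\ref{GPoisson}) is equivalent to the two fiberwise inclusions
\begin{equation*}
L_g^*\,\mathbb{F}_{gu}\subseteq \mathbb{F}_u,\qquad R_u^*\,\mathbb{F}_{gu}\subseteq \mathbb{F}_g,\qquad \forall\, g,u\in G.
\end{equation*}

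Next, fix $g\in G$ and $\alpha\in \mathbb{F}_e$. I would apply the second inclusion with the special choice $u=g^{-1}$, so that $gu=e$; this gives $R_{g^{-1}}^*\alpha\in \mathbb{F}_g$. I would then apply the first inclusion with the choice $(g, u')=(g,e)$, so that $gu'=g$; this gives $L_g^*\bigl(R_{g^{-1}}^*\alpha\bigr)\in \mathbb{F}_e$. Combining these two steps,
\begin{equation*}
\Ad^*_g\alpha \;=\; L_g^*\circ R_{g^{-1}}^*\alpha \;\in\; \mathbb{F}_e,
\end{equation*}
which is precisely the claim.

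There is really no obstacle here: once one has parsed the invariance condition of $\mathbb{F}$ at the fiber level, the result is a one-line composition. The only subtlety worth verifying is the chase of base-points in $T^*G$ under $L_g^*$ and $R_{g^{-1}}^*$, which is why I would write the two inclusions explicitly before combining them. No use is made of the Poisson tensor $\pi$ itself in this argument — the stability of $\mathbb{F}_e$ under the coadjoint action is a structural consequence of $\mathbb{F}$ being bi-invariant, hence it holds for any group with a bi-invariant subbundle of $T^*G$, independently of the Poisson structure.
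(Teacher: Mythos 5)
Your proof is correct and follows essentially the same route as the paper: both arguments use only the bi-invariance of $\mathbb{F}$ (item (1) of Proposition~\ref{GPoisson}) and compose the two pullbacks, the only cosmetic difference being that you pass through $\mathbb{F}_g$ via $R_{g^{-1}}^*$ first while the paper passes through $\mathbb{F}_{g^{-1}}$ via $L_g^*$ first, which agree since left and right translations commute. Your base-point bookkeeping and the observation that $\pi$ itself plays no role are both consistent with the paper's proof.
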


\begin{proof}
Suppose that  $(G, \mathbb{F}, \pi)$ is a Banach Poisson--Lie group. 
The invariance of $\mathbb{F}$ by left  translations implies that for any  $\alpha\in\mathbb{F}_{e}$ and any $g\in G$, the covector $L_{g}^*\alpha:= \alpha\circ T_{g^{-1}}L_g$ belongs to $\mathbb{F}_{g^{-1}}\subset T_{g^{-1}}^*G$. The invariance of $\mathbb{F}$ by right  translations then  implies that the covector $\Ad^*(g)\alpha := R_{g^{-1}}^*\circ L_g^*\alpha = \alpha\circ T_{g^{-1}}L_g\circ T_eR_{g^{-1}}$ belongs to $\mathbb{F}_e \subset T_e^*G$. Hence $\mathbb{F}_e$ is stable by the coadjoint action of $G$.
%
%Since $\mathbb{F}$ is invariant under left and right translations, for any $\alpha\in \mathbb{F}_e$ and $g\in one has $\Ad^*(g)\mathbb{F}_e\subset \mathbb{F}_e$, for any $g\in G$.
\end{proof}
In next Proposition, we introduce a $1$-cocycle naturally associated to a generalized Banach Poisson--Lie group (see Theorem~1.2 in \cite{LW90} for the finite-dimensional case).

\begin{proposition}\label{cela}
A Banach Lie group $G$ equipped with a generalized Banach Poisson structure $(G, \mathbb{F}, \pi)$  is a Banach Poisson--Lie group if and only if 
\begin{enumerate}
\item $\mathbb{F}$ is invariant under left and right translations by elements in $G$
%the subspace $\mathfrak{g}_-:= \mathbb{F}_e$, where $e$ is the unit element of $G$,  is invariant under the coadjoint action of $G$ on $\mathfrak{g}^*$  
\item  the map $\Pi_r~: G  \rightarrow  \Lambda^2\mathbb{F}_e^*$ defined by $ g \mapsto  \Pi_r(g) := R_{g^{-1}}^{**}\pi_g$ 
%$$
%\begin{array}{llll}
%\Pi_r~:& G & \rightarrow & \Lambda^2\mathfrak{g}_-^*\\
%& g & \mapsto & TR_{g^{-1}}^{**}\pi_g,
%\end{array}
%$$
is a $1$-cocycle on $G$ with respect to the coadjoint action $\Ad^{**}$ of $G$ on $\Lambda^2\mathbb{F}_e^*$, i.e.  for any $g, u\in G$,
\begin{equation}\label{pirPoisson}\Pi_r(gu)  = \Ad(g)^{**}\Pi_r(u) + \Pi_r(g).\end{equation}
\end{enumerate}
\end{proposition}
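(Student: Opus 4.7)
The plan is to reduce this statement to Proposition~\ref{GPoisson}, which already characterizes a Banach Poisson--Lie structure on $G$ by the left-right invariance of $\mathbb{F}$ together with the pointwise multiplicativity
\begin{equation*}
\pi_{gu} = L_g^{**}\pi_u + R_u^{**}\pi_g, \qquad \forall g, u \in G.
\end{equation*}
Assuming condition~(1), which is common to both propositions, the map $g \mapsto \Pi_r(g) := R_{g^{-1}}^{**}\pi_g$ is well defined with values in $\Lambda^2\mathbb{F}_e^*$: indeed, $R_{g^{-1}}^*$ sends $T_e^*G$ into $T_g^*G$ and preserves $\mathbb{F}$, so $\Pi_r(g)(\alpha,\beta) = \pi_g(R_{g^{-1}}^*\alpha, R_{g^{-1}}^*\beta)$ makes sense for $\alpha, \beta \in \mathbb{F}_e$, and by Proposition~\ref{cela2} the space $\Lambda^2\mathbb{F}_e^*$ is preserved by $\Ad^{**}(g)$, so the claimed cocycle equation is meaningful. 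It therefore remains, under~(1), to show that the multiplicativity of $\pi$ above is equivalent to the cocycle identity in~(2).

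The rest is an algebraic manipulation resting on three elementary identities. First, a direct computation gives $R_a^* \circ R_b^* = R_{ab}^*$ on cotangent fibers, hence dualizing once more $R_b^{**} \circ R_a^{**} = R_{ab}^{**}$ on bilinear forms; in particular $R_{(gu)^{-1}}^{**} \circ R_{gu}^{**} = \mathrm{id}$ and $R_{(gu)^{-1}}^{**} \circ R_u^{**} = R_{g^{-1}}^{**}$ (using $u(gu)^{-1} = g^{-1}$). Second, because $L_g$ and $R_u$ commute as diffeomorphisms of $G$, the operators $L_g^{**}$ and $R_u^{**}$ commute on bilinear forms. Third, the formula $\Ad^*(g) = L_g^* \circ R_{g^{-1}}^*$ recalled in the excerpt, dualized once more, reads $R_{g^{-1}}^{**} \circ L_g^{**} = \Ad(g)^{**}$ on bilinear forms on $T_e^*G$.

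With these tools, I substitute $\pi_h = R_h^{**}\Pi_r(h)$ (the inverse of the definition of $\Pi_r$) into the multiplicativity of $\pi$; using $R_u^{**} \circ R_g^{**} = R_{gu}^{**}$ on the last term, this becomes
\begin{equation*}
R_{gu}^{**}\Pi_r(gu) = L_g^{**} R_u^{**}\Pi_r(u) + R_{gu}^{**}\Pi_r(g).
\end{equation*}
Applying $R_{(gu)^{-1}}^{**}$ to both sides, the second term collapses to $\Pi_r(g)$. In the first term I commute $L_g^{**}$ past $R_u^{**}$ and then use $R_{(gu)^{-1}}^{**} \circ R_u^{**} = R_{g^{-1}}^{**}$ followed by $R_{g^{-1}}^{**} \circ L_g^{**} = \Ad(g)^{**}$, obtaining $\Ad(g)^{**}\Pi_r(u)$. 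Assembling these yields exactly the cocycle identity $\Pi_r(gu) = \Ad(g)^{**}\Pi_r(u) + \Pi_r(g)$. Since every step is invertible, the equivalence runs in both directions. I anticipate no serious obstacle beyond tracking which cotangent fiber each pullback lives in; the genuine content has already been absorbed into Proposition~\ref{GPoisson} and the coadjoint-stability of $\mathbb{F}_e$ (Proposition~\ref{cela2}).
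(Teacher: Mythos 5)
Your argument is correct and is essentially the paper's own proof: both reduce to Proposition~\ref{GPoisson} and convert the multiplicativity $\pi_{gu} = L_g^{**}\pi_u + R_u^{**}\pi_g$ into the cocycle identity by applying $R_{(gu)^{-1}}^{**}$, using the composition law for the $R^{**}$'s, the commutation of $L_g^{**}$ with right translations, and $R_{g^{-1}}^{**}\circ L_g^{**} = \Ad(g)^{**}$. The only (harmless) difference is cosmetic — you first rewrite $\pi_h = R_h^{**}\Pi_r(h)$ and add the well-definedness remarks that the paper delegates to Proposition~\ref{cela2}.
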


\begin{proof} 
%Suppose that  $(G, \mathbb{F}, \pi)$ is a Banach Poisson--Lie group. Since $\mathbb{F}$ is invariant under left and right translations, one has $\Ad^*(g)\mathbb{F}_e\subset \mathbb{F}_e$, for any $g\in G$.
Using the relation $R^{**}_{(gu)^{-1}} = R^{**}_{g^{-1}}\circ R^{**}_{u^{-1}}$, the condition $\pi_{gu} = L_g^{**}\pi_u + R_u^{**}\pi_g$ for all $g,u \in G$ is equivalent to
$$
R_{(gu)^{-1}}^{**}\pi_{gu} = R^{**}_{g^{-1}}\circ R^{**}_{u^{-1}}\circ L_g^{**}\pi_u + R^{**}_{g^{-1}}\circ R^{**}_{u^{-1}}\circ R_u^{**}\pi_g.
$$
Since $R^{**}_{u^{-1}}$ and $L_g^{**}$ commutes, the previous equality simplifies to give
$$
\Pi_r(gu) = R^{**}_{g^{-1}}\circ L_g^{**} \Pi_r(u) + \Pi_r(g) = \Ad(g)^{**}\Pi_r(u) + \Pi_r(g),
$$
which is the cocycle condition (see Section~\ref{cocycle_section}).
%
%Conversely, suppose that $\Ad^*(g)\mathbb{F}_e\subset \mathbb{F}_e$, for any $g\in G$, and that the cocycle condition $\Pi_r(gu)  = \Ad(g)^{**}\Pi_r(u) + \Pi_r(g)$ for any $g, u\in G$ is satisfied. Let us show that the subbundle $\mathbb{F}$ is stable by left and right translations. Moreover the cocycle condition is equivalent to $\pi_{gu} = L_g^{**}\pi_u + R_u^{**}\pi_g$ for all $g,u \in G$.
%
\end{proof}

\subsection{Jacobi tensor and local sections}\label{Jacobi_local}

The following Lemma will be used in Section~\ref{example_Up_Bp} and Section \ref{Bres_section} in order to check the Jacobi identity for Poisson--Lie groups in the Banach setting.
\begin{lemma}\label{lemmaJacobi}
Let $\operatorname{G}$ be a Banach Lie group with  Lie algebra $\mathfrak{g}$, $\mathbb{F}$ a subbundle of $T^*G$ in duality with $TG$, invariant by left and right translations by elements in $\operatorname{G}$, and $\pi$ a smooth section of $\Lambda^2\mathbb{F}^*$. Then
\begin{enumerate}
\item Any closed local section $\alpha$ of $\mathbb{F}$ in a neighborhood $\mathcal{V}_g$ of $g\in\operatorname{G}$ is of the form $\alpha(u) = R_{u^{-1}}^*\alpha_0(u)$, where $\alpha_0~:\mathcal{V}_g \rightarrow \mathbb{F}_{e}\subset \mathfrak{g}^*$ satisfies~:
\begin{equation}\label{closednessalpha}
\langle \alpha_0(g), [X_0, Y_0]\rangle = \langle T_g\alpha_0(R_gY_0), X_0\rangle - \langle T_g\alpha_0(R_gX_0), Y_0\rangle,
\end{equation}
with $T_g\alpha_0~:T_g\operatorname{G}\rightarrow \mathfrak{g}^*$ the tangent map of $\alpha_0$ at $g\in\mathcal{V}_g$, and $X_0$, $Y_0$ any elements in $\mathfrak{g}$.
\item Let  $\Pi_r~:\operatorname{G}\rightarrow \Lambda^2\mathbb{F}_e^*$ be defined by 
$\Pi_r(g)~:= R_{g^{-1}}^{**}\pi(g)$. Then for any closed local sections $\alpha$, $\beta$ of  $\mathbb{F}$ around $g\in\operatorname{G}$, the differential $d\left(\pi(\alpha, \beta)\right)$ at $g$ reads
\begin{equation}\label{ca}
d\left(\pi(\alpha, \beta)\right)(X_g) = T_g\Pi_r(X_g)(\alpha_0(g), \beta_0(g)) + \Pi_r(g)(T_g\alpha_0(X_g), \beta_0(g)) + \Pi_r(g)(\alpha_0(g), T_g\beta_0(X_g)),
\end{equation}
where $X_g\in T_g\operatorname{G}$, $T_g\Pi_r~: T_g\operatorname{G}\rightarrow \Lambda^2\mathbb{F}_e^*$ is the tangent map of $\Pi_r$ at $g$, $\alpha = R_{g^{-1}}^*\alpha_0$ and $\beta = R_{g^{-1}}^*\beta_0$.
\item  Suppose that $i_{\alpha_0}\Pi_r(g)\in\mathfrak{g}\subset \mathbb{F}^*$ for any $\alpha= R_g^*\alpha_0\in\mathbb{F}$. Then for any closed local sections $\alpha$, $\beta$, $\gamma$ of $\mathbb{F}$,
\begin{equation}\label{Jacobi_tensor}
\begin{array}{l}
\!\!\!\!\!\!\!\!\!\pi\left(\alpha, d\left(\pi(\beta, \gamma)\right)\right) + \pi\left(\beta, d\left(\pi(\gamma, \alpha)\right)\right) + \pi\left(\gamma, d\left(\pi(\alpha, \beta)\right) \right)= \\
\quad\!\!
 T_g\Pi_r(R_g i_{\alpha_0}\Pi_r(g))(\beta_0(g), \gamma_0(g)) +\langle \alpha_0(g), [i_{\gamma_0(g)}\Pi_r(g), i_{\beta_0(g)}\Pi_r(g)]\rangle\\ 
 + T_g\Pi_r(R_g i_{\beta_0}\Pi_r(g))(\gamma_0(g), \alpha_0(g)) + \langle \beta_0(g), [i_{\alpha_0(g)}\Pi_r(g), i_{\gamma_0(g)}\Pi_r(g)]\rangle \\
 + T_g\Pi_r(R_g i_{\gamma_0}\Pi_r(g))(\alpha_0(g), \beta_0(g))+\langle \gamma_0(g), [i_{\beta_0(g)}\Pi_r(g), i_{\alpha_0(g)}\Pi_r(g)]\rangle  
\end{array}
\end{equation}
where $\alpha = R_{g^{-1}}^*\alpha_0$, $\beta = R_{g^{1-}}^*\beta_0$, and $\gamma = R_{g^{-1}}^*\gamma_0$. In particular the left hand side of equation \eqref{Jacobi_tensor} defines a tensor.
\end{enumerate}
\end{lemma}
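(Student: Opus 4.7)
My plan is to treat the three parts in order, since each is used in the next.

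For part (1), I would use that $\mathbb{F}$ is right-invariant: for any local section $\alpha$ of $\mathbb{F}$ on a neighborhood of $g$, the formula $\alpha_0(u) := R_u^{*}\alpha(u)$ (viewed via $T_u R_{u^{-1}}: T_u G \to \mathfrak{g}$) lands in $\mathbb{F}_e \subset \mathfrak{g}^{*}$ and satisfies $\alpha(u) = R_{u^{-1}}^{*}\alpha_0(u)$ tautologically. To translate $d\alpha = 0$ into an identity on $\alpha_0$, I evaluate $d\alpha$ on two right-invariant vector fields $X^R(u) = R_u X_0$ and $Y^R(u) = R_u Y_0$, using the classical formula $d\alpha(X^R,Y^R) = X^R(\alpha(Y^R)) - Y^R(\alpha(X^R)) - \alpha([X^R,Y^R])$ and the well-known bracket $[X^R,Y^R](u) = -R_u[X_0,Y_0]$. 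Since $\alpha(Y^R)(u) = \langle \alpha_0(u), Y_0\rangle$ is a scalar function on $G$, its derivative along $X^R$ at $g$ is $\langle T_g\alpha_0(R_g X_0), Y_0\rangle$, and the identity \eqref{closednessalpha} drops out.

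For part (2), I simply rewrite the scalar function $u \mapsto \pi_u(\alpha(u),\beta(u))$ as $u \mapsto \Pi_r(u)(\alpha_0(u),\beta_0(u))$ using the definition $\Pi_r(u) = R_{u^{-1}}^{**}\pi_u$, and apply the Leibniz rule to the trilinear evaluation map $(\Pi, \xi, \eta) \mapsto \Pi(\xi,\eta)$. This gives \eqref{ca} directly.

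For part (3), I substitute the formula \eqref{ca} into each of the three terms $\pi(\alpha, d\pi(\beta,\gamma))$, $\pi(\beta, d\pi(\gamma,\alpha))$, $\pi(\gamma, d\pi(\alpha,\beta))$. The key point is that the hypothesis $i_{\alpha_0}\Pi_r(g) \in \mathfrak{g}$ (and similarly for $\beta,\gamma$) lets me identify $\pi_g(\alpha(g), df_g) = \Pi_r(g)(\alpha_0(g), (df)_0(g)) = \langle (df)_0(g), i_{\alpha_0}\Pi_r(g)\rangle = df_g(R_g\, i_{\alpha_0}\Pi_r(g))$, which means that applying $\pi_g(\alpha(g),\cdot)$ amounts to evaluating $df_g$ on the specific right-translated vector $R_g A_\alpha$ with $A_\alpha := i_{\alpha_0(g)}\Pi_r(g) \in \mathfrak{g}$. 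Each of the three Terms then splits into a $T_g\Pi_r$ contribution (the first summand of \eqref{ca}) and two $T_g(\cdot\text{-}_0)$ contributions (the last two summands).

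The main work, and where the combinatorics must be watched carefully, is in the cyclic sum of the six $T_g\alpha_0$-/$T_g\beta_0$-/$T_g\gamma_0$-type terms. They naturally group into three pairs, one per form; for instance the pair involving $\alpha_0$ is $\langle T_g\alpha_0(R_g A_\beta), A_\gamma\rangle - \langle T_g\alpha_0(R_g A_\gamma), A_\beta\rangle$, which by \eqref{closednessalpha} applied with $X_0 = A_\gamma, Y_0 = A_\beta$ equals $\langle \alpha_0(g), [A_\gamma, A_\beta]\rangle = \langle \alpha_0(g), [i_{\gamma_0}\Pi_r(g), i_{\beta_0}\Pi_r(g)]\rangle$, and analogously for the $\beta_0$ and $\gamma_0$ pairs. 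Adding this to the three $T_g\Pi_r$ contributions gives precisely the right-hand side of \eqref{Jacobi_tensor}. The statement that the left-hand side is tensorial is then automatic, since the right-hand side depends only on $\alpha_0(g), \beta_0(g), \gamma_0(g)$ and on $\Pi_r$ and its derivative at $g$, not on the chosen extensions of the covectors $\alpha(g),\beta(g),\gamma(g)$ as closed local sections of $\mathbb{F}$. The expected obstacle is purely bookkeeping: tracking the sign coming from the skew-symmetry of $\Pi_r$, the convention $R_{g^{-1}}^{*}: T_e^{*}G \to T_g^{*}G$, and the compatibility of the pairings $\mathbb{F}_e \times \mathfrak{g} \to \mathbb{K}$ and $\mathfrak{g}^{*} \times \mathfrak{g} \to \mathbb{K}$ throughout, since each quantity must be identified either as living in $\mathfrak{g}$ or in $\mathbb{F}_e$ before the invariance hypothesis can be applied.
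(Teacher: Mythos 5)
Your proposal is correct and follows essentially the same route as the paper's proof: part (1) by evaluating $d\alpha=0$ on right-invariant vector fields (with the bracket $[X^R,Y^R]=-R_u[X_0,Y_0]$ accounting for the sign in \eqref{closednessalpha}), part (2) by the chain/Leibniz rule applied to $u\mapsto \Pi_r(u)(\alpha_0(u),\beta_0(u))$, and part (3) by substituting \eqref{ca}, using the hypothesis $i_{\alpha_0}\Pi_r(g)\in\mathfrak{g}$ to turn $\pi(\alpha,\cdot)$ into evaluation on $R_g\,i_{\alpha_0(g)}\Pi_r(g)$, and pairing the six cross terms via \eqref{closednessalpha} exactly as the paper does. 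The signs in your $\alpha_0$-pairing (with $X_0=i_{\gamma_0}\Pi_r(g)$, $Y_0=i_{\beta_0}\Pi_r(g)$) match the paper's computation.
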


\begin{proof}
\begin{enumerate}
\item Since $\alpha$ is closed, one has~:
$$
d\alpha(X, Y) = \mathcal{L}_X\alpha(Y) - \mathcal{L}_Y \alpha(X) - \alpha([X, Y]) = 0
$$
for any local vector fields $X$ and $Y$ around $g\in \mathcal{V}_g$. But since $d\alpha$ is a tensor (see Proposition~3.2, chapter~V in \cite{La01}), the previous identity depends only on the values of $X$ and $Y$ at $g$. In other words, $\alpha$ is closed if and only if the previous identity is satisfied for any right invariant vector fields $X$ and $Y$. Set $X_g = R_gX_0$ and $Y_g = R_gY_0$ for $X_0, Y_0\in\mathfrak{g}$. One has
$$
\begin{array}{ll}
d\alpha(X, Y)  & =  \mathcal{L}_X\alpha_0(g)(R_{g^{-1}}Y_g) -  \mathcal{L}_Y\alpha_0(g)(R_{g^{-1}}X_g) - \alpha_0(g)(R_{g^{-1}}[X, Y]_g)\\
 & =   \mathcal{L}_X\alpha_0(g)(Y_0) -  \mathcal{L}_Y\alpha_0(g)(X_0) + \alpha_0(g)([X_0, Y_0]_{\mathfrak{g}})
\end{array}
$$
Denote by $f~:\mathcal{V}_g\rightarrow \mathbb{R}$ the function defined by $f(g) = \alpha_0(g)(Y_0) = \langle \alpha_0(g), Y_0\rangle$, where the bracket denotes the natural pairing between $\mathfrak{g}^*$ and $\mathfrak{g}$. Then 
$$
df_g(X_g) = \langle T_g\alpha_0(R_gX_0), Y_0\rangle.
$$
It follows that 
$$\begin{array}{l}
d\alpha(X,Y)  = \langle T_g\alpha_0(R_gX_0), Y_0\rangle 
- \langle T_g\alpha_0(R_gY_0), X_0\rangle + \langle \alpha_0(g), [X_0, Y_0]_{\mathfrak{g}}\rangle.
\end{array}
$$
%$$
%\begin{array}{ll}
%d\alpha(X, Y)  & =  \mathcal{L}_X\alpha_0(g)(R_{g^{-1}}Y_g) -  \mathcal{L}_Y\alpha_0(g)(R_{g^{-1}}X_g) - \alpha_0(g)(R_{g^{-1}}[X, Y]_g)\\
% & =   \mathcal{L}_X\alpha_0(g)(Y_0) -  \mathcal{L}_Y\alpha_0(g)(X_0) + \alpha_0(g)([X_0, Y_0]_{\mathfrak{b}})
%\end{array}
%$$
%Denote by $f~:\mathcal{V}_g\rightarrow \mathbb{R}$ the function defined by $f(g) = \alpha_0(g)(\textrm{Ad}(g)Y_0) = \langle \alpha_0(g), \textrm{Ad}(g)Y_0\rangle$, where the bracket denotes the natural pairing between $\mathfrak{b}^*$ and $\mathfrak{b}$. Then 
%$$
%df_g(X_g) = \langle T_g\alpha_0(L_gX_0), Y_0\rangle + \langle\alpha_0(g), \textrm{Ad}(g)[X_0, Y_0]\rangle.
%$$
%It follows that 
%$$\begin{array}{ll}
%d\alpha(X,Y) & = \langle T_g\alpha_0(L_gX_0), Y_0\rangle + \langle\alpha_0(g), \textrm{Ad}(g)[X_0, Y_0]\rangle\\ 
%& - \langle T_g\alpha_0(L_gY_0), X_0\rangle - \langle\alpha_0(g), \textrm{Ad}(g)[Y_0, X_0]\rangle\\
%& - \langle \alpha_0(g), \textrm{Ad}(g)[X_0, Y_0]_{\mathfrak{b}}\rangle.
%\end{array}
%$$
Therefore $d\alpha(X, Y) = 0$ for any $X$ and $Y$ if and only if 
$$
\langle \alpha_0(g), [X_0, Y_0]_{\mathfrak{g}}\rangle = \langle T_g\alpha_0(R_gY_0), X_0\rangle - \langle T_g\alpha_0(R_gX_0), Y_0\rangle,
$$
for any $X_0$ and $Y_0$ in $\mathfrak{g}$.
\item This is a straighforward application of the chain rule.
\item 
In the case where  $i_{\alpha_0}\Pi_r(g)$ belongs to $\mathfrak{g}$, one has the following expression of the differential of $\pi$~:
$$
d\left(\pi(\beta, \gamma)\right)(X_g) = T_g\Pi_r(X_g)(\beta_0(g), \gamma_0(g)) -\langle T_g\beta_0(X_g), i_{\gamma_0(g)}\Pi_r(g)\rangle + \langle T_g\gamma_0(X_g), i_{\beta_0(g)}\Pi_r(g)\rangle,
$$
where $\langle\cdot, \cdot\rangle$ denotes the duality pairing between $\mathfrak{g}^*$ and $\mathfrak{g}$. Therefore
$$
\begin{array}{lll}
\pi(\alpha, d\left(\pi(\beta, \gamma)\right) &= &\Pi_r(g)\left(\alpha_0(g), R_g^*d\left(\pi(\beta, \gamma)\right)\right)  = d\left(\pi(\beta, \gamma)\right) (R_g i_{\alpha_0(g)}\Pi_r(g))\\ &= &
T_g\Pi_r(R_g i_{\alpha_0(g)}\Pi_r(g))(\beta_0(g), \gamma_0(g)) \\&&-\langle T_g\beta_0(R_g i_{\alpha_0(g)}\Pi_r(g)), i_{\gamma_0(g)}\Pi_r(g)\rangle \\&&+ \langle T_g\gamma_0(R_g i_{\alpha_0(g)}\Pi_r(g)), i_{\beta_0(g)}\Pi_r(g)\rangle.
\end{array}
$$
It follows that
\begin{equation*}
\begin{array}{l}
\!\!\!\!\!\pi\left(\alpha, d\left(\pi(\beta, \gamma)\right)\right) + \pi\left(\beta, d\left(\pi(\gamma, \alpha)\right)\right) + \pi\left(\gamma, d\left(\pi(\alpha, \beta)\right) \right) \\
%= T_g\Pi_r(TR_g i_{\alpha_0}\Pi_r(g))\left(\beta_0(g), \gamma_0(g)\right) + T_g\Pi_r(TR_g i_{\beta_0}\Pi_r(g))\left(\gamma_0(g), \alpha_0(g)\right) \\ + T_g\Pi_r(TR_g i_{\gamma_0}\Pi_r(g))\left(\alpha_0(g), \beta_0(g)\right) \\
=-\langle T_g\beta_0(R_g i_{\alpha_0(g)}\Pi_r(g)), i_{\gamma_0(g)}\Pi_r(g)\rangle + \langle T_g\gamma_0(R_g i_{\alpha_0(g)}\Pi_r(g)), i_{\beta_0(g)}\Pi_r(g)\rangle\\
\quad-\langle T_g\gamma_0(R_g i_{\beta_0(g)}\Pi_r(g)), i_{\alpha_0(g)}\Pi_r(g)\rangle + \langle T_g\alpha_0(R_g i_{\beta_0(g)}\Pi_r(g)), i_{\gamma_0(g)}\Pi_r(g)\rangle\\
\quad-\langle T_g\alpha_0(R_g i_{\gamma_0(g)}\Pi_r(g)), i_{\beta_0(g)}\Pi_r(g)\rangle + \langle T_g\beta_0(R_g i_{\gamma_0(g)}\Pi_r(g)), i_{\alpha_0(g)}\Pi_r(g)\rangle\\
\quad+ T_g\Pi_r(R_g i_{\alpha_0}\Pi_r(g))\left(\beta_0(g), \gamma_0(g)\right) + T_g\Pi_r(R_g i_{\beta_0}\Pi_r(g))\left(\gamma_0(g), \alpha_0(g)\right) \\ \quad+ T_g\Pi_r(R_g i_{\gamma_0}\Pi_r(g))\left(\alpha_0(g), \beta_0(g)\right) \\

\end{array}
\end{equation*}
Using \eqref{closednessalpha}, the previous equation simplifies to \eqref{Jacobi_tensor}.
\end{enumerate}
\end{proof}

\subsection{Example of Banach Poisson--Lie groups $\operatorname{U}_{p}(\mathcal{H})$ and $\operatorname{B}_{p}^\pm(\mathcal{H})$ for $1<p\leq 2$}\label{example_Up_Bp}

Let us give some examples of Banach Poisson--Lie groups. We will need to introduce classical Banach Lie groups of operators.

\subsubsection*{General linear group $\operatorname{GL}(\mathcal{H})$.}
The general linear group of $\mathcal{H}$, denoted by $\operatorname{GL}(\mathcal{H})$  is the group consisting of bounded operators $A$ on $\mathcal{H}$ which admit a bounded inverse, i.e. for which there exists a bounded operator $A^{-1}$ satisfying $A A^{-1} = A^{-1} A = \textrm{Id}$, where $\textrm{Id}~:\mathcal{H}\rightarrow \mathcal{H}$ denotes the identity operator $x\mapsto x$.

\subsubsection*{General linear group $\operatorname{GL}_p(\mathcal{H})$.}
The Banach Lie algebra $\operatorname{L}_{p}(\mathcal{H})$ is the Banach Lie algebra of the following Banach Lie group~:
\begin{equation}\label{GL2}
\operatorname{GL}_{p}(\mathcal{H}) := \operatorname{GL}(\mathcal{H})\cap \{\textrm{Id} + A~: A \in \operatorname{L}_{p}(\mathcal{H})\}.
\end{equation}

\subsubsection*{Unitary group $\operatorname{U}(\mathcal{H})$.}The unitary group of $\mathcal{H}$ is defined as the subgroup of $\operatorname{GL}(\mathcal{H})$ consisting of operators $A$ such that $A^{-1} = A^*$ and is denoted by $\operatorname{U}(\mathcal{H})$.

\subsubsection*{Unitary groups $\operatorname{U}_p(\mathcal{H})$.}
The Banach Lie algebra $\mathfrak{u}_{p}(\mathcal{H})$ defined by  \eqref{u2} is the Banach Lie algebra of the following Banach Lie group
\begin{equation}\label{U2}
\operatorname{U}_{p}(\mathcal{H}) := \operatorname{U}(\mathcal{H})\cap \{\textrm{Id} + A~: A \in \operatorname{L}_{p}(\mathcal{H})\}.
\end{equation}

\subsubsection*{Triangular groups $\operatorname{B}^\pm_p(\mathcal{H})$.}
To the Banach Lie algebra $\mathfrak{b}^\pm_{p}(\mathcal{H})$ defined by \eqref{b2pm} is associated the following Banach Lie group~:
$$
\operatorname{B}_{ p}^{\pm}(\mathcal{H}) := \{\alpha\in \operatorname{GL}(\mathcal{H})\cap \left(\textrm{Id}+\mathfrak{b}^\pm_{p}(\mathcal{H})\right)~: \alpha^{-1}\in \textrm{Id} +\mathfrak{b}^\pm_{p}(\mathcal{H}) ~\textrm{and}~\langle n|\alpha|n\rangle\in\mathbb{R}^{+*}, \textrm{for}~ n\in\mathbb{Z}\},
$$
where $\mathbb{R}^{+*}$ is the group of strictly positive real numbers.

Let us now give some examples of Banach Poisson--Lie groups.
Similar results will be proved in the more involved restricted case in Section~\ref{Bres_section}. 
Recall that the orthogonal projections $p_{\mathfrak{u}_p,+}$ and $p_{\mathfrak{u}_p,-}$ are defined by \eqref{projectionu+} and \eqref{projectionu-} respectively. %, but may be deduced from the proofs of Theorem~\ref{UresPoisson} and Theorem~\ref{BresPoisson}.
\begin{proposition}\label{Bp}
For $1<p\leq 2$, consider the Banach Lie group $\operatorname{B}_{p}^\pm(\mathcal{H})$ with Banach Lie algebra
$\mathfrak{b}_p^\pm(\mathcal{H})$, and the duality pairing $\langle\cdot,\cdot\rangle_{\mathbb{R}}~: \mathfrak{b}_p^\pm(\mathcal{H})\times \mathfrak{u}_p(\mathcal{H})\rightarrow \mathbb{R}$ given by the imaginary part of the trace \eqref{imparttrace}.
Consider
\begin{enumerate}
\item $\mathbb{B}_b:= R_{b^{-1}}^*\mathfrak{u}_p(\mathcal{H}) \subset T^*_b\operatorname{B}_{p}^\pm(\mathcal{H})$, $b\in \operatorname{B}_{p}^\pm(\mathcal{H})$.
\item $\Pi^{\operatorname{B}_{p}^\pm}_r~:\operatorname{B}_{p}^\pm(\mathcal{H})\rightarrow \Lambda^2\mathfrak{u}_p(\mathcal{H})^*$ defined by 
\begin{equation}\label{PiBp}
\Pi^{\operatorname{B}_{p}^\pm}_r(b)(x_1, x_2) := \Im\Tr p_{\mathfrak{b}_p^\pm}(b^{-1}x_1b)\left[ p_{\mathfrak{u}_p,\pm}(b^{-1}x_2 b) \right], 
\end{equation}
where $b\in \operatorname{B}_{p}^\pm(\mathcal{H})$ and $x_1, x_2\in \mathfrak{u}_p(\mathcal{H})$.
\item $\pi^{\operatorname{B}_{p}^\pm}~: \operatorname{B}_{p}^{\pm} \rightarrow \Lambda^2T\operatorname{B}_{p}^\pm(\mathcal{H})$ given by  $\pi^{\operatorname{B}_{p}^\pm}(b) := R_{b}^{**}\Pi^{\operatorname{B}_{p}^\pm}_r(b)$.
\end{enumerate}
Then $(\operatorname{B}_{p}^\pm(\mathcal{H}), \mathbb{B}, \pi^{\operatorname{B}_{p}^\pm})$ is a Banach Poisson--Lie group.
\end{proposition}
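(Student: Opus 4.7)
The plan is to verify the four defining requirements of a Banach Poisson--Lie group: that $\mathbb{B}$ is a subbundle of $T^*\operatorname{B}_p^\pm(\mathcal{H})$ in duality with $T\operatorname{B}_p^\pm(\mathcal{H})$, that $\mathbb{B}$ is invariant under both left and right translations, that the map $\Pi_r^{\operatorname{B}_p^\pm}$ satisfies the $1$-cocycle identity \eqref{pirPoisson} (equivalent, via Proposition~\ref{cela}, to the multiplication being a Poisson map), and finally that $\pi^{\operatorname{B}_p^\pm}$ satisfies the Jacobi identity. The restriction $1<p\leq 2$ enters through Proposition~\ref{triples}, which provides the Banach Manin triple $(\operatorname{L}_p(\mathcal{H}),\mathfrak{u}_p(\mathcal{H}),\mathfrak{b}_p^\pm(\mathcal{H}))$, and through Lemma~\ref{decub} and Remark~\ref{prendre_triangle}, which supply the Iwasawa-type decomposition and the boundedness of the triangular truncations.

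First I would check that $\mathbb{B}_b := R_{b^{-1}}^*\mathfrak{u}_p(\mathcal{H})$ inherits a Banach structure from $\mathfrak{u}_p(\mathcal{H})$ that injects continuously into $T_b^*\operatorname{B}_p^\pm(\mathcal{H}) \simeq \mathfrak{b}_p^{\pm*}$, and separates points in $T_b\operatorname{B}_p^\pm(\mathcal{H})$ by the non-degeneracy of $\langle\cdot,\cdot\rangle_{\mathbb{R}}$. Right-invariance of $\mathbb{B}$ is built into the definition, while left-invariance follows, via Proposition~\ref{cela2}, from stability of $\mathbb{B}_e=\mathfrak{u}_p(\mathcal{H})$ under $\operatorname{Ad}(b)^*$. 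Using isotropy of $\mathfrak{b}_p^\pm(\mathcal{H})$ in the Manin-triple pairing, one obtains
\[
\operatorname{Ad}(b)^*x \;=\; p_{\mathfrak{u}_p,\pm}(b^{-1}xb), \qquad b\in\operatorname{B}_p^\pm(\mathcal{H}),\; x\in\mathfrak{u}_p(\mathcal{H}),
\]
which lies in $\mathfrak{u}_p(\mathcal{H})$ because $b^{\pm 1}\in\operatorname{L}_\infty(\mathcal{H})$ and $\operatorname{L}_p(\mathcal{H})$ is an ideal in $\operatorname{L}_\infty(\mathcal{H})$, and because $p_{\mathfrak{u}_p,\pm}$ is bounded on $\operatorname{L}_p(\mathcal{H})$.

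Next I would verify the cocycle identity \eqref{pirPoisson} by direct computation. Writing $b_1^{-1}x_i b_1 = u_i + \beta_i$ with $u_i=p_{\mathfrak{u}_p,\pm}(b_1^{-1}x_ib_1)=\operatorname{Ad}(b_1)^*x_i$ and $\beta_i=p_{\mathfrak{b}_p^\pm}(b_1^{-1}x_ib_1)$, the conjugation by $b_2\in\operatorname{B}_p^\pm$ preserves $\mathfrak{b}_p^\pm$ (so $b_2^{-1}\beta_i b_2\in\mathfrak{b}_p^\pm$), while $b_2^{-1}u_ib_2$ is again split according to the Manin triple. Substituting into $\Pi_r^{\operatorname{B}_p^\pm}(b_1b_2)(x_1,x_2)$ and eliminating the cross-term by the isotropy of $\mathfrak{b}_p^\pm$ yields exactly
\[
\Pi_r^{\operatorname{B}_p^\pm}(b_1b_2) \;=\; \operatorname{Ad}(b_1)^{**}\Pi_r^{\operatorname{B}_p^\pm}(b_2) + \Pi_r^{\operatorname{B}_p^\pm}(b_1).
\]

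For the Jacobi identity I would invoke Lemma~\ref{lemmaJacobi}(3). Its hypothesis $i_x\Pi_r^{\operatorname{B}_p^\pm}(b)\in\mathfrak{b}_p^\pm(\mathcal{H})$ holds because the linear form $x_2\mapsto \Pi_r^{\operatorname{B}_p^\pm}(b)(x,x_2)$ on $\mathfrak{u}_p(\mathcal{H})$ is represented, under $\langle\cdot,\cdot\rangle_{\mathbb{R}}$, by $b\,p_{\mathfrak{b}_p^\pm}(b^{-1}xb)\,b^{-1}\in\mathfrak{b}_p^\pm(\mathcal{H})$. Differentiating the cocycle identity at the identity, where $\Pi_r^{\operatorname{B}_p^\pm}(e)=0$ by Corollary~\ref{vanishing_pi}, one finds that $T_e\Pi_r^{\operatorname{B}_p^\pm}(\beta)(x_1,x_2)=\pm\langle\beta,[x_1,x_2]_{\mathfrak{u}_p}\rangle_{\mathbb{R}}$, which is precisely the Lie-bialgebra cobracket $\theta_-$ from Theorem~\ref{Manin_to_bialgebra}(5) attached to the Manin triple. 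The same cocycle identity lets one express $T_b\Pi_r^{\operatorname{B}_p^\pm}$ in terms of $T_e\Pi_r^{\operatorname{B}_p^\pm}$ twisted by $\operatorname{Ad}(b)^*$, and the six terms in the right-hand side of \eqref{Jacobi_tensor} then rearrange, using the Jacobi identities in $\mathfrak{u}_p(\mathcal{H})$ and $\mathfrak{b}_p^\pm(\mathcal{H})$, into the cobracket cocycle condition \eqref{cocycle_explicite} applied with the roles of $\mathfrak{g}_+$ and $\mathfrak{g}_-$ interchanged; this condition holds by Theorem~\ref{bialgebra_to_manin} applied to the Manin triple of Proposition~\ref{triples}. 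The main technical obstacle is precisely this last bookkeeping, i.e.\ tracking how the cocycle differential interlaces with the two Lie brackets and the projections $p_{\mathfrak{u}_p,\pm}$, $p_{\mathfrak{b}_p^\pm}$, and keeping signs straight in the weak-duality regime where $\mathfrak{u}_p(\mathcal{H})^*$ is strictly larger than $\mathfrak{b}_p^\pm(\mathcal{H})$.
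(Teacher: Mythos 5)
Your plan is correct and follows essentially the same route as the paper: the paper's own proof of Proposition~\ref{Bp} is a three-line sketch that invokes the inclusion $\operatorname{L}_p(\mathcal{H})\subset\operatorname{L}_q(\mathcal{H})$ for $1<p\leq 2$, the cocycle identity \eqref{pirPoisson}, and the Jacobi criterion \eqref{Jacobi_tensor} of Lemma~\ref{lemmaJacobi} -- exactly the ingredients you expand upon, and your detailed steps (the formula $\operatorname{Ad}(b)^*x=p_{\mathfrak{u}_p,\pm}(b^{-1}xb)$, the representative $b\,p_{\mathfrak{b}_p^\pm}(b^{-1}xb)\,b^{-1}$ for $i_x\Pi_r(b)$, and the collapse of the six Jacobi terms via isotropy and invariance of the pairing) mirror the computations the paper carries out explicitly in the harder restricted case (Theorem~\ref{BresPoisson}), to which it defers. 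No gaps; the only caution is the sign in $T_e\Pi_r$, which you already flag with the $\pm$.
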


\begin{proof}
The expression of the Poisson tensor makes sense because $\operatorname{L}_p(\mathcal{H}) \subset \operatorname{L}_q(\mathcal{H})$ for $1<p\leq 2$ with $\frac{1}{p}+\frac{1}{q} = 1$. The Jacobi identity is a consequence of equation~\eqref{Jacobi_tensor}. %A similar computation is given in the proof of Theorem~\ref{BresPoisson}. 
The compatibility of the Poisson tensor and the multiplication of the group can be checked using equation~\eqref{pirPoisson}.
\end{proof}

Similarly one has~:
\begin{proposition}\label{Up}
For $1<p\leq 2$, consider the Banach Lie group  $ \operatorname{U}_{p}(\mathcal{H})$ with Banach Lie algebra $\mathfrak{u}_p(\mathcal{H})$ and the duality pairing $\langle\cdot,\cdot\rangle_{\mathbb{R}}~: \mathfrak{b}_p^\pm(\mathcal{H})\times \mathfrak{u}_p(\mathcal{H})\rightarrow \mathbb{R}$ given by the imaginary part of the trace \eqref{imparttrace}. Consider
\begin{enumerate}
\item
$\mathbb{U}_u:= R_{u^{-1}}^*\mathfrak{b}_p^\pm(\mathcal{H})\subset T_u^*\operatorname{U}_{p}(\mathcal{H})$, $u\in \operatorname{U}_{p}(\mathcal{H})$,
\item $\Pi^{ \operatorname{U}_{p}^{\pm}}_r~: \operatorname{U}_{p}(\mathcal{H})\rightarrow \Lambda^2\mathfrak{b}_p^\pm(\mathcal{H})^*$ defined by
\begin{equation}\label{PiUp}
\Pi^{ \operatorname{U}_{p}^{\pm}}_r(u)(b_1, b_2) := \Im\Tr p_{\mathfrak{u}_p,\pm}(u^{-1}b_1u)\left[ p_{\mathfrak{b}_p^\pm}(u^{-1}b_2 u)\right],
\end{equation}
where $u \in \operatorname{U}_{p}(\mathcal{H})$ and $b_1, b_2\in \mathfrak{b}_p^\pm(\mathcal{H})$.
\item $\pi^{\operatorname{U}_{p}^\pm}~: \operatorname{U}_{p}(\mathcal{H})\rightarrow \Lambda^2 T\operatorname{U}_{p}(\mathcal{H})$ given by $\pi^{\operatorname{U}_{p}^\pm}(g) := R_{g}^{**}\Pi^{ \operatorname{U}_{p}^{\pm}}_r(g)$.
\end{enumerate}
Then $(\operatorname{U}_{p}(\mathcal{H}), \mathbb{U}, \pi^{\operatorname{U}_{p}^\pm})$ is a Banach Poisson--Lie group. 
\end{proposition}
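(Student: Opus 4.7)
The plan is to verify the three conditions of Proposition~\ref{GPoisson}, following the pattern of the (sketched) proof of Proposition~\ref{Bp} with the roles of $\mathfrak{u}_p(\mathcal{H})$ and $\mathfrak{b}_p^\pm(\mathcal{H})$ exchanged throughout.

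First, I would check that formula~\eqref{PiUp} is well defined and yields a smooth section of $\Lambda^2 \mathbb{U}^*$. Since $1<p\leq 2$ implies $\operatorname{L}_p(\mathcal{H}) \subset \operatorname{L}_q(\mathcal{H})$ with $1/p+1/q=1$, any product of two elements of $\operatorname{L}_p(\mathcal{H})$ lies in $\operatorname{L}_1(\mathcal{H})$ and has finite trace; moreover the projections $p_{\mathfrak{u}_p,\pm}$ and $p_{\mathfrak{b}_p^\pm}$ are bounded on $\operatorname{L}_p(\mathcal{H})$ by Lemma~\ref{decub}, so the map $u\mapsto \Pi^{\operatorname{U}_{p}^{\pm}}_r(u)$ is smooth. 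The fibers $\mathbb{U}_u = R_{u^{-1}}^*\mathfrak{b}_p^\pm(\mathcal{H})$ are in strong duality with $T_u\operatorname{U}_p(\mathcal{H}) \simeq R_u\mathfrak{u}_p(\mathcal{H})$ via the imaginary trace, using the Manin triple structure of Proposition~\ref{triples}. Right invariance of $\mathbb{U}$ is built into the definition, while left invariance follows from stability of $\mathbb{U}_e \simeq \mathfrak{u}_p(\mathcal{H})^*$ under the coadjoint action of $\operatorname{U}_p(\mathcal{H})$ (compare Proposition~\ref{cela2}).

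Next, by Proposition~\ref{cela}, the compatibility of $\pi^{\operatorname{U}_{p}^{\pm}}$ with the multiplication amounts to the cocycle identity
$$
\Pi_r^{\operatorname{U}_{p}^{\pm}}(uv) \;=\; \Ad(u)^{**}\Pi_r^{\operatorname{U}_{p}^{\pm}}(v) + \Pi_r^{\operatorname{U}_{p}^{\pm}}(u).
$$
I would expand both sides with the help of formula~\eqref{PiUp}, the decomposition $\operatorname{L}_p(\mathcal{H}) = \mathfrak{u}_p(\mathcal{H}) \oplus \mathfrak{b}_p^\pm(\mathcal{H})$, and the fact that the coadjoint action of $u\in\operatorname{U}_p(\mathcal{H})$ on $b\in\mathfrak{b}_p^\pm(\mathcal{H})\simeq\mathfrak{u}_p(\mathcal{H})^*$ is represented by $p_{\mathfrak{b}_p^\pm}(ubu^{-1})$. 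The isotropy of $\mathfrak{u}_p(\mathcal{H})$ and $\mathfrak{b}_p^\pm(\mathcal{H})$ with respect to $\langle\cdot,\cdot\rangle_{\mathbb{R}}$ then collapses the cross-terms to the required identity.

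Finally, the Jacobi identity for $\pi^{\operatorname{U}_{p}^{\pm}}$ will be checked via Lemma~\ref{lemmaJacobi}(3). The hypothesis $i_{b_0}\Pi_r^{\operatorname{U}_{p}^{\pm}}(u)\in\mathfrak{u}_p(\mathcal{H}) = T_e\operatorname{U}_p(\mathcal{H})$ holds by the very form of~\eqref{PiUp}, since the inner contraction is realized by the projection $p_{\mathfrak{u}_p,\pm}$ applied to an $\operatorname{L}_p$-element. Formula~\eqref{Jacobi_tensor} then reduces Jacobi to an identity involving the tangent map of $\Pi_r^{\operatorname{U}_{p}^{\pm}}$ at $u$ and the Lie bracket on $\operatorname{L}_p(\mathcal{H})$; differentiating the cocycle identity at $u=e$ together with the Jacobi identity in $\operatorname{L}_p(\mathcal{H})$ closes the argument. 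The main obstacle will be the verification of the cocycle identity~\eqref{pirPoisson}: the algebraic computation requires careful tracking of nested projections $p_{\mathfrak{u}_p,\pm}$ and $p_{\mathfrak{b}_p^\pm}$ applied to conjugated elements, and hinges on repeated use of the isotropy of the Manin triple subalgebras under the imaginary trace.
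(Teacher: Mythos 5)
Your proposal follows essentially the same route as the paper: well-definedness of \eqref{PiUp} via the inclusion $\operatorname{L}_p(\mathcal{H})\subset \operatorname{L}_q(\mathcal{H})$ for $1<p\leq 2$ and the boundedness of the triangular projections, the multiplicativity via the cocycle identity \eqref{pirPoisson} of Proposition~\ref{cela}, and the Jacobi identity via Lemma~\ref{lemmaJacobi} and equation \eqref{Jacobi_tensor}, exactly as in the proof of Proposition~\ref{Bp} with the roles of $\mathfrak{u}_p(\mathcal{H})$ and $\mathfrak{b}_p^\pm(\mathcal{H})$ exchanged (and as carried out in detail in the restricted case of Theorem~\ref{BresPoisson}). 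One small caveat: for $1<p<2$ the pairing between $\mathfrak{u}_p(\mathcal{H})$ and $\mathfrak{b}_p^\pm(\mathcal{H})$ given by the imaginary part of the trace is only a weak duality pairing (the dual of $\operatorname{L}_p$ is $\operatorname{L}_q$), not a strong one as you assert, but since the definition of a generalized Banach Poisson manifold only requires a duality pairing this does not affect the argument.
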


%\begin{proposition}
%For a connected subgroup $H$ of a Banach Poisson--Lie group $G$ with Poisson structure $(G, \mathbb{F}, \pi)$, the following are equivalent~:
%\begin{enumerate}
%\item $H$ is a Poisson--Lie subgroup of $G$,
%\item the Lie algebra $\mathfrak{h}$ of $H$ is such that  $\mathfrak{h}^0 \subset \mathfrak{g}^*$ is an ideal.
%\end{enumerate}
%\end{proposition}
%
%\todo{XII. adapter eventuellement ca}
%\begin{proof}
%Recall that $\mathbb{F}$ is a subbundle of $T^*G$ in duality with $TG$, and denote by $\langle \cdot, \cdot \rangle_{\mathfrak{g}}$ the  duality pairing between $\mathbb{F}_e$ and $T_e G := \mathfrak{g}$. Denote by $\mathfrak{h}^0$ the closed subspace of $\mathbb{F}_e$ consisting of those covectors in $\mathbb{F}_e$ which vanish on the closed subspace $\mathfrak{h}$ of $\mathfrak{g}$. Then the formula
%$$
%\langle [\alpha]_{\mathfrak{h}^0}, X \rangle_{\mathfrak{h}} := \langle \alpha, X\rangle_{\mathfrak{g}},
%$$
%where $[\alpha]_{\mathfrak{h}^0}$ denotes the class of $\alpha\in i^*\mathbb{F}_e$ in $i^*\mathbb{F}_e/\mathfrak{h}^0$ and where $X$ belong to $\mathfrak{h}$,
%defines a duality pairing between  $i^*\mathbb{F}_e/\mathfrak{h}^0$ and $\mathfrak{h}$. It follows that $\mathbb{H} := i^*\mathbb{F}/(T H)^0$ is a subbundle of $T^*H$ in duality with $TH$.
%\end{proof}

%\subsection{Example of Banach Poisson--Lie groups $\operatorname{U}_{1}(\mathcal{H})$}
%\newpage

\subsection{The tangent Banach Lie bialgebra of a Banach Poisson--Lie group}\label{tangent_bialgebra}

In this Section, we show that the Banach Lie algebra $\mathfrak{g}$ of any Banach Poisson--Lie group $(G, \mathbb{F}, \pi)$ carries an natural Banach Lie bialgebra structure with respect to $\mathbb{F}_e$ (see Theorem~\ref{PL_bi} below). Moreover, when the Poisson tensor is a section of $\Lambda^2TG\subset \Lambda^2T^{**}G$, then $\mathfrak{g}$ is a Banach Lie--Poisson space with respect to $\mathbb{F}_e$ (see Theorem~\ref{PL_LP}).
%\begin{theorem}\label{JA}
%Let $G_+$ be a Banach Lie group and $(G_+, \mathbb{F}, \pi)$ a Banach Poisson structure on $G_+$. Then $G_+$ is  a Banach Poisson--Lie group if and only if 
%\begin{enumerate}
%\item $\mathbb{F}$ is invariant under left and right multiplications by elements in $G_+$, 
%\item the subspace $\mathfrak{g}_-:= \mathbb{F}_e$, where $e$ is the unit element of $G_+$,  is invariant under the coadjoint action of $G_+$ on $\mathfrak{g}_+^*$  and the map
%$$
%\begin{array}{llll}
%\pi_r~:& G_+ & \rightarrow & \Lambda^2\mathfrak{g}_-^*(\mathfrak{g}_-)\\
%& g & \mapsto & R_{g^{-1}}^{**}\pi_g,
%\end{array}
%$$
%is a $1$-cocycle on $G_+$ with respect to the coadjoint representation of $G_+$ in $\Lambda^2\mathfrak{g}_-^*(\mathfrak{g}_-)$.
%\end{enumerate}
%\end{theorem}

%\begin{proof}
%\end{proof}
\begin{theorem}\label{PL_bi}
Let $(G_+, \mathbb{F}, \pi)$ be  a Banach Poisson--Lie group and suppose that $\mathfrak{g}_- : = \mathbb{F}_e$ is a Banach subspace of $\mathfrak{g}_+^*$. Then $\mathfrak{g}_+$ is a Banach Lie bialgebra with respect to $\mathfrak{g}_-$. The Lie bracket in $\mathfrak{g}_-$ is given by 
\begin{equation}\label{def_bracket}
[\alpha_1, \beta_1]_{\mathfrak{g}_-} := T_e\Pi_r(\cdot)(\alpha_1, \beta_1)\in \mathfrak{g}_-\subset \mathfrak{g}_+^*,\quad  \alpha_1, \beta_1 \in \mathfrak{g}_-\subset \mathfrak{g}_+^*, 
\end{equation}
where $\Pi_r~:= R_{g^{-1}}^{**}\pi~: G_+ \rightarrow \Lambda^2\mathfrak{g}_-^*$, and  $T_e\Pi_r :\mathfrak{g}_+\rightarrow \Lambda^2\mathfrak{g}_-^*$ denotes the differential of $\Pi_r$ at the unit element $e\in G_+$.
%
%
%
%
%, and 
%Then
%\begin{enumerate}
%\item  the dual map $d_e\Pi_r^*~: \left(\Lambda^2\mathfrak{g}_-^*\right)^* \rightarrow \mathfrak{g}_+^*$ defines a skew-symmetric bilinear map $[\cdot, \cdot]_{\mathfrak{g}_-}$ on $\mathfrak{g}_-$ with values in $\mathfrak{g}_-\subset \mathfrak{g}_+^*$;
%\item $[\cdot, \cdot]_{\mathfrak{g}_-}$ satisfies the Jacobi identity, hence is a Lie algebra structure on $\mathfrak{g}_-$;
%\item $\mathfrak{g}_+$ acts continuously on $\mathfrak{g}_-\subset \mathfrak{g}_+^*$ by coadjoint action.
%\end{enumerate}
%
 %there exists a Banach Lie bialgebra structure
\end{theorem}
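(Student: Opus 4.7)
The plan is to differentiate the group-level data provided by Propositions~\ref{cela} and~\ref{cela2} down to the Lie algebra level, and then to verify the two conditions of Definition~\ref{Bialgebra_def}. Condition~(1) is immediate: Proposition~\ref{cela2} yields that the smooth coadjoint action $\Ad^*$ of $G_+$ on $\mathfrak{g}_+^*$ preserves $\mathfrak{g}_-=\mathbb{F}_e$, and since $\mathfrak{g}_-$ is assumed to be a Banach subspace of $\mathfrak{g}_+^*$ this restricts to a smooth action $G_+\times\mathfrak{g}_-\to\mathfrak{g}_-$; differentiating at $e$ gives a continuous infinitesimal coadjoint map $\ad^*\colon\mathfrak{g}_+\times\mathfrak{g}_-\to\mathfrak{g}_-$. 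Next, Proposition~\ref{cela} identifies $\Pi_r\colon G_+\to\Lambda^2\mathfrak{g}_-^*$ as a group $1$-cocycle relative to $\Ad^{**}$, so by the general principle~\eqref{ptitcocycle} its derivative $T_e\Pi_r\colon\mathfrak{g}_+\to\Lambda^2\mathfrak{g}_-^*$ is a Lie algebra $1$-cocycle relative to the linearized action $\ad^{(2,0)}$ defined in~\eqref{ad2}.

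I then take formula~\eqref{def_bracket} as the definition of the candidate bracket on $\mathfrak{g}_-$. The key subtlety is to check that, for $\alpha_1,\beta_1\in\mathfrak{g}_-$, the continuous linear form $X\mapsto T_e\Pi_r(X)(\alpha_1,\beta_1)$ on $\mathfrak{g}_+$ actually lies in the subspace $\mathfrak{g}_-\hookrightarrow\mathfrak{g}_+^*$. For this I choose closed local sections $\widetilde\alpha,\widetilde\beta$ of $\mathbb{F}$ on a neighborhood of $e$ with $\widetilde\alpha(e)=\alpha_1$ and $\widetilde\beta(e)=\beta_1$. By Corollary~\ref{vanishing_pi} we have $\pi_e=0$, so applying the Leibniz rule to $g\mapsto\pi_g(\widetilde\alpha(g),\widetilde\beta(g))$ at $g=e$ yields
\begin{equation*}
d\bigl(\pi(\widetilde\alpha,\widetilde\beta)\bigr)_e(X)=T_e\Pi_r(X)(\alpha_1,\beta_1),\qquad X\in\mathfrak{g}_+.
\end{equation*}
By condition~(1) of Definition~\ref{Poisson_tensor} the left hand side belongs to $\mathbb{F}_e=\mathfrak{g}_-$, so the candidate bracket is a well-defined element of $\mathfrak{g}_-$; antisymmetry and bilinear continuity then follow from the corresponding properties of $T_e\Pi_r$.

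The Jacobi identity for $[\cdot,\cdot]_{\mathfrak{g}_-}$ is obtained from the Jacobi identity~\eqref{Jacobi_Poisson} for $\pi$ applied to closed sections $\widetilde\alpha,\widetilde\beta,\widetilde\gamma$ with values $\alpha_1,\beta_1,\gamma_1$ at $e$. Denoting the cyclic sum on the left hand side of~\eqref{Jacobi_Poisson} by $J$, we have $J\equiv 0$ and hence $(dJ)_e(X)=0$ for all $X\in\mathfrak{g}_+$. Expanding this derivative by Leibniz, using $\pi_e=0$ and the identification $d\bigl(\pi(\widetilde\beta,\widetilde\gamma)\bigr)_e=[\beta_1,\gamma_1]_{\mathfrak{g}_-}$ established in the previous paragraph, one gets
\begin{equation*}
\bigl\langle X,\,[\alpha_1,[\beta_1,\gamma_1]_{\mathfrak{g}_-}]_{\mathfrak{g}_-}+[\beta_1,[\gamma_1,\alpha_1]_{\mathfrak{g}_-}]_{\mathfrak{g}_-}+[\gamma_1,[\alpha_1,\beta_1]_{\mathfrak{g}_-}]_{\mathfrak{g}_-}\bigr\rangle_{\mathfrak{g}_+,\mathfrak{g}_-}=0,
\end{equation*}
and non-degeneracy of the duality pairing yields Jacobi on $\mathfrak{g}_-$. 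Together with the cocycle property of $\theta=T_e\Pi_r$ already obtained, this verifies condition~(2) of Definition~\ref{Bialgebra_def}.

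I expect the main obstacle to be the production of these closed local sections $\widetilde\alpha,\widetilde\beta,\widetilde\gamma$ of $\mathbb{F}$ near $e$ with prescribed initial values in $\mathfrak{g}_-$. By Lemma~\ref{lemmaJacobi}(1) their existence amounts to solving the first-order equation~\eqref{closednessalpha} for $\alpha_0$ with $\alpha_0(e)=\alpha_1$; a formal $1$-jet can be written down algebraically (for instance $T_e\alpha_0(X_0)(Y_0)=\tfrac12\langle\alpha_1,[X_0,Y_0]\rangle$), but globalizing it to a bona fide smooth local section of $\mathbb{F}$ raises the bump-function subtleties highlighted in Remark~\ref{remark_Poisson} and must be handled with care in the generalized Banach Poisson setting.
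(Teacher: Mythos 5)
Your overall route is the one the paper takes: differentiate the group-level data (Propositions~\ref{cela} and~\ref{cela2}) to obtain stability of $\mathfrak{g}_-$ under $\ad^*$ and the Lie algebra cocycle property of $\theta=T_e\Pi_r$, define the bracket by \eqref{def_bracket}, use $\pi_e=0$ together with condition~(1) of Definition~\ref{Poisson_tensor} to see that the bracket takes values in $\mathfrak{g}_-$, and obtain the Jacobi identity by differentiating \eqref{Jacobi_Poisson} at $e$. All of this matches the paper's proof.

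The genuine gap is the step you yourself flag and leave open: the existence, for every $\alpha_1\in\mathfrak{g}_-$, of a \emph{closed} local section $\widetilde\alpha$ of $\mathbb{F}$ near $e$ with $\widetilde\alpha(e)=\alpha_1$. Your argument needs this twice --- already to conclude that $[\alpha_1,\beta_1]_{\mathfrak{g}_-}\in\mathfrak{g}_-$ for arbitrary $\alpha_1,\beta_1\in\mathfrak{g}_-$, since condition~(1) of Definition~\ref{Poisson_tensor} is only available for closed sections of $\mathbb{F}$, and again when differentiating \eqref{Jacobi_Poisson} --- so without it the proof is incomplete. The route you sketch, solving the closedness equation \eqref{closednessalpha} for the right-trivialized section $\alpha_0$ starting from a formal $1$-jet, is neither carried out nor how the paper proceeds: the paper produces a local potential directly, constructing in a chart around $e$, together with a local trivialization of $\mathbb{F}$, a scalar function $f$ on a neighborhood of $e$ with $d_gf\in\mathbb{F}_g$ for all nearby $g$ and $d_ef=\alpha_1$, and then setting $\widetilde\alpha:=df$, so that closedness holds by construction and no differential equation has to be solved. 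Note also that your appeal to the bump-function issue of Remark~\ref{remark_Poisson} is beside the point here: that remark concerns global extension of covectors, whereas only a local closed section is needed, and the remark itself observes that local extension (e.g.\ by a form constant in a chart) is unproblematic; the only thing to arrange is that the resulting exact form stays inside the subbundle $\mathbb{F}$, which is precisely what the chart-plus-trivialization construction is for.
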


\begin{proof}
\begin{itemize}
\item Let us show that  the dual map $T_e\Pi_r^*~: \left(\Lambda^2\mathfrak{g}_-^*\right)^* \rightarrow \mathfrak{g}_+^*$ defines a skew-symmetric bilinear map $[\cdot, \cdot]_{\mathfrak{g}_-}$ on $\mathfrak{g}_-$ with values in $\mathfrak{g}_-\subset \mathfrak{g}_+^*$.
 Let $\alpha$ and $\beta$ be any local sections of $\mathbb{F}$ in a neighboorhood $\mathcal{V}_e$ of the unit element $e\in G_+$. Define $\alpha_0~:\mathcal{V}_e\rightarrow \mathfrak{g}_-$ and $\beta_0~:\mathcal{V}_e\rightarrow \mathfrak{g}_-$ by  $\alpha_0(u) := R_{u}^*\alpha(u)$ and  $\beta_0(u) := R_{u}^*\beta(u)$. It follows from equation~\eqref{ca}, that for any $X\in\mathfrak{g}_+$,
$$
%\begin{equation}\label{ca}
d_e\left(\pi(\alpha, \beta)\right)(X) = T_e\Pi_r(X)(\alpha_0(e), \beta_0(e)) + \Pi_r(e)(T_e\alpha_0(X), \beta_0(e)) + \Pi_r(e)(\alpha_0(g), T_g\beta_0(X_g)).
%\end{equation}
$$
By Corollary~\ref{vanishing_pi}, $\Pi_r(e) = 0$. Hence
\begin{equation}\label{de}
d_e\left(\pi(\alpha, \beta)\right)(X) = T_e\Pi_r(X)(\alpha_0(e), \beta_0(e)).
\end{equation}
By the first condition in the definition of a Poisson tensor, $d\left(\pi(\alpha, \beta)\right)$ is a local section of $\mathbb{F}$, therefore $d_e\left(\pi(\alpha, \beta)\right)$ belongs to $\mathbb{F}_e = \mathfrak{g}_-$.
%
%The differential of $\Pi_r$ at the unit element $e\in G_+$ is a map $d_e\Pi_r~: \mathfrak{g}_+\rightarrow \Lambda^2\mathfrak{g}_-^*$.
It follows that the formula
$$[\alpha_1, \beta_1]_{\mathfrak{g}_-} := T_e\Pi_r(\cdot)(\alpha_1, \beta_1)$$
defines a bracket on $\mathfrak{g}_-$. The skew-symmetry of $[\cdot, \cdot]_{\mathfrak{g}_-}$ is clear.
\item Let us show that $[\cdot, \cdot]_{\mathfrak{g}_-}$ satisfies the Jacobi identity, hence is a Lie algebra structure on $\mathfrak{g}_-$. Consider any closed local sections $\alpha, \beta, \gamma$ of $\mathbb{F}$ defined in a neighborhood of $e\in G_+$. Since $\pi$ is a Poisson tensor, one has
$$\pi\left(\alpha, d\left(\pi(\beta, \gamma)\right)\right) + \pi\left(\beta, d\left(\pi(\gamma, \alpha)\right)\right) + \pi\left(\gamma, d\left(\pi(\alpha, \beta)\right) \right)= 0.$$
%for any closed local sections $\alpha, \beta, \gamma$ of $\mathbb{F}$. 
Differentiating the above identity at $e\in G_+$, %and using the fact that $\pi$ vanishes at $e\in G_+$ (see Corollary~\ref{vanishing_pi}), 
one gets
\begin{equation}\label{ji}
d_e\left(\pi\left(\alpha, d\pi(\beta, \gamma)\right)\right) + d_e\left(\pi\left(\beta, d\pi(\gamma, \alpha)\right)\right)+ d_e\left(\pi\left(\gamma, d\pi(\alpha, \beta)\right)\right) = 0.
\end{equation}
Define $\alpha_0(u)~:= R_{u}^*\alpha(u)$, and $\delta_0(u)~:= R_{u}^*d_u\left(\pi(\beta, \gamma)\right)$. Note that $\alpha_0(e) = \alpha(e)$ and $\delta_0(e) = d_e\left(\pi(\beta, \gamma)\right)$. 
Hence, by equation~\eqref{de} and \eqref{def_bracket}, for any $X\in\mathfrak{g}_+$,
$$
\begin{array}{ll}
d_e\left(\pi\left(\alpha, d\pi(\beta, \gamma)\right)\right)(X) &= T_e\Pi_r(X)(\alpha_0(e), \delta_0(e)) = T_e\Pi_r(X)(\alpha(e), d_e\left(\pi(\beta, \gamma)\right))\\
& = T_e\Pi_r(X)\left(\alpha(e), T_e\Pi_r(\cdot)(\beta(e), \gamma(e)\right)\\
& = [\alpha(e), [\beta(e), \gamma(e)]_{\mathfrak{g}_-} ]_{\mathfrak{g}_-} (X).
\end{array}
$$
It follows that equation~\eqref{ji} can be rewritten as
$$
 [\alpha(e), [\beta(e), \gamma(e)]_{\mathfrak{g}_-} ]_{\mathfrak{g}_-}  +  [\beta(e), [\gamma(e), \alpha(e)]_{\mathfrak{g}_-} ]_{\mathfrak{g}_-}  +  [\gamma(e), [\alpha(e), \beta(e)]_{\mathfrak{g}_-} ]_{\mathfrak{g}_-}  = 0.
$$
To show that the bracket $[\cdot, \cdot]_{\mathfrak{g}_-} $ satisfies Jacobi identity, it remains to prove that any element $\alpha_1\in\mathfrak{g}_-$ can be extended to a closed local section $\alpha$ of $\mathbb{F}$ such that $\alpha(e) = \alpha_1$. For this, it suffices to find a scalar function $f$ defined in a neighborhood $\mathcal{V}_e$  of $e\in G_+$ such that $d_gf \in \mathbb{F}_g$ for any $g\in \mathcal{V}_e$ and $d_ef = \alpha_1.$ This can be done using a chart around $e\in G_+$ and a local trivialisation of $\mathbb{F}$. Then $\alpha~:= df$ is a closed local section of $\mathbb{F}$ such that $\alpha_1 = \alpha(e) $.
%For $\alpha_1, \beta_1, \gamma_1\in\mathfrak{g}_-$,  consider local closed $1$-forms $\alpha, \beta$ and $\gamma$ whose values at $e\in G_+$ are $\alpha_1, \beta_1, \gamma_1 \in \mathfrak{g}_-\subset \mathfrak{g}_+^*$ respectively (such $1$-forms can be constructed in a chart around $e\in G_+$ using constant $1$-forms). By Proposition~\ref{GPoisson}, $\alpha, \beta$ and $\gamma$ are sections of $\mathbb{F}$. 
%For $X\in\mathfrak{g}_+$, one has
%$$
%\begin{array}{ll}
%\left[ [\alpha_1, \beta_1], \gamma_1\right](X) &= T_e\Pi_r(X)([\alpha_1, \beta_1], \gamma_1)\\
%& = T_e\Pi_r(X)(T_e\Pi_r(\cdot)(\alpha_1, \beta_1), \gamma_1).
%\end{array}
%$$
\item Let us show that $\mathfrak{g}_+$ acts continuously on $\mathfrak{g}_-\subset \mathfrak{g}_+^*$ by coadjoint action. By Proposition~\ref{cela2}, $\mathfrak{g}_-:=\mathbb{F}_e$ is invariant by the coadjoint action of $G_+$ on $\mathfrak{g}_+^*$. By differentiation, $\mathfrak{g}_-\subset \mathfrak{g}_+^*$ is stable by the coadjoint action of $\mathfrak{g}_+$ on $\mathfrak{g}_+^*$. This action is continuous when $\mathfrak{g}_-$ is endowed with the topology of $\mathfrak{g}_+^*$.
%\todo{continuity}
\item Let us show that the dual map of the bracket $[\cdot, \cdot]_{\mathfrak{g}_-}$ restricts to a $1$-cocycle $\theta~:\mathfrak{g}_+\rightarrow \Lambda^2\mathfrak{g}_-^*$ with respect to the adjoint action $\ad^{(2,0)}$ of $\mathfrak{g}_+$ on $\Lambda^2\mathfrak{g}_-^*$. By definition of the bracket \eqref{def_bracket}, $\theta = T_e\Pi_r$. By Proposition~\ref{cela}, $\Pi_r$ is a $1$-cocycle on $G_+$ with respect to the coadjoint action $\Ad^{**}$ of $G_+$ on $\Lambda^2\mathfrak{g}_-^*$. Hence $T_e\Pi_r$ is a $1$-cocycle on $\mathfrak{g}_+$ with respect to the adjoint action $\ad^{(2,0)}$ of $\mathfrak{g}_+$ on $\Lambda^2\mathfrak{g}_-^*$ (see Section~\ref{cocycle_section}). 
\end{itemize}
\end{proof}

\begin{example}{\rm
The tangent bialgebras of the Banach Poisson--Lie groups $\operatorname{B}_{p}^\pm(\mathcal{H})$ and $ \operatorname{U}_{p}(\mathcal{H})$ defined in Proposition~\ref{Bp} and Proposition~\ref{Up}, are the Banach Lie bialgebra $\mathfrak{b}_p^\pm(\mathcal{H})$ and $\mathfrak{u}_p(\mathcal{H})$ in duality, which combine into the Manin triple $(L_p(\mathcal{H}), \mathfrak{u}_{p}(\mathcal{H}), \mathfrak{b}^\pm_p(\mathcal{H}))$ given in Proposition~\ref{triples}.
Indeed, the derivative  at the unit element $e$  of $\Pi^{ \operatorname{U}_{p}^{\pm}}_r~:  \operatorname{U}_{p}(\mathcal{H})\rightarrow \Lambda^2\mathfrak{b}_p^\pm(\mathcal{H})^*$ defined by equation~\eqref{PiUp} reads~:
$$
\begin{array}{rl}
d_e\Pi^{ \operatorname{U}_{p}^{\pm}}_r(x)(b_1, b_2) = & \Im\Tr \left(p_{\mathfrak{u}_p,\pm}([x, b_1]) p_{\mathfrak{b}_p^\pm}(b_2)\right) +  \Im\Tr \left(p_{\mathfrak{u}_p,\pm}(b_1) p_{\mathfrak{b}_2^\pm}([x, b_2])\right),\\
= &  \Im\Tr \left(p_{\mathfrak{u}_p,\pm}([x, b_1]) b_2\right) = \Im \Tr [x, b_1] b_2 = \Im \Tr x[b_1, b_2]_{\mathfrak{b}_p^\pm},
\end{array}
$$
with $x\in \mathfrak{u}_{p}(\mathcal{H})$ and $b_1, b_2\in \mathfrak{b}^\pm_p(\mathcal{H})$, where we have used that $\mathfrak{b}_p^\pm(\mathcal{H})$ is an isotropic subspace.
It follows that $d_e\Pi^{ \operatorname{U}_{p}^{\pm}}_r(\cdot)(b_1, b_2) = [b_1, b_2]_{\mathfrak{b}_p^\pm} \in \mathfrak{b}_p^\pm(\mathcal{H}) \subset \mathfrak{u}_p(\mathcal{H})^*$.
Similarly, the derivative of $\Pi^{\operatorname{B}_{p}^\pm}_r$ defined by equation~\eqref{PiBp} is given by 
$$
\begin{array}{rl}
d_e\Pi^{\operatorname{B}_{p}^\pm}_r(b)(x_1, x_2) = \Im \Tr b[x_1, x_2]_{\mathfrak{u}_p},\quad b\in \mathfrak{b}^\pm_p(\mathcal{H}), x_1, x_2\in \mathfrak{u}_{p}(\mathcal{H}),
\end{array}
$$
and is the dual map of the bracket $[\cdot, \cdot]_{\mathfrak{u}_p}$.}
\end{example}

\begin{theorem}\label{PL_LP}
Let $(G_+, \mathbb{F}, \pi)$ be  a Banach Poisson--Lie group. If the map $\pi^{\sharp}~: \mathbb{F} \rightarrow \mathbb{F}^*$ defined by $\pi^{\sharp}(\alpha) := \pi(\alpha, \cdot)$ takes values in  $TG_+\subset \mathbb{F}^*$, then $\mathfrak{g}_+$ is a Banach Lie--Poisson space with respect to $\mathfrak{g}_-:=  \mathbb{F}_e$.
\end{theorem}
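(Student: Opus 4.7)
The plan is to check the only outstanding condition beyond what Theorem~\ref{PL_bi} already supplies: that the coadjoint action of $\mathfrak{g}_-$ on $\mathfrak{g}_+\hookrightarrow\mathfrak{g}_-^*$ preserves $\mathfrak{g}_+$ and is continuous into $\mathfrak{g}_+$. Indeed, Theorem~\ref{PL_bi} already produces the Banach Lie algebra structure on $\mathfrak{g}_-=\mathbb{F}_e$ with bracket $[\alpha,\beta]_{\mathfrak{g}_-}=T_e\Pi_r(\cdot)(\alpha,\beta)$. Unwinding the definition of $\ad^*$ dictated by this bracket, one is forced to set
\begin{equation*}
\ad^*_\alpha x(\beta)\;=\;\langle x,[\alpha,\beta]_{\mathfrak{g}_-}\rangle_{\mathfrak{g}_+,\mathfrak{g}_-}\;=\;T_e\Pi_r(x)(\alpha,\beta),
\end{equation*}
i.e.\ $\ad^*_\alpha x=i_\alpha T_e\Pi_r(x)\in\mathfrak{g}_-^*$, and the crux is to upgrade ``$\in\mathfrak{g}_-^*$'' to ``$\in\mathfrak{g}_+$''.

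I will use the hypothesis to show that $\Pi_r$ actually factors through the subspace $\Lambda^2\mathfrak{g}_+\subset\Lambda^2\mathfrak{g}_-^*$ introduced in Section~\ref{lambda_g_+}. Writing an element $\alpha\in\mathbb{F}_g$ as $\alpha=R_{g^{-1}}^*\alpha_0$ with $\alpha_0\in\mathfrak{g}_-$, and unpacking the definition $\pi_g=R_g^{**}\Pi_r(g)$, the hypothesis $\pi_g^\sharp(\alpha)\in T_gG_+$ translates, via the pairing between $T_gG_+$ and $\mathbb{F}_g$ and right translation by $g^{-1}$, into
\begin{equation*}
i_{\alpha_0}\Pi_r(g)\;=\;R_{g^{-1}}\bigl(\pi_g^\sharp(\alpha)\bigr)\;\in\;\mathfrak{g}_+\;\subset\;\mathfrak{g}_-^*.
\end{equation*}
Thus $\Pi_r$ factors as a map $G_+\to\Lambda^2\mathfrak{g}_+$, and reading the hypothesis as asserting that $\pi$ is a smooth section of the sub-bundle $\Lambda^2TG_+\subset\Lambda^2\mathbb{F}^*$, this factored map is smooth between Banach manifolds.

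Differentiating $\Pi_r$ at the unit element then produces a bounded linear operator $T_e\Pi_r:\mathfrak{g}_+\to\Lambda^2\mathfrak{g}_+$. Substituting back, $\ad^*_\alpha x=i_\alpha T_e\Pi_r(x)\in\mathfrak{g}_+$ for every $x\in\mathfrak{g}_+$ and $\alpha\in\mathfrak{g}_-$, which is the desired stability. Continuity of $\ad^*:\mathfrak{g}_-\times\mathfrak{g}_+\to\mathfrak{g}_+$ is then automatic, being the composition of the bounded linear map $T_e\Pi_r$ with the bounded bilinear evaluation $\mathfrak{g}_-\times\Lambda^2\mathfrak{g}_+\to\mathfrak{g}_+$, $(\alpha,\omega)\mapsto i_\alpha\omega$. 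Combined with the Banach Lie algebra structure on $\mathfrak{g}_-$ from Theorem~\ref{PL_bi}, this verifies the definition of a Banach Lie--Poisson space.

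The only genuine subtlety will be in the middle step: the hypothesis is phrased pointwise (the values of $\pi^\sharp$ lie in the subspace $TG_+$ of $\mathbb{F}^*$), and one must argue that this upgrades the smoothness of $\Pi_r$ from a map into $\Lambda^2\mathfrak{g}_-^*$ to a map into $\Lambda^2\mathfrak{g}_+$ with its own Banach norm, so that $T_e\Pi_r$ is a \emph{bounded} operator into $\Lambda^2\mathfrak{g}_+$ (and not merely a linear map whose image happens to lie in a non-closed subspace). In the Banach bundle setting this is the natural interpretation of the assumption; once it is accepted, the remainder of the argument is a short chain-rule computation combined with the bracket formula from Theorem~\ref{PL_bi}.
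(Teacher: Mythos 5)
Your argument is essentially the paper's proof: you translate the hypothesis into the statement that $\Pi_r(g)$ takes values in $\Lambda^2\mathfrak{g}_+\subset\Lambda^2\mathfrak{g}_-^*$, differentiate at the unit element to get $T_e\Pi_r$ with values in $\Lambda^2\mathfrak{g}_+$, and read off $\ad^*_{\alpha}x = i_{\alpha}T_e\Pi_r(x)\in\mathfrak{g}_+$ from the bracket formula of Theorem~\ref{PL_bi}. The only difference is that you are more explicit about the continuity of $\ad^*:\mathfrak{g}_-\times\mathfrak{g}_+\rightarrow\mathfrak{g}_+$, which the paper leaves tacit, and you correctly flag that this rests on interpreting the hypothesis as giving smoothness of $\Pi_r$ into the Banach space $\Lambda^2\mathfrak{g}_+$ rather than merely pointwise membership of its values.
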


\begin{proof}
Let $\alpha_1 \in \mathfrak{g}_-$ and define $\alpha(g) = R^*_{g^{-1}}(\alpha_1)\in \mathbb{F}_g$. Then 
$\pi(R^*_{g^{-1}}\alpha_1, \cdot) = \pi(\alpha, \cdot) $ takes values in $T_gG_+\subset  \mathbb{F}_g^*$, and
%be a local section of $\mathbb{F}$ in a neighborhood of the unit element $e\in G_+$. One has
$\Pi_r(g)(\alpha_1, \cdot) =  \pi(R^*_{g^{-1}}\alpha_1, R^*_{g^{-1}}\cdot)$ takes values in  $\mathfrak{g}_+\subset \mathfrak{g}_-^*$. It follows that $\Pi_r$ takes values in $\Lambda^2\mathfrak{g}_+\subset \Lambda^2\mathfrak{g}_-^*$. By differentiation, it follows that $T_e\Pi_r$ takes also values in $\Lambda^2\mathfrak{g}_+$. Using equation~\eqref{def_bracket} for the bracket in $\mathfrak{g}_-$, one has 
\begin{equation}
\langle \ad^*_{\alpha_1}X,  \beta_1\rangle_{\mathfrak{g}_+, \mathfrak{g}_-} := \langle X, [\alpha_1, \beta_1]_{\mathfrak{g}_-}\rangle_{\mathfrak{g}_+, \mathfrak{g}_-} = T_e\Pi_r(X)(\alpha_1, \beta_1).
\end{equation}
where $X \in \mathfrak{g}_+$ and $\alpha_1, \beta_1\in \mathfrak{g}_-$. Hence $\ad^*_{\alpha_1}X = T_e\Pi_r(X)(\alpha_1, \cdot)$, therefore $\ad^*_{\alpha_1}X$ belongs to $\mathfrak{g}_+$ for any $\alpha_1\in\mathfrak{g}_-$. Consequently $\mathfrak{g}_+$ is a Banach Lie--Poisson space with respect to $\mathfrak{g}_-$.
\end{proof}

\begin{remark}{\rm
In the finite-dimensional case, any Lie bialgebra can be integrated to a connected simply-connected Poisson--Lie group. The Banach situation is more complicated, since not every Banach Lie algebra can be integrated into a Banach Lie group (see \cite{Nee15} for a survey on the problem of integrability of Banach Lie algebras and on Lie theory in the more general framework of locally convex spaces). 
Even in the case when a Banach Lie bialgebra is a Lie algebra of a connected and simply-connect Banach Lie group, it is still an open problem to determine if the bialgebra structure can be integrated into a Poisson--Lie group structure on the group.
}
\end{remark}
%\newpage

\part{Poisson--Lie groups and the restricted Grassmannian}
In this Part we use the notions introduced in Part~1 in order to construct  Banach Poisson--Lie group structures on the restricted unitary group $\operatorname{U}_{\res}(\mathcal{H})$ and on the triangular group $\operatorname{B}_{\res}^{\pm}(\mathcal{H})$, and a generalized Banach Poisson manifold structure on the restricted Grassmannian such that both actions of $\operatorname{U}_{\res}(\mathcal{H})$ and $\operatorname{B}_{\res}^{\pm}(\mathcal{H})$ on the restricted Grassmannian are Poisson. 

In Section~6, we set the notation. In Section~7.1, we introduce  weak duality pairings between the Banach Lie algebras $\mathfrak{u}_{\res}(\mathcal{H})$ and $\mathfrak{b}_{1,2}^\pm(\mathcal{H})$, and between $\mathfrak{b}_{\res}^\pm(\mathcal{H})$ and $\mathfrak{u}_{1,2}(\mathcal{H})$. In Section~7.2 we use the unboundedness of the triangular truncation on the space of trace class operators to show that $\mathfrak{b}_{1,2}^\pm(\mathcal{H})$ is not a Banach Lie--Poisson space with respect to $\mathfrak{u}_{\res}(\mathcal{H})$. Similarly $\mathfrak{u}_{1,2}(\mathcal{H})$ is not a Banach Lie--Poisson space with respect to $\mathfrak{b}_{\res}^\pm(\mathcal{H})$. This implies in particular that there is no Banach Poisson--Lie group structure on $\operatorname{B}_{\res}^\pm(\mathcal{H})$ defined on the translation invariant subbundle whose fiber at the unit element is $\mathfrak{u}_{1,2}(\mathcal{H})\subset \mathfrak{b}_{\res}^\pm(\mathcal{H})^*$. In Section~7.3 we overcome this difficulty by replacing $\mathfrak{u}_{1,2}(\mathcal{H})$ by the quotient Banach space $\operatorname{L}_{1,2}(\mathcal{H})/\mathfrak{b}_{1,2}^\pm(\mathcal{H})$, and construct a Banach Poisson--Lie group structure on  $\operatorname{B}_{\res}^\pm(\mathcal{H})$. The Banach Poisson--Lie group structure of $\operatorname{U}_{\res}(\mathcal{H})$ can be constructed in a similar way. In Section~8, we show that the restricted Grassmannian is a quotient Poisson homogeneous space of  $\operatorname{U}_{\res}(\mathcal{H})$, the stabilizer $H$ of a point being a Banach Poisson--Lie subgroup of $\operatorname{U}_{\res}(\mathcal{H})$. In Section~9.1, we show that the actions of $\operatorname{U}_{\res}(\mathcal{H})$ and  $\operatorname{B}_{\res}^{\pm}(\mathcal{H})$ on the restricted Grassmannian are Poisson actions. In Section~9.2, we show that the symplectic leaves of the Poisson structure of the restricted Grassmannian are the orbits of $\operatorname{B}_{\res}^{\pm}(\mathcal{H})$ and coincides with Schubert cells. At last, we mention that the action of the subgroup $\Gamma^+$ of  $\operatorname{B}_{\res}^{\pm}(\mathcal{H})$ generated by the shift gives rise to the KdV hierachy.

\section{Preliminaries}\label{section1}
Let us introduce some notation. If not stated otherwise, the Banach Lie algebras and related notions are over the field of real numbers. Endow the infinite-dimensional separable complex Hilbert space $\mathcal{H}$ with  orthonormal basis $\{|n\rangle, n\in\mathbb{Z}\}$ ordered with respect to decreasing values of $n$, and consider the decomposition $\mathcal{H} = \mathcal{H}_+ \oplus \mathcal{H}_{-}$, where $\mathcal{H}_+ := \textrm{span}\{|n\rangle~: n\geq 0\}$ and $\mathcal{H}_- := \textrm{span}\{|n\rangle~: n < 0\}$. Denote by $p_+$ (resp. $p_-$) the orthogonal projection onto $\mathcal{H}_+$ (resp. $\mathcal{H}_-$), and set $d  = i(p_+ - p_-)\in \operatorname{L}_\infty(\mathcal{H})$. 

\subsection{Restricted Banach Lie algebra $\operatorname{L}_{\res}(\mathcal{H})$ and its predual $\operatorname{L}_{1,2}(\mathcal{H})$}\label{restricted_Banach_algebras} The restricted Banach Lie algebra is the Banach Lie algebra 
\begin{equation}\label{L_res}
\operatorname{L}_{\res}(\mathcal{H}) = \{ A \in \operatorname{L}_{\infty}(\mathcal{H})~: [d, A] \in \operatorname{L}_{2}(\mathcal{H})\}
\end{equation}
for the norm
$\|A\|_{\res} = \|A\|_{\infty} + \|[d, A]\|_2$ and the bracket given by the commutator of operators.   
A predual of $\operatorname{L}_{\res}$ is 
\begin{equation}\label{L_12}
\operatorname{L}_{1,2}(\mathcal{H}) := \{ A\in \operatorname{L}_{\infty}(\mathcal{H})~: [d, A]\in \operatorname{L}_{2}(\mathcal{H}), p_{\pm}A|_{\mathcal{H}_{\pm}}\in \operatorname{L}_1(\mathcal{H}_{\pm})\}.
\end{equation}
 It is a Banach Lie algebra for the norm given by
$$\| A\|_{1, 2} = \|p_{+}A|_{\mathcal{H}_{+}}\|_1 + \|p_{-}A|_{\mathcal{H}_{-}}\|_1+\|[d, A]\|_2.$$
The duality pairing between $\operatorname{L}_{1,2}(\mathcal{H})$ and $\operatorname{L}_{\res}(\mathcal{H})$ is given by 
$$
\begin{array}{llll}
\langle\cdot, \cdot\rangle_{\operatorname{L}_{\res}, \operatorname{L}_{1,2}}~:& \operatorname{L}_{\res}(\mathcal{H})\times  \operatorname{L}_{1,2}(\mathcal{H})& \rightarrow &\mathbb{C}\\
& (A, B) & \mapsto & \operatorname{Tr}_{\res}(AB),
\end{array}
$$
where the restricted trace $\operatorname{Tr}_{\res}$ (see \cite{GO10}))  is defined on $\operatorname{L}_{1,2}(\mathcal{H})$ by 
\begin{equation}\label{restrictedtrace}
\Tr_{\res} A = \Tr p_{+}A|_{\mathcal{H}_{+}} + \Tr p_{-}A|_{\mathcal{H}_{-}}.
\end{equation}
 According to Proposition~2.1 in \cite{GO10}, one has $\Tr_{\res} AB = \Tr_{\res} BA$ for any $A \in \operatorname{L}_{1,2}(\mathcal{H})$  and any $B \in \operatorname{L}_{\res}(\mathcal{H})$.

\subsection{Restricted general linear group $\operatorname{GL}_{\res}(\mathcal{H})$ and its ``predual'' $\operatorname{GL}_{1,2}(\mathcal{H})$}
The restricted general linear group, denoted by $\operatorname{GL}_{\res}(\mathcal{H})$ is defined as
\begin{equation}\label{GLres}
\operatorname{GL}_{\res}(\mathcal{H}) := \operatorname{GL}(\mathcal{H})\cap \operatorname{L}_{\res}(\mathcal{H}).
\end{equation}
It is an open subset of $\operatorname{L}_{\res}(\mathcal{H})$ hence carries a natural Banach Lie group structure with Banach Lie algebra $\operatorname{L}_{\res}(\mathcal{H})$. 
It is not difficult to show that $\operatorname{GL}_{\res}(\mathcal{H})$ is closed under the operation that takes an operator $A\in \operatorname{GL}_{\res}(\mathcal{H})$ to its inverse $A^{-1}\in \operatorname{GL}(\mathcal{H}) $.
%, in other words that $A\in \operatorname{GL}_{\res}(\mathcal{H}) \Rightarrow A^{-1} \in \operatorname{GL}_{\res}(\mathcal{H})$. 
It follows that $\operatorname{GL}_{\res}(\mathcal{H})$ is a Banach Lie group.

The Banach Lie algebra $\operatorname{L}_{1,2}(\mathcal{H})$, predual to $\operatorname{L}_{\res}(\mathcal{H})$, is the Banach Lie algebra of the following Banach Lie group
\begin{equation}\label{GL12}
\operatorname{GL}_{1,2}(\mathcal{H}) := \operatorname{GL}(\mathcal{H})\cap \{\textrm{Id} + A~: A \in \operatorname{L}_{1,2}(\mathcal{H})\}.
\end{equation}

\subsection{Unitary Banach Lie algebras $\mathfrak{u}(\mathcal{H})$,  $\mathfrak{u}_{\res}(\mathcal{H})$, $\mathfrak{u}_{1,2}(\mathcal{H})$} \label{touslesu}
The subspace  
\begin{equation}\label{u}
\mathfrak{u}(\mathcal{H}) := \{A\in \operatorname{L}_{\infty}(\mathcal H)~: A^* = -A\}
\end{equation} 
of skew-Hermitian bounded operators is a real Banach Lie subalgebra of $ \operatorname{L}_{\infty}(\mathcal H)$ considered as a real Banach space. 
The unitary restricted algebra $\mathfrak{u}_{\res}(\mathcal{H})$ is the real Banach Lie  subalgebra of $ \operatorname{L}_{\res}(\mathcal{H})$ consisting of skew-Hermitian operators~:
\begin{equation}\label{ures}
\mathfrak{u}_{\res}(\mathcal{H}) := \{A \in \mathfrak{u}(\mathcal{H})~: [d, A]\in \operatorname{L}_2(\mathcal{H})\} = \operatorname{L}_{\res}(\mathcal{H})\cap \mathfrak u(\mathcal{H}).
\end{equation}
By Proposition 2.1 in \cite{BRT07}, a predual of the unitary restricted algebra $\mathfrak{u}_{\res}(\mathcal{H})$ is the subalgebra $\mathfrak{u}_{1,2}(\mathcal{H})$ of $ \operatorname{L}_{\res}(\mathcal{H})$ consisting of skew-Hermitian operators (see also Remark~\ref{b_into_u} below)~:
\begin{equation}\label{u12}
\mathfrak{u}_{1,2}(\mathcal{H})~:= \{ A \in \operatorname{L}_{1,2}(\mathcal{H})~: A^* = -A\}.
\end{equation}
\begin{remark}{\rm
It follows from Proposition~2.5 in \cite{BRT07} with   $\gamma = 0$ that $\mathfrak{u}_{1,2}(\mathcal{H})$ is a Banach Lie--Poisson space with respect to $\mathfrak{u}_{\res}(\mathcal{H})$. A direct computation shows that $ \mathfrak{u}_{1,2}(\mathcal{H})$ is not a Banach Lie bialgebra with respect to $\mathfrak{u}_{\res}(\mathcal{H})$.}
%One can show that $ \mathfrak{u}_{1,2}(\mathcal{H})$ is not a Banach Lie bialgebra with respect to $\mathfrak{u}_{\res}(\mathcal{H})$. since
%$$
% [\ad^*_\alpha x, y] + [x, \ad^*_\alpha y] + \ad^*_{\ad^*_x\alpha}y  - \ad^*_{\ad^*_y\alpha}x = \textbf{2}~\ad^*_{\alpha}[x, y],
%$$
%for any $\alpha\in\mathfrak{u}_{\res}(\mathcal{H})$ and any $x, y\in\mathfrak{u}_{1,2}(\mathcal{H})$ (compare with~\eqref{cocycle_weinstein}).}
\end{remark}

\subsection{Restricted unitary group $\operatorname{U}_{\res}(\mathcal{H})$ and its ``predual'' $\operatorname{U}_{1,2}(\mathcal{H})$}
The restricted unitary group is  defined as 
\begin{equation}\label{Ures}
\operatorname{U}_{\res}(\mathcal{H}) := \operatorname{GL}_{\res}(\mathcal{H}) \cap \operatorname{U}(\mathcal{H}).
\end{equation}
It has a natural structure of Banach Lie group with Banach Lie algebra $\mathfrak{u}_{\res}(\mathcal{H})$.
The Banach Lie algebra $\mathfrak{u}_{1,2}(\mathcal{H})$, predual to $\mathfrak{u}_{\res}(\mathcal{H})$, is the Banach Lie algebra of the following Banach Lie group
\begin{equation}\label{U12}
\operatorname{U}_{1,2}(\mathcal{H}) := \operatorname{U}(\mathcal{H})\cap \{\textrm{Id} + A~: A \in \operatorname{L}_{1,2}(\mathcal{H})\}.
\end{equation}

\subsection{The restricted Grassmannian $\operatorname{Gr}_{\res}(\mathcal{H})$}\label{Grres} In the present paper, the restricted Grassmannian $\operatorname{Gr}_{\res}(\mathcal{H})$ denotes the set of all closed subspaces $W$ of $\mathcal{H}$ such that the orthogonal projection $p_-: W \rightarrow \mathcal{H}_-$ is a Hilbert-Schmidt operator. 
%For any $W\in\operatorname{Gr}_{\res}(\mathcal{H})$, the orthogonal projection $p_+: W \rightarrow \mathcal{H}_{+}$ is a Fredholm operator whose index characterizes the connected components of $\operatorname{Gr}_{\res}(\mathcal{H})$. The connected component of $\operatorname{Gr}_{\res}(\mathcal{H})$ containing the subspace $\mathcal{H}_+$ will be denoted by $\operatorname{Gr}_{\res}^0(\mathcal{H})$ and consists of those subspaces $W$ for which the orthogonal projection $p_+: W \rightarrow \mathcal{H}_{+}$ has a vanishing index.
The restricted Grassmannian is a homogeneous space under the restricted unitary group (see \cite{PS88}),
$$
\operatorname{Gr}_{\res}(\mathcal{H}) = \operatorname{U}_{\res}(\mathcal{H})/\left(\operatorname{U}(\mathcal{H}_+)\times \operatorname{U}(\mathcal{H}_-)\right),
$$
and under the restricted general linear group $\operatorname{GL}_{\res}(\mathcal{H})$,
$$
\operatorname{Gr}_{\res}(\mathcal{H}) = \operatorname{GL}_{\res}(\mathcal{H})/\operatorname{P}_{\res}(\mathcal{H}),
$$
where 
\begin{equation}\label{Parabolic}
\operatorname{P}_{\res}(\mathcal{H}) = \{ A\in \operatorname{GL}_{\res}(\mathcal{H})~: p_{-}A_{|\mathcal{H}_+} = 0\}.
\end{equation}
It follows that $\operatorname{Gr}_{\res}(\mathcal{H})$ is a homogeneous K\"ahler manifold (see \cite{Wu01}, \cite{BRT07}, \cite{Tum1}, \cite{Tum2} for more informations on the geometry of the restricted Grassmannian).
%The connected component $\operatorname{Gr}_{\res}^0(\mathcal{H})$ of the restricted Grassmannian containing $\mathcal{H}_+$ is a homogeneous space under the unitary group $\operatorname{U}_{1,2}(\mathcal{H})$ (see \cite{BRT07}), 
%$$
%\operatorname{Gr}_{\res}^0(\mathcal{H}) = \operatorname{U}_{1,2}(\mathcal{H})/\left(\operatorname{U}_1(\mathcal{H}_+)\times \operatorname{U}_1(\mathcal{H}_-)\right),
%$$
%where $\operatorname{U}_1(\mathcal{H}_\pm) = \operatorname{U}(\mathcal{H}_\pm)\cap \{\textrm{Id} + A, A \in L_1(\mathcal{H}_{\pm})\},$
%and under the general linear group $\operatorname{GL}_{1,2}(\mathcal{H})$,
%$$
%\operatorname{Gr}_{\res}^0(\mathcal{H}) = \operatorname{GL}_{1,2}(\mathcal{H})/\operatorname{P}_{1,2}(\mathcal{H}),
%$$
%where 
%$$
%\operatorname{P}_{1,2}(\mathcal{H}) = \{ A\in \operatorname{GL}_{1,2}(\mathcal{H}), p_{-}A_{|\mathcal{H}_+} = 0\}.
%$$

\subsection{Triangular Banach Lie subalgebras $\mathfrak{b}^\pm_{1,2}(\mathcal{H})$ and $\mathfrak{b}^\pm_{\res}(\mathcal{H})$}\label{triangular_subalgebras}
%Endow the separable complex Hilbert space $\mathcal{H}$ with an orthonormal basis $\{|n\rangle\}_{n\in\mathbb{Z}}$, ordered according to decreasing values of $n$.  
%Let $\mathcal{H}_+$ be the complex closed subspace of $\mathcal{H}$ generated by $\{|n\rangle\}_{n\geq 0}$ and $\mathcal{H}_-$ be the complex closed subspace of $\mathcal{H}$ generated by $\{|n\rangle\}_{n < 0}$.
Let us define the following triangular subalgebras of $\operatorname{L}_{1,2}(\mathcal{H})$ and  $\operatorname{L}_{\res}(\mathcal{H})$~:
 $$
\mathfrak{b}^+_{1, 2}(\mathcal{H}) := \{\alpha\in \operatorname{L}_{1,2}(\mathcal{H})~: \alpha\left(|n\rangle\right) \in~ \textrm{span}\{|m\rangle, m\geq n\}~\textrm{and}~\langle n|\alpha|n\rangle\in\mathbb{R}, \textrm{for}~ n\in\mathbb{Z}\}.
$$  
$$
\mathfrak{b}^-_{1, 2}(\mathcal{H}) := \{\alpha\in \operatorname{L}_{1,2}(\mathcal{H})~: \alpha\left(|n\rangle\right) \in~ \textrm{span}\{|m\rangle, m\leq n\}~\textrm{and}~\langle n|\alpha|n\rangle\in\mathbb{R}, \textrm{for}~ n\in\mathbb{Z}\},
$$
$$
\mathfrak{b}^+_{\res}(\mathcal{H}) := \{\alpha\in \operatorname{L}_{\res}(\mathcal{H})~: \alpha\left(|n\rangle\right) \in~ \textrm{span}\{|m\rangle, m\geq n\}~\textrm{and}~\langle n|\alpha|n\rangle\in\mathbb{R}, \textrm{for}~ n\in\mathbb{Z}\}.
$$  
$$
\mathfrak{b}^-_{\res}(\mathcal{H}) := \{\alpha\in \operatorname{L}_{\res}(\mathcal{H})~: \alpha\left(|n\rangle\right) \in~ \textrm{span}\{|m\rangle, m\leq n\}~\textrm{and}~\langle n|\alpha|n\rangle\in\mathbb{R}, \textrm{for}~ n\in\mathbb{Z}\}.
$$

\subsection{Triangular Banach Lie groups  $\operatorname{B}_{1, 2}^{\pm}(\mathcal{H}) $, and  
$\operatorname{B}_{\res}^{\pm}(\mathcal{H})$}\label{Triangular_groups}

Consider
%To the Banach Lie algebras $\mathfrak{b}^\pm_{1, 2}(\mathcal{H})$ are associated the following Banach Lie groups~:
$$
\operatorname{B}_{1, 2}^{\pm}(\mathcal{H}):= \{\alpha\in \operatorname{GL}(\mathcal{H})\cap \left(\textrm{Id}+\mathfrak{b}^\pm_{1, 2}(\mathcal{H})\right)~:  \alpha^{-1}\in \textrm{Id} +\mathfrak{b}^\pm_{1, 2}(\mathcal{H}), \forall~ n\in\mathbb{Z},~\langle n|\alpha|n\rangle\in\mathbb{R}^{+*}\}.
$$  
For any $A\in \mathfrak{b}^\pm_{1, 2}(\mathcal{H})$ with $\|A\|_{1, 2}<1$, and any $\alpha\in \operatorname{B}_{1, 2}^{\pm}(\mathcal{H}) $, the operator $\alpha-\alpha A$ belongs to $\operatorname{B}_{1, 2}^{\pm}(\mathcal{H}) $, since
$$
(\alpha -\alpha A)^{-1} = (\textrm{Id} - A)^{-1}\alpha^{-1},
$$
and $ (\textrm{Id} - A)^{-1} = \sum_{n=0}^{\infty} A^{n}$ is a convergent series in $\left(\textrm{Id} + \mathfrak{b}^\pm_{1, 2}(\mathcal{H})\right)$, whose limit admits strictly positive diagonal coefficients. Hence $\operatorname{B}_{1, 2}^{\pm}(\mathcal{H}) $ is an open subset of $\left(\textrm{Id} + \mathfrak{b}^\pm_{1, 2}(\mathcal{H})\right)$, stable under group multiplication and inversion.
It follows that $\operatorname{B}_{1, 2}^{\pm}(\mathcal{H}) $ is a Banach Lie group with Banach Lie algebra  $\mathfrak{b}^\pm_{1, 2}(\mathcal{H})$.

Similarly define the following Banach Lie groups of triangular operators~:
%$$
%\operatorname{B}_{1,2}^{\pm}(\mathcal{H})= \{\alpha\in \operatorname{GL}_{1,2}(\mathcal{H})\cap \mathfrak{b}^\pm_{1,2}(\mathcal{H}), ~\textrm{ such that }~ \alpha^{-1}\in \mathfrak{b}^\pm_{1,2}(\mathcal{H}) ~\textrm{and}~\langle n|\alpha|n\rangle\in\mathbb{R}^{+*}, \textrm{for}~ n\in\mathbb{Z}\},
%$$ 
$$
\operatorname{B}_{\res}^{\pm}(\mathcal{H}):= \{\alpha\in \operatorname{GL}_{\res}(\mathcal{H})\cap \mathfrak{b}^\pm_{\res}(\mathcal{H})~: \alpha^{-1}\in \operatorname{GL}_{\res}(\mathcal{H})\cap\mathfrak{b}^\pm_{\res}(\mathcal{H}) \\~\textrm{and}~\forall~ n\in\mathbb{Z}, \langle n|\alpha|n\rangle\in\mathbb{R}^{+*}\}.
$$

\begin{remark}{\rm
Remark that $\operatorname{B}_{\res}^{+}(\mathcal{H})$ does not contain the shift operator $S: \mathcal{H}\rightarrow \mathcal{H}$, $|n\rangle \mapsto |n+1\rangle$ since the diagonal coefficients of any element in $\operatorname{B}_{\res}^{+}(\mathcal{H})$ are non-zero. However $S$ belongs to the Lie algebra $\mathfrak{b}^+_{\res}(\mathcal{H})$, whereas $S^{-1}$ belongs to $\mathfrak{b}^-_{\res}(\mathcal{H})$.}
\end{remark}

\section{Example of  Banach Lie bialgebras and Banach Poisson--Lie groups related \\to the restricted Grassmannian}\label{section6}

%The goal of this Section is to define the examples of Poisson Lie groups that we will need in the following Sections. 
\subsection{Iwasawa Banach Lie bialgebras}
Recall that   $\langle\cdot,\cdot\rangle_{\operatorname{L}_{\res}, \operatorname{L}_{1, 2}}$ denote the continuous bilinear map given by the imaginary part of the restricted trace (see equation~\eqref{restrictedtrace})~:
$$
\begin{array}{lcll}
\langle\cdot,\cdot\rangle_{\operatorname{L}_{\res}, \operatorname{L}_{1, 2}}~: & \operatorname{L}_{\res}(\mathcal{H})\times \operatorname{L}_{1,2}(\mathcal{H})& \longrightarrow &\mathbb{R}\\
& (x, y) & \longmapsto & \Im \Tr_{\res}\left(x y\right).
\end{array}
$$

\begin{proposition}\label{duality_U_b}
The continuous bilinear map $\langle\cdot,\cdot\rangle_{L_{\res}, \operatorname{L}_{1, 2}}$ restricts to 
  a weak duality pairing between $\mathfrak{u}_{\res}(\mathcal{H})$ and $\mathfrak{b}^\pm_{1, 2}(\mathcal{H})$ denoted by 
$$
\begin{array}{lcll}
\langle\cdot,\cdot\rangle_{\mathfrak{u}_{\res}, \mathfrak{b}_{1, 2}^{\pm}}~: & \mathfrak{u}_{\res}(\mathcal{H})\times \mathfrak{b}^{\pm}_{1,2}(\mathcal{H}) & \longrightarrow &\mathbb{R}\\
& (x, y) & \longmapsto & \Im \Tr_{\res}\left(x y\right).
\end{array}
$$
Similarly the continuous bilinear map $\langle\cdot,\cdot\rangle_{L_{\res}, \operatorname{L}_{1, 2}}$ restricts to  a weak duality pairing between $\mathfrak{b}^\pm_{\res}(\mathcal{H})$ and $\mathfrak{u}_{1,2}(\mathcal{H})$  denoted by 
$$
\begin{array}{lcll}
\langle\cdot,\cdot\rangle_{ \mathfrak{b}_{\res}^{\pm}, \mathfrak{u}_{1, 2}}~: & \mathfrak{b}^{\pm}_{\res}(\mathcal{H})\times \mathfrak{u}_{1, 2}(\mathcal{H})& \longrightarrow &\mathbb{R}\\
& (x, y) & \longmapsto & \Im \Tr_{\res}\left(x y\right).
\end{array}
$$
\end{proposition}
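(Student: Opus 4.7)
The plan is to verify three ingredients for each of the two claimed pairings: (i) that the restriction is well-defined and continuous, (ii) non-degeneracy, (iii) that the induced maps fail to be isomorphisms (so the pairing is weak rather than strong).

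The first ingredient is essentially formal: the bilinear map $\langle\cdot,\cdot\rangle_{\operatorname{L}_{\res},\operatorname{L}_{1,2}}$ is known to be well-defined and continuous on $\operatorname{L}_{\res}(\mathcal{H})\times\operatorname{L}_{1,2}(\mathcal{H})$ (using $\Tr_{\res}(xy)$ with $x\in\operatorname{L}_{\res}$ and $y\in\operatorname{L}_{1,2}$, see Section~\ref{restricted_Banach_algebras}), and the real Banach Lie algebras $\mathfrak{u}_{\res}(\mathcal{H})$, $\mathfrak{b}^\pm_{\res}(\mathcal{H})$, $\mathfrak{u}_{1,2}(\mathcal{H})$, $\mathfrak{b}^\pm_{1,2}(\mathcal{H})$ inject continuously into $\operatorname{L}_{\res}(\mathcal{H})$ or $\operatorname{L}_{1,2}(\mathcal{H})$, respectively.

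The main step is non-degeneracy, which I would establish by testing against finite-rank matrix units $E_{mn}$ sending $|n\rangle\mapsto|m\rangle$ (and vanishing on the other basis vectors); these are trace-class, hence lie in both $\operatorname{L}_{1,2}$ and $\operatorname{L}_{\res}$. For the pairing $\langle\cdot,\cdot\rangle_{\mathfrak{u}_{\res},\mathfrak{b}^+_{1,2}}$, one computes $\Tr_{\res}(xE_{mn})=\langle n|x|m\rangle$. If $x\in\mathfrak{u}_{\res}(\mathcal{H})$ pairs trivially with every $y\in\mathfrak{b}^+_{1,2}(\mathcal{H})$, then testing against $E_{mn}$ and $iE_{mn}$ for $m>n$ (both lie in $\mathfrak{b}^+_{1,2}$ since their diagonals vanish) yields $\langle n|x|m\rangle=0$ for all $m>n$; skew-Hermiticity gives $\langle m|x|n\rangle=0$ for $m>n$; and testing against $E_{nn}\in\mathfrak{b}^+_{1,2}$ (diagonal is real) together with the fact that $\langle n|x|n\rangle$ is purely imaginary yields $\langle n|x|n\rangle=0$. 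Hence $x=0$. Conversely, if $y\in\mathfrak{b}^+_{1,2}(\mathcal{H})$ pairs trivially with every $x\in\mathfrak{u}_{\res}(\mathcal{H})$, the skew-Hermitian combinations $E_{mn}-E_{nm}$, $i(E_{mn}+E_{nm})$ (for $m>n$) and $iE_{nn}$ lie in $\mathfrak{u}_{\res}$, and testing against these, together with the upper triangularity condition $\langle m|y|n\rangle=0$ for $m<n$, forces all entries of $y$ to vanish. The case of $\mathfrak{b}^-_{1,2}$ is symmetric, and the case of the pairing between $\mathfrak{b}^\pm_{\res}$ and $\mathfrak{u}_{1,2}$ proceeds by the same matrix-unit argument with the roles of the two spaces reversed.

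Finally, to conclude the pairing is \emph{weak} rather than strong, I would invoke the remark after \eqref{duality_pairing_strong}: a strong pairing would force both participating Banach spaces to be reflexive. The closed subspace of $\mathfrak{b}^+_{1,2}(\mathcal{H})$ consisting of diagonal operators with real entries is isometrically isomorphic to $\ell^1(\mathbb{Z},\mathbb{R})$ (via the $\operatorname{L}_1$-norm of the diagonal blocks), which is non-reflexive; hence $\mathfrak{b}^\pm_{1,2}(\mathcal{H})$ is non-reflexive and the first pairing cannot be strong. Similarly, $\mathfrak{b}^\pm_{\res}(\mathcal{H})$ contains, as a closed subspace, the bounded real diagonal operators, which is isomorphic to the non-reflexive space $\ell^\infty(\mathbb{Z},\mathbb{R})$, showing the second pairing is weak. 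The only mild obstacle is keeping track of signs and the convention for the ordering of $\{|n\rangle\}_{n\in\mathbb{Z}}$ with the correct notion of ``upper triangular'' when writing out the matrix entries, but once that bookkeeping is in place the argument is elementary.
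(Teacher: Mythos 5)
Your proof is correct and follows essentially the same route as the paper: non-degeneracy is established by pairing against the rank-one matrix units $E_{mn}$, $iE_{mn}$ and their skew-Hermitian combinations $E_{mn}-E_{nm}$, $i(E_{mn}+E_{nm})$, together with the reality/imaginarity conditions on the diagonal, which is exactly the paper's argument. The only difference is that you additionally justify the adjective \emph{weak} by exhibiting non-reflexive closed diagonal subspaces (isomorphic to $\ell^1(\mathbb{Z},\mathbb{R})$ in $\mathfrak{b}^{\pm}_{1,2}(\mathcal{H})$, resp. $\ell^\infty(\mathbb{Z},\mathbb{R})$ in $\mathfrak{b}^{\pm}_{\res}(\mathcal{H})$), a point the paper leaves implicit; this is a harmless and welcome addition.
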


\begin{proof}
Let us show that the map $(a, b)\mapsto \Im \Tr_{\res} ab$ is non-degenerate for $a\in \mathfrak{u}_{\res}(\mathcal{H})$ and $b \in \mathfrak{b}^{+}_{1,2}(\mathcal{H})$.

 Suppose that $a\in \mathfrak{u}_{\res}(\mathcal{H})$ is such that $\Im \Tr_{\res} ab = 0$ for any $b \in \mathfrak{b}^{+}_{1,2}(\mathcal{H})$ and let us show that $a$ necessary vanishes. Since $\{|n\rangle\}_{n\in\mathbb{Z}}$ is an orthonormal basis of $\mathcal{H}$ and $a$ is bounded, it is sufficient to show that for any $n , m \in\mathbb{Z}$, $\langle m| a n\rangle = 0$. In fact, since $a$ is skew-symmetric, it is enough to show that $\langle m| a n\rangle = 0$ for $m\leq n$. For $n\geq m$, the operator $E_{nm} = |n\rangle\langle m| $ of rank one given by $x\mapsto \langle m, x\rangle |n\rangle$ belongs to $\mathfrak{b}^{+}_{1,2}(\mathcal{H})$.  Hence for $n\geq m$, one has
$$
\Im \Tr_{\res} a E_{nm} = \Im \left(\sum_{j\in\mathbb{Z}}  \langle j| m\rangle \langle j| a n\rangle\right) = \Im \langle m| a n\rangle = 0.
$$
In particular, for $m=n$, since $\langle n| a n\rangle$ is purely imaginary, one has $\langle n| a n\rangle = 0$, $ \forall n\in\mathbb{Z}$. For $n>m$, the operator $iE_{nm}$ belongs also to $\mathfrak{b}^{+}_{1,2}(\mathcal{H})$ and 
$$
\Im \Tr_{\res} a iE_{nm} = \Im \left(\sum_{j\in\mathbb{Z}}  i\langle j| m\rangle \langle j| a n\rangle\right) = \Re \langle m| a n\rangle = 0.
$$
This allows to conclude that $\langle m| a n\rangle = 0$ for any $n,m \in\mathbb{Z}$, hence $a = 0\in \mathfrak{u}_{\res}(\mathcal{H})$.
 
 On the other hand, consider an element $b\in \mathfrak{b}^{+}_{1,2}(\mathcal{H})$ such that $\Im \Tr ab =  0$ for any $a\in \mathfrak{u}_{\res}(\mathcal{H})$. We will show that $\langle n| b m \rangle = 0$ for any $n,m\in\mathbb{Z}$ such that $n\geq m$. For $n> m$, the operator $E_{mn}-E_{nm}$ belongs to $\mathfrak{u}_{\res}(\mathcal{H})$, and for $n\geq m$, $i E_{mn} + i E_{nm}\in \mathfrak{u}_{\res}(\mathcal{H})$.  Therefore for $n>m$, one has
 $$
 \Im \Tr_{\res} \left(E_{mn}-E_{nm}\right) b = \Im \left( \langle n| b m\rangle - \langle m| b n\rangle\right) = \Im \langle n| b m\rangle = 0,
 $$
 and for $n\geq m$, one has
 $$
 \Im \Tr_{\res} \left(i E_{mn}+ iE_{nm}\right) b = \Im \left(i \langle n| b m\rangle + i \langle m| b n\rangle\right) = \Re \langle n| b m\rangle = 0.
 $$
 It follows that $\langle n| b m\rangle = 0$ for all $n,m\in\mathbb{Z}$ such that $n>m$. Moreover, since $\langle n| b n\rangle \in\mathbb{R}$ for any $n\in\mathbb{Z}$, one also has $\langle n| b n\rangle = 0, \forall n\in\mathbb{Z}$. Consequently $b = 0$.
 
 It follows that $\langle\cdot,\cdot\rangle_{\mathfrak{u}_{\res}, \mathfrak{b}_{1, 2}^{\pm}}~:  \mathfrak{u}_{\res}(\mathcal{H})\times \mathfrak{b}^{+}_{1,2}(\mathcal{H})\rightarrow \mathbb{R}$, $(x, y) \mapsto \Im \Tr_{\res} x y$, is  non-degenerate and defines a duality pairing between $\mathfrak{u}_{\res}(\mathcal{H})$ and $\mathfrak{b}^{+}_{1,2}(\mathcal{H})$. One shows in a similar way that $\langle\cdot,\cdot\rangle_{\operatorname{L}_{\res}, \operatorname{L}_{1, 2}}$ induces a duality pairing between $\mathfrak{u}_{\res}(\mathcal{H})$ and $\mathfrak{b}^{-}_{1,2}(\mathcal{H})$, between $\mathfrak{u}_{1, 2}(\mathcal{H})$ and $\mathfrak{b}^{+}_{\res}(\mathcal{H})$, and between $\mathfrak{u}_{1, 2}(\mathcal{H})$ and $\mathfrak{b}^{-}_{\res}(\mathcal{H})$.
 \end{proof}

\begin{remark}\label{b_into_u}{\rm
Recall that by Proposition~2.1 in \cite{BRT07}, the dual space $\mathfrak{u}_{1, 2}(\mathcal{H})^*$ can be identified with $\mathfrak{u}_{\res}(\mathcal{H})$, the duality pairing being given by $(a, b)\mapsto \Tr_\res (ab)$. By previous Proposition, one has a continuous injection from $\mathfrak{b}^{+}_{\res}(\mathcal{H})$ into  $\mathfrak{u}_{1, 2}(\mathcal{H})^*$ by $a\mapsto (b\mapsto \Im\Tr_\res (a b))$. The corresponding injection from $\mathfrak{b}^{+}_{\res}(\mathcal{H})$ into $\mathfrak{u}_{\res}(\mathcal{H})\simeq \mathfrak{u}_{1, 2}(\mathcal{H})^*$ reads~:
$$
\begin{array}{llll}
\iota~:&\mathfrak{b}^{+}_{\res}(\mathcal{H}) & \hookrightarrow & \mathfrak{u}_{\res}(\mathcal{H})\\
& b & \mapsto & -\frac{i}{2}(b + b^*).
\end{array}
$$
The range of $\iota$ is the subspace of $\mathfrak{u}_{\res}(\mathcal{H})$ consisting of those $x\in \mathfrak{u}_{\res}(\mathcal{H})$ such that the triangular truncation $T_-(x)$ is bounded. Recall that $T_-$ is unbounded on $L_\infty(\mathcal{H})$, as well as on $L_1(\mathcal{H})$ (see \cite{M61}, \cite{KP70}, \cite{GK70}), and that there exists skew-symmetric bounded operators whose triangular truncation is not bounded (see \cite{D88}). Therefore $\iota$ is not surjective.
}
\end{remark}

%\todo{V. pas clair. pas sur que la restriction sur les skew-sym soit non born\'ee. Idealement, trouver un operateur anti-sym dont la troncation est L2 mais pas L1.}

\begin{theorem}\label{les_bigebres}
The Banach Lie algebra $\mathfrak{b}_{1, 2}^\pm(\mathcal{H})$ is a Banach Lie bialgebra with respect to $\mathfrak{u}_{\res}(\mathcal{H})$. Similarly the Banach Lie algebra $\mathfrak{u}_{1, 2}(\mathcal{H})$ is a Banach Lie bialgebra with respect to $\mathfrak{b}_{\res}^\pm(\mathcal{H})$.
\end{theorem}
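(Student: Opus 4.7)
The plan is to verify the two conditions of Definition~\ref{Bialgebra_def} directly. Since both claims of the theorem (and their $\pm$ variants) are parallel — the small Banach Lie algebra (in $L_{1,2}$) acts coadjointly on the large one (in $L_\res$), with the weak pairing $\Im\Tr_\res$ from Proposition~\ref{duality_U_b} — I will focus on showing that $\mathfrak{b}_{1,2}^+$ is a Banach Lie bialgebra with respect to $\mathfrak{u}_\res$; the other three cases are handled by the same arguments with the roles of the skew-Hermitian and triangular spaces interchanged (and with the ordering $n \mapsto -n$ for the $-$ variants).

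For the coadjoint action of $\mathfrak{b}_{1,2}^+$ on $\mathfrak{u}_\res\hookrightarrow(\mathfrak{b}_{1,2}^+)^*$, I fix $\beta\in\mathfrak{b}_{1,2}^+$ and $x\in\mathfrak{u}_\res$, and use the cyclicity $\Tr_\res(AB)=\Tr_\res(BA)$ for $A\in L_{1,2}$, $B\in L_\res$ (recalled in Section~\ref{restricted_Banach_algebras}) to rewrite
\begin{equation*}
(\ad^*_\beta x)(\gamma) \;=\; \Im\Tr_\res\bigl(x[\beta,\gamma]\bigr) \;=\; \Im\Tr_\res\bigl([x,\beta]\gamma\bigr),\qquad \gamma\in\mathfrak{b}_{1,2}^+,
\end{equation*}
with $[x,\beta]\in L_{1,2}$. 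A short calculation with rank-one matrix units shows that the annihilator of $\mathfrak{b}_{1,2}^+$ in $L_\res$ under $\Im\Tr_\res$ is exactly $\mathfrak{b}_\res^+$; hence identifying $\ad^*_\beta x$ with an element $y\in\mathfrak{u}_\res$ amounts to producing a decomposition $[x,\beta]=y+b$ with $y\in\mathfrak{u}_\res$ and $b\in\mathfrak{b}_\res^+$. Continuity of $(\beta,x)\mapsto y$ is then inherited from the continuity of the commutator on $L_\res$ and of the corresponding projection. For the second condition, I define $\theta:\mathfrak{b}_{1,2}^+\to\Lambda^2\mathfrak{u}_\res^*$ by $\theta(\beta)(x,y):=\Im\Tr_\res\bigl(\beta[x,y]_{\mathfrak{u}_\res}\bigr)$, which is manifestly the restriction to $\mathfrak{b}_{1,2}^+\subset\mathfrak{u}_\res^*$ of the dual of $[\cdot,\cdot]_{\mathfrak{u}_\res}$; the cocycle identity~\eqref{cocycle_mitte} follows by the very same algebraic manipulation that yields equation~\eqref{weinstein} in the proof of Theorem~\ref{Manin_to_bialgebra}, invoking only the $\ad$-invariance of $\Im\Tr_\res$ on $L_\res$ and the Jacobi identity.

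The main obstacle is the construction of the skew-Hermitian representative $y=p_{\mathfrak{u}_\res}([x,\beta])$. In the $L_p$-setting of Example~\ref{Iwasawa_Banach_Lie_bialgebra} the analogous step was trivial because of the genuine Iwasawa decomposition $L_p=\mathfrak{u}_p\oplus\mathfrak{b}_p^+$, a consequence of the boundedness of $T_{--}$ on $L_p$ for $1<p<\infty$. In the restricted setting that shortcut is unavailable: by Remark~\ref{prendre_triangle}, $T_{--}$ is unbounded on both $L_\infty$ and $L_1$, and neither $L_\res$ nor $L_{1,2}$ splits as $\mathfrak{u}\oplus\mathfrak{b}^+$ in the corresponding variant. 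One must therefore exploit the specific form of $A:=[x,\beta]$. In blocks with respect to $\mathcal{H}=\mathcal{H}_+\oplus\mathcal{H}_-$, one uses that $\beta_{-+}=0$: the off-diagonal blocks $A_{+-}$ and $A_{-+}$ lie in $L_2$, where $T_{--}$ \emph{is} bounded; the $(++)$-block is $A_{++}=[x_{++},\beta_{++}]-\beta_{+-}x_{-+}$, and the second summand is the product of two Hilbert--Schmidt operators. The delicate point is thus to show that the strictly-lower part of the commutator of the bounded skew-Hermitian $x_{++}$ with the upper-triangular trace-class $\beta_{++}$ still yields a bounded operator — and similarly for the $(--)$-block. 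Once this boundedness is established, the formula
\begin{equation*}
y \;=\; T_{--}(A) - T_{--}(A)^* + \tfrac{1}{2}\bigl(D(A) - D(A)^*\bigr)
\end{equation*}
of Lemma~\ref{decub} produces an element of $\mathfrak{u}_\res$ with $A-y\in\mathfrak{b}_\res^+$, completing the first step; the second claim is proved by the same blockwise analysis with the roles of $\mathfrak{u}$ and $\mathfrak{b}$ exchanged.
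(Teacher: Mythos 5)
Your overall strategy (represent $\ad^*_\beta x$ by a commutator modulo the annihilator $\mathfrak{b}_\res^+$, then check the cocycle identity by trace manipulations) is the right one, but as written the proof has a genuine hole exactly where you locate the ``delicate point'': you never prove that the strictly lower triangular part of $[x_{++},\beta_{++}]$ is bounded, you only state that it must be shown, and your continuity claim for $(\beta,x)\mapsto y$ appeals to ``the corresponding projection'' on $\operatorname{L}_{\res}(\mathcal{H})$, which does not exist as a bounded operator: there is no splitting $\operatorname{L}_{\res}(\mathcal{H})=\mathfrak{u}_{\res}(\mathcal{H})\oplus\mathfrak{b}^+_{\res}(\mathcal{H})$, precisely because the triangular truncation is unbounded on $\operatorname{L}_\infty(\mathcal{H})$ (Remark~\ref{prendre_triangle} and Section~\ref{no_Manin}). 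What closes both gaps at once --- and is the device the paper actually uses --- is the observation that the \emph{whole} commutator $[x,\beta]$ lies in $\operatorname{L}_2(\mathcal{H})$, since $\beta\in\mathfrak{b}^+_{1,2}(\mathcal{H})\subset\operatorname{L}_{1,2}(\mathcal{H})\subset\operatorname{L}_2(\mathcal{H})$ and $x$ is bounded. One may then invoke the Hilbert Manin triple $(\operatorname{L}_2(\mathcal{H}),\mathfrak{u}_2(\mathcal{H}),\mathfrak{b}_2^\pm(\mathcal{H}))$ of Proposition~\ref{triples} and set $y:=p_{\mathfrak{u}_2^+}([x,\beta])\in\mathfrak{u}_2(\mathcal{H})\subset\mathfrak{u}_{\res}(\mathcal{H})$, with $[x,\beta]-y\in\mathfrak{b}_2^+(\mathcal{H})\subset\mathfrak{b}^+_{\res}(\mathcal{H})$; this makes your ``delicate point'' trivial (the trace-class block $\beta_{++}$ is Hilbert--Schmidt, so the truncation of $[x_{++},\beta_{++}]$ is Hilbert--Schmidt, hence bounded --- no blockwise analysis is needed), and it yields continuity from $\|[x,\beta]\|_2\leq 2\|x\|_\infty\|\beta\|_2$, the boundedness of $p_{\mathfrak{u}_2^+}$ on $\operatorname{L}_2(\mathcal{H})$ (Lemma~\ref{decub}), and the continuity of the inclusion $\mathfrak{u}_2(\mathcal{H})\hookrightarrow\mathfrak{u}_{\res}(\mathcal{H})$.

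A second, smaller weakness is the cocycle step: you cannot simply say that the identity~\eqref{cocycle_mitte} ``follows by the very same algebraic manipulation that yields equation~\eqref{weinstein}'' in Theorem~\ref{Manin_to_bialgebra}, because that manipulation presupposes a Manin triple --- an invariant pairing on the direct sum together with the isotropic splitting and its projections --- and the paper proves that no such Manin triple exists for the pair $(\mathfrak{b}^\pm_{1,2}(\mathcal{H}),\mathfrak{u}_{\res}(\mathcal{H}))$ (Theorems~\ref{ccl} and~\ref{cc2}); indeed $\Im\Tr_{\res}$ is not even defined on all of $\operatorname{L}_{\res}(\mathcal{H})\times\operatorname{L}_{\res}(\mathcal{H})$. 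The paper's proof instead verifies~\eqref{cocycle_mitte} by a direct computation: it substitutes $\ad^*_x\alpha=p_{\mathfrak{u}_2^\pm}([\alpha,x])$, uses that all the relevant products are trace class, and exploits the isotropy of $\mathfrak{u}_2(\mathcal{H})$ and $\mathfrak{b}_2^\pm(\mathcal{H})$ inside the auxiliary $\operatorname{L}_2$ Manin triple to strip the projections and reduce everything to $\Im\Tr[x,y][\alpha,\beta]$. Your definition of $\theta$ as the restriction of the dual of $[\cdot,\cdot]_{\mathfrak{u}_{\res}}$ is correct, but the verification needs to be carried out along these lines rather than by citing Theorem~\ref{Manin_to_bialgebra}.
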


\begin{proof}
Let us show that the Lie algebra structure $[\cdot, \cdot]_{\mathfrak{u}_{\res}}$ on $\mathfrak{u}_{\res}(\mathcal{H})$  is such that 
 \begin{enumerate}
 \item $\mathfrak{b}_{1, 2}^\pm(\mathcal{H})$ acts continuously by coadjoint action on $\mathfrak{u}_{\res}(\mathcal{H})$; 
 \item the dual map $[\cdot, \cdot]_{\mathfrak{u}_{\res}}^*~: \mathfrak{u}_{\res}^{*}(\mathcal{H})\rightarrow \Lambda^2\mathfrak{u}_{\res}^*(\mathcal{H})$  to the Lie bracket $[\cdot, \cdot]_{\mathfrak{u}_{\res}}~:\mathfrak{u}_{\res}(\mathcal{H})\times\mathfrak{u}_{\res}(\mathcal{H})\rightarrow \mathfrak{u}_{\res}(\mathcal{H})$ restricts to
  a $1$-cocycle $\theta~: \mathfrak{b}_{1, 2}^\pm(\mathcal{H})\rightarrow \Lambda^2\mathfrak{u}_{\res}^*(\mathcal{H})$ with respect to the adjoint action $\ad^{(2,0)}$ of $\mathfrak{b}_{1, 2}^\pm(\mathcal{H})$ on  $\Lambda^2\mathfrak{u}^*_{\res}(\mathcal{H}).$ 
   \end{enumerate} 
  \begin{itemize}
  \item 
  Let us first prove (1). Since by Proposition~\ref{duality_U_b}, $\langle\cdot,\cdot\rangle_{ \mathfrak{b}_{1, 2}^{\pm}, \mathfrak{u}_{\res}}$ is a duality pairing between $\mathfrak{u}_{\res}(\mathcal{H})$ and $\mathfrak{b}^\pm_{1, 2}(\mathcal{H})$, the Banach space $\mathfrak{u}_{\res}(\mathcal{H})$ is a subspace of the continuous dual of $\mathfrak{b}^\pm_{1, 2}(\mathcal{H})$. 
   Recall that the coadjoint action of $\mathfrak{b}_{1, 2}^\pm(\mathcal{H})$ on its dual reads
   $$
\begin{array}{llll}
-\ad^*~: &\mathfrak{b}_{1, 2}^\pm(\mathcal{H})\times\mathfrak{b}_{1, 2}^\pm(\mathcal{H})^*&\longrightarrow &\mathfrak{b}_{1, 2}^\pm(\mathcal{H})^*\\
& (x, \alpha) & \longmapsto & -\ad^*_x \alpha := -\alpha\circ \ad_x.
\end{array}
$$
Let us show that $\mathfrak{u}_{\res}(\mathcal{H})$ is invariant under coadjoint action. This means that when $\alpha$ is given by $\alpha(y) =  \Im \Tr_\res a y$ for some $a\in \mathfrak{u}_{\res}(\mathcal{H})$, then,  for any $x\in \mathfrak{b}_{1, 2}^\pm(\mathcal{H})$, the one form $\beta =  -\ad^*_x \alpha$ reads $\beta(y) = \Im \Tr_\res \tilde{a} y$ for some $\tilde{a}\in \mathfrak{u}_{\res}(\mathcal{H})$. One has
$$
\begin{array}{ll}
\beta(y) &= -\ad^*_x \alpha(y) = -\alpha(\ad_x y) = - \alpha([x, y]) \\&= -\Im \Tr_\res a [x, y] = -\Im \Tr_\res (a x y - a y x),
\end{array}
$$
where $a\in \mathfrak{u}_{\res}(\mathcal{H})$, $x, y \in \mathfrak{b}_{1, 2}^\pm(\mathcal{H})$.
Since $a y $ and $x$ belong to $L_2(\mathcal{H})$, $a y x$ and $x a y $ belong to $L_{1}(\mathcal{H})$ and $\Tr_\res (a y x) =  \Tr (a y x) = \Tr (x a y)$. Since $a x y$ belongs also to $L_{1}(\mathcal{H})$, one has
$$
\beta(y) =   -\Im \Tr (a x y) + \Im \Tr (a y x)  = -\Im \Tr (a x y) +\Im \Tr (x a y) = -\Im \Tr ([a, x] y).
$$
Note that $[a, x]$ belongs to $L_2(\mathcal{H})$. Recall that by Proposition~\ref{triples},  the triples of Hilbert Lie algebras $(L_2(\mathcal{H}), \mathfrak{u}_{2}(\mathcal{H}), \mathfrak{b}^+_2(\mathcal{H}))$ and 
$(L_2(\mathcal{H}), \mathfrak{u}_{2}(\mathcal{H}), \mathfrak{b}^-_2(\mathcal{H}))$ are real Hilbert Manin triples with respect to the pairing $\langle\cdot,\cdot\rangle_{\mathbb{R}}$ given by the imaginary part of the trace. Using the decomposition $L_2(\mathcal{H}) = \mathfrak{u}_2(\mathcal{H}) \oplus \mathfrak{b}_{2}^+(\mathcal{H})$, and the continuous projection $p_{\mathfrak{u}_2^\pm}~: L_2(\mathcal{H}) \rightarrow \mathfrak{u}_2(\mathcal{H})$  with kernal $\mathfrak{b}_{2}^\pm(\mathcal{H})$, one therefore has
$$
\beta(y) = -\Im \Tr p_{\mathfrak{u}_2^\pm}([a, x]) y,
$$
since $ y \in \mathfrak{b}_{1, 2}^\pm(\mathcal{H})\subset \mathfrak{b}_{2}^\pm(\mathcal{H})$ and $\mathfrak{b}_{2}^\pm(\mathcal{H})$ is isotropic. It follows that $\beta(y) = \Im \Tr \tilde{a} y$ with 
$$
\tilde{a} = -p_{\mathfrak{u}_2^\pm}([a, x]) \in \mathfrak{u}_2(\mathcal{H}) \subset \mathfrak{u}_{\res}(\mathcal{H}).
$$
In other words, the coadjoint action of $x\in \mathfrak{b}_{1, 2}^\pm(\mathcal{H})$ maps $a\in \mathfrak{u}_{\res}(\mathcal{H})$ to $-\ad^*_x a = -p_{\mathfrak{u}_2^\pm}([a, x])\in \mathfrak{u}_{\res}(\mathcal{H})$. The continuity of the map
$$
\begin{array}{lcll}
-\ad^*~: & \mathfrak{b}_{1, 2}^\pm(\mathcal{H}) \times \mathfrak{u}_{\res}(\mathcal{H}) & \rightarrow &\mathfrak{u}_{\res}(\mathcal{H})\\
& (x, a) & \mapsto & -\ad^*_x a = -p_{\mathfrak{u}_2^\pm}([a, x])
\end{array}
$$
follows from the continuity of the product 
$$
\begin{array}{cll}
 \mathfrak{b}_{1, 2}^\pm(\mathcal{H}) \times \mathfrak{u}_{\res}(\mathcal{H}) & \rightarrow &L_1(\mathcal{H})\\
 (x, a) & \mapsto & ax,
\end{array}
$$
from the continuity of the projection $p_{\mathfrak{u}_2^\pm}$ and from the continuity of the  injections $L_1(\mathcal{H})\subset L_2(\mathcal{H})$ and $\mathfrak{u}_{2}(\mathcal{H})\subset \mathfrak{u}_{\res}(\mathcal{H})$.

\item Let us now prove (2). The dual map of the bilinear map $[\cdot, \cdot]_{\mathfrak{u}_{\res}}$ is given by
$$
\begin{array}{lclcl}
[\cdot, \cdot]_{\mathfrak{u}^*_{\res}}~: & \mathfrak{u}^{*}_{\res}(\mathcal{H})& \longrightarrow & L(\mathfrak{u}_{\res}(\mathcal{H}), \mathfrak{u}_{\res}(\mathcal{H}); \mathbb{K}) & \simeq L(\mathfrak{u}_{\res}(\mathcal{H}); \mathfrak{u}^*_{\res}(\mathcal{H}))\\
& \mathcal{F}(\cdot)&  \longmapsto& \mathcal{F}\left([\cdot, \cdot]_{\mathfrak{u}_{\res}}\right) & \mapsto  \left(\alpha \mapsto \mathcal{F}\left([\alpha, \cdot]_{\mathfrak{u}_{\res}}\right) = \ad^*_{\alpha}\mathcal{F}(\cdot)\right),
\end{array}
$$
and takes values in $\Lambda^2\mathfrak{u}^*_{\res}(\mathcal{H})$. Since by (1), $\mathfrak{u}_{\res}(\mathcal{H})\subset  \mathfrak{b}_{1, 2}^\pm(\mathcal{H})^*$ is stable under the coadjoint action of $\mathfrak{b}_{1, 2}^\pm(\mathcal{H})$ and the coadjoint action $\ad^*~:\mathfrak{b}_{1, 2}^\pm(\mathcal{H})\times\mathfrak{u}_{\res}(\mathcal{H})\rightarrow \mathfrak{u}_{\res}(\mathcal{H})$ is continuous, one can consider the adjoint action of $\mathfrak{b}_{1, 2}^\pm(\mathcal{H})$ on $\Lambda^2\mathfrak{u}^*_{\res}(\mathcal{H})$ defined by \eqref{ad2}.
Denote by $\theta$ the restriction of $[\cdot, \cdot]_{\mathfrak{u}_{\res}}^*$ to the subspace $\mathfrak{b}_{1, 2}^\pm(\mathcal{H})\subset \mathfrak{u}_{\res}(\mathcal{H})^{*}$~:
$$
\begin{array}{lclcl}
\theta~: & \mathfrak{b}_{1, 2}^\pm(\mathcal{H})& \longrightarrow & L(\mathfrak{u}_{\res}(\mathcal{H}), \mathfrak{u}_{\res}(\mathcal{H}); \mathbb{K}) & \simeq L(\mathfrak{u}_{\res}(\mathcal{H}); \mathfrak{u}_{\res}(\mathcal{H})^{*})\\
&x&  \longmapsto& \langle  x, [\cdot, \cdot]_{\mathfrak{u}_{\res}}\rangle_{ \mathfrak{b}_{1,2}^{\pm}, \mathfrak{u}_{\res}}& \mapsto  \left(\alpha \mapsto \langle  x, [\alpha, \cdot]_{\mathfrak{u}_{\res}}\rangle_{\mathfrak{b}_{1,2}^{\pm}, \mathfrak{u}_{\res}} = \ad^*_{\alpha}x(\cdot)\right).
\end{array}
$$
The condition \eqref{cocycle} expressing that $\theta$ is a $1$-cocycle reads~:
\begin{equation}\label{ures-b-cocycle}
\begin{array}{ll}
\langle [\alpha, \beta], [x, y]\rangle_{ \mathfrak{b}_{1,2}^{\pm}, \mathfrak{u}_{\res}}& = +\langle y, [\ad^*_x\alpha, \beta]\rangle_{ \mathfrak{b}_{1,2}^{\pm}, \mathfrak{u}_{\res}} +\langle y, [\alpha, \ad^*_x\beta]\rangle_{ \mathfrak{b}_{1,2}^{\pm}, \mathfrak{u}_{\res}} \\ & \quad-\langle x, [\ad^*_y\alpha, \beta]\rangle_{ \mathfrak{b}_{1,2}^{\pm}, \mathfrak{u}_{\res}} - \langle x, [\alpha, \ad^*_y\beta]\rangle_{ \mathfrak{b}_{1,2}^{\pm}, \mathfrak{u}_{\res}}.  
\end{array}
\end{equation}
The first  term in the RHS reads
$$
+\langle y, [\ad^*_x\alpha, \beta]\rangle_{ \mathfrak{b}_{1,2}^{\pm}, \mathfrak{u}_{\res}} = \Im \Tr y [p_{\mathfrak{u}_2^\pm}([\alpha, x]), \beta] = \Im \Tr [\beta, y] p_{\mathfrak{u}_2^\pm}([\alpha, x]).
$$
Using the fact that $[\beta, y]\in L_2(\mathcal{H})$, and that $\mathfrak{u}_2(\mathcal{H})\subset L_2(\mathcal{H})$ and $\mathfrak{b}_2^\pm(\mathcal{H})\subset L_2(\mathcal{H})$ are isotropic subspaces with respect to the pairing given by the imaginary part of the trace, one has
$$
+\langle y, [\ad^*_x\alpha, \beta]\rangle_{ \mathfrak{b}_{1,2}^{\pm}, \mathfrak{u}_{\res}} = \Im \Tr p_{\mathfrak{b}_2^\pm}([\beta, y]) p_{\mathfrak{u}_2^\pm}([\alpha, x]).
$$
Similarly the second, third and last term in the RHS of equation \eqref{ures-b-cocycle} read respectively
$$
\begin{array}{l}
+\langle y, [\alpha, \ad^*_x\beta]\rangle_{ \mathfrak{b}_{1,2}^{\pm}, \mathfrak{u}_{\res}} = \Im \Tr p_{\mathfrak{b}_2^\pm}([y, \alpha]) p_{\mathfrak{u}_2^\pm}([\beta, x]),\\
-\langle x, [\ad^*_y\alpha, \beta]\rangle_{ \mathfrak{b}_{1,2}^{\pm}, \mathfrak{u}_{\res}} = -\Im \Tr p_{\mathfrak{b}_2^\pm}([\beta, x]) p_{\mathfrak{u}_2^\pm}([\alpha, y]),\\
- \langle x, [\alpha, \ad^*_y\beta]\rangle_{ \mathfrak{b}_{1,2}^{\pm}, \mathfrak{u}_{\res}} = -\Im \Tr p_{\mathfrak{b}_2^\pm}([x, \alpha]) p_{\mathfrak{u}_2^\pm}([\beta, y]).
\end{array}
$$
Using once more the fact that $\mathfrak{u}_2(\mathcal{H})\subset L_2(\mathcal{H})$ and $\mathfrak{b}_2^\pm(\mathcal{H})\subset L_2(\mathcal{H})$ are isotropic subspaces with respect to the pairing given by the imaginary part of the trace, it follows that the first and last term in the RHS of equation~\eqref{ures-b-cocycle} sum up to give
$$
+\langle y, [\ad^*_x\alpha, \beta]\rangle_{ \mathfrak{b}_{1,2}^{\pm}, \mathfrak{u}_{\res}} - \langle x, [\alpha, \ad^*_y\beta]\rangle_{ \mathfrak{b}_{1,2}^{\pm}, \mathfrak{u}_{\res}} =- \Im \Tr [\beta, y] [x, \alpha],
$$
and the second and third term in equation~\eqref{ures-b-cocycle} simplify to
$$
+\langle y, [\alpha, \ad^*_x\beta]\rangle_{ \mathfrak{b}_{1,2}^{\pm}, \mathfrak{u}_{\res}} -\langle x, [\ad^*_y\alpha, \beta]\rangle_{ \mathfrak{b}_{1,2}^{\pm}, \mathfrak{u}_{\res}} =- \Im \Tr [\beta, x] [\alpha, y].
$$
Developping the brackets and using that, for $A$ and $B$ bounded such that $A B$ and $BA$ are trace class, one has $\Tr AB = \Tr BA$, the RHS of equation~\eqref{ures-b-cocycle} becomes
$$
\begin{array}{ll}
\Im \Tr [\beta, y] [x, \alpha] + \Im \Tr [\beta, x] [\alpha, y] &= \Im \Tr (-\beta y x \alpha - y \beta \alpha x + \beta x y \alpha + x \beta \alpha y) \\&= \Im \Tr (x y \alpha \beta - x y \beta \alpha - y x \alpha \beta + y x \beta \alpha) \\&= \Im \Tr [x, y] [\alpha, \beta] \\ & = \langle [x, y], [\alpha, \beta]\rangle_{\mathfrak{b}_{1,2}^{\pm}, \mathfrak{u}_{\res}},
\end{array}
$$
hence $\theta$ satisfies the cocycle condition.

One can show in a similar way that 
the Lie algebra structure $[\cdot, \cdot]_{\mathfrak{b}^\pm_{\res}}$ on $\mathfrak{b}_{\res}^\pm(\mathcal{H})$  is such that 
 \begin{enumerate}
 \item $\mathfrak{u}_{1, 2}(\mathcal{H})$ acts continuously by coadjoint action on $\mathfrak{b}_{\res}^\pm(\mathcal{H})$; 
 \item the dual map $[\cdot, \cdot]_{\mathfrak{b}_{\res}^\pm}^*~: \mathfrak{b}_{\res}^\pm(\mathcal{H})^{*}\rightarrow \Lambda^2\mathfrak{b}_{\res}^\pm(\mathcal{H})^*$  to the Lie bracket $[\cdot, \cdot]_{\mathfrak{b}_{\res}^\pm}~:\mathfrak{b}_{\res}^\pm(\mathcal{H})\times\mathfrak{b}_{\res}^\pm(\mathcal{H})\rightarrow \mathfrak{b}_{\res}^\pm(\mathcal{H})$ restricts to
  a $1$-cocycle $\theta~: \mathfrak{u}_{1, 2}(\mathcal{H})\rightarrow \Lambda^2\mathfrak{b}_{\res}^\pm(\mathcal{H})^*$ with respect to the adjoint action $\ad^{(2,0)}$ of $\mathfrak{u}_{1, 2}(\mathcal{H})$ on  $\Lambda^2\mathfrak{b}_{\res}^\pm(\mathcal{H})^*.$ 
   \end{enumerate} 
   \end{itemize}
\end{proof}

\subsection{Unbounded coadjoint actions}\label{no_Manin}

Recall  that for $1<p<\infty$ and $q:= \frac{p}{p-1}$, $\mathfrak{u}_p(\mathcal{H})$ and $\mathfrak{b}_q^\pm(\mathcal{H})$ are dual Banach Lie--Poisson spaces (see Example~\ref{exIwasawa}), and that the coadjoint actions are given by $$\ad^*_{\alpha} x = p_{\mathfrak{u}_p,\pm}\left([x, \alpha]\right)\quad\textrm{ and   }\quad
\ad^*_{x}\alpha = p_{\mathfrak{b}_q^\pm}\left([\alpha, x]\right),
$$
where $x\in\mathfrak{u}_p(\mathcal{H})$ and  $\alpha\in \mathfrak{b}^\pm_q(\mathcal{H})$. 
In this example,  the continuity of the triangular truncation $T_+$ 
on $\operatorname{L}_p(\mathcal{H})$ and $\operatorname{L}_q(\mathcal{H})$ (see Section~\ref{Triangular truncations of operators}) is crucial in order to define the orthogonal projections $p_{\mathfrak{u}_p,\pm}$ and $p_{\mathfrak{b}_q^\pm}$ using equations~\eqref{projectionu+} and~~\eqref{projectionu-}. 

The situation is different for the Banach Lie algebras $\mathfrak{u}_{1, 2}(\mathcal{H})$ and $\mathfrak{b}_{\res}^\pm(\mathcal{H})$. We will show that $\mathfrak{u}_{1, 2}(\mathcal{H})$  is not a Banach Lie--Poisson space with respect to  $\mathfrak{b}_{\res}^\pm(\mathcal{H})$ since the coadjoint action of $\mathfrak{b}_{\res}^\pm(\mathcal{H})$ on $\mathfrak{u}_{1, 2}(\mathcal{H})$ is unbounded. 
To prove this result, we will use the fact that the triangular truncation is unbounded on the space of trace class operators.
%(cf Proposition \ref{triangular_troncation}). %In order to have a concrete example in mind, we present in Example~\ref{ex1} an example of trace class operator whose triangular truncation is Hilbert-Schmidt but not trace class. With this example at hand, 
%We construct in Example~\ref{ex2} a sequence of brackets $[x_n, y]$ between elements $x_n\in\mathfrak{u}_{1, 2}(\mathcal{H})$ and an element $y\in\mathfrak{b}_{\res}^\pm(\mathcal{H})$ such that the diagonal blocks of $T_+[x_n, y]$ are not trace class. 
%This allows to conclude in Lemma~\ref{lemma_coadjoint_ext}  that the coadjoint action of $\mathfrak{b}_{\res}^\pm(\mathcal{H})$ on $\mathfrak{u}_{1, 2}(\mathcal{H})$ is unbounded. 
 In a similar way, the coadjoint action of 
 $\mathfrak{u}_{\res}(\mathcal{H})$ on %using Example~\ref{ex3} instead, 
 $\mathfrak{b}_{1, 2}^+(\mathcal{H})$ is unbounded (see also \cite{ABT3})
 Using Theorem~\ref{bialgebra_to_manin}, we conclude that there is no Banach Manin triple associated to the pair $(\mathfrak{u}_{1, 2}(\mathcal{H}), \mathfrak{b}_{\res}^\pm(\mathcal{H}))$ nor to the pair $(\mathfrak{b}_{1, 2}^+(\mathcal{H}), \mathfrak{u}_{\res}(\mathcal{H}))$  for the duality pairing given by the imaginary part of the restricted trace (see Theorem~\ref{ccl} below).

%
%\begin{example}\label{ex1}{\rm
%Let $\mathcal{K}$ be any separable Hilbert space endowed with an ordered basis $\{|k_i\rangle, i\in I\}$. Consider the corresponding triangular truncation $T_+~:L_2(\mathcal{K})\rightarrow L_2(\mathcal{K})$. 
%Let $K\in L_1(\mathcal{K})$, with $K^* = K$ and $T_+(K)\notin L_2(\mathcal{K})$.
%}
%\end{example}

\begin{proposition}\label{ex2}
There exist a bounded sequence of elements $x_n\in\mathfrak{u}_{1, 2}(\mathcal{H})$ and an element $y\in\mathfrak{b}_{\res}^\pm(\mathcal{H})$ such that 
 $$\|T_+([x_n, y]|_{\mathcal{H}_+})\|_1 \rightarrow +\infty.$$
\end{proposition}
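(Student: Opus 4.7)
The plan is to exploit the unboundedness of the triangular truncation $T_+$ on the trace class $L_1(\mathcal{H}_+)$, as recalled in Remark~\ref{prendre_triangle}. Writing operators in block form with respect to $\mathcal{H}=\mathcal{H}_+\oplus\mathcal{H}_-$, elements of $\mathfrak{u}_{1,2}(\mathcal{H})$ and $\mathfrak{b}_{\res}^+(\mathcal{H})$ have the shape
$$
\begin{pmatrix} a & b \\ -b^* & d \end{pmatrix} \in\mathfrak{u}_{1,2}(\mathcal{H}),\qquad
\begin{pmatrix} e & f \\ 0 & g \end{pmatrix}\in\mathfrak{b}_{\res}^+(\mathcal{H}),
$$
and I would specialise to purely off-diagonal elements, namely $x_n=\left(\begin{smallmatrix} 0 & b_n \\ -b_n^* & 0 \end{smallmatrix}\right)$ and $y=\left(\begin{smallmatrix} 0 & f \\ 0 & 0 \end{smallmatrix}\right)$ with $b_n,f\in L_2(\mathcal{H}_-,\mathcal{H}_+)$. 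A direct verification shows that $y\in\mathfrak{b}_{\res}^+(\mathcal{H})$ (its diagonal vanishes and $[d,y]=2iy\in L_2$), and a block computation yields
$$
p_+[x_n,y]\,p_+ = f b_n^*,\qquad \|x_n\|_{\mathfrak{u}_{1,2}}=2\sqrt{2}\,\|b_n\|_2,\qquad \|y\|_{\mathfrak{b}_{\res}^+}\leq 3\|f\|_2.
$$
The problem thus reduces to finding a fixed $f\in L_2(\mathcal{H}_-,\mathcal{H}_+)$ together with a sequence $(b_n)\subset L_2(\mathcal{H}_-,\mathcal{H}_+)$ of uniformly bounded norm such that $\|T_+(fb_n^*)\|_1\to+\infty$.

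Next I would combine the polar decomposition with the unboundedness of $T_+$ on $L_1$ to get a bilinear unboundedness statement. Fix an isometric isomorphism $W\colon\mathcal{H}_-\to\mathcal{H}_+$. For any $A\in L_1(\mathcal{H}_+)$ with polar decomposition $A=U|A|$, set $\tilde f= U|A|^{1/2}W$ and $\tilde g= |A|^{1/2}W$; then $\tilde f,\tilde g\in L_2(\mathcal{H}_-,\mathcal{H}_+)$ with $\|\tilde f\|_2=\|\tilde g\|_2=\|A\|_1^{1/2}$ and $\tilde f\tilde g^*=A$. By Remark~\ref{prendre_triangle} there exists $A_n\in L_1(\mathcal{H}_+)$ with $\|A_n\|_1\leq 1$ and $\|T_+A_n\|_1\to+\infty$; factoring $A_n=\tilde f_n\tilde g_n^*$ as above shows that the bilinear quantity
$$
\sup\bigl\{\,\|T_+(fg^*)\|_1\,:\,f,g\in L_2(\mathcal{H}_-,\mathcal{H}_+),\ \|f\|_2,\|g\|_2\leq 1\,\bigr\}=+\infty.
$$

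The last and main step is to extract a \emph{fixed} $f$ by a closed-graph plus uniform-boundedness argument. Suppose for contradiction that for every $f\in L_2(\mathcal{H}_-,\mathcal{H}_+)$ one has $N(f):=\sup_{\|g\|_2\leq 1}\|T_+(fg^*)\|_1<+\infty$; in particular $T_+(fg^*)\in L_1$ for all $f,g$. For fixed $g$, consider the everywhere-defined linear map $T_g\colon L_2\to L_1$, $f\mapsto T_+(fg^*)$. Its graph is closed: if $f_n\to f$ in $L_2$ and $T_g(f_n)\to A$ in $L_1$, then $f_ng^*\to fg^*$ in $L_1$, hence in $L_2$ via the continuous embedding $L_1\hookrightarrow L_2$; since $T_+$ is bounded on $L_2$ one gets $T_+(f_ng^*)\to T_+(fg^*)$ in $L_2$, forcing $T_+(fg^*)=A$. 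The closed graph theorem then shows $T_g$ is bounded, and since the family $\{T_g:\|g\|_2\leq 1\}$ is pointwise bounded by $N(f)$, Banach--Steinhaus yields an $M$ with $\|T_+(fg^*)\|_1\leq M\|f\|_2\|g\|_2$, contradicting the previous paragraph. Therefore there exists $f$ with $N(f)=+\infty$, and a corresponding sequence $(b_n)$ with $\|b_n\|_2\leq 1$ and $\|T_+(fb_n^*)\|_1\to+\infty$; setting $y$ and $x_n$ as in the first paragraph finishes the proof. The main obstacle is precisely this closed graph plus uniform boundedness step: one must carefully juggle the fact that $T_+$ is bounded on $L_2$ but unbounded on $L_1$, and that $L_1\hookrightarrow L_2$ continuously, in order to legitimately apply both theorems.
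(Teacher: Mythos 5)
Your proof is correct, but it follows a genuinely different route from the paper. The paper is fully constructive: it splits $\mathcal{H}_+$ into even and odd parts, takes trace-class Hermitian $K_n$ on $\mathcal{H}_+^{\mathrm{odd}}$ with $\|K_n\|_1\leq 1$ and $\|T_+(K_n)\|_1>n$, builds $x_n$ as an operator supported in $\mathcal{H}_+$ with off-diagonal blocks $uK_n$ (where $u$ is the shift-type isometry from odd to even), and takes $y=u$ itself; the commutator then lands block-diagonally and the estimate $\|T_+([x_n,y]|_{\mathcal{H}_+})\|_1=2\|T_+(K_n)\|_1$ is immediate. You instead put the Hilbert--Schmidt data in the off-diagonal corners with respect to $\mathcal{H}_+\oplus\mathcal{H}_-$, reduce the problem to the bilinear statement that $(f,g)\mapsto T_+(fg^*)$ admits no bound $\|T_+(fg^*)\|_1\leq M\|f\|_2\|g\|_2$, prove that via the factorization $A=\tilde f\tilde g^*$ of trace-class operators through Hilbert--Schmidt ones, and then extract a \emph{fixed} $f$ by closed graph plus Banach--Steinhaus. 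Both arguments rest on the same input (unboundedness of the triangular truncation on trace class), but yours is soft and non-constructive (no explicit $y$ or $x_n$), while the paper's explicit choice pays off later: Lemma~\ref{lemma_coadjoint_ext} reuses exactly these $x_n,y$ and exploits the fact that $K_n$ is Hermitian to compute $p_{\mathfrak{u}_2^+}([x_n,y])$. Your functional-analytic steps are sound: under the contradiction hypothesis $T_g\colon f\mapsto T_+(fg^*)$ is everywhere defined from $L_2$ to $L_1$, its graph is closed (via the continuous embedding $L_1\hookrightarrow L_2$ and boundedness of $T_+$ on $L_2$), and pointwise boundedness is exactly the hypothesis, so uniform boundedness applies.

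One small point at the very end deserves a remark. From the contradiction you only get $N(f)=+\infty$ for some $f$; this could a priori happen because a single $g$ with $\|g\|_2\leq 1$ satisfies $T_+(fg^*)\notin L_1$, rather than through finite values blowing up (note the paper records as open whether a trace-class operator can have non-trace-class triangular part, so this case cannot be excluded). Either adopt the convention $\|B\|_1=\Tr\bigl((B^*B)^{1/2}\bigr)\in[0,+\infty]$, under which the constant sequence $b_n=g$ already satisfies the conclusion, or patch it in one line: set $b_n:=R_ng$ with $R_n$ the orthogonal projections onto the spans of finitely many basis vectors of $\mathcal{H}_+$; then $fb_n^*=fg^*R_n$, the identity $T_+(fg^*R_n)=T_+(fg^*)R_n$ holds because masking columns commutes with triangular truncation, each term is finite rank hence trace class, and $\|T_+(fg^*)R_n\|_1\to+\infty$ by lower semicontinuity of the trace norm, since $T_+(fg^*)R_n\to T_+(fg^*)\notin L_1$.
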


\begin{proof}
Consider the Hilbert space $\mathcal{H} = \mathcal{H}_+\oplus\mathcal{H}_-$, with orthonormal basis $\{|n\rangle, n\in\mathbb{Z}\}$ ordered with respect to decreasing values of $n$,  where $\mathcal{H}_+ = \textrm{span}\{|n\rangle, n>0\}$ and $\mathcal{H}_- = \textrm{span}\{|n\rangle, n\leq 0\}$.
Furthermore decompose $\mathcal{H}_+$ into the Hilbert sum of $\mathcal{H}_+^{\textrm{even}} := \textrm{span}\{|2n+2\rangle, n\in\mathbb{N}\}$ and $\mathcal{H}_+^{\textrm{odd}} := \textrm{span}\{|2n+1\rangle, n\in\mathbb{N}\}$. We will denote by $u~:\mathcal{H}_+^{\textrm{odd}}\rightarrow\mathcal{H}_+^{\textrm{even}}$  the unitary operator defined by
$u|2n+1\rangle = |2n+2\rangle$.

Since the triangular truncation is not bounded on the Banach space of trace class operators, there exists a sequence $K_n\in L_1(\mathcal{H}_+^{\textrm{odd}})$ such that $\|K_n\|_1\leq 1$ and $\|T_+(K_n)\|_1 > n$ for all $n\in \mathbb{N}$. It follows that either $\|T_+(K_n+K_n^*)/2\|_1>n/2$ or $\|T_+(K_n-K_n^*)/2\|_1>n/2$. Modulo the extraction of a  subsequence, we can suppose that $K_n$ is either Hermitian $K_n = K_n^*$ or skew-Hermitian $K_n = - K_n^*$. Moreover, since the triangular truncation is complex linear, the existence of a sequence of skew-Hermitian operators such that $\|K_n\|_1\leq 1$ and $\|T_+(K_n)\|_1>n/2$ implies that the sequence $iK_n$ is a sequence of Hermitian operators such that $\|i K_n\|_1 \leq 1$ and $\|T_+(iK_n)\|_1>n/2$. Therefore without loss of generality we can suppose that $K_n$ are Hermitian.

Consider the bounded operators $x_n$ defined by $0$ on $\mathcal{H}_-$,  preserving $\mathcal{H}_+$ and whose expression with respect to the decomposition $\mathcal{H}_+ = \mathcal{H}_+^{\textrm{even}} \oplus \mathcal{H}_+^{\textrm{odd}}$ reads
\begin{equation}\label{x}
x_n|_{\mathcal{H}_+}=\left(\begin{array}{cc} 0 & u K_n\\ -K_n^* u^* & 0\end{array}\right).
\end{equation}
By construction, $x_n$ is skew-Hermitian.
The restriction of $x_n^*x_n$ to $\mathcal{H}_+$ decomposes as follows with respect to $\mathcal{H}_+ = \mathcal{H}_+^{\textrm{even}} \oplus \mathcal{H}_+^{\textrm{odd}}$,
$$
x_n^*x_n|_{\mathcal{H}_+} = \left(\begin{array}{cc} u K_n^* K_n  u^* & 0\\ 0 & K_n^* K_n\end{array}\right),
$$
therefore
$x_n$ belongs to $\mathfrak{u}_{1, 2}(\mathcal{H})$ since the singular values of $x_n$ are the singular values of $K_n$ but with doubled multiplicities. Moreover $\|x_n\|_1\leq 2$.% (recall that $K K^*$ and $K^*K$ have the same eigenvalues)

Now  let $y~:\mathcal{H}\rightarrow \mathcal{H}$ be the bounded linear  operator defined by $0$ on $\mathcal{H}_+^{\textrm{even}}$, by $0$  on $\mathcal{H}_-$, and by $y =u$ on $\mathcal{H}_+^{\textrm{odd}}$. Remark that $y$ belongs to $\mathfrak{b}_{\res}^+(\mathcal{H})$. Since $x_n$ and $y$ vanish on $\mathcal{H}_-$ and preserve $\mathcal{H}_+$, one has

$$[x_n, y] = \left(\begin{array}{cc} [x_n, y]|_{\mathcal{H}_+} & 0\\ 0 & 0\end{array}\right),$$
where the operators $[x_n, y]|_{\mathcal{H}_+}$ have the following expression with respect to the decomposition $\mathcal{H}_+ = \mathcal{H}_+^{\textrm{even}} \oplus \mathcal{H}_+^{\textrm{odd}}$,
$$
[x_n, y]|_{\mathcal{H}_+} =  \left(\begin{array}{cc} u K_n^* u^* & 0\\ 0 & -K_n^*\end{array}\right).
$$
It follows that 
\begin{equation}\label{Tn}
\|T_+([x_n, y]|_{\mathcal{H}_+})\|_1 = 2\|T_+(K_n)\|_1\geq n,
\end{equation}
 hence $\|T_+([x_n, y]|_{\mathcal{H}_+})\|_1 \rightarrow +\infty$.
\end{proof}

\begin{lemma}\label{lemma_coadjoint_ext} 
Let $x_n\in\mathfrak{u}_{1,2}(\mathcal{H})$ and $y\in \mathfrak{b}_{\res}^+(\mathcal{H})$ be as in the proof of Proposition~\ref{ex2}. Then $\|x_n\|_{\mathfrak{u}_{1,2}}\leq 2$ but $\|\ad^*_y x_n \|_{\mathfrak{u}_{1,2}}\rightarrow + \infty$. Consequently the coadjoint action of  $\mathfrak{b}_{\res}^+(\mathcal{H})$ on $\mathfrak{u}_{1, 2}(\mathcal{H})$ is unbounded.
\end{lemma}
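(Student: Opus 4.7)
The plan is to combine the explicit sequence from Proposition~\ref{ex2} with the coadjoint-action formula derived from the $L_2$-Manin triple $(L_2(\mathcal{H}),\mathfrak{u}_2(\mathcal{H}),\mathfrak{b}_2^+(\mathcal{H}))$ of Proposition~\ref{triples}. First I would verify the uniform bound $\|x_n\|_{\mathfrak{u}_{1,2}}\le 2$: each $x_n$ defined by \eqref{x} vanishes on $\mathcal{H}_-$ and preserves $\mathcal{H}_+$, so $[d,x_n]=0$ and $p_- x_n|_{\mathcal{H}_-}=0$; the block expression with respect to $\mathcal{H}_+ = \mathcal{H}_+^{\mathrm{even}} \oplus \mathcal{H}_+^{\mathrm{odd}}$, together with $\|K_n\|_1\le 1$ and the unitarity of $u$, gives $\|p_+ x_n|_{\mathcal{H}_+}\|_1 = 2\|K_n\|_1 \le 2$.

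Next, I would identify the coadjoint action. Using the pairing $\langle\cdot,\cdot\rangle_{\mathfrak{b}_{\res}^+,\mathfrak{u}_{1,2}}$ of Proposition~\ref{duality_U_b} and the cyclicity of $\Tr_{\res}$ for $L_{1,2}\cdot L_{\res}$-products (Proposition~2.1 in \cite{GO10}), for every $z\in\mathfrak{b}_{\res}^+$ one has
\[
(\ad^*_y x_n)(z) \;=\; \Im\Tr_{\res}(x_n[y,z]) \;=\; \Im\Tr_{\res}([x_n,y]\,z);
\]
all intermediate products lie in $L_{1,2}$ because $[x_n,y]$ is supported in the $(\mathcal{H}_+,\mathcal{H}_+)$-block with trace class restriction. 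Since $[x_n,y]\in L_2(\mathcal H)$, Lemma~\ref{decub} applied at $p=2$ splits $[x_n,y] = p_{\mathfrak{u}_2^+}([x_n,y]) + p_{\mathfrak{b}_2^+}([x_n,y])$, and the second summand pairs trivially with $z\in\mathfrak{b}_{\res}^+$ because the product of two upper triangular operators with real diagonals has real diagonal entries. Hence $\ad^*_y x_n$ is represented in the weak duality by $\tilde a_n := p_{\mathfrak{u}_2^+}([x_n,y])$, which lies in $\mathfrak{u}_{1,2}$ since the triangular truncations in \eqref{projectionu+} preserve the $(\mathcal H_+,\mathcal H_+)$-block structure of $[x_n,y]|_{\mathcal H_+}\in L_1(\mathcal H_+)$ and since $T_{--}(K_n) = K_n - T_+(K_n)\in L_1(\mathcal H_+)$.

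Finally, I would estimate $\|\tilde a_n\|_{\mathfrak{u}_{1,2}}$ from below. The explicit form of $[x_n,y]|_{\mathcal{H}_+}$ from the proof of Proposition~\ref{ex2} is Hermitian with real diagonal, so formula~\eqref{projectionu+} collapses to $\tilde a_n|_{\mathcal H_+} = T_{--}([x_n,y]|_{\mathcal H_+}) - T_{++}([x_n,y]|_{\mathcal H_+})$. Rearranging against the identity $T_{--}+T_{++}+D=\mathrm{Id}$ yields
\[
2\,T_{--}([x_n,y]|_{\mathcal H_+}) \;=\; \tilde a_n|_{\mathcal H_+} + [x_n,y]|_{\mathcal H_+} - D([x_n,y]|_{\mathcal H_+}),
\]
and combining with $\|T_{--}(K_n)\|_1 \ge \|T_+(K_n)\|_1 - \|K_n\|_1 \ge n-1$ together with the block-diagonal form of $[x_n,y]|_{\mathcal H_+}$ (whose two blocks have the same matrix as $\pm K_n$ in a relabelled basis) one obtains $\|\tilde a_n|_{\mathcal H_+}\|_1 \ge 4n - 8$. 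Since $\|\ad^*_y x_n\|_{\mathfrak{u}_{1,2}} \ge \|p_+\tilde a_n|_{\mathcal H_+}\|_1$, this establishes the blow-up, and the unboundedness of the coadjoint action follows at once.

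The main obstacle is the identification step: one must handle carefully the interplay between the Schatten classes $L_1$, $L_2$, $L_{1,2}$ and $L_{\res}$ in order to justify cyclicity and to show that the functional $\ad^*_y x_n\in(\mathfrak{b}_{\res}^+)^*$ is genuinely represented by the $L_2$-Iwasawa projection as an element of $\mathfrak{u}_{1,2}$. Once this representation is available, the lower bound is an immediate consequence of the estimate \eqref{Tn} on $\|T_+([x_n,y]|_{\mathcal H_+})\|_1$ from Proposition~\ref{ex2}.
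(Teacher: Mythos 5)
Your argument is correct and follows essentially the same route as the paper: you identify $\ad^*_y x_n$ with $p_{\mathfrak{u}_2^+}([x_n,y])$ via the Hilbert--Schmidt Iwasawa decomposition, exploit that $[x_n,y]|_{\mathcal{H}_+}$ is Hermitian to collapse the projection to triangular truncations, and then combine the lower bound on $\|T_+([x_n,y]|_{\mathcal{H}_+})\|_1$ from Proposition~\ref{ex2} with the $L_1$-boundedness of the diagonal truncation. One small point: when discarding $p_{\mathfrak{b}_2^+}([x_n,y])$ you should test only against $z\in\mathfrak{b}_2^+(\mathcal{H})$ (or finite-rank upper triangular operators), as the paper does, because for a general bounded $z\in\mathfrak{b}_{\res}^+(\mathcal{H})$ the product with an operator that is merely Hilbert--Schmidt need not be trace class, so that pairing is not defined; this is a cosmetic fix, and your constants differ from the paper's ($4n-8$ versus $2n-4$) only through the factor-two bookkeeping in the Hermitization step of Proposition~\ref{ex2}.
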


\begin{proof}
Consider the linear forms $\alpha_n$ on $\mathfrak{b}_{\res}^+(\mathcal{H})$ given by $\alpha_n(A) =  \Im \Tr x_n A$ for  $x_n\in \mathfrak{u}_{1,2}(\mathcal{H})$ defined by \eqref{x}. Then
the linear forms $\beta_n =  -\ad^*_y \alpha_n$ read 
$$
\beta_n(A) = -\ad^*_y \alpha_n(A) = -\alpha_n(\ad_y A) = - \alpha_n([y, A]) = -\Im \Tr x [y, A] = -\Im \Tr (x y A -  x A y).
$$
According to Proposition~2.1 in \cite{GO10}, one has $\Tr x A y = \Tr y x A$, therefore $$\beta_n(A) = -\Im \Tr [x_n, y] A.$$
The unique skew-symmetric operator $T_n$ such that $-\Im \Tr T_n A = -\Im \Tr [x_n, y] A$ for any $A$ in the subspace $\mathfrak{b}_2^+(\mathcal{H})$ of $\mathfrak{b}_{\res}^+(\mathcal{H})$ is 
$$
\begin{array}{ll}
T_n & = p_{\mathfrak{u}_2^+}([x_n, y]) = T_{--}([x_n, y]) -T_{--}([x_n, y])^* + \frac{1}{2}\left(D([x_n, y]) - D([x_n, y])^*\right)\\
%& = [x_n, y] - [x_n, y]^* - \left( T_{+}([x_n, y]) - T_{+}([x_n, y])^*\right) + \frac{1}{2}\left(D([x_n, y]) - D([x_n, y])^*\right).
\end{array}
$$
Since $K_n$ are Hermitian, $[x_n, y]|_{\mathcal{H}_+}$ are Hermitian and we get
$$
T_n   = [x_n, y] - 2  T_{+}([x_n, y]) + D([x_n, y]).
$$
In particular,
$$
2  T_{+}([x_n, y]) = T_n - [x_n, y] - D([x_n, y]).
$$
By equation~\eqref{Tn}, $2  T_{+}([x_n, y]) \geq 2n.$
Therefore
$$
\| T_n\|_{\mathfrak{u}_{1,2}} + \|[x_n, y]\|_{\mathfrak{u}_{1,2}} + \|D([x_n, y])\|_{\mathfrak{u}_{1,2}} \geq \| T_n - [x_n, y] - D([x_n, y])\|_{\mathfrak{u}_{1,2}} \geq 2n,
$$
for all $n \in \mathbb{N}$, and
$$
\| T_n\|_{\mathfrak{u}_{1,2}} \geq 2n - 2 - \|D([x_n, y])\|_{\mathfrak{u}_{1,2}}.
$$
The operator $D$ consisting in taking the diagonal is bounded in $L_1(\mathcal{H})$ with operator norm less than $1$
(see Theorem~1.19 in \cite{Sim79} or \cite{GK70} page 134),
therefore
$$
\| T_n\|_{\mathfrak{u}_{1,2}} > 2n- 4.
$$
It follows that $\|-\ad^*_y \alpha_n\|_{\mathfrak{u}_{1,2}} = \| T_n\|_{\mathfrak{u}_{1,2}}  \rightarrow + \infty.$

\end{proof}

Using the same kind of arguments (see also \cite{ABT3}), we have~:
\begin{lemma}\label{lemma_coadjoint_ext2} 
The coadjoint action of 
 $\mathfrak{u}_{\res}(\mathcal{H})$ on 
 $\mathfrak{b}_{1, 2}^+(\mathcal{H})$ is unbounded.
 \end{lemma}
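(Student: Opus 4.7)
The plan is to mirror the argument of Lemma~\ref{lemma_coadjoint_ext}, exchanging the roles of the unitary algebra and the Borel subalgebra; the driving mechanism is once again the unboundedness of the triangular truncation on $L_1(\mathcal{H})$. I would first derive the explicit coadjoint formula: for $y\in\mathfrak{u}_{\res}(\mathcal{H})$ and $\alpha\in\mathfrak{b}^+_{1,2}(\mathcal{H})$, viewed as a continuous linear form on $\mathfrak{u}_{\res}(\mathcal{H})$ via the pairing $x\mapsto\Im\Tr_{\res}(\alpha x)$ of Proposition~\ref{duality_U_b}, cyclicity of the restricted trace gives
$$(-\ad^*_y\alpha)(x)=-\alpha([y,x])=-\Im\Tr_{\res}\bigl([\alpha,y]\,x\bigr)$$
for every $x\in\mathfrak{u}_{\res}(\mathcal{H})$. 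The unique element of $\mathfrak{b}^+_{1,2}(\mathcal{H})$ representing this linear form is $p_{\mathfrak{b}_2^+}([\alpha,y])$, where $p_{\mathfrak{b}_2^+}=\id-p_{\mathfrak{u}_2,+}$ is the projection attached to the $L_2$-Iwasawa decomposition of Lemma~\ref{decub}; by formula~\eqref{projectionu+}, $p_{\mathfrak{b}_2^+}$ is expressed in terms of the triangular truncations $T_{-},T_{--}$ and the diagonal operator $D$, so any pathology of $T_{-}$ on $L_1$ will be inherited by the coadjoint action.

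Next I would reuse the construction of Proposition~\ref{ex2}, essentially swapping the roles of $\alpha$ and $y$. Starting from a sequence of Hermitian $K_n\in L_1(\mathcal{H}_+^{\mathrm{odd}})$ with $\|K_n\|_1\le 1$ and $\|T_+(K_n)\|_1>n/2$, and from the unitary $u:\mathcal{H}_+^{\mathrm{odd}}\to\mathcal{H}_+^{\mathrm{even}}$, I would exhibit an upper triangular operator $\alpha_n\in\mathfrak{b}^+_{1,2}(\mathcal{H})$ whose $\|\cdot\|_{1,2}$-norm stays uniformly bounded, together with a fixed skew-Hermitian $y\in\mathfrak{u}_{\res}(\mathcal{H})$ adapted to the splitting $\mathcal{H}=\mathcal{H}_+^{\mathrm{even}}\oplus\mathcal{H}_+^{\mathrm{odd}}\oplus\mathcal{H}_-$, chosen so that $[\alpha_n,y]$ has a Hermitian block-diagonal piece on $\mathcal{H}_+$ proportional to $K_n$ on $\mathcal{H}_+^{\mathrm{odd}}$ (and to $uK_nu^*$ on $\mathcal{H}_+^{\mathrm{even}}$), in direct analogy with $[x_n,y]|_{\mathcal{H}_+}$ of Proposition~\ref{ex2}. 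Applying $p_{\mathfrak{b}_2^+}$ to this block then extracts its upper triangular part up to bounded correction terms involving $[\alpha_n,y]$ itself and $D([\alpha_n,y])$, both uniformly controlled in $\|\cdot\|_{1,2}$.

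The main obstacle is a bookkeeping issue: one must arrange $\alpha_n$ to be genuinely upper triangular in the basis $\{|m\rangle\}_{m\in\mathbb{Z}}$ while still having $K_n$ reappear in a \emph{diagonal} block of the commutator $[\alpha_n,y]$, not inside $\alpha_n$ itself (where it would be invisible to $T_+$ since $T_+(\alpha_n)=\alpha_n$). This forces $y$ to contain an ``exchange'' between $\mathcal{H}_+^{\mathrm{even}}$ and $\mathcal{H}_+^{\mathrm{odd}}$, in the spirit of the operator $y$ used in Proposition~\ref{ex2}. Once these data are in place, the concluding estimate copies verbatim the one in Lemma~\ref{lemma_coadjoint_ext}: from $\|T_+(K_n)\|_1>n/2$ and the fact that $D$ has operator norm at most $1$ on $L_1(\mathcal{H})$, one deduces a lower bound of the form $\|p_{\mathfrak{b}_2^+}([\alpha_n,y])\|_{1,2}\gtrsim n$, which combined with the uniform bound on $\|\alpha_n\|_{1,2}$ establishes the claimed unboundedness of the coadjoint action. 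The detailed computation is the one carried out in~\cite{ABT3}.
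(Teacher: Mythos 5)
Your overall reduction is the right one, and it is the one the paper itself intends (the paper gives no written proof of this lemma, deferring to ``the same kind of arguments'' and to \cite{ABT3}): identify $-\ad^*_y\alpha$, for $y\in\mathfrak{u}_{\res}(\mathcal{H})$ and $\alpha\in\mathfrak{b}^+_{1,2}(\mathcal{H})\subset\mathfrak{u}_{\res}(\mathcal{H})^*$, with $-p_{\mathfrak{b}_2^+}([\alpha,y])$ on the dense subspace $\mathfrak{u}_2(\mathcal{H})$, and then produce a bounded sequence $\alpha_n$ and a fixed $y$ for which $\|p_{\mathfrak{b}_2^+}([\alpha_n,y])\|_{1,2}\to\infty$, the engine being the unboundedness of the triangular truncation on $\operatorname{L}_1$ together with the boundedness of $D$. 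Up to the minor caveat that $p_{\mathfrak{b}_2^+}([\alpha,y])$ is a priori only in $\mathfrak{b}_2^+(\mathcal{H})$ (one argues, as in Lemma~\ref{lemma_coadjoint_ext}, that it is the only possible representative and bounds its $\|\cdot\|_{1,2}$-norm from below), this part is fine.

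The genuine gap is in the construction, and what you call a ``bookkeeping issue'' is in fact the whole difficulty: the configuration you propose -- keep the interleaved splitting $\mathcal{H}_+=\mathcal{H}_+^{\mathrm{even}}\oplus\mathcal{H}_+^{\mathrm{odd}}$ and an exchange-type $y$ built from the order-preserving unitary $u$ of Proposition~\ref{ex2} -- provably cannot produce the blow-up. Write $\alpha_n=\left(\begin{smallmatrix} A & B\\ C& E\end{smallmatrix}\right)$ in the (even,\,odd) block decomposition; upper triangularity of $\alpha_n$ forces $A$ and $E$ to be upper triangular and forces $u^*B$ (resp. $Cu$) to be upper (resp. strictly upper) triangular on $\mathcal{H}_+^{\mathrm{odd}}$. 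If $y$ exchanges the two pieces through $u$, the diagonal blocks of $[\alpha_n,y]$ are combinations of $Bu^*$, $uC$, $Cu$, $u^*B$, hence are themselves upper triangular, and the only entries of $[\alpha_n,y]$ lying strictly below the diagonal sit on the single sub-diagonal $(2m+1,2m+2)$ and are dominated by the diagonals of $A$ and $E$; consequently $\|p_{\mathfrak{b}_2^+}([\alpha_n,y])\|_{1,2}\leq C\,\|y\|_{\res}\,\|\alpha_n\|_{1,2}$ for all such data, so a Hermitian non-triangular copy of $K_n$ can never appear in a diagonal block this way, and no divergence occurs. The missing idea is a different geometry: for instance, work in blocks of \emph{consecutive} basis vectors, split each block into its upper and lower halves, and place $K_n$ in the corner mapping the lower half to the upper half -- that corner is automatically upper triangular whatever $K_n$ is, so $\|\alpha_n\|_{1,2}=\|K_n\|_1\leq 1$; taking $y$ to be the fixed block-wise skew-Hermitian exchange $\left(\begin{smallmatrix} 0 & I\\ -I & 0\end{smallmatrix}\right)$ between the halves (which preserves $\mathcal{H}_+$, so $[d,y]=0$ and $y\in\mathfrak{u}_{\res}(\mathcal{H})$) makes $[\alpha_n,y]$ block diagonal with Hermitian blocks $\mp K_n$, after which your concluding estimate, via $p_{\mathfrak{b}_2^+}(X)=2T_{++}(X)+D(X)$ for Hermitian $X$ and $\|D\|_{\operatorname{L}_1\to \operatorname{L}_1}\leq 1$, goes through exactly as in Lemma~\ref{lemma_coadjoint_ext}. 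Deferring the computation to \cite{ABT3} does not close this hole, since the specific arrangement you describe is precisely the one that fails.
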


%\todo{VIII. Read this again:}
%\begin{example}\label{ex3}{\rm
%Consider $\tilde{x}_{++} = \left(\begin{array}{cc} 0 & K\\ 0 & 0\end{array}\right)$ and $\tilde{x} =  \left(\begin{array}{cc} x_{++} & 0\\ 0 & 0  \end{array}\right) \in \mathfrak{b}_{1, 2}^+(\mathcal{H})$. Define $y_{++} = \left(\begin{array}{cc}  0 & 1\\ -1 & 0\end{array}\right)$ and $y \in \left(\begin{array}{cc} y_{++} & 0\\ 0 & 0\end{array}\right) \mathfrak{u}_{\res}(\mathcal{H})$. Then
%$$[x, y] = \left(\begin{array}{cc}\left(\begin{array}{cc} -K^* & 0\\ 0 & K \end{array}\right) & 0 \\ 0 & 0\end{array}\right)\in L_1(\mathcal{H}).$$
%}
%\end{example}

%\begin{lemma}\label{lemma_coadjoint_ext2}
%Let $x$ and $y$ be as in example \ref{ex3}. Then $\ad^*_x y$ does not belong to the subspace $\mathfrak{b}_{1, 2}^+(\mathcal{H})$ of $\mathfrak{u}_{\res}(\mathcal{H})^*$.
%\end{lemma}
%
%\begin{proof}
%One has $\ad^*_x y = T_+[x, y] \notin L_1(\mathcal{H}).$
%\end{proof}
%
%
From the previous discussion, we obtain the following theorems.

\begin{theorem}\label{ccl}
The Banach Lie algebra $\mathfrak{u}_{1, 2}(\mathcal{H})$ is not a Banach Lie--Poisson space with respect to $\mathfrak{b}_{\res}^\pm(\mathcal{H})$. Consequently there is no  Banach Manin triple structure on the triple of Banach Lie algebras $\left(\mathfrak{b}_{\res}^\pm(\mathcal{H})\oplus\mathfrak{u}_{1, 2}(\mathcal{H}), \mathfrak{b}_{\res}^\pm(\mathcal{H}), \mathfrak{u}_{1, 2}(\mathcal{H})\right)$ for the duality pairing defined in Proposition~\ref{duality_U_b}.%given by the imaginary part of the trace.
\end{theorem}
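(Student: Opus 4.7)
\medskip

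The plan is to deduce Theorem~\ref{ccl} directly from Lemma~\ref{lemma_coadjoint_ext} and from the characterization of Manin triples provided by Theorem~\ref{bialgebra_to_manin}. First, I would unwind the definition of Banach Lie--Poisson space applied to the pair $(\mathfrak{u}_{1,2}(\mathcal{H}), \mathfrak{b}_{\res}^\pm(\mathcal{H}))$: the definition requires that $\mathfrak{b}_{\res}^\pm(\mathcal{H})$ act continuously by coadjoint action on $\mathfrak{u}_{1,2}(\mathcal{H})$, meaning that for each $y \in \mathfrak{b}_{\res}^\pm(\mathcal{H})$ the map $x \mapsto \ad^*_y x$ should take $\mathfrak{u}_{1,2}(\mathcal{H})$ to itself continuously with respect to the norm $\|\cdot\|_{\mathfrak{u}_{1,2}}$.

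Next, I would invoke Lemma~\ref{lemma_coadjoint_ext}, which exhibits a fixed $y \in \mathfrak{b}_{\res}^+(\mathcal{H})$ and a bounded sequence $x_n \in \mathfrak{u}_{1,2}(\mathcal{H})$ with $\|x_n\|_{\mathfrak{u}_{1,2}} \leq 2$ but $\|\ad^*_y x_n\|_{\mathfrak{u}_{1,2}} \to +\infty$. This immediately contradicts the continuity requirement, so $\mathfrak{u}_{1,2}(\mathcal{H})$ fails to be a Banach Lie--Poisson space with respect to $\mathfrak{b}_{\res}^+(\mathcal{H})$. The same argument works verbatim for $\mathfrak{b}_{\res}^-(\mathcal{H})$ by applying the ``mirror'' construction from Proposition~\ref{ex2} with strictly lower-triangular $y$ and the corresponding skew-Hermitian $x_n$, using again the unboundedness of the triangular truncation on $\operatorname{L}_1$.

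For the second assertion, I would argue by contradiction: suppose $(\mathfrak{b}_{\res}^\pm(\mathcal{H}) \oplus \mathfrak{u}_{1,2}(\mathcal{H}), \mathfrak{b}_{\res}^\pm(\mathcal{H}), \mathfrak{u}_{1,2}(\mathcal{H}))$ carried a Banach Manin triple structure for the pairing of Proposition~\ref{duality_U_b}. Then by the implication $(2) \Rightarrow (1)$ of Theorem~\ref{bialgebra_to_manin} applied with $\mathfrak{g}_+ = \mathfrak{b}_{\res}^\pm(\mathcal{H})$ and $\mathfrak{g}_- = \mathfrak{u}_{1,2}(\mathcal{H})$ (equivalently, the implication $(2) \Rightarrow (3)$), the space $\mathfrak{u}_{1,2}(\mathcal{H})$ would be a Banach Lie--Poisson space with respect to $\mathfrak{b}_{\res}^\pm(\mathcal{H})$, contradicting what was just established. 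No step should present a real obstacle here: the whole content of the proof has been packed into the functional-analytic input of Lemma~\ref{lemma_coadjoint_ext} (itself resting on the deep fact that the triangular truncation is unbounded on $\operatorname{L}_1$), so the proof is essentially a two-line deduction from results already in hand.
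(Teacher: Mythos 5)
Your proposal is correct and follows essentially the same route as the paper's own proof: the unbounded coadjoint action exhibited in Lemma~\ref{lemma_coadjoint_ext} (and its mirror for the lower-triangular case) rules out the Banach Lie--Poisson property, and the equivalence of Theorem~\ref{bialgebra_to_manin} then excludes the Manin triple. The only cosmetic point is that with $\mathfrak{g}_+=\mathfrak{b}_{\res}^\pm(\mathcal{H})$ and $\mathfrak{g}_-=\mathfrak{u}_{1,2}(\mathcal{H})$ the relevant implication is $(2)\Rightarrow(3)$, which you correctly identify in your parenthetical.
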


\begin{proof}
The Banach space $\mathfrak{u}_{1, 2}(\mathcal{H})$ is not a Banach Lie--Poisson space with respect to  $\mathfrak{b}_{\res}^\pm(\mathcal{H})$ as a consequence of Lemma~\ref{lemma_coadjoint_ext}. 
By Theorem~\ref{bialgebra_to_manin}, there is no Banach Manin triple structure on the Banach Lie algebras $\left(\mathfrak{u}_{1, 2}(\mathcal{H})\oplus\mathfrak{b}^\pm_{\res}(\mathcal{H}), \mathfrak{u}_{1, 2}(\mathcal{H}), \mathfrak{b}_{\res}^\pm(\mathcal{H})\right)$ for the duality pairing given by the imaginary part of the restricted trace.%defined in Proposition~\ref{duality_U_b}.
\end{proof}

Along the same lines, we have the analoguous Theorem~:
\begin{theorem}\label{cc2}
The Banach Lie algebra $\mathfrak{b}_{1, 2}^\pm(\mathcal{H})$ is not a Banach Lie--Poisson space with respect to  $\mathfrak{u}_{\res}(\mathcal{H})$. Consequently there is no Banach Manin triple structure on the triple of Banach Lie algebras $\left(\mathfrak{b}_{1, 2}^\pm(\mathcal{H})\oplus\mathfrak{u}_{\res}(\mathcal{H}), \mathfrak{b}_{1, 2}^\pm(\mathcal{H}), \mathfrak{u}_{\res}(\mathcal{H})\right)$ for the duality pairing defined  in Proposition~\ref{duality_U_b}.%given by the imaginary part of the trace. 
\end{theorem}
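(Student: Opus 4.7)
The proof will mirror exactly the strategy used for Theorem~\ref{ccl}, with the roles of $\mathfrak{u}_{1,2}(\mathcal{H})$ and $\mathfrak{b}_{1,2}^\pm(\mathcal{H})$ (respectively $\mathfrak{b}_{\res}^\pm(\mathcal{H})$ and $\mathfrak{u}_{\res}(\mathcal{H})$) interchanged. First I would appeal to Lemma~\ref{lemma_coadjoint_ext2}, which asserts that the coadjoint action of $\mathfrak{u}_{\res}(\mathcal{H})$ on $\mathfrak{b}_{1,2}^\pm(\mathcal{H})$ is unbounded. Since the definition of Banach Lie--Poisson space requires the coadjoint action $\mathfrak{u}_{\res}(\mathcal{H})\times\mathfrak{b}_{1,2}^\pm(\mathcal{H})\to\mathfrak{b}_{1,2}^\pm(\mathcal{H})$ to be continuous when the target carries its own Banach norm, this unboundedness immediately rules out such a structure on $\mathfrak{b}_{1,2}^\pm(\mathcal{H})$ with respect to $\mathfrak{u}_{\res}(\mathcal{H})$.

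For the second assertion, I would invoke Theorem~\ref{bialgebra_to_manin}: a Manin triple of the form $\left(\mathfrak{b}_{1,2}^\pm(\mathcal{H})\oplus\mathfrak{u}_{\res}(\mathcal{H}), \mathfrak{b}_{1,2}^\pm(\mathcal{H}), \mathfrak{u}_{\res}(\mathcal{H})\right)$ would force $\mathfrak{b}_{1,2}^\pm(\mathcal{H})$ to be simultaneously a Banach Lie--Poisson space and a Banach Lie bialgebra with respect to $\mathfrak{u}_{\res}(\mathcal{H})$. The first of these conditions has just been excluded, so no Manin triple can exist.

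The real substance of the argument sits in Lemma~\ref{lemma_coadjoint_ext2}, whose proof should repeat the construction behind Proposition~\ref{ex2} and Lemma~\ref{lemma_coadjoint_ext} with the two spaces swapped. Concretely, I would build a bounded sequence $y_n\in\mathfrak{b}_{1,2}^\pm(\mathcal{H})$ together with a fixed $x\in\mathfrak{u}_{\res}(\mathcal{H})$ (for instance a skew-Hermitian operator involving the shift $|2n+1\rangle\mapsto|2n+2\rangle$, along the lines of~\eqref{x}) so that $\|\ad^*_x y_n\|_{\mathfrak{b}_{1,2}^\pm}\to+\infty$. The computation of $\ad^*_x y_n$ reduces, using invariance of the imaginary part of the restricted trace, to a projection of the commutator $[y_n,x]$ onto $\mathfrak{b}_{1,2}^\pm(\mathcal{H})$; the unboundedness of the triangular truncation $T_+$ on $L_1(\mathcal{H})$ (recalled in Remark~\ref{prendre_triangle}) then provides the analytic mechanism driving the divergence.

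The main potential obstacle is matching the blow-up to the correct norm: the norm $\|\cdot\|_{1,2}$ on $\mathfrak{b}_{1,2}^\pm(\mathcal{H})$ combines a trace-class requirement on the diagonal blocks $p_\pm(\cdot)|_{\mathcal{H}_\pm}$ with a Hilbert--Schmidt condition on the off-diagonal block, so the explicit operators must be tailored so that the upper-triangular piece of $[y_n,x]$ forces the diagonal-block trace norm to diverge (rather than landing in the already-controlled Hilbert--Schmidt part). Once this is arranged, the two structural deductions above become automatic applications of the results already proved in Part~1.
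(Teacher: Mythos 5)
Your proposal is correct and follows essentially the same route as the paper: the paper proves Theorem~\ref{cc2} exactly by the argument you give, namely the unboundedness of the coadjoint action of $\mathfrak{u}_{\res}(\mathcal{H})$ on $\mathfrak{b}_{1,2}^\pm(\mathcal{H})$ (Lemma~\ref{lemma_coadjoint_ext2}, itself obtained by swapping the roles of the two algebras in Proposition~\ref{ex2} and Lemma~\ref{lemma_coadjoint_ext}, with the triangular truncation's unboundedness on trace-class operators as the driving mechanism), followed by the contrapositive of Theorem~\ref{bialgebra_to_manin} to exclude the Manin triple. Your caution about tracking the divergence in the $\|\cdot\|_{1,2}$-norm is exactly the point one must handle when adapting the explicit construction.
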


\subsection{The Banach Poisson--Lie groups $\operatorname{B}_{\res}^{\pm}(\mathcal{H})$ and $\operatorname{U}_{\res}(\mathcal{H})$}\label{Bres_section}

In this Section we will construct a Banach Poisson--Lie group structure on the Banach Lie group $\operatorname{B}_{\res}^{+}(\mathcal{H})$. A similar construction can be of course carried out for the Banach Lie group $\operatorname{B}_{\res}^{-}(\mathcal{H})$ instead. Recall that the coajoint action of $\operatorname{B}_{\res}^+(\mathcal{H})$ is unbounded on  $\mathfrak{u}_{1,2}(\mathcal{H})$  (see Section~\ref{no_Manin}, in particular Lemma~\ref{lemma_coadjoint_ext}). Therefore, in order to construct a Poisson--Lie group structure on $\operatorname{B}_{\res}^{+}(\mathcal{H})$, we need a larger 
subspace of the dual $\mathfrak{b}_\res^+(\mathcal{H})^*$ which will play the role of $\mathfrak{g}_-:= \mathbb{F}_e$ (compare with Theorem~\ref{GPoisson}). Consider the following map~:
$$
\begin{array}{llll}
F~: &L_{1,2}(\mathcal{H}) &\rightarrow &\mathfrak{b}_\res^+(\mathcal{H})^*\\
& a & \mapsto & \left(b \mapsto \Im \Tr a b\right).
\end{array}
$$

\begin{proposition}
The kernel of $F$ equals $\mathfrak{b}_{1,2}^+(\mathcal{H})$, therefore $L_{1,2}(\mathcal{H})/\mathfrak{b}_{1,2}^+(\mathcal{H})$ injects into the dual space $\mathfrak{b}_\res^+(\mathcal{H})^*$. Moreover $L_{1,2}(\mathcal{H})/\mathfrak{b}_{1,2}^+(\mathcal{H})$  is preserved by the coadjoint action of $\operatorname{B}_{\res}^{+}(\mathcal{H})$ and strictly contains $\mathfrak{u}_{1,2}(\mathcal{H})$ as a dense subspace.
\end{proposition}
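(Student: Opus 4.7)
The plan is to establish the four claims of the proposition in turn. Each reduces to straightforward algebra once the analytic ingredients --- non-degeneracy by rank-one testing, the ideal property of $L_{1,2}$ in $L_{\res}$ combined with cyclicity of $\Tr_{\res}$, density of finite-rank operators in $L_{1,2}$, and unboundedness of $T_{++}$ on $L_1$ --- are in hand. For $\ker F = \mathfrak{b}_{1,2}^+(\mathcal{H})$: the inclusion $\mathfrak{b}_{1,2}^+ \subset \ker F$ follows since the product of two upper triangular operators with real diagonal is upper triangular with real diagonal entries $a_{nn}b_{nn}$, so $\Tr_{\res}ab \in \mathbb{R}$. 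Conversely, I would test $a \in \ker F$ against the rank-one operators $E_{mn} := |m\rangle\langle n|$, which lies in $\mathfrak{b}_{\res}^+$ for $m \geq n$, and $iE_{mn}$, which lies in $\mathfrak{b}_{\res}^+$ for $m > n$ (its diagonal then vanishes). From $\Tr_{\res}(aE_{mn}) = \langle n|a|m\rangle$, vanishing of the imaginary parts yields $\Im\langle n|a|m\rangle = 0$ for $m \geq n$ and $\Re\langle n|a|m\rangle = 0$ for $m > n$, so $\langle n|a|m\rangle = 0$ whenever $m > n$ and $\langle n|a|n\rangle \in \mathbb{R}$, i.e., $a \in \mathfrak{b}_{1,2}^+$. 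The injection $L_{1,2}/\mathfrak{b}_{1,2}^+ \hookrightarrow \mathfrak{b}_{\res}^+(\mathcal{H})^*$ is then the first isomorphism theorem.

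For coadjoint invariance, I would first show that $L_{1,2}$ is a two-sided ideal in $L_{\res}$. Decomposing $A \in L_{1,2}$ and $B \in L_{\res}$ with respect to $\mathcal{H} = \mathcal{H}_+ \oplus \mathcal{H}_-$, the block $(AB)_{++} = A_{++}B_{++} + A_{+-}B_{-+}$ lies in $L_1 + L_2\cdot L_2 \subset L_1$ (analogously for $(AB)_{--}$), while $[d,AB] = [d,A]B + A[d,B] \in L_2$. Combined with the cyclicity $\Tr_{\res}XY = \Tr_{\res}YX$ for $X \in L_{1,2}$, $Y \in L_{\res}$ (Proposition~2.1 in \cite{GO10}), the coadjoint image of $F(a)$ under $g \in \operatorname{B}_{\res}^+(\mathcal{H})$ is
$$b \;\mapsto\; \Im \Tr_{\res}\, a(g^{-1}bg) \;=\; \Im \Tr_{\res}(gag^{-1})b \;=\; F(gag^{-1})(b),$$
and $gag^{-1} \in L_{1,2}$ by the ideal property, so the image of $L_{1,2}/\mathfrak{b}_{1,2}^+$ in the dual is coadjoint-invariant.

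The heart of the argument concerns $\mathfrak{u}_{1,2}$. Injectivity $\mathfrak{u}_{1,2} \cap \mathfrak{b}_{1,2}^+ = 0$ is immediate: skew-Hermiticity $a_{mn} = -\overline{a_{nm}}$ combined with $a_{mn} = 0$ for $m < n$ forces all off-diagonal entries to vanish, and $a_{nn}$ to be simultaneously real and purely imaginary, hence zero. For density in $L_{1,2}/\mathfrak{b}_{1,2}^+$ it suffices that $\mathfrak{u}_{1,2} + \mathfrak{b}_{1,2}^+$ be dense in $L_{1,2}$; I approximate $A \in L_{1,2}$ by $q_NAq_N$, where $q_N$ is the orthogonal projection onto $\spann\{|k\rangle : |k|\leq N\}$. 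Since $q_N$ commutes with $d = i(p_+ - p_-)$, one has $[d,q_NAq_N] = q_N[d,A]q_N$, and all three summands in $\|A - q_NAq_N\|_{1,2}$ tend to zero by standard density of finite-rank operators in $L_1(\mathcal{H}_\pm)$ and $L_2(\mathcal{H})$. Every finite-rank $A$ then admits the Iwasawa-type decomposition $A = X + B$ with $X := T_{--}(A) - T_{--}(A)^* + \tfrac{1}{2}(D(A)-D(A)^*) \in \mathfrak{u}_{1,2}$ and $B = A - X \in \mathfrak{b}_{1,2}^+$ by the direct matrix computation of Lemma~\ref{decub}. The key (and hardest) step is strict inclusion, which I establish via the unboundedness of $T_{++}$ on $L_1$ recorded in Remark~\ref{prendre_triangle}: pick Hermitian $A_{++} \in L_1(\mathcal{H}_+)$ with $T_{++}(A_{++}) \notin L_1(\mathcal{H}_+)$ and set $A := A_{++} \oplus 0 \in L_{1,2}$, a Hermitian operator with vanishing off-diagonal blocks. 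If $A = X + B$ with $X \in \mathfrak{u}_{1,2}$ and $B \in \mathfrak{b}_{1,2}^+$, then $2A = (X+B) + (X+B)^* = B + B^*$; matching strictly upper triangular parts and diagonals forces $B = 2T_{++}(A) + D(A)$, which cannot lie in $L_{1,2}$ since $T_{++}(A) \notin L_{1,2}$, a contradiction. This last pathology is precisely the analytic phenomenon exploited in Section~\ref{no_Manin}.
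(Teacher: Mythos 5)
Your handling of the kernel (rank-one testing with $E_{nm}$ and $iE_{nm}$, plus the easy inclusion $\mathfrak{b}_{1,2}^+\subset\ker F$), of the coadjoint invariance (ideal property $\operatorname{L}_{1,2}(\mathcal{H})\cdot \operatorname{L}_{\res}(\mathcal{H})\subset \operatorname{L}_{1,2}(\mathcal{H})$ together with the cyclicity of $\Tr_{\res}$ from \cite{GO10}), and of density is sound. For density you take a genuinely different route from the paper: the paper argues by duality, showing that any $X\in\operatorname{L}_{\res}(\mathcal{H})$ annihilating $\mathfrak{u}_{1,2}(\mathcal{H})+\mathfrak{b}_{1,2}^+(\mathcal{H})$ vanishes when tested against $E_{nm}$ and $E_{nm}-E_{mn}$, whereas you approximate by the finite-block compressions $q_NAq_N$ and split each of them by the finite-dimensional version of Lemma~\ref{decub}; your version is more constructive and equally valid. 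One wording caveat: ``every finite-rank $A$'' should read ``every $A$ supported on a finite block'', since for a rank-one operator built from infinitely supported vectors the truncation $T_{--}(A)$ need not have trace-class diagonal blocks; your approximants $q_NAq_N$ are finite-block, so the argument is unharmed.

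The genuine gap is in the strictness step. You ``pick Hermitian $A_{++}\in\operatorname{L}_1(\mathcal{H}_+)$ with $T_{++}(A_{++})\notin\operatorname{L}_1(\mathcal{H}_+)$'' and justify this solely by the unboundedness of $T_{++}$ on $\operatorname{L}_1$ recorded in Remark~\ref{prendre_triangle}. Unboundedness only yields a sequence $K_n$ with $\|K_n\|_1\leq 1$ and $\|T_{++}(K_n)\|_1\to\infty$; passing to a \emph{single} trace-class operator whose truncation is not trace class is not automatic (it needs, e.g., a closed-graph argument: $T_{++}$ maps $\operatorname{L}_1$ into $\operatorname{L}_2$ boundedly and has closed graph as an operator into $\operatorname{L}_1$, so everywhere-definedness into $\operatorname{L}_1$ would force boundedness, contradicting \cite{KP70}). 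More to the point, the paper explicitly states in Remark~\ref{prendre_triangle} that the existence and construction of such an operator is, to the author's knowledge, an open problem, and its proof is deliberately organized to avoid your step: strict containment is deduced from Section~\ref{no_Manin} (Proposition~\ref{ex2} and Lemma~\ref{lemma_coadjoint_ext}), i.e.\ from the unboundedness of the coadjoint action on the bounded sequence $x_n$ there — if $\mathfrak{u}_{1,2}(\mathcal{H})\oplus\mathfrak{b}_{1,2}^+(\mathcal{H})$ were all of $\operatorname{L}_{1,2}(\mathcal{H})$, the projection onto $\mathfrak{u}_{1,2}(\mathcal{H})$ along $\mathfrak{b}_{1,2}^+(\mathcal{H})$ would be bounded and would coincide with $p_{\mathfrak{u}_2^+}$ on $\operatorname{L}_{1,2}(\mathcal{H})\subset\operatorname{L}_2(\mathcal{H})$, contradicting $\|p_{\mathfrak{u}_2^+}([x_n,y])\|_{1,2}\to\infty$. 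So either supply the closed-graph argument for the existence of your single operator $A_{++}$ (and acknowledge that this is nonconstructive and goes beyond what Remark~\ref{prendre_triangle} asserts), or argue with the sequence of Section~\ref{no_Manin} as the paper does; as written, the key analytic ingredient of your strictness proof is unjustified.
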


\begin{proof}
In order to show that the kernel of $F$ is $\mathfrak{b}_{1,2}^+(\mathcal{H})$, consider, for $n\geq m$, the operator $E_{nm}=|n\rangle\langle m| \in \mathfrak{b}_\res^+(\mathcal{H})$ given by $ x\mapsto \langle m| x\rangle |n\rangle$ and,  for $n>m$, the operator $iE_{nm}\in \mathfrak{b}_\res^+(\mathcal{H})$. 
As in the proof of Proposition~\ref{duality_U_b}, an element $a\in L_{1,2}(\mathcal{H})$ satisfying $F(a)(E_{nm}) = 0$ and $F(a)(i E_{nm}) = 0$ is such that $\langle m| an\rangle = 0$ for $n>m$ and $\langle n| an \rangle \in \mathbb{R}$ for $n\in \mathbb{Z}$, i.e. belongs to $\mathfrak{b}_{1,2}^+(\mathcal{H})$. Let us show that the range of $F$ is preserved by the coadjoint action of $\operatorname{B}_{\res}^{+}(\mathcal{H})$. Let $g\in \operatorname{B}_{\res}^{+}(\mathcal{H})$ and $a\in L_{1,2}(\mathcal{H})$. For any $b\in \mathfrak{b}_\res^+(\mathcal{H})$, one has~:
$$
\begin{array}{ll}
\textrm{Ad}^*(g)F(a)(b) &= F(a)(\textrm{Ad}(g)(b)) = F(a)(g b g^{-1}) \\&= \Im \Tr a g b g^{-1} = \Im \Tr g^{-1} a g b = F(g^{-1} a g)(b),
\end{array}
$$
where, in the fourth equality, we have used Proposition~2.1 in \cite{GO10} (since the product $a g b$ belongs to $L_{1,2}(\mathcal{H})$ and $b $ to $L_{\res}(\mathcal{H})$). In fact,  $\operatorname{B}_{\res}^{+}(\mathcal{H})$ acts continuously on the right on $L_{1,2}(\mathcal{H})$ by
$$
a\cdot g = g^{-1} a g.
$$
Then one has the equivariance property 
$$
F(a\cdot g) = \textrm{Ad}^*(g)F(a).
$$
Moreover the subalgebra $\mathfrak{b}_{1,2}^+(\mathcal{H})$ is preserved by the right action of $\operatorname{B}_{\res}^{+}(\mathcal{H})$ on $L_{1,2}(\mathcal{H})$. It follows that there is a well-defined right action of $\operatorname{B}_{\res}^{+}(\mathcal{H})$ on the quotient space $L_{1,2}(\mathcal{H})/\mathfrak{b}_{1,2}^+(\mathcal{H})$ defined by
$$
[a]\cdot g = [a\cdot g],
$$
where $[a]$ denotes the class of $a\in L_{1,2}(\mathcal{H})$ modulo $\mathfrak{b}_{1,2}^+(\mathcal{H})$. 

Let us show that $\mathfrak{u}_{1,2}(\mathcal{H})\oplus \mathfrak{b}_{1,2}^+(\mathcal{H})$ is dense in $L_{1,2}(\mathcal{H})$. To do this, we will show that any continuous linear form on $L_{1,2}(\mathcal{H})$ which vanishes on $\mathfrak{u}_{1,2}(\mathcal{H})\oplus \mathfrak{b}_{1,2}^+(\mathcal{H})$ is equal to the zero form. Recall that the dual space of $L_{1,2}(\mathcal{H})$ is $L_{\res}(\mathcal{H})$, the duality pairing being given by the restricted trace. Consider $X\in L_{\res}(\mathcal{H})$ such that $\Tr X a = 0$ and $\Tr Xb = 0$ for any $a\in \mathfrak{u}_{1,2}(\mathcal{H})$ and any $b\in \mathfrak{b}_{1,2}^+(\mathcal{H})$. Letting $b = E_{nm}$ with $n\geq m$, we get  $\langle m| Xn\rangle = 0$ for $n\geq m$. Letting $a = E_{nm}-E_{mn}\in \mathfrak{u}_{1,2}(\mathcal{H})$, we get $\langle m| Xn\rangle - \langle n| Xm\rangle= 0$ for $n\geq m$. It follows that $\langle m| Xn\rangle = 0$ for any $m, n \in \mathbb{Z}$, which implies that the bounded linear operator $X$ vanishes, hence $\mathfrak{u}_{1,2}(\mathcal{H})\oplus \mathfrak{b}_{1,2}^+(\mathcal{H})$ is dense in $L_{1,2}(\mathcal{H})$. It follows from Section~\ref{no_Manin},  that $\mathfrak{u}_{1,2}(\mathcal{H})\oplus \mathfrak{b}_{1,2}^+(\mathcal{H})$ is strictly contained in $L_{1,2}(\mathcal{H})$.

Let us show that $\mathfrak{u}_{1,2}(\mathcal{H})$ is dense in $L_{1,2}(\mathcal{H})/\mathfrak{b}_{1,2}^+(\mathcal{H})$. Consider a class $[a]\in L_{1,2}(\mathcal{H})/\mathfrak{b}_{1,2}^+(\mathcal{H})$, where $a$ is any element in $L_{1,2}(\mathcal{H})$. Since $\mathfrak{u}_{1,2}(\mathcal{H})\oplus \mathfrak{b}_{1,2}^+(\mathcal{H})$ is dense in $L_{1,2}(\mathcal{H})$, there is a sequence $u_i\in \mathfrak{u}_{1,2}(\mathcal{H})$ and a sequence $b_i\in \mathfrak{b}_{1,2}^+(\mathcal{H})$ such that $u_i+ b_i$ converge to $a$ in $L_{1,2}(\mathcal{H})$. It follows that $[u_i+b_i] = [u_i]$ converge to $[a]$ in $L_{1,2}(\mathcal{H})/\mathfrak{b}_{1,2}^+(\mathcal{H})$.
\end{proof}

Now we are able to state the following Theorem. The proof uses Lemma~\ref{lemmaJacobi}.
\begin{theorem}\label{BresPoisson}
Consider the Banach Lie group $\operatorname{B}_{\res}^{+}(\mathcal{H})$, and 
\begin{enumerate}
\item $\mathfrak{g}_- := \operatorname{L}_{1,2}(\mathcal{H})/\mathfrak{b}_{1,2}^+(\mathcal{H})\subset  \mathfrak{b}_\res^+(\mathcal{H})^*$,
\item $\mathbb{B}\subset T^*\operatorname{B}_{\res}^{+}(\mathcal{H})$, $\mathbb{B}_b:= R_{b^{-1}}^*\mathfrak{g}_-$,
\item $\Pi^{B^+_\res}_r~:\operatorname{B}_{\res}^{+}(\mathcal{H})\rightarrow \Lambda^2\mathfrak{g}_-^*$ defined by
$$
\Pi^{B^+_\res}_r(b)([x_1]_{\mathfrak{b}_{1,2}^+}, [x_2]_{\mathfrak{b}_{1,2}^+}) = \Im\Tr(b^{-1}\,p_{\mathfrak{u}^+_2}(x_1)\,b)\left[ p_{\mathfrak{b}_2^+}(b^{-1}\,p_{\mathfrak{u}^+_2}(x_2)\,b)\right],
$$
\item $\pi^{B^+_\res}(b) = R^{**}_b \Pi^{B^+_\res}_r(b)$.
\end{enumerate}
Then $\left(\operatorname{B}_{\res}^{+}(\mathcal{H}), \mathbb{B}, \pi^{B^+_\res}\right)$ is a Banach Poisson--Lie group.
\end{theorem}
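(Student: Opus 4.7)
The strategy is to invoke Proposition~\ref{cela}, which reduces the Banach Poisson--Lie group property to checking three items: (a) the subbundle $\mathbb{B}$ is in duality with $T\operatorname{B}_{\res}^+(\mathcal{H})$ and invariant under left and right translations; (b) $\Pi_r^{B_\res^+}$ is smooth and is a $1$-cocycle with respect to the coadjoint action of $\operatorname{B}_\res^+(\mathcal{H})$ on $\Lambda^2\mathfrak{g}_-^*$; (c) $\pi^{B_\res^+}$ satisfies the Jacobi identity of Definition~\ref{Poisson_tensor}. Before addressing these I would verify that the formula defining $\Pi_r^{B_\res^+}$ descends to the quotient $\mathfrak{g}_- = \operatorname{L}_{1,2}(\mathcal{H})/\mathfrak{b}_{1,2}^+(\mathcal{H})$: this is automatic because $\mathfrak{b}_{1,2}^+(\mathcal{H})\subset \mathfrak{b}_2^+(\mathcal{H})$, so $p_{\mathfrak{u}_2^+}$ vanishes on the equivalence class, and because $\operatorname{L}_{1,2}(\mathcal{H})\hookrightarrow L_2(\mathcal{H})$ makes the conjugation $b^{-1}(\cdot)b$ and the Iwasawa projections well-defined inside $L_2(\mathcal{H})$.

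For (a), right-invariance of $\mathbb{B}$ is built into its definition $\mathbb{B}_b = R_{b^{-1}}^*\mathfrak{g}_-$. Left-invariance amounts to the stability of $\mathfrak{g}_-\subset \mathfrak{b}_\res^+(\mathcal{H})^*$ under $\operatorname{Ad}^*(b)$ for $b\in \operatorname{B}_\res^+(\mathcal{H})$, which was already established in the proposition immediately preceding the statement (with $\operatorname{Ad}^*(b)[a]=[b^{-1}ab]$). That $\mathbb{B}_b$ is in duality with $T_b\operatorname{B}_\res^+(\mathcal{H})$ follows from the non-degeneracy of the pairing between $\mathfrak{b}_\res^+(\mathcal{H})$ and $\mathfrak{u}_{1,2}(\mathcal{H})$ (Proposition~\ref{duality_U_b}) together with the inclusion $\mathfrak{u}_{1,2}(\mathcal{H})\hookrightarrow \mathfrak{g}_-$ exhibited in the preceding proposition: already $\mathfrak{u}_{1,2}(\mathcal{H})$ separates points of $\mathfrak{b}_\res^+(\mathcal{H})$.

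For (b), the cocycle equation $\Pi_r^{B_\res^+}(bb') = \operatorname{Ad}(b)^{**}\Pi_r^{B_\res^+}(b') + \Pi_r^{B_\res^+}(b)$ is checked by direct computation: setting $a_i := p_{\mathfrak{u}_2^+}(x_i)$ and $c_i := b^{-1}a_i b\in L_2(\mathcal{H})$, one decomposes $c_i = p_{\mathfrak{u}_2^+}(c_i) + p_{\mathfrak{b}_2^+}(c_i)$ and exploits the isotropy of $\mathfrak{u}_2(\mathcal{H})$ and $\mathfrak{b}_2^+(\mathcal{H})$ for the imaginary part of the trace (Proposition~\ref{triples} with $p=2$) to split the expression $\Im \Tr((b')^{-1} c_1 b')\,p_{\mathfrak{b}_2^+}((b')^{-1} c_2 b')$ into two pieces that exactly reproduce the RHS. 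Smoothness is clear since the formula involves only the smooth operations of conjugation by $b^{\pm 1}$ on $L_2(\mathcal{H})$ composed with the continuous linear Iwasawa projections.

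The main difficulty is (c). I would apply Lemma~\ref{lemmaJacobi}(3), so first I must verify the hypothesis $i_{\alpha_0}\Pi_r^{B_\res^+}(b)\in \mathfrak{b}_\res^+(\mathcal{H})$ for every $\alpha_0\in \mathfrak{g}_-$. Taking a representative $x$ of $\alpha_0$, an easy computation using $\Tr_\res$--cyclicity gives $i_{[x]}\Pi_r^{B_\res^+}(b) = b\,p_{\mathfrak{b}_2^+}(b^{-1}\,p_{\mathfrak{u}_2^+}(x)\,b)\,b^{-1}$, which belongs to $\mathfrak{b}_2^+(\mathcal{H})\cdot \operatorname{B}_\res^+(\mathcal{H}) \subset \mathfrak{b}_\res^+(\mathcal{H})$ since $\mathfrak{b}_\res^+(\mathcal{H})$ is invariant under conjugation by elements of $\operatorname{B}_\res^+(\mathcal{H})$. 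Then I would expand equation~\eqref{Jacobi_tensor} for closed local sections $\alpha,\beta,\gamma$ of $\mathbb{B}$: differentiating $\Pi_r^{B_\res^+}$ at the unit reduces the six terms of that identity to an expression built from the commutator on $L_2(\mathcal{H})$ composed with the projections $p_{\mathfrak{u}_2^+}$ and $p_{\mathfrak{b}_2^+}$. This is precisely the infinitesimal cocycle identity~\eqref{cocycle_weinstein} for the Manin triple $(L_2(\mathcal{H}),\mathfrak{u}_2(\mathcal{H}),\mathfrak{b}_2^+(\mathcal{H}))$ provided by Proposition~\ref{triples} and Theorem~\ref{Manin_to_bialgebra}; hence the Jacobi tensor vanishes. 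The real obstacle in this last step is that the Jacobi identity cannot be imported from a Lie bialgebra structure on $\mathfrak{g}_-$ itself -- Lemma~\ref{lemma_coadjoint_ext} and Theorem~\ref{ccl} show that no such structure exists on $\operatorname{L}_{1,2}/\mathfrak{b}_{1,2}^+$ with respect to $\mathfrak{b}_\res^+(\mathcal{H})$ -- so one is forced to perform the reduction at the level of the ambient Hilbert-Schmidt Manin triple, using only that $p_{\mathfrak{u}_2^+}(x)\in L_2(\mathcal{H})$ whenever $x$ represents a class in $\mathfrak{g}_-$.
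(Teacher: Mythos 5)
Your proposal is correct in substance and follows essentially the same route as the paper: multiplicativity comes from the group cocycle identity for $\Pi^{B^+_\res}_r$ checked by direct computation using the isotropy of $\mathfrak{u}_2(\mathcal{H})$ and $\mathfrak{b}_2^+(\mathcal{H})$, and the Jacobi identity comes from Lemma~\ref{lemmaJacobi}(3) after verifying $i_{[x]}\Pi^{B^+_\res}_r(b)=-b\,p_{\mathfrak{b}_2^+}\bigl(b^{-1}p_{\mathfrak{u}_2^+}(x)\,b\bigr)b^{-1}\in\mathfrak{b}_2^+(\mathcal{H})\subset\mathfrak{b}_{\res}^+(\mathcal{H})$, the whole reduction being carried out, exactly as in the paper, inside the Hilbert--Schmidt Manin triple $\left(\operatorname{L}_2(\mathcal{H}),\mathfrak{u}_2(\mathcal{H}),\mathfrak{b}_2^+(\mathcal{H})\right)$ rather than on $\mathfrak{g}_-$ itself. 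Two harmless slips only: your contraction $i_{[x]}\Pi^{B^+_\res}_r(b)$ is missing a minus sign, and the final vanishing is not literally the cocycle identity~\eqref{cocycle_weinstein} but the conjugation-invariance of $\Im\Tr$ together with the isotropy and subalgebra property of $\mathfrak{u}_2(\mathcal{H})$, i.e. $\Im\Tr\, p_{\mathfrak{u}_2^+}(x_3)\left[p_{\mathfrak{u}_2^+}(x_1),p_{\mathfrak{u}_2^+}(x_2)\right]=0$.
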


\begin{proof}
\begin{itemize}
\item Let us show that $\Pi^{B^+_\res}_{r}$ satisfies the cocycle condition. 
%Note that for $x_1 \in L_{1, 2}(\mathcal{H})$, $g\in \operatorname{B}^+_{\res}(\mathcal{H})$, and $y \in \mathfrak{b}_{\res}^+(\mathcal{H})$,  
%$$
%\begin{array}{ll}\textrm{Ad}^*(g)(x_1)(y) & = \Im \Tr \left(x_1 \textrm{Ad}(g)(y)\right) = \Im \Tr \left(x_1 g y g^{-1}\right) = 
%\Im \Tr\left(g^{-1} x_1 g y\right). %\\ & = \langle p_{\mathfrak{b}_2^+}(g^{-1} x_1 g), y\rangle_{\mathfrak{b}_2^+, \mathfrak{u}_2}.
%\end{array}
%$$
%Therefore, using that $\mathfrak{u}_2(\mathcal{H})$ and $\mathfrak{b}_2^+(\mathcal{H})$ are isotropic, one has~:
\begin{equation*}
\begin{array}{l}
\Pi^{B^+_\res}_r(u)\left(\textrm{Ad}^*(g)[x_1]_{\mathfrak{b}_{1,2}^+}, \textrm{Ad}^*(g)[x_2]_{\mathfrak{b}_{1,2}^+}\right)  =  \Pi^{B^+_\res}_r(u)\left([g^{-1}\,x_1\,g]_{\mathfrak{b}_{1,2}^+},[g^{-1}x_2 \,g]_{\mathfrak{b}_{1,2}^+}\right) 
\\ \quad = \Im \Tr (u^{-1} p_{\mathfrak{u}_2^+}(g^{-1} x_1 \,g)\,u) \left[p_{\mathfrak{b}_2^+}(u^{-1} p_{\mathfrak{u}_2^+}(g^{-1} x_2\, g) u)\right]
\end{array}\end{equation*}
Using the decomposition $p_{\mathfrak{u}_2^+}(g^{-1} x_1 \,g) = g^{-1} x_1 \,g - p_{\mathfrak{b}_2^+}(g^{-1} x_1 \,g) $, the fact that $\mathfrak{b}_2^+$ is preserved by conjugation by elements in $\operatorname{B}_{\res}^{+}(\mathcal{H})$, and the fact that  $\mathfrak{b}_2^+$ is isotropic, one has~:
\begin{equation*}
\begin{array}{l}
\Pi^{B^+_\res}_r(u)\left(\textrm{Ad}^*(g)[x_1]_{\mathfrak{b}_{1,2}^+}, \textrm{Ad}^*(g)[x_2]_{\mathfrak{b}_{1,2}^+}\right)  
 = \Im \Tr (u^{-1} g^{-1} x_1 \,g \,u) \left[p_{\mathfrak{b}_2^+}(u^{-1} p_{\mathfrak{u}_2^+}(g^{-1} x_2\, g) u)\right]\\
 \quad =  \Im \Tr (u^{-1} g^{-1} x_1 \,g \,u) \left[p_{\mathfrak{b}_2^+}(u^{-1} g^{-1} x_2\, g u)\right] -  \Im \Tr (u^{-1} g^{-1} x_1 \,g \,u) \left[p_{\mathfrak{b}_2^+}(u^{-1} p_{\mathfrak{b}_2^+}(g^{-1} x_2\, g) u)\right]\\
\quad =  \Im \Tr (u^{-1} g^{-1} x_1 \,g \,u) \left[p_{\mathfrak{b}_2^+}(u^{-1} g^{-1} x_2\, g u)\right] -  \Im \Tr  g^{-1} x_1 \,g\, p_{\mathfrak{b}_2^+}(g^{-1} x_2\, g)\\
\end{array}\end{equation*}
Using the decompositions $x_1 =  p_{\mathfrak{u}_2^+}(x_1) +  p_{\mathfrak{b}_2^+}(x_1)$ and $x_2 =  p_{\mathfrak{u}_2^+}(x_2) +  p_{\mathfrak{b}_2^+}(x_2)$, one gets 8 terms but 4 of them vanish since $\mathfrak{b}_2^+$ is isotropic. The remaining terms are:
\begin{equation*}
\begin{array}{ll}
\Pi^{B^+_\res}_r(u)\left(\textrm{Ad}^*(g)[x_1]_{\mathfrak{b}_{1,2}^+}, \textrm{Ad}^*(g)[x_2]_{\mathfrak{b}_{1,2}^+}\right)  
& =  \Im \Tr (u^{-1} g^{-1} p_{\mathfrak{u}_2^+}(x_1) \,g \,u) \left[p_{\mathfrak{b}_2^+}(u^{-1} g^{-1} p_{\mathfrak{u}_2^+}(x_2)\, g u)\right] \\ & \,\,+  \Im \Tr (u^{-1} g^{-1} p_{\mathfrak{u}_2^+}(x_1) \,g \,u) \left[p_{\mathfrak{b}_2^+}(u^{-1} g^{-1} p_{\mathfrak{b}_2^+}(x_2)\, g u)\right] \\
& 
\,\,-\Im \Tr  g^{-1} p_{\mathfrak{u}_2^+}(x_1) \,g\, p_{\mathfrak{b}_2^+}(g^{-1} p_{\mathfrak{u}_2^+}(x_2)\, g) \\ &\,\, - \Im \Tr  g^{-1} p_{\mathfrak{u}_2^+}(x_1) \,g\, p_{\mathfrak{b}_2^+}(g^{-1} p_{\mathfrak{b}_2^+}(x_2)\, g)\\
\end{array}\end{equation*}
The first term in the right hand side equals $\Pi^{B^+_\res}_r(gu)([x_1]_{\mathfrak{b}_{1,2}^+}, [x_2]_{\mathfrak{b}_{1,2}^+})$, the third term equals $-\Pi^{B^+_\res}_r(g)([x_1]_{\mathfrak{b}_{1,2}^+}, [x_2]_{\mathfrak{b}_{1,2}^+})$, whereas the second term equals $+\Im \Tr(p_{\mathfrak{u}_2^+}(x_1) p_{\mathfrak{b}_2^+}(x_2))$, and the last terms equals $-\Im \Tr(p_{\mathfrak{u}_2^+}(x_1) p_{\mathfrak{b}_2^+}(x_2))$.

%\\
%\Im \Tr \left(p_{\mathfrak{u}_2^\pm}\left(u^{-1} p_{\mathfrak{b}_2^\pm}\left(g^{-1}b_1 g\right) u\right) p_{\mathfrak{b}_2^\pm}\left(u^{-1} p_{\mathfrak{b}_2^\pm}\left(g^{-1} b_2 g\right) u\right)\right)\\
%\quad =  \Im \Tr\left(p_{\mathfrak{u}_2^\pm}\left(u^{-1}g^{-1}b_1 g u- u^{-1}p_{\mathfrak{u}_2^{\pm}}\left(g^{-1}b_1 g\right)u\right) p_{\mathfrak{b}_2^{\pm}}\left(u^{-1}g^{-1}b_2 g u- u^{-1} p_{\mathfrak{u}_2^{\pm}}\left(g^{-1} b_2 g\right)u\right)\right)\\
%\quad = \Im \Tr\left(p_{\mathfrak{u}_2^\pm}\left(u^{-1}g^{-1}b_1 g u- u^{-1}p_{\mathfrak{u}_2^{\pm}}\left(g^{-1}b_1 g\right)u\right) p_{\mathfrak{b}_2^{\pm}}\left(u^{-1}g^{-1}b_2 g u\right)\right)\\
%\quad = \Im \Tr p_{\mathfrak{u}_2^\pm}\left(u^{-1}g^{-1}b_1 g u\right) p_{\mathfrak{b}_2^{\pm}}\left(u^{-1}g^{-1}b_2 g u\right)- \Im\Tr \left(u^{-1}p_{\mathfrak{u}_2^{\pm}}\left(g^{-1}b_1 g\right)u\right) p_{\mathfrak{b}_2^{\pm}}\left(u^{-1}g^{-1}b_2 g u\right)\\
%\quad =  \pi_r^\pm(gu)(b_1, b_2) - \Im\Tr \left(u^{-1}p_{\mathfrak{u}_2^{\pm}}\left(g^{-1}b_1 g\right)u u^{-1}g^{-1}b_2 g u\right)\\
%\quad =  \pi_r^\pm(gu)(b_1, b_2) - \pi_r^\pm(g)(b_1, b_2). 
%\end{array}
%\end{equation*}
\item It remains to check that $\pi^{B^+_\res}$ satisfies the Jacobi identity \eqref{Jacobi_Poisson}. We will use Lemma~\ref{lemmaJacobi}. Using the cocycle identity, one has for any $X$ in $\mathfrak{b}_{\res}^+(\mathcal{H})$ and $g\in \operatorname{B}_{\res}^+$,
\begin{equation*}
T_g\Pi^{B^+_\res}_r(L_{g}X)([x_1], [x_2]) = T_e\Pi^{B^+_\res}_r(X)(\textrm{Ad}^*(g)[x_1], \textrm{Ad}^*(g)[x_2]),
\end{equation*}
in particular, 
\begin{equation*}
\begin{array}{ll}
T_g\Pi^{B^+_\res}_r(R_{g}X)([x_1], [x_2]) &=  T_g\Pi^{B^+_\res}_r(L_{g}\textrm{Ad}(g^{-1})(X))([x_1], [x_2]) \\ & =T_e\Pi^{B^+_\res}_r(\textrm{Ad}(g^{-1})(X))(\textrm{Ad}^*(g)[x_1], \textrm{Ad}^*(g)[x_2])\\& = 
T_e\Pi^{B^+_\res}_r(\textrm{Ad}(g^{-1})(X))([g^{-1}\,x_1\,g], [g^{-1}\,x_2\,g])
\end{array}
\end{equation*}
On the other hand
\begin{equation*}
\begin{array}{ll}
T_e\Pi^{B^+_\res}_r(Y)([x_1], [x_2]) & = -\Im \Tr [Y, p_{\mathfrak{u}_2^+}(x_1)] p_{\mathfrak{b}_2^+}(p_{\mathfrak{u}_2^+}(x_2)) - \Im \Tr p_{\mathfrak{u}_2^+}(x_1) p_{\mathfrak{b}_2^+}([Y, p_{\mathfrak{u}_2^+}(x_2)])\\
& = - \Im \Tr p_{\mathfrak{u}_2^+}(x_1) [Y, p_{\mathfrak{u}_2^+}(x_2)] = \Im\Tr Y[p_{\mathfrak{u}_2^+}(x_1), p_{\mathfrak{u}_2^+}(x_2)].
\end{array}
\end{equation*}
It follows that 
\begin{equation}\label{Rgpi}
T_g\Pi^{B^+_\res}_r(R_gX)([x_1], [x_2]) = \Im \Tr g^{-1}\,X\, g[p_{\mathfrak{u}_2^+}(g^{-1}\,x_1\,g), p_{\mathfrak{u}_2^+}(g^{-1}\,x_2\,g)].
\end{equation}
In particular, for any $x_1$ and $x_2$ in $\operatorname{L}_{1,2}(\mathcal{H})$, the $1$-form on $\mathfrak{b}_{\res}^+$ given by 
$$X\mapsto T_g\Pi^{B^+_\res}_r(L_gX)([x_1], [x_2])$$
 belongs to $\mathfrak{u}_{1,2}(\mathcal{H})$ and is given by
\begin{equation*}
T_g\Pi^{B^+_\res}_r(L_g(\cdot))([x_1], [x_2])= [p_{\mathfrak{u}_2^+}(g^{-1}\,x_1\,g), p_{\mathfrak{u}_2^+}(g^{-1}\,x_2\,g)]
\end{equation*}
Moreover for $g\in \operatorname{B}_{\res}^{+}(\mathcal{H})$, $x_3\in \operatorname{L}_{1,2}(\mathcal{H})$ and $y\in  \operatorname{L}_{1,2}(\mathcal{H})$, one has
\begin{equation*}
\begin{array}{ll}
\Pi^{B^+_\res}_r(g)([x_3], [y])&  =  \Im\Tr (g^{-1}\,p_{\mathfrak{u}_2^+}(x_3)\, g) p_{\mathfrak{b}_2^+}(g^{-1}\,p_{\mathfrak{u}_2^+}(y)\,g)\\
& =  \Im\Tr p_{\mathfrak{u}_2^+}(g^{-1}\,p_{\mathfrak{u}_2^+}(x_3)\, g) p_{\mathfrak{b}_2^+}(g^{-1}\,p_{\mathfrak{u}_2^+}(y)\,g)\\
& =  -\Im\Tr p_{\mathfrak{b}_2^+}(g^{-1}\,p_{\mathfrak{u}_2^+}(x_3)\, g) p_{\mathfrak{u}_2^+}(g^{-1}\,p_{\mathfrak{u}_2^+}(y)\,g)\\
& =  -\Im\Tr g\,p_{\mathfrak{b}_2^+}(g^{-1}\,p_{\mathfrak{u}_2^+}(x_3)\, g)\,g^{-1}\,p_{\mathfrak{u}_2^+}(y)\\
& =  -\Im\Tr g\,p_{\mathfrak{b}_2^+}(g^{-1}\,p_{\mathfrak{u}_2^+}(x_3)\, g)\,g^{-1}(y)\\
%
%& = \Im\Tr p_{\mathfrak{b}_2^+}\left(g^{-1}\,x_3\,g\right) g^{-1}\,p_{\mathfrak{u}_2^+}(\cdot)\,g\\
%& =  \Im\Tr g\, p_{\mathfrak{b}_2^+}\left(g^{-1}\,x_3\,g\right) g^{-1}\,p_{\mathfrak{u}_2^+}(\cdot)\\
%& =  \Im\Tr p_{\mathfrak{b}_2^+}\left(g\, p_{\mathfrak{u}_2^+}\left(g^{-1}\,x_3\,g\right) g^{-1}\right)(\cdot)\\
\end{array}
\end{equation*}
%
%\begin{equation*}
%\begin{array}{ll}
%\pi_r(g)([x_3], [\cdot])& = \Im\Tr g^{-1}\,p_{\mathfrak{u}_2^+}(x_3)\,g p_{\mathfrak{b}_2^+}(g^{-1}\,p_{\mathfrak{u}_2^+}(\cdot)\,g)
%\\
%& =  \Im\Tr p_{\mathfrak{u}_2^+}(g^{-1}\,x_3\, g) p_{\mathfrak{b}_2^+}(g^{-1}\,p_{\mathfrak{u}_2^+}(\cdot)\,g)\\
%& = \Im\Tr p_{\mathfrak{u}_2^+}\left(g^{-1}\,x_3\,g\right) g^{-1}\,p_{\mathfrak{u}_2^+}(\cdot)\,g\\
%& =  \Im\Tr g\, p_{\mathfrak{u}_2^+}\left(g^{-1}\,x_3\,g\right) g^{-1}\,p_{\mathfrak{u}_2^+}(\cdot)\\
%& =  \Im\Tr p_{\mathfrak{b}_2^+}\left(g\, p_{\mathfrak{u}_2^+}\left(g^{-1}\,x_3\,g\right) g^{-1}\right)(\cdot)\\
%\end{array}
%\end{equation*}
In particular 
$i_{[x_3]}\Pi^{B^+_\res}_r(g) =- g\,p_{\mathfrak{b}_2^+}(g^{-1}\,p_{\mathfrak{u}_2^+}(x_3)\, g)\,g^{-1}$ 
%$i_{[x_3]}\pi_r(g) =p_{\mathfrak{b}_2^+}\left(g\, p_{\mathfrak{u}_2^+}\left(g^{-1}\,x_3\,g\right) g^{-1}\right)$ 
belongs to $\mathfrak{b}_2^+(\mathcal{H})\subset\mathfrak{b}_{\res}^+(\mathcal{H})$. 
%Moreover
%\begin{equation*}
%g\, i_{[x_3]}\pi_r(g) \,g^{-1} = 
%\end{equation*}
Using \eqref{Rgpi}, it follows that 
\begin{equation}\label{Teq}
\begin{array}{ll}
T_g\Pi^{B^+_\res}_r(R_g i_{[x_3]}\Pi^{B^+_\res}_r(g))([x_1], [x_2]) &=
%\Im \Tr g^{-1}\,p_{\mathfrak{b}_2^+}\left(g\, p_{\mathfrak{u}_2^+}\left(g^{-1}\,x_3\,g\right) g^{-1}\right)\, g[p_{u_2^+}(g^{-1}\,x_1\,g), p_{u_2^+}(g^{-1}\,x_2\,g)]
-\Im \Tr \,p_{\mathfrak{b}_2^+}(g^{-1}\,p_{\mathfrak{u}_2^+}(x_3)\, g)[p_{\mathfrak{u}_2^+}(g^{-1}\,x_1\,g), p_{\mathfrak{u}_2^+}(g^{-1}\,x_2\,g)]\\
&= -\Im \Tr \,p_{\mathfrak{b}_2^+}(g^{-1}\,p_{\mathfrak{u}_2^+}(x_3)\, g)[p_{\mathfrak{u}_2^+}(g^{-1}\,p_{\mathfrak{u}_2^+}(x_1)\,g), p_{\mathfrak{u}_2^+}(g^{-1}\,p_{\mathfrak{u}_2^+}(x_2)\,g)],\\
%\\
%\quad\quad\quad= -\Im \Tr (g^{-1}p_{\mathfrak{u}_2^+}(x_3) g)[p_{u_2^+}(g^{-1}\,x_1\,g), p_{u_2^+}(g^{-1}\,x_2\,g)]\\
%\quad\quad\quad= -\Im \Tr (g^{-1}p_{\mathfrak{u}_2^+}(x_3) g)[p_{u_2^+}(g^{-1}\,p_{u_2^+}(x_1)\,g), p_{u_2^+}(g^{-1}\,p_{u_2^+}(x_2)\,g)]\\
%\quad\quad\quad= -\Im \Tr \tilde{x}_3[p_{u_2^+}(\tilde{x}_1), p_{u_2^+}(\tilde{x}_2)]\\
%\\ \quad\quad\quad= \Im \Tr p_{\mathfrak{b}_2^+}\left(g\, \tilde{x}_3 \,g^{-1}\right)\, g [\tilde{x}_1, \tilde{x}_2]\,g^{-1},
% \quad\quad\quad= -\Im \Tr \,p_{\mathfrak{b}_2^+}(g^{-1}\,p_{\mathfrak{u}_2^+}(x_3)\, g)[\tilde{x}_1, \tilde{x}_2]
\end{array}
\end{equation}
where we have used that $g^{-1}\,p_{b_2^+}(x_i)\,g\in \mathfrak{b}_2^+$ for any $x_i\in \operatorname{L}_{1,2}(\mathcal{H})$ and any $g\in \operatorname{B}_{\res}^+(\mathcal{H})$. Moreover
\begin{equation}\label{lr}
\begin{array}{ll}
\langle x_1, [i_{[x_3]}\Pi^{B^+_\res}_r(g), i_{[x_2]}\Pi^{B^+_\res}_r(g)]\rangle & = \Im \Tr x_1 [g\,p_{\mathfrak{b}_2^+}(g^{-1}\,p_{\mathfrak{u}_2^+}(x_3)\, g)\,g^{-1}, g\,p_{\mathfrak{b}_2^+}(g^{-1}\,p_{\mathfrak{u}_2^+}(x_2)\, g)\,g^{-1}]\\
&= \Im \Tr p_{\mathfrak{u}_2^+}(x_1) [g\,p_{\mathfrak{b}_2^+}(g^{-1}\,p_{\mathfrak{u}_2^+}(x_3)\, g)\,g^{-1}, g\,p_{\mathfrak{b}_2^+}(g^{-1}\,p_{\mathfrak{u}_2^+}(x_2)\, g)\,g^{-1}]\\
& = \Im \Tr g^{-1} p_{\mathfrak{u}_2^+}(x_1)  g [p_{\mathfrak{b}_2^+}(g^{-1}\,p_{\mathfrak{u}_2^+}(x_3)\, g), p_{\mathfrak{b}_2^+}(g^{-1}\,p_{\mathfrak{u}_2^+}(x_2)\, g)]\\
& = \Im \Tr p_{\mathfrak{u}_2^+}(g^{-1} p_{\mathfrak{u}_2^+}(x_1)  g) [p_{\mathfrak{b}_2^+}(g^{-1}\,p_{\mathfrak{u}_2^+}(x_3)\, g), p_{\mathfrak{b}_2^+}(g^{-1}\,p_{\mathfrak{u}_2^+}(x_2)\, g)]\\
& = -\Im \Tr p_{\mathfrak{u}_2^+}(g^{-1} p_{\mathfrak{u}_2^+}(x_1)  g) [p_{\mathfrak{b}_2^+}(g^{-1}\,p_{\mathfrak{u}_2^+}(x_2)\, g), p_{\mathfrak{b}_2^+}(g^{-1}\,p_{\mathfrak{u}_2^+}(x_3)\, g)]
\end{array}
\end{equation}
Consider $\alpha = R_{g^{-1}}^*[x_1]\in (T_g\operatorname{B}_{\res}^+)^*$, $\beta = R_{g^{-1}}^*[x_2]\in (T_g\operatorname{B}_{\res}^+)^*$ and $\gamma = R_{g^{-1}}^*[x_3]\in (T_g\operatorname{B}_{\res}^+)^*$, for $x_1$, $x_2$ and $x_3$ in $\operatorname{L}_{1,2}(\mathcal{H})$. Injecting \eqref{Teq} and \eqref{lr} into \eqref{Jacobi_tensor} and using the fact that the left hand side of \eqref{Jacobi_tensor}  defines a tensor, one gets :
\begin{equation*}
\begin{array}{l}
\!\!\!\!\!\pi\left(\alpha, d\left(\pi(\beta, \gamma)\right)\right) + \pi\left(\beta, d\left(\pi(\gamma, \alpha)\right)\right) + \pi\left(\gamma, d\left(\pi(\alpha, \beta)\right) \right) \\
%= T_g\pi_r(R_g i_{[x_1]}\pi_r(g))\left([x_2], [x_3]\right) + T_g\pi_r(R_g i_{[x_2]}\pi_r(g))\left([x_3], [x_1]\right)  + T_g\pi_r(R_g i_{[x_3]}\pi_r(g))\left([x_1], [x_2]\right) \\
%+ \langle [x_1], [ i_{[x_3]}\pi_r(g),  i_{[x_2]}\pi_r(g) ]\rangle + \langle [x_2], [i_{[x_1]}\pi_r(g),  i_{[x_3]}\pi_r(g)] \rangle + \langle [x_3], [ i_{[x_2]}\pi_r(g),  i_{[x_1]}\pi_r(g)] \rangle\\
= -\Im \Tr \,p_{\mathfrak{b}_2^+}(g^{-1}\,p_{\mathfrak{u}_2^+}(x_3)\, g)[p_{\mathfrak{u}_2^+}(g^{-1}\,p_{\mathfrak{u}_2^+}(x_1)\,g), p_{\mathfrak{u}_2^+}(g^{-1}\,p_{\mathfrak{u}_2^+}(x_2)\,g)]\\
\quad- \Im \Tr p_{\mathfrak{u}_2^+}(g^{-1} p_{\mathfrak{u}_2^+}(x_1) g) [p_{\mathfrak{b}_2^+}(g^{-1}\,p_{\mathfrak{u}_2^+}(x_2)\, g), p_{\mathfrak{b}_2^+}(g^{-1}\,p_{\mathfrak{u}_2^+}(x_3)\, g)]\\
\quad-\Im \Tr \,p_{\mathfrak{b}_2^+}(g^{-1}\,p_{\mathfrak{u}_2^+}(x_1)\, g)[p_{\mathfrak{u}_2^+}(g^{-1}\,p_{\mathfrak{u}_2^+}(x_2)\,g), p_{\mathfrak{u}_2^+}(g^{-1}\,p_{\mathfrak{u}_2^+}(x_3)\,g)]\\
\quad- \Im \Tr p_{\mathfrak{u}_2^+}(g^{-1} p_{\mathfrak{u}_2^+}(x_2) g) [p_{\mathfrak{b}_2^+}(g^{-1}\,p_{\mathfrak{u}_2^+}(x_3)\, g), p_{\mathfrak{b}_2^+}(g^{-1}\,p_{\mathfrak{u}_2^+}(x_1)\, g)]\\
\quad-\Im \Tr \,p_{\mathfrak{b}_2^+}(g^{-1}\,p_{\mathfrak{u}_2^+}(x_2)\, g)[p_{\mathfrak{u}_2^+}(g^{-1}\,p_{\mathfrak{u}_2^+}(x_3)\,g), p_{\mathfrak{u}_2^+}(g^{-1}\,p_{\mathfrak{u}_2^+}(x_1)\,g)]\\
\quad- \Im \Tr p_{\mathfrak{u}_2^+}(g^{-1} p_{\mathfrak{u}_2^+}(x_3) g) [p_{\mathfrak{b}_2^+}(g^{-1}\,p_{\mathfrak{u}_2^+}(x_1)\, g), p_{\mathfrak{b}_2^+}(g^{-1}\,p_{\mathfrak{u}_2^+}(x_2)\, g)]\\
\\
%= -\Im \Tr \,p_{\mathfrak{b}_2^+}(g^{-1}\,p_{\mathfrak{u}_2^+}(x_3)\, g)[p_{\mathfrak{u}_2^+}(g^{-1}\,p_{\mathfrak{u}_2^+}(x_1)\,g), p_{\mathfrak{u}_2^+}(g^{-1}\,p_{\mathfrak{u}_2^+}(x_2)\,g)]\\
%\quad- \Im \Tr [p_{\mathfrak{u}_2^+}(g^{-1} p_{\mathfrak{u}_2^+}(x_1) g) , p_{\mathfrak{b}_2^+}(g^{-1}\,p_{\mathfrak{u}_2^+}(x_2)\, g)] p_{\mathfrak{b}_2^+}(g^{-1}\,p_{\mathfrak{u}_2^+}(x_3)\, g)\\
%\quad-\Im \Tr \,[p_{\mathfrak{b}_2^+}(g^{-1}\,p_{\mathfrak{u}_2^+}(x_1)\, g), p_{\mathfrak{u}_2^+}(g^{-1}\,p_{\mathfrak{u}_2^+}(x_2)\,g)] p_{\mathfrak{u}_2^+}(g^{-1}\,p_{\mathfrak{u}_2^+}(x_3)\,g)\\
%\quad- \Im \Tr  p_{\mathfrak{b}_2^+}(g^{-1}\,p_{\mathfrak{u}_2^+}(x_3)\, g)[ p_{\mathfrak{b}_2^+}(g^{-1}\,p_{\mathfrak{u}_2^+}(x_1)\, g), p_{\mathfrak{u}_2^+}(g^{-1} p_{\mathfrak{u}_2^+}(x_2) g)]\\
%\quad-\Im \Tr \,p_{\mathfrak{u}_2^+}(g^{-1}\,p_{\mathfrak{u}_2^+}(x_3)\,g) [p_{\mathfrak{u}_2^+}(g^{-1}\,p_{\mathfrak{u}_2^+}(x_1)\,g), p_{\mathfrak{b}_2^+}(g^{-1}\,p_{\mathfrak{u}_2^+}(x_2)\, g)]\\
%\quad- \Im \Tr p_{\mathfrak{u}_2^+}(g^{-1} p_{\mathfrak{u}_2^+}(x_3) g) [p_{\mathfrak{b}_2^+}(g^{-1}\,p_{\mathfrak{u}_2^+}(x_1)\, g), p_{\mathfrak{b}_2^+}(g^{-1}\,p_{\mathfrak{u}_2^+}(x_2)\, g)]\\
= -\Im \Tr \,p_{\mathfrak{b}_2^+}(g^{-1}\,p_{\mathfrak{u}_2^+}(x_3)\, g)[p_{\mathfrak{u}_2^+}(g^{-1}\,p_{\mathfrak{u}_2^+}(x_1)\,g), p_{\mathfrak{u}_2^+}(g^{-1}\,p_{\mathfrak{u}_2^+}(x_2)\,g)]\\
\quad- \Im \Tr  p_{\mathfrak{b}_2^+}(g^{-1}\,p_{\mathfrak{u}_2^+}(x_3)\, g)[p_{\mathfrak{u}_2^+}(g^{-1} p_{\mathfrak{u}_2^+}(x_1) g) , p_{\mathfrak{b}_2^+}(g^{-1}\,p_{\mathfrak{u}_2^+}(x_2)\, g)]\\
\quad-\Im \Tr \,p_{\mathfrak{u}_2^+}(g^{-1}\,p_{\mathfrak{u}_2^+}(x_3)\,g)[p_{\mathfrak{b}_2^+}(g^{-1}\,p_{\mathfrak{u}_2^+}(x_1)\, g), p_{\mathfrak{u}_2^+}(g^{-1}\,p_{\mathfrak{u}_2^+}(x_2)\,g)] \\
\quad- \Im \Tr  p_{\mathfrak{b}_2^+}(g^{-1}\,p_{\mathfrak{u}_2^+}(x_3)\, g)[ p_{\mathfrak{b}_2^+}(g^{-1}\,p_{\mathfrak{u}_2^+}(x_1)\, g), p_{\mathfrak{u}_2^+}(g^{-1} p_{\mathfrak{u}_2^+}(x_2) g)]\\
\quad-\Im \Tr \,p_{\mathfrak{u}_2^+}(g^{-1}\,p_{\mathfrak{u}_2^+}(x_3)\,g) [p_{\mathfrak{u}_2^+}(g^{-1}\,p_{\mathfrak{u}_2^+}(x_1)\,g), p_{\mathfrak{b}_2^+}(g^{-1}\,p_{\mathfrak{u}_2^+}(x_2)\, g)]\\
\quad- \Im \Tr p_{\mathfrak{u}_2^+}(g^{-1} p_{\mathfrak{u}_2^+}(x_3) g) [p_{\mathfrak{b}_2^+}(g^{-1}\,p_{\mathfrak{u}_2^+}(x_1)\, g), p_{\mathfrak{b}_2^+}(g^{-1}\,p_{\mathfrak{u}_2^+}(x_2)\, g)]\\
\\
= -\Im \Tr \, g^{-1} p_{\mathfrak{u}_2^+}(x_3) g) [ g^{-1}\,p_{\mathfrak{u}_2^+}(x_1)\,g, g^{-1}\,p_{\mathfrak{u}_2^+}(x_2)\, g]\\
= -\Im \Tr g^{-1} p_{\mathfrak{u}_2^+}(x_3)[p_{\mathfrak{u}_2^+}(x_1), p_{\mathfrak{u}_2^+}(x_2)] g\\
= -\Im \Tr p_{\mathfrak{u}_2^+}(x_3)[p_{\mathfrak{u}_2^+}(x_1), p_{\mathfrak{u}_2^+}(x_2)] \\
= 0,
\end{array}
\end{equation*}
hence $\pi$ is a Poisson tensor.
\end{itemize}
\end{proof}

\begin{remark}{\rm
In the proof of the previous Theorem, we have established that
\begin{equation*}
\begin{array}{ll}
T_e\Pi^{B^+_\res}_r(Y)([x_1]_{\mathfrak{b}_{1,2}^+}, [x_2]_{\mathfrak{b}_{1,2}^+}) & =  \Im\Tr Y[p_{\mathfrak{u}_2^+}(x_1), p_{\mathfrak{u}_2^+}(x_2)],
\end{array}
\end{equation*}
where $x_1$, $x_2\in \operatorname{L}_{1,2}(\mathcal{H})$ and $Y\in \mathfrak{b}_{\res}^+(\mathcal{H})$. It follows  that $T_e\Pi^{B^+_\res}_r$ is the dual map of
\begin{equation}\label{bracketb}
\begin{array}{cll}
\operatorname{L}_{1,2}(\mathcal{H})/\mathfrak{b}_{1,2}^+(\mathcal{H}) \times \operatorname{L}_{1,2}(\mathcal{H})/\mathfrak{b}_{1,2}^+(\mathcal{H}) & \rightarrow & \operatorname{L}_{1,2}(\mathcal{H})/\mathfrak{b}_{1,2}^+(\mathcal{H})\\
([x_1]_{\mathfrak{b}_{1,2}^+}, [x_2]_{\mathfrak{b}_{1,2}^+}) & \mapsto & [p_{\mathfrak{u}_2^+}(x_1), p_{\mathfrak{u}_2^+}(x_2)],
\end{array}
\end{equation}
which is well defined  on $\operatorname{L}_{1,2}(\mathcal{H})/\mathfrak{b}_{1,2}^+(\mathcal{H})$ since $[p_{\mathfrak{u}_2^+}(x_1), p_{\mathfrak{u}_2^+}(x_2)]\in \operatorname{L}_1(\mathcal{H})$ for any  $x_1, x_2\in \operatorname{L}_{1,2}(\mathcal{H})$. Note that this bracket is continuous and extends the natural bracket of $\mathfrak{u}_{1,2}(\mathcal{H})$.
%Note that the restriction of this bracket to $\mathfrak{u}_{1,2}(\mathcal{H})$ is unbounded.
}
\end{remark}

%\subsection{The Banach Poisson--Lie group $\operatorname{U}_{\res}(\mathcal{H})$}

Along the same lines  (see also \cite{ABT3}), we obtain the following Theorem~:
\begin{theorem}\label{UresPoisson}
Consider the Banach Lie group $\operatorname{U}_{\res}(\mathcal{H})$, and 
\begin{enumerate}
\item $\mathfrak{g}_+ := \operatorname{L}_{1,2}(\mathcal{H})/\mathfrak{u}_{1,2}(\mathcal{H})\subset  \mathfrak{u}_\res^*(\mathcal{H})$,
\item $\mathbb{U}\subset T^*\operatorname{U}_{\res}(\mathcal{H})$, $\mathbb{U}_g = R_{g^{-1}}^*\mathfrak{g}_+,$
\item $\Pi^{\operatorname{U}_\res}_r~:\operatorname{U}_{\res}(\mathcal{H})\rightarrow \Lambda^2\mathfrak{g}_+^*$ defined by
$$
\Pi^{\operatorname{U}_\res}_r(g)([x_1]_{\mathfrak{u}_{1,2}}, [x_2]_{\mathfrak{u}_{1,2}}) = \Im\Tr(g^{-1}\,p_{\mathfrak{b}^+_2}(x_1)\,g)\left[ p_{\mathfrak{u}_2}(g^{-1}\,p_{\mathfrak{b}^+_2}(x_2)\,g)\right],
$$
\item $\pi^{\operatorname{U}_\res}(g) = R^{**}_g \Pi^{\operatorname{U}_\res}_r(g)$.
\end{enumerate}
Then $\left(\operatorname{U}_{\res}(\mathcal{H}), \mathbb{U}, \pi^{\operatorname{U}_\res}\right)$ is a Banach Poisson--Lie group.
\end{theorem}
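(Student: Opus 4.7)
The plan is to mirror the proof of Theorem~\ref{BresPoisson} with the roles of $\mathfrak{u}$ and $\mathfrak{b}^+$ interchanged and the projections $p_{\mathfrak{u}_2^+}$, $p_{\mathfrak{b}_2^+}$ swapped. First I would verify the preliminary fact that $\mathfrak{g}_+ := \operatorname{L}_{1,2}(\mathcal{H})/\mathfrak{u}_{1,2}(\mathcal{H})$ genuinely injects into $\mathfrak{u}_\res(\mathcal{H})^*$ through the continuous map $[a] \mapsto (x \mapsto \Im\Tr(ax))$. Identification of the kernel with $\mathfrak{u}_{1,2}(\mathcal{H})$ is done exactly as in Proposition~\ref{duality_U_b}: if $\Im\Tr(ax) = 0$ for all $x \in \mathfrak{u}_\res(\mathcal{H})$, one tests against the matrix units $E_{mn} - E_{nm}$, $i(E_{mn} + E_{nm})$ and $iE_{nn}$ to conclude $a^* = -a$. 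Stability of $\mathfrak{g}_+$ under the coadjoint action of $\operatorname{U}_\res(\mathcal{H})$ follows because $u^{-1}\mathfrak{u}_{1,2}(\mathcal{H})u \subset \mathfrak{u}_{1,2}(\mathcal{H})$ for $u \in \operatorname{U}_\res(\mathcal{H})$; this simultaneously shows that $\mathbb{U}$ is invariant under both left and right translations.

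Next I would check that $\Pi^{\operatorname{U}_\res}_r$ satisfies the cocycle identity \eqref{pirPoisson}. The computation expands
$\Pi^{\operatorname{U}_\res}_r(u)(\textrm{Ad}^*(g)[x_1]_{\mathfrak{u}_{1,2}}, \textrm{Ad}^*(g)[x_2]_{\mathfrak{u}_{1,2}})$
by replacing $\textrm{Ad}^*(g)[x_i]_{\mathfrak{u}_{1,2}}$ with $[g^{-1}x_i g]_{\mathfrak{u}_{1,2}}$, then inserting the decomposition $x_i = p_{\mathfrak{u}_2^+}(x_i) + p_{\mathfrak{b}_2^+}(x_i)$ inside each projection and using that $\mathfrak{u}_2(\mathcal{H}) \subset L_2(\mathcal{H})$ is isotropic for $\langle\cdot,\cdot\rangle_\mathbb{R}$ and that $g^{-1}\mathfrak{u}_2(\mathcal{H}) g \subset \mathfrak{u}_2(\mathcal{H})$ (this is the crucial place where the unitarity of $g$ comes in, replacing the $\mathfrak{b}_2^+$-invariance used in Theorem~\ref{BresPoisson}). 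After cancellation of the isotropic cross-terms, one obtains $\Pi^{\operatorname{U}_\res}_r(gu)([x_1],[x_2]) - \Pi^{\operatorname{U}_\res}_r(g)([x_1],[x_2])$, which is the required cocycle identity, and hence compatibility with multiplication by Proposition~\ref{cela}.

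For the Jacobi identity, I would apply Lemma~\ref{lemmaJacobi}(3). A direct computation at $e$ gives
$$T_e\Pi^{\operatorname{U}_\res}_r(Y)([x_1]_{\mathfrak{u}_{1,2}}, [x_2]_{\mathfrak{u}_{1,2}}) = \Im\Tr Y[p_{\mathfrak{b}_2^+}(x_1), p_{\mathfrak{b}_2^+}(x_2)],$$
so that the induced bracket on $\mathfrak{g}_+$ is $([x_1],[x_2]) \mapsto [p_{\mathfrak{b}_2^+}(x_1), p_{\mathfrak{b}_2^+}(x_2)]$, which is well-defined on the quotient since the commutator of two Hilbert--Schmidt operators lies in $L_1(\mathcal{H})$. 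Then the interior product reads $i_{[x]}\Pi^{\operatorname{U}_\res}_r(g) = -g\,p_{\mathfrak{u}_2}(g^{-1}p_{\mathfrak{b}_2^+}(x)g)\,g^{-1}$, which lies in $\mathfrak{u}_2(\mathcal{H}) \subset \mathfrak{u}_\res(\mathcal{H})$, so the hypothesis $i_{\alpha_0}\Pi_r(g) \in \mathfrak{g}$ of Lemma~\ref{lemmaJacobi}(3) is verified. Plugging into \eqref{Jacobi_tensor} and using cyclicity of the trace together with the invariance of $\mathfrak{u}_2(\mathcal{H})$ and $\mathfrak{b}_2^+(\mathcal{H})$ as isotropic summands, all six terms collapse — exactly as in the last computation in Theorem~\ref{BresPoisson} — to $-\Im\Tr p_{\mathfrak{b}_2^+}(x_3)[p_{\mathfrak{b}_2^+}(x_1), p_{\mathfrak{b}_2^+}(x_2)] = 0$, since this is the evaluation of the Jacobiator of the Lie bracket on $\mathfrak{b}_2^+(\mathcal{H})$.

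The main obstacle I anticipate is the bookkeeping in the Jacobi calculation: one has to track carefully how the cross-terms between $p_{\mathfrak{u}_2^+}$- and $p_{\mathfrak{b}_2^+}$-components interact after conjugation by $g \in \operatorname{U}_\res(\mathcal{H})$, relying repeatedly on the fact that $g^{-1}\mathfrak{u}_2(\mathcal{H})g \subset \mathfrak{u}_2(\mathcal{H})$ (whereas $g^{-1}\mathfrak{b}_2^+(\mathcal{H})g$ is \emph{not} contained in $\mathfrak{b}_2^+(\mathcal{H})$ for general $g \in \operatorname{U}_\res(\mathcal{H})$). This asymmetry, which is dual to the one exploited in Theorem~\ref{BresPoisson}, forces one to organize the expansion so that projections onto $\mathfrak{u}_2$ are always ``inside'' the conjugation by $g$ and projections onto $\mathfrak{b}_2^+$ are ``outside''; every other simplification is then a consequence of isotropy and cyclicity of the trace.
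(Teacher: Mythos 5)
Your proposal is correct and follows essentially the paper's own (implicit) proof, which is exactly to rerun the argument of Theorem~\ref{BresPoisson} with the roles of $\mathfrak{u}_2(\mathcal{H})$ and $\mathfrak{b}_2^+(\mathcal{H})$ interchanged, the invariance $g^{-1}\mathfrak{u}_2(\mathcal{H})g\subset\mathfrak{u}_2(\mathcal{H})$ under unitary conjugation replacing the $\mathfrak{b}_2^+$-invariance under conjugation by $\operatorname{B}^+_{\res}(\mathcal{H})$, and your intermediate formulas (the value of $T_e\Pi^{\operatorname{U}_\res}_r$, the bracket on the quotient, and $i_{[x]}\Pi^{\operatorname{U}_\res}_r(g)=-g\,p_{\mathfrak{u}_2}(g^{-1}p_{\mathfrak{b}_2^+}(x)g)\,g^{-1}\in\mathfrak{u}_2(\mathcal{H})$) are the right analogues. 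One small slip in wording: the final vanishing $-\Im\Tr\, p_{\mathfrak{b}_2^+}(x_3)[p_{\mathfrak{b}_2^+}(x_1),p_{\mathfrak{b}_2^+}(x_2)]=0$ is a single term, not the Jacobiator; it vanishes because $[\mathfrak{b}_2^+(\mathcal{H}),\mathfrak{b}_2^+(\mathcal{H})]\subset\mathfrak{b}_2^+(\mathcal{H})$ and $\mathfrak{b}_2^+(\mathcal{H})$ is isotropic for the pairing $\Im\Tr$, exactly as the corresponding term vanished by isotropy of $\mathfrak{u}_2(\mathcal{H})$ in Theorem~\ref{BresPoisson}.
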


%\begin{theorem}
%Consider the Banach Lie groups  $(G_+, G_-) = \left( \operatorname{U}_{1,2}(\mathcal{H}), \operatorname{B}_{\res}^\pm(\mathcal{H})\right)$ or 
%$(G_+, G_-) =\left(\operatorname{U}_{\res}(\mathcal{H}), \operatorname{B}_{1,2}^\pm(\mathcal{H})\right)$. Define
%\begin{enumerate}
%\item
%$\mathcal{A}_+:= \{ f\in\mathcal{C}^{\infty}(G_+), R_{u^{-1}}^*d_uf\in \mathfrak{g}_-\subset \mathfrak{g}_+^*\}$,
%\item $\pi^+_r~:G_+ \rightarrow \Lambda^2\mathfrak{g}_-^*(\mathfrak{g}_-)$ defined by
%$$
%\pi^+_r(u)(b_1, b_2) = \Im\Tr(ub_1u^{-1} -b_1)\left[ p_{\mathfrak{b}_2^\pm}(ub_2 u^{-1} - b_2) + b_2\right],
%$$
%\item $\mathcal{A}_-:= \{ f\in\mathcal{C}^{\infty}(G_-), R_{g^{-1}}^*d_gf\in \mathfrak{g}_+\subset \mathfrak{g}_-^*\}$,
%\item $\pi^-_r~:G_- \rightarrow \Lambda^2\mathfrak{g}_+^*(\mathfrak{g}_+)$ defined by
%$$
%\pi^-_r(g)(x_1, x_2) = \Im\Tr(gx_1g^{-1} -x_1)\left[ p_{\mathfrak{u}_2^\pm}(gx_2 g^{-1}  - x_2) + u_2 \right].
%$$
%\end{enumerate}
%Then $(G_+, \mathcal{A}, \pi^+)$ and $(G_-, \operatorname{L}, \pi^-)$ are  Banach Poisson--Lie groups.
%\end{theorem}
%
%\todo{XI. Voir le vrai et le faux...}

\section{Bruhat-Poisson structure of the restricted Grassmannian}\label{section7}

In this Section, we construct a generalized Banach Poisson structure on the restricted Grassmannian, and called it Bruhat-Poisson structure by reference to the finite-dimensional picture developped in \cite{LW90}. 
\subsection{A Poisson--Lie subgroup of $\operatorname{U}_{\res}(\mathcal{H})$}

The following definition is identical to the definition in the finite-dimensional case.
\begin{definition}
A Banach Lie subgroup $H$ of a Banach Poisson--Lie group $G$ is called a \textbf{Banach Poisson--Lie subgroup} if it is a Banach Poisson submanifold of $G$, i.e. if it carries a Poisson structure such that the inclusion map $\iota~: H \hookrightarrow G$ is a Poisson map.
\end{definition}

Let us show the following Proposition.
\begin{proposition}
The Banach Lie group $\operatorname{H} := \operatorname{U}(\mathcal{H}_+)\times \operatorname{U}(\mathcal{H}_-)$ is a Poisson--Lie subgroup of $\operatorname{U}_{\res}(\mathcal{H})$.
\end{proposition}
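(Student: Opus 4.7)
My strategy is to verify first that $\operatorname{H}$ is a closed Banach Lie subgroup of $\operatorname{U}_{\res}(\mathcal{H})$, and then to show that the Poisson bivector $\pi^{\operatorname{U}_{\res}}$ of Theorem~\ref{UresPoisson} is tangent to $\operatorname{H}$ at every $h\in \operatorname{H}$, so that it descends to a Poisson tensor on $\operatorname{H}$ for which the inclusion $\iota:\operatorname{H}\hookrightarrow \operatorname{U}_{\res}(\mathcal{H})$ is automatically a Poisson map. The first step is immediate: block-diagonal operators commute with $d=i(p_+-p_-)$, so every element of $\operatorname{U}(\mathcal{H}_+)\times\operatorname{U}(\mathcal{H}_-)$ lies in $\operatorname{U}_{\res}(\mathcal{H})$, and $\operatorname{H}$ is a closed Banach Lie subgroup with Lie algebra $\mathfrak{h}=\mathfrak{u}(\mathcal{H}_+)\oplus\mathfrak{u}(\mathcal{H}_-)$. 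The conceptual formulation of the second step is to identify the annihilator $\mathfrak{h}^{\perp}\subset\mathfrak{g}_+:=\operatorname{L}_{1,2}(\mathcal{H})/\mathfrak{u}_{1,2}(\mathcal{H})$ with respect to the weak duality pairing $\Im\Tr_{\res}$ and to check that $\Pi^{\operatorname{U}_{\res}}_{r}(h)$ vanishes on $\mathfrak{g}_+\times\mathfrak{h}^{\perp}$ for every $h\in \operatorname{H}$. A direct block-by-block computation in $\operatorname{L}_{1,2}(\mathcal{H})$ shows that $[y]_{\mathfrak{u}_{1,2}}\in\mathfrak{h}^{\perp}$ if and only if $y$ can be chosen purely off-diagonal, i.e.\ $y_{++}=y_{--}=0$, since adding an arbitrary block-diagonal skew-Hermitian trace class operator (which lies in $\mathfrak{u}_{1,2}(\mathcal{H})$) does not change the class.

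The crux of the argument is then the identity
\[
\Pi^{\operatorname{U}_{\res}}_{r}(h)\bigl([x]_{\mathfrak{u}_{1,2}},[y]_{\mathfrak{u}_{1,2}}\bigr)=0
\qquad\text{for every }h\in \operatorname{H},\ [x]\in\mathfrak{g}_+,\ [y]\in\mathfrak{h}^{\perp}.
\]
I would establish this by exploiting the interaction between the $\mathcal{H}_+\oplus\mathcal{H}_-$ block decomposition and the triangular projections of Lemma~\ref{decub}. For a purely off-diagonal representative $y$, the explicit formula $p_{\mathfrak{b}_2^+}(A)=T_{++}(A+A^*)+\tfrac{1}{2}D(A+A^*)$ yields $p_{\mathfrak{b}_2^+}(y)=y_{+-}+y_{-+}^{*}$, which is supported entirely in the $+-$ block. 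For $h=h_+\oplus h_-\in \operatorname{H}$ block-diagonal, conjugation preserves this decomposition, so $h^{-1}p_{\mathfrak{b}_2^+}(y)h=h_+^{-1}(y_{+-}+y_{-+}^{*})h_-$ still lies in the $+-$ block. Since any operator supported in the $+-$ block is strictly upper triangular with zero diagonal (its matrix entries are indexed by $m\geq 0>n$), it belongs to $\mathfrak{b}_2^{+}(\mathcal{H})$; hence $p_{\mathfrak{u}_2}\bigl(h^{-1}p_{\mathfrak{b}_2^+}(y)h\bigr)=0$, and the explicit formula for $\Pi^{\operatorname{U}_{\res}}_{r}$ in Theorem~\ref{UresPoisson} gives the vanishing.

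Once this identity is in hand, the conclusion is essentially formal. The contraction map $[x]\mapsto i_{[x]}\Pi^{\operatorname{U}_{\res}}_{r}(h)$ takes values in $\mathfrak{u}_{\res}(\mathcal{H})$ (an explicit representative is $-h\,p_{\mathfrak{u}_2}(h^{-1}p_{\mathfrak{b}_2^+}(x)h)\,h^{-1}$, which lies in $\mathfrak{u}_2(\mathcal{H})\subset\mathfrak{u}_{\res}(\mathcal{H})$), and its vanishing on $\mathfrak{h}^{\perp}$ forces the image to be block-diagonal---the non-degeneracy of the trace pairing between the Hilbert--Schmidt off-diagonal blocks $\operatorname{L}_2(\mathcal{H}_\pm,\mathcal{H}_\mp)$ shows that any element of $\mathfrak{u}_{\res}(\mathcal{H})$ pairing trivially with every off-diagonal $y\in\operatorname{L}_{1,2}(\mathcal{H})$ must lie in $\mathfrak{h}$. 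Thus $\Pi^{\operatorname{U}_{\res}}_{r}(h)\in\Lambda^{2}\mathfrak{h}$, equivalently $\pi^{\operatorname{U}_{\res}}(h)\in\Lambda^{2}T_h\operatorname{H}$, and I would define the induced Poisson structure on $\operatorname{H}$ by setting $\mathbb{F}^{\operatorname{H}}_h=T_h^{*}\iota(\mathbb{U}_h)=R_{h^{-1}}^{*}(\mathfrak{g}_+/\mathfrak{h}^{\perp})$ and letting $\pi^{\operatorname{H}}(h)$ be the bilinear form on this quotient induced by $\pi^{\operatorname{U}_{\res}}(h)$. The inclusion $\iota$ is then a Poisson map by construction, and the Jacobi identity for $\pi^{\operatorname{H}}$ is inherited from that of $\pi^{\operatorname{U}_{\res}}$ once every closed local section of $\mathbb{F}^{\operatorname{H}}$ is extended to a closed local section of $\mathbb{U}$ in a chart. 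The technical heart of the proof is therefore the block-matrix identity of the second paragraph; the rest is verification of compatibility with the generalized Poisson framework of Part~1, and the extension of closed sections across the quotient $\mathfrak{g}_+\to\mathfrak{g}_+/\mathfrak{h}^{\perp}$ is the only point that requires some care.
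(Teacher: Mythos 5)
Your proof is correct and takes essentially the same route as the paper: the same key block computation showing that, for block-diagonal $h$, the operator $h^{-1}p_{\mathfrak{b}_2^+}(y)h$ stays in the $+-$ block and hence in $\mathfrak{b}_2^+(\mathcal{H})$, so that $\Pi^{\operatorname{U}_{\res}}_r(h)$ annihilates $\mathfrak{h}^0$, followed by passing to the quotient bundle $\iota^*\mathbb{U}/(TH)^0$ so that the inclusion is Poisson by construction and Jacobi is inherited. Your additional step identifying the contraction with $-h\,p_{\mathfrak{u}_2}(h^{-1}p_{\mathfrak{b}_2^+}(x)h)\,h^{-1}\in\mathfrak{h}$ (tangency of the bivector to $\operatorname{H}$) is a correct mild strengthening that the paper leaves implicit.
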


\begin{proof}
Denote by $\iota~: \operatorname{H}  \hookrightarrow \operatorname{U}_{\res}(\mathcal{H})$ the inclusion map.
It is clear that $H$ is a Banach submanifold of $\operatorname{U}_{\res}(\mathcal{H})$. Denote by $\mathfrak{h}$ its Lie algebra.
Recall that $\mathbb{U}$ is the subbundle of $T^*\operatorname{U}_{\res}(\mathcal{H})$ given by $\mathbb{U}_g = R_{g^{-1}}^*\mathfrak{g}_+$ where $\mathfrak{g}_+ : = L_{1,2}(\mathcal{H})/\mathfrak{u}_{1,2}(\mathcal{H})$. Denote by $\langle \cdot, \cdot \rangle_{\mathfrak{u}_{\res}}$ the  duality pairing between $\mathfrak{g}_+$ and $\mathfrak{u}_{\res}(\mathcal{H})$, and  by $\mathfrak{h}^0$ the closed subspace of $\mathfrak{g}_+$ consisting of those covectors in $\mathfrak{g}_+$ which vanish on the closed subspace $\mathfrak{h}$ of $\mathfrak{u}_{\res}(\mathcal{H})$. For any covector $\alpha\in\iota^*\mathfrak{g}_+$ acting on $\mathfrak{h}$, and any vector $X\in\mathfrak{h}$, denote by  $[\alpha]_{\mathfrak{h}^0}$ the class of $\alpha\in i^*\mathfrak{g}_+$ in $i^*\mathfrak{g}_+/\mathfrak{h}^0$. Then the formula
$$
\langle [\alpha]_{\mathfrak{h}^0}, X \rangle_{\mathfrak{h}} := \langle \alpha, X\rangle_{\mathfrak{u}_{\res}},
$$
defines a duality pairing between  $\mathbb{H}_e :=i^*\mathfrak{g}_+/\mathfrak{h}^0$ and $\mathfrak{h}$. It follows that $\mathbb{H} := i^*\mathbb{U}/(T H)^0$ is a subbundle of $T^*H$ in duality with $TH$.
Recall that the Poisson tensor on $\operatorname{U}_{\res}(\mathcal{H})$ is defined as follows
$$
  \Pi^{\operatorname{U}_\res}_r(h)(\alpha, \beta) = \Im \Tr(h^{-1} p_{\mathfrak{b}_2^+}(x_1) h) \left[p_{\mathfrak{u}_2^+}(h^{-1}p_{\mathfrak{b}_2^+}(x_2) h)\right]
$$
where $\alpha, \beta \in\mathfrak{g}_+ = L_{1,2}(\mathcal{H})/\mathfrak{u}_{1,2}$ and $x_1$, $x_2\in L_{1,2}(\mathcal{H})$ are such that $\alpha = [x_1]_{\mathfrak{u}_{1,2}}$ and $\beta = [x_2]_{\mathfrak{u}_{1,2}}$. Note that an element $x_2 = \left(\begin{smallmatrix} A & B\\ C & D\end{smallmatrix}\right)\in L_{1,2}(\mathcal{H})$ belongs to $\mathfrak{h}^0$ if and only if $A \in \mathfrak{u}_{1}(\mathcal{H})$ and $D\in  \mathfrak{u}_{1}(\mathcal{H})$. In that case, one has
$$
x_2 = \left(\begin{smallmatrix} A & -C^*\\ C & D\end{smallmatrix}\right) + \left(\begin{smallmatrix} 0 & B+C^*\\ 0 & 0\end{smallmatrix}\right),
$$
with $p_{\mathfrak{u}_2^+}(x_2) = \left(\begin{smallmatrix} A & -C^*\\ C & D\end{smallmatrix}\right)$ and $p_{\mathfrak{b}_2^+}(x_2) = \left(\begin{smallmatrix} 0 & B+C^*\\ 0 & 0\end{smallmatrix}\right)$. Note also that for any $h = \left(\begin{smallmatrix} h_1 & 0\\ 0 & h_2\end{smallmatrix}\right)\in \operatorname{U}(\mathcal{H}_+)\times\operatorname{U}(\mathcal{H}_-)$, one has
$$
h^{-1} p_{\mathfrak{b}_2^+}(x_2) h = \left(\begin{smallmatrix} 0 & h^{-1}_1(B+C^*)h_2\\ 0 & 0\end{smallmatrix}\right)\in \mathfrak{b}_{2}^+(\mathcal{H}).
$$
It follows that $\Pi^{\operatorname{U}_\res}_r(h)(\cdot, \beta) = 0$ whenever $\beta \in \mathfrak{h}^0$. By skew-symmetry of $\Pi^{\operatorname{U}_\res}_r$, one also  has $\Pi^{\operatorname{U}_\res}_r(h)(\alpha, \cdot) = 0$ whenever $\alpha\in \mathfrak{h}^0$. This allows to define the following map
$$
\Pi^H_r~: H \rightarrow \Lambda^2\mathbb{H}_e^*
$$
by 
$$
\Pi^H_r(h)([\alpha]_{\mathfrak{h}^0}, [\beta]_{\mathfrak{h}^0}) := \Pi^{\operatorname{U}_\res}_r(h)(\alpha, \beta)
$$
for $\alpha, \beta\in\mathfrak{g}_+ = L_{1,2}(\mathcal{H})/\mathfrak{u}_{1,2}$. Set $\pi^H_g := R_g^{**}\Pi^H_r$.  The Jacobi identity for $\pi^H$ follows from the Jacobi identity for $\pi^{\operatorname{U}_\res}$.
By construction, the injection $\iota : H \hookrightarrow \operatorname{U}_{\res}(\mathcal{H})$ is a Poisson map.
\end{proof}
\subsection{The restricted Grassmannian as a quotient Poisson homogeneous space}

\begin{theorem}\label{Poisson}
The restricted Grassmannian $\operatorname{Gr}_{\res}(\mathcal{H}) =  \operatorname{U}_{\res}(\mathcal{H})/\operatorname{U}(\mathcal{H}_+)\times \operatorname{U}(\mathcal{H}_-)$ carries a natural Poisson structure $(\operatorname{Gr}_{\res}(\mathcal{H}), T^*\operatorname{Gr}_{\res}(\mathcal{H}), \pi^{Gr_\res})$ such that~:
\begin{enumerate}
\item the canonical projection $p~: \operatorname{U}_{\res}(\mathcal{H}) \rightarrow \operatorname{Gr}_{\res}(\mathcal{H})$ is a Poisson map,
\item the natural action $\operatorname{U}_{\res}(\mathcal{H})\times\operatorname{Gr}_{\res}(\mathcal{H})\rightarrow \operatorname{Gr}_{\res}(\mathcal{H})$ by left translations is a Poisson map.
\end{enumerate}
\end{theorem}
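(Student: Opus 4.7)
The plan is to descend the Poisson structure of $\operatorname{U}_{\res}(\mathcal{H})$ along the submersion $p$, using in an essential way that $H := \operatorname{U}(\mathcal{H}_+)\times \operatorname{U}(\mathcal{H}_-)$ is a Banach Poisson--Lie subgroup of $\operatorname{U}_{\res}(\mathcal{H})$ (as established just above). First, I would identify the cotangent bundle of $\operatorname{Gr}_{\res}(\mathcal{H})$ via right translations: since $\operatorname{Gr}_{\res}(\mathcal{H})$ is a Hilbert manifold, the fibre $T^*_{[g]}\operatorname{Gr}_{\res}(\mathcal{H})$ can be identified, after right translating to the origin, with the subspace of $\mathfrak{g}_+ = \operatorname{L}_{1,2}(\mathcal{H})/\mathfrak{u}_{1,2}(\mathcal{H})$ consisting of classes whose representatives are of the form $\bigl(\begin{smallmatrix}0 & B\\ -B^* & 0\end{smallmatrix}\bigr)$, i.e. precisely the annihilator $\mathfrak{h}^0 \subset \mathfrak{g}_+$ of $\mathfrak{h} = \mathfrak{u}(\mathcal{H}_+)\oplus \mathfrak{u}(\mathcal{H}_-) \subset \mathfrak{u}_{\res}(\mathcal{H})$. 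In this way $p^*T^*\operatorname{Gr}_{\res}(\mathcal{H})$ embeds as a subbundle of $\mathbb{U} \subset T^*\operatorname{U}_{\res}(\mathcal{H})$, and the duality pairing with $T\operatorname{Gr}_{\res}(\mathcal{H}) \simeq TU_{\res}(\mathcal{H})/T(gH)$ is inherited from the pairing between $\mathfrak{g}_+$ and $\mathfrak{u}_{\res}(\mathcal{H})$.

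Second, I would define the Poisson tensor on $\operatorname{Gr}_{\res}(\mathcal{H})$ by
\begin{equation*}
\pi^{\operatorname{Gr}_{\res}}_{[g]}(\tilde{\alpha},\tilde{\beta}) := \pi^{\operatorname{U}_{\res}}_g(p^*\tilde{\alpha}, p^*\tilde{\beta}),
\end{equation*}
and the main point is to verify that the right-hand side is independent of the representative $g\in [g]$. Writing $p^*\tilde\alpha = R_{g^{-1}}^*[x_1]_{\mathfrak{u}_{1,2}}$ with $[x_1]_{\mathfrak{u}_{1,2}} \in \mathfrak{h}^0$, and likewise for $\tilde\beta$, the multiplicativity identity $\pi^{\operatorname{U}_{\res}}_{gh} = L_g^{**}\pi^{\operatorname{U}_{\res}}_h + R_h^{**}\pi^{\operatorname{U}_{\res}}_g$ of Proposition \ref{GPoisson} reduces the check to showing that $\Pi^{\operatorname{U}_{\res}}_r(h)$ vanishes whenever one of its arguments lies in $\mathfrak{h}^0$, for $h \in H$. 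This however is exactly the computation carried out in the proof that $H$ is a Poisson--Lie subgroup, where one sees that the block-diagonal form of $h$ together with the block-antidiagonal form of representatives of $\mathfrak{h}^0$ forces $h^{-1}p_{\mathfrak{b}_2^+}(x_i)h$ to remain in $\mathfrak{b}_2^+(\mathcal{H})$, so that $p_{\mathfrak{u}_2^+}(h^{-1}p_{\mathfrak{b}_2^+}(x_2)h)$ vanishes. Jacobi identity for $\pi^{\operatorname{Gr}_{\res}}$ then follows by lifting closed local sections of $T^*\operatorname{Gr}_{\res}(\mathcal{H})$ to closed local sections of $\mathbb{U}$ using local trivializations of $p$, and applying Jacobi for $\pi^{\operatorname{U}_{\res}}$ from Theorem \ref{UresPoisson}. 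Assertion (1) is then true by construction.

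For assertion (2), let $a: \operatorname{U}_{\res}(\mathcal{H}) \times \operatorname{Gr}_{\res}(\mathcal{H}) \to \operatorname{Gr}_{\res}(\mathcal{H})$ denote the left translation action and $m: \operatorname{U}_{\res}(\mathcal{H}) \times \operatorname{U}_{\res}(\mathcal{H}) \to \operatorname{U}_{\res}(\mathcal{H})$ the group multiplication. The diagram $a \circ (\operatorname{id}\times p) = p \circ m$ commutes by definition of the action. Now $m$ is a Poisson map since $\operatorname{U}_{\res}(\mathcal{H})$ is a Banach Poisson--Lie group, and $p$ is a Poisson map by step one, hence $p\circ m$ is a Poisson map. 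On the other hand $\operatorname{id}\times p$ is a Poisson submersion. A diagram chase on covectors, using that $(\operatorname{id}\times p)^*$ embeds $T^*\bigl(\operatorname{U}_{\res}\times \operatorname{Gr}_{\res}\bigr)$ into $T^*\bigl(\operatorname{U}_{\res}\times \operatorname{U}_{\res}\bigr)$, then yields that $a$ itself is a Poisson map.

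The main obstacle is the descent step. Rigorously, one must check both that $\pi^{\operatorname{U}_{\res}}$ is well defined on the whole subbundle $p^*T^*\operatorname{Gr}_{\res}(\mathcal{H})$ (which is not a priori contained in the subbundle $\mathbb{U}$ of Theorem \ref{UresPoisson} in an obvious way, because $\mathbb{U}_g$ was defined via the dense quotient $\operatorname{L}_{1,2}/\mathfrak{u}_{1,2}$ and we need the $\mathfrak{h}^0$-covectors to sit inside it with control), and that descent of $\pi^{\operatorname{U}_{\res}}$ is compatible with taking differentials of local sections — the latter requires a local Poisson-compatible section of $p$, which exists because $H$ is a Banach Poisson--Lie subgroup modelled on a complemented subalgebra of $\mathfrak{u}_{\res}(\mathcal{H})$.
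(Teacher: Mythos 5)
Your construction of $\pi^{Gr_\res}$ and the proof of assertion (1) follow the paper's own route almost verbatim: define $\pi^{Gr_\res}_{[g]}(\alpha,\beta)=\pi^{\operatorname{U}_\res}_g(p^*\alpha,p^*\beta)$, reduce independence of the representative, via the multiplicativity (cocycle) identity, to the vanishing of $\Pi^{\operatorname{U}_\res}_r(h)$ on $\mathfrak{h}^0$ for $h\in\operatorname{H}$ (the computation from the Poisson--Lie subgroup proposition), and deduce Jacobi from Jacobi upstairs. For assertion (2), however, you take a genuinely different route: instead of the paper's direct computation -- which writes out $\alpha\circ T_{(g_1,gH)}a_U=R_g^*p^*\alpha+L_{g_1}^*\alpha$, checks by an explicit conjugation argument that $R_g^*p^*\alpha\in\mathbb{U}_{g_1}$ (i.e. that $g_1gx_1(g_1g)^{-1}$ stays in $\operatorname{L}_{1,2}(\mathcal{H})$), and then verifies the tensor identity $\pi^{Gr_\res}_{g_1gH}(\alpha,\beta)=\pi^{\operatorname{U}_\res}_{g_1}(R_g^*p^*\alpha,R_g^*p^*\beta)+\pi^{Gr_\res}_{gH}(L_{g_1}^*\alpha,L_{g_1}^*\beta)$ directly from the cocycle identity for $\Pi^{\operatorname{U}_\res}_r$ -- you run a formal chase through $a\circ(\operatorname{id}\times p)=p\circ m$, using that $m$ is Poisson (Theorem \ref{UresPoisson}) and $p$ is Poisson and surjective. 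Your chase is valid and conceptually cleaner, and it even recovers the covector condition for $a$, because $(\operatorname{id}\times p)^*a^*\alpha=m^*p^*\alpha$ and the first component of the left-hand side is literally the first component of $a^*\alpha$; what it buys is brevity and reusability, while what the paper's computation buys is that the two analytic inputs are made explicit rather than hidden inside the statement ``$p$ is Poisson'': namely that $p^*T^*\operatorname{Gr}_{\res}(\mathcal{H})\subset\mathbb{U}$, i.e. $\operatorname{Ad}^*_{g^{-1}}\mathfrak{h}^0\subset \operatorname{L}_{1,2}(\mathcal{H})/\mathfrak{u}_{1,2}(\mathcal{H})$, which is exactly the conjugation-invariance of $\operatorname{L}_{1,2}(\mathcal{H})$ under $\operatorname{U}_{\res}(\mathcal{H})$ that you correctly flag in your last paragraph and that the paper verifies inside step (2)(a). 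One cosmetic slip: the fibre $T^*_{[g]}\operatorname{Gr}_{\res}(\mathcal{H})$ should be identified with $\mathfrak{h}^0$ by \emph{left} translation ($p^*\alpha=L^*_{g^{-1}}x_1$, $x_1\in\mathfrak{h}^0$), since the right-translate of such a covector annihilates $\operatorname{Ad}_g\mathfrak{h}$ rather than $\mathfrak{h}$; this does not affect the architecture of your argument.
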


\begin{proof}\begin{enumerate}
\item
The tangent space at $eH\in \operatorname{Gr}_{\res}(\mathcal{H}) =  \operatorname{U}_{\res}(\mathcal{H})/\operatorname{U}(\mathcal{H}_+)\times \operatorname{U}(\mathcal{H}_-)$ can be identified with the quotient Banach space $\mathfrak{u}_{\res}(\mathcal{H})/\left(\mathfrak{u}(\mathcal{H}_+)\oplus \mathfrak{u}(\mathcal{H}_-)\right)$ which is isomorphic to the Hilbert space $$\mathfrak{m} := \{ \left(\begin{smallmatrix} 0 & A\\ -A^*& 0\end{smallmatrix}\right)\in \mathfrak{u}_2(\mathcal{H})\}.$$
The duality pairing between $\mathfrak{u}_{\res}(\mathcal{H})$ and $\mathfrak{g}_+ = L_{1,2}(\mathcal{H})/\mathfrak{u}_{1, 2}(\mathcal{H})$ induces a strong duality pairing between the quotient space $\mathfrak{u}_{\res}(\mathcal{H})/\left(\mathfrak{u}(\mathcal{H}_+)\oplus \mathfrak{u}(\mathcal{H}_-)\right) = \mathfrak{m}$ and $\mathfrak{h}^0\subset \mathfrak{g}_+$.
 For $\alpha, \beta \in T^*_{gH} \operatorname{Gr}_{\res}(\mathcal{H})$, identify $p^*\alpha\in T^*_g\operatorname{U}_{\res}(\mathcal{H})$ with an element $L_{g^{-1}}^*x_1$ in $ L_{g^{-1}}^*\mathfrak{h}^0$, and $p^*\beta$ with $L_{g^{-1}}^*x_2\in L_{g^{-1}}^*\mathfrak{h}^0$. Define
$$
\begin{array}{ll}
\pi^{Gr_\res}_{gH}(\alpha, \beta) &=\pi^{\operatorname{U}_\res}_g(p^*\alpha, p^*\beta).\end{array}
$$
We have to check that the right hand side is invariant by the natural right action of $H$ on $ \operatorname{U}_{\res}(\mathcal{H})$, which induces an action of $H$ on  forms in $T^*_g\operatorname{U}_{\res}(\mathcal{H})$ by $\gamma\rightarrow R_{h^{-1}}^*\gamma  \in T^*_{gh}\operatorname{U}_{\res}(\mathcal{H}).$ In other words, we have to check that 
\begin{equation}\label{invariance}
\begin{array}{l}
 \pi^{\operatorname{U}_\res}_g((p^*\alpha)_g, (p^*\beta)_g) =  \pi^{\operatorname{U}_\res}_{gh}(R^*_{h^{-1}}(p^*\alpha)_{g}, R^*_{h^{-1}}(p^*\beta)_{g})
 \end{array}
 \end{equation}
 $$
  \Leftrightarrow  \pi^{\operatorname{U}_\res}_g(L_{g^{-1}}^*x_1, L_{g^{-1}}^*x_2) =  \pi^{\operatorname{U}_\res}_{gh}(R^*_{h^{-1}}L_{g^{-1}}^*x_1, R^*_{h^{-1}}L_{g^{-1}}^*x_2)
  $$
  $$  \Leftrightarrow
  \begin{array}{l}
\Pi^{\operatorname{U}_\res}_r(g)(\textrm{Ad}_{g^{-1}}^*x_1, \textrm{Ad}_{g^{-1}}^*x_2) = \Pi^{\operatorname{U}_\res}_r(gh)(R^*_{gh}R^*_{h^{-1}}L_{g^{-1}}^*x_1, R^*_{gh}R^*_{h^{-1}}L_{g^{-1}}^*x_2)\end{array}
$$
%Note that, for $x_1\in \mathfrak{h}^0$ and $X \in \mathfrak{u}_{\res}(\mathcal{H})$,
%$$
%\begin{array}{ll}
%\textrm{Ad}_{h^{-1} g^{-1}}^*x_1 (X) &= x_1(\textrm{Ad}_{h^{-1} g^{-1}}(X)) = x_1(h^{-1}g^{-1} X gh) 
%\\
%& = \Im \Tr x_1 h^{-1}g^{-1} X gh = \Im \Tr gh x_1 h^{-1}g^{-1} X \\
%&= \textrm{Ad}_{h^{-1}}^*x_1 (g^{-1} X g) =  \textrm{Ad}^*_{g^{-1}}(\textrm{Ad}_{h^{-1}}^*x_1)(X).
%\end{array}
%$$
Note that $R^*_{gh}\gamma(X) = \gamma(R_{gh}X) = \gamma(Xgh) = R_h^*\gamma(Xg) = R_g^* R_h^* \gamma(X)$. Therefore $R^*_{gh} = R_g^* R_h^*$. It follows that \eqref{invariance} is equivalent to
$$
 \begin{array}{l}
\Pi^{\operatorname{U}_\res}_r(g)(\textrm{Ad}_{g^{-1}}^*x_1, \textrm{Ad}_{g^{-1}}^*x_2) = \Pi^{\operatorname{U}_\res}_r(gh)(\textrm{Ad}_{g^{-1}}^*x_1, \textrm{Ad}_{g^{-1}}^*x_2)
\end{array}
$$
By the cocycle identity
$
 \Pi^{\operatorname{U}_\res}_r(gh) = \textrm{Ad}(g)^{**}\Pi^{\operatorname{U}_\res}_r(h) + \Pi^{\operatorname{U}_\res}_r(g),
$
one has
$$
\begin{array}{ll}
 \Pi^{\operatorname{U}_\res}_r(gh)(\textrm{Ad}_{g^{-1}}^*x_1, \textrm{Ad}_{g^{-1}}^*x_2)
% \\= \tilde{\pi}_r(gh)(\textrm{Ad}^*_{g^{-1}}(\textrm{Ad}_{h^{-1}}^*x_1), \textrm{Ad}^*_{g^{-1}}(\textrm{Ad}_{h^{-1}}^*x_2)) 
 =&  \Pi^{\operatorname{U}_\res}_r(h)(\textrm{Ad}_{g}^*\textrm{Ad}_{g^{-1}}^*x_1, \textrm{Ad}_{g}^*\textrm{Ad}_{g^{-1}}^*x_2)  \\& + \Pi^{\operatorname{U}_\res}_r(g)(\textrm{Ad}_{g^{-1}}^*x_1, \textrm{Ad}_{g^{-1}}^*x_2)  \\
 % =  \tilde{\pi}_r(g)(\textrm{Ad}^*(g^{-1})x_1, \textrm{Ad}^*(g^{-1})x_2).\\
\end{array}
$$
%Since $\mathfrak{h}^0$ is invariant by the coadjoint action of $H$ and s
Since $ \Pi^{\operatorname{U}_\res}_r(h)$ vanishes on $\mathfrak{h}^0$, one has 
$$ \Pi^{\operatorname{U}_\res}_r(h)(\textrm{Ad}_{h^{-1}}^*x_1, \textrm{Ad}_{h^{-1}}^*x_2) =0,$$
therefore equation \eqref{invariance} is satisfied.
%Moreover
%$$
%\begin{array}{l}
%\tilde{\pi}_r(g)(\textrm{Ad}^*_{g^{-1}}(\textrm{Ad}_{h^{-1}}^*x_1), \textrm{Ad}^*_{g^{-1}}(\textrm{Ad}_{h^{-1}}^*x_2))  
%\\ = \tilde{\pi}_r(g)(gh x_1 h^{-1}g^{-1} ,gh x_2 h^{-1}g^{-1} )  
%\\ = \Im \Tr (g^{-1}p_{\mathfrak{b}_2^+}(gh x_1 h^{-1}g^{-1}) g)\left[p_{\mathfrak{u}_2^+}(g^{-1}p_{\mathfrak{b}_2^+}(gh x_2 h^{-1}g^{-1} )g)\right]
%\end{array}
%$$
%Using the decomposition $p_{\mathfrak{b}_2^+}(gh x_2 h^{-1}g^{-1} ) = gh x_2 h^{-1}g^{-1} - p_{\mathfrak{u}_2^+}(gh x_2 h^{-1}g^{-1} ) $, $g\in\operatorname{U}_{\res}(\mathcal{H})$, one has
%$$
%g^{-1}p_{\mathfrak{b}_2^+}(gh x_2 h^{-1}g^{-1} ) g = h x_2 h^{-1} - g^{-1}p_{\mathfrak{u}_2^+}(gh x_2 h^{-1}g^{-1} ) g.
%$$
%Therefore
%$$
%\begin{array}{l}
%\tilde{\pi}_r(g)(\textrm{Ad}^*_{g^{-1}}(\textrm{Ad}_{h^{-1}}^*x_1), \textrm{Ad}^*_{g^{-1}}(\textrm{Ad}_{h^{-1}}^*x_2))  
%\\ = -\Im \Tr \left(h x_1 h^{-1} - g^{-1}p_{\mathfrak{u}_2^+}(gh x_1 h^{-1}g^{-1} ) g\right) \left[g^{-1}p_{\mathfrak{u}_2^+}(gh x_2 h^{-1}g^{-1} ) g\right]\\
%= \Im \Tr h x_1 h^{-1} \left[g^{-1}p_{\mathfrak{u}_2^+}(gh x_2 h^{-1}g^{-1} ) g\right]\\
%= \Im \Tr h x_1 h^{-1} \left[h x_2 h^{-1} - g^{-1}p_{\mathfrak{b}_2^+}(gh x_2 h^{-1}g^{-1} ) g\right]
%\end{array}
%$$
The Jacobi identity for $\pi^{Gr_\res}$ follows from the Jacobi identity for $\pi^{\operatorname{U}_\res}$. Moreover $p$ is a Poisson map by construction.

\item 
Consider the action
$$
\begin{array}{lcll}
a_U~: & \operatorname{U}_{\res}(\mathcal{H})\times\operatorname{Gr}_{\res}(\mathcal{H})& \rightarrow & \operatorname{Gr}_{\res}(\mathcal{H})\\
& (g_1, gH) & \mapsto & g_1 g H
\end{array}
$$ by left translations.
Note that the tangent map to $a_U$ 
is given by
$$
\begin{array}{lcll}
T_{(g_1, gH)} a_U~:& T_{g_1}\operatorname{U}_{\res}(\mathcal{H}) \oplus T_{gH}\operatorname{Gr}_{\res}(\mathcal{H})& \rightarrow &T_{g_1gH}\operatorname{Gr}^0_{\res}(\mathcal{H})\\
& (X_{g_1}, X_{gH}) & \mapsto & p_*[(R_g)_*X_{g_1}] + (L_{g_1})_*X_{gH}.
\end{array}
$$
Therefore, for any $\alpha \in  T_{g_1gH}^*\operatorname{Gr}_{\res}(\mathcal{H})$,
$$
\begin{array}{ll}
\alpha \circ T_{(g_1, gH)} a_U(X_{g_1}, X_{gH}) &= \alpha(p_*[(R_g)_*X_{g_1}]) + \alpha((L_{g_1})_*X_{gH}) \\ &= R_g^*p^*\alpha(X_{g_{1}}) + L_{g_1}^*\alpha(X_{gH}).
\end{array}
$$
In other words 
$$
\alpha \circ T_{(g_1, gH)} a_U = R_g^*p^*\alpha + L_{g_1}^*\alpha,
$$ 
where $R_g^*p^*\alpha\in T_{g_1}\operatorname{U}_{\res}(\mathcal{H}) $ and $ L_{g_1}^*\alpha\in T_{gH}\operatorname{Gr}_{\res}(\mathcal{H})$.
%for any $\alpha \in \mathbb{G}_{gH} = T_{gH}^*\operatorname{Gr}^0_{\res}(\mathcal{H})$.
In order to show that $a_U$ is a Poisson map, we have to show that
\begin{enumerate}
\item for any $\alpha \in  T_{g_1gH}^*\operatorname{Gr}_{\res}(\mathcal{H})$, the covector $R_g^*p^*\alpha$ belongs to  $$\mathbb{U}_{g_1} = R_{(g_1)^{-1}}^* L_{1,2}(\mathcal{H})/\mathfrak{u}_{1,2}(\mathcal{H}),$$ 
\item the Poisson tensors $\pi^{\operatorname{U}_\res}$ and $\pi^{Gr_\res}$ are related by
$$
\pi^{Gr_\res}_{g_1 gH}(\alpha, \beta) = \pi^{\operatorname{U}_\res}_{g_1}(R_g^*p^*\alpha, R_g^*p^*\beta) + \pi^{Gr_\res}_{gH}(L_{g_1}^*\alpha, L_{g_1}^*\beta).
$$
\end{enumerate}
%The left action of $\operatorname{U}_{\res}(\mathcal{H})$ on $\operatorname{Gr}^0_{\res}(\mathcal{H})$ by left translations induces an action on forms $\alpha\in T^*_{gH}\operatorname{Gr}^0_{\res}(\mathcal{H})$ by $\alpha\mapsto L_{g_1^{-1}}^*\alpha\in T^*_{g_1gH}\operatorname{Gr}^0_{\res}(\mathcal{H})$. 
For point (a), let us show that for $\alpha \in T^*_{g_1gH}\operatorname{Gr}_{\res}(\mathcal{H})$, and $g_1, g\in \operatorname{U}_{\res}(\mathcal{H})$, one has $R_{g_1}^*R_{g}^*p^*\alpha\in L_{1,2}(\mathcal{H})/\mathfrak{u}_{1,2}(\mathcal{H})$. Recall that $p^*\alpha$ can by identified with an element $L_{(g_1g)^{-1}}^*x_1$ where $x_1\in \mathfrak{h}^0$. Therefore 
$R_{g_1}^*R_{g}^*p^*\alpha  = \textrm{Ad}_{(g_1 g)^{-1}}^* x_1$. 
For $X\in T_{e}\operatorname{U}_{\res}(\mathcal{H}) $, one has
$$
\begin{array}{ll}
R_{g_1}^*R_{g}^*p^*\alpha(X) &= \Im \Tr x_1 \textrm{Ad}_{(g_1 g)^{-1}}(X) = \Im \Tr x_1 (g_1 g)^{-1} X g_1 g\\ &= \Im \Tr g_1 g x_1 (g_1 g)^{-1} X.
\end{array}
$$
Since $g_1 g x_1 (g_1 g)^{-1} \in L_{1,2}(\mathcal{H})$ for any $g_1, g\in \operatorname{U}_{\res}(\mathcal{H})$ and $x_1 \in L_{1,2}(\mathcal{H})$, it follows that  $R_{g_1}^*R_{g}^*p^*\alpha\in L_{1,2}(\mathcal{H})/\mathfrak{u}_{1,2}(\mathcal{H}).$

In order to prove (b), we will the cocycle identity. Note that for  $\alpha$, $\beta\in T^*_{g_1gH}\operatorname{Gr}_{\res}(\mathcal{H})$, 
%define $x_1 = L_{g_1g}^*p^*\alpha\in\mathfrak{h}^0$ and $x_2 = L_{g_1g}^*p^*\beta\in\mathfrak{h}^0$.  Using the cocycle identity,
%and the fact that $\mathfrak{u}_2(\mathcal{H})$ is stable by  adjoint action of $\operatorname{U}_{\res}(\mathcal{H})$, 
one has
$$
\begin{array}{ll}
\pi^{Gr_\res}_{g_1gH}(\alpha, \beta) & = 
%\pi_{Gr_\res}(L_{g_1}gH)(L_{g_1^{-1}}^*L_{g^{-1}}^*x_1, L_{g_1^{-1}}^*L_{g^{-1}}^*x_2) \\
%= \pi_{Gr_\res}(g_1gH)(L_{g_1g}^*L_{g^{-1}}^*\alpha, L_{g_1g}^*L_{g^{-1}}^*\beta)  \\= 
\pi^{\operatorname{U}_\res}_{g_1g}(p^*\alpha, p^*\beta) 
%\\ = \tilde{\pi}_{g_1g}(L_{g_1^{-1}}^*p^*\alpha, L_{g_1^{-1}}^*p^*\beta)
= \Pi^{\operatorname{U}_\res}_r(g_1 g)(R_{g_1 g}^*p^*\alpha, R_{g_1 g}^*p^*\beta)\\
%\\= \tilde{\pi}_r(g_1 g)(R_{g_1}^*R_g^*p^*\alpha, R_{g_1}^*R_g^*p^*\beta)\\ 
%& =\textrm{Ad}(g_1)^*\tilde{\pi}_r(g)(R_{g_1g}^*p^*\alpha, R_{g_1g}^*p^*\beta) + \tilde{\pi}_r(g_1)(R_{g_1g}^*p^*\alpha, R_{g_1g}^*p^*\beta) \\
%= \tilde{\pi}_r(g)(L_{g_1}^*R_g^*p^*\alpha, L_{g_1}^*R_g^*p^*\beta) +\tilde{\pi}_r(g_1)(R_{g_1g}^*L_{(g_1g)^{-1}}^*x_1, R_{g_1g}^*L_{(g_1g)^{-1}}^*x_2)
%%\tilde{\pi}_{g_1g}(L_{g_1}gH)(L_{(g_1g)^{-1}}^*x_1, L_{(g_1g)^{-1}}^*x_2) = 
%\\= \tilde{\pi}_r(g)(R_g^*L_{g_1}^*p^*\alpha, R_g^*L_{g_1}^*p^*\beta) + \tilde{\pi}_r(g_1)(\textrm{Ad}_{(g_1g)^{-1}}^*x_1, \textrm{Ad}_{(g_1g)^{-1}}^*x_2)
%\\ = \tilde{\pi}_g(L_{g_1}^*p^*\alpha, L_{g_1}^*p^*\beta)+ \Im\Tr (g_1^{-1} p_{\mathfrak{b}_2^*}(\textrm{Ad}_{(g_1g)^{-1}}^*x_1) g_1)\left[p_{\mathfrak{u}_2^+}(g_1^{-1}p_{\mathfrak{b}_2^+}(\textrm{Ad}_{(g_1g)^{-1}}^*x_2)g_1)\right]
%\\ = \tilde{\pi}_g(p^*L_{g_1}^*\alpha, p^*L_{g_1}^*\beta)+ \Im\Tr (g_1^{-1} (\textrm{Ad}_{(g_1g)^{-1}}^*x_1) g_1)\left[p_{\mathfrak{u}_2^+}(g_1^{-1}(\textrm{Ad}_{(g_1g)^{-1}}^*p_{\mathfrak{b}_2^+}(x_2))g_1)\right]
%\\= \left(\pi_{Gr_\res}\right)_{gH}(L_{g_1}^*\alpha, L_{g_1}^*\beta) + \Im\Tr (g x_1 g^{-1})\left[p_{\mathfrak{u}_2^+}(g p_{\mathfrak{b}_2^+}(x_2)g^{-1})\right]
%\\= \left(\pi_{Gr_\res}\right)_{gH}(L_{g_1}^*\alpha, L_{g_1}^*\beta) + \Im\Tr p_{\mathfrak{b}_2^+}(x_1) g^{-1}\left[p_{\mathfrak{u}_2^+}(g p_{\mathfrak{b}_2^+}(x_2)g^{-1})\right]g
& =\textrm{Ad}(g_1)^*\Pi^{\operatorname{U}_\res}_r(g)(R_{g_1g}^*p^*\alpha, R_{g_1g}^*p^*\beta) + \Pi^{\operatorname{U}_\res}_r(g_1)(R_{g_1g}^*p^*\alpha, R_{g_1g}^*p^*\beta) \\
&= \Pi^{\operatorname{U}_\res}_r(g)(L_{g_1}^*R_g^*p^*\alpha, L_{g_1}^*R_g^*p^*\beta) + \Pi^{\operatorname{U}_\res}_r(g_1)(R_{g_1}^*R_{g}^*p^*\alpha, R_{g_1}^*R_{g}^*p^*\beta) \\
%\tilde{\pi}_{g_1g}(L_{g_1}gH)(L_{(g_1g)^{-1}}^*x_1, L_{(g_1g)^{-1}}^*x_2) = 
& = \pi^{\operatorname{U}_\res}_g(L_{g_1}^*p^*\alpha, L_{g_1}^*p^*\beta) + \pi^{\operatorname{U}_\res}_{g_1}(R_{g}^*p^*\alpha, R_{g}^*p^*\beta) \\
%& = \pi^{\operatorname{U}_\res}_g(L_{g_1}^*p^*\alpha, L_{g_1}^*p^*\beta)+ \Pi^{\operatorname{U}_\res}_r(g_1)(R_{g}^*p^*\alpha, R_{g}^*p^*\beta)
 & =\pi^{\operatorname{U}_\res}_g(p^*L_{g_1}^*\alpha, p^*L_{g_1}^*\beta)+  \pi^{\operatorname{U}_\res}_{g_1}(R_{g}^*p^*\alpha, R_{g}^*p^*\beta)
\\ & = \pi^{Gr_\res}_{gH}(L_{g_1}^*\alpha, L_{g_1}^*\beta) +  \pi^{\operatorname{U}_\res}_{g_1}(R_{g}^*p^*\alpha, R_{g}^*p^*\beta).
%\\& = \pi^{Gr_\res}_{gH}(L_{g_1}^*\alpha, L_{g_1}^*\beta)+  \pi^{\operatorname{U}_\res}_{g_1}(R_{g}^*p^*\alpha, R_{g}^*p^*\beta),
\end{array}
$$
Hence the left action of $\operatorname{U}_{\res}(\mathcal{H})$ on$\operatorname{Gr}_{\res}(\mathcal{H})$ is a Poisson map.
\end{enumerate}

\end{proof}

\section{Poisson action of $\operatorname{B}_{\res}^{+}(\mathcal{H})$ on $\operatorname{Gr}_{\res}(\mathcal{H})$ and Schubert cells}\label{section8}

\subsection{Poisson action of $\operatorname{B}_{\res}^{\pm}(\mathcal{H})$ on $\operatorname{Gr}_{\res}(\mathcal{H})$}
%Recall that for $r\geq 1$, the multiplication operator by $\exp(t_r z^r)$ belongs to $\Gamma^+\subset \operatorname{B}_{\res}^{+}(\mathcal{H})$. 
The next Theorem shows that the action of $\operatorname{B}_{\res}^{\pm}(\mathcal{H})$ on $\operatorname{Gr}_{\res}(\mathcal{H})$ is a Poisson map, where $\operatorname{B}_{\res}^{\pm}(\mathcal{H})$ is endowed with the Banach Poisson--Lie group structure defined in Section~\ref{section6}, and where $\operatorname{Gr}_{\res}(\mathcal{H})$ is endowed with the Bruhat-Poisson structure defined in Section~\ref{section7}.

\begin{theorem}\label{actionB}
The following right action of $\operatorname{B}_{\res}^{\pm}(\mathcal{H})$ on $\operatorname{Gr}_{\res}(\mathcal{H}) = \operatorname{GL}_{\res}(\mathcal{H})/\operatorname{P}_{\res}(\mathcal{H})$  is a  Poisson map~:
$$
\begin{array}{lcll}
a_B~:& \operatorname{Gr}_{\res}(\mathcal{H})\times \operatorname{B}_{\res}^{\pm}(\mathcal{H})& \rightarrow & \operatorname{Gr}_{\res}(\mathcal{H})\\
&(g \operatorname{P}_{\res}(\mathcal{H}), b) & \mapsto & (b^{-1}g) \operatorname{P}_{\res}(\mathcal{H}).
\end{array}
$$
\end{theorem}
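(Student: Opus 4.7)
The plan is to verify directly the two conditions of Definition~\ref{poisson_map_def} for the smooth map $a_B$ with source the product Poisson manifold $\operatorname{Gr}_{\res}(\mathcal{H})\times\operatorname{B}_{\res}^{+}(\mathcal{H})$. The key tool is the restricted Iwasawa-type factorization alluded to in Section~\ref{Bres_section}: for any $g\in\operatorname{U}_{\res}(\mathcal{H})$ and $b\in\operatorname{B}_{\res}^{+}(\mathcal{H})$, one can write $b^{-1}g=u(b,g)\cdot c(b,g)$ with $u(b,g)\in\operatorname{U}_{\res}(\mathcal{H})$ and $c(b,g)\in\operatorname{B}_{\res}^{+}(\mathcal{H})$, so that $a_B(gH,b)=u(b,g)H$. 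This presents $a_B$ as a composition of left multiplications and the projection $p\colon\operatorname{U}_{\res}(\mathcal{H})\to\operatorname{Gr}_{\res}(\mathcal{H})$, to which the cocycle identities of Proposition~\ref{cela} for $\Pi^{\operatorname{U}_\res}_r$ and $\Pi^{B^+_\res}_r$ can be applied.

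First I would compute the tangent map $T_{(gH,b)}a_B$ in right trivializations: if $X=T_eR_g X_0$ lifts a tangent vector at $gH$ with $X_0\in\mathfrak{u}_{\res}(\mathcal{H})$, and $Y=T_eR_b Y_0$ with $Y_0\in\mathfrak{b}_{\res}^{+}(\mathcal{H})$, then differentiating $(g',b')\mapsto p(b'^{-1}g')$ yields an expression involving $\operatorname{Ad}$-shifts of $X_0$ and $Y_0$ followed by projection onto $\operatorname{Gr}_{\res}(\mathcal{H})$. The dual map then splits a covector $\alpha\in T^*_{u(b,g)H}\operatorname{Gr}_{\res}(\mathcal{H})$ into a pair of covectors, one in $T^*\operatorname{Gr}_{\res}(\mathcal{H})$ (for which condition~(i) is automatic since $\mathbb{F}=T^*\operatorname{Gr}_{\res}$) and one in $T^*_b\operatorname{B}_{\res}^{+}(\mathcal{H})$. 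Checking that the latter lies in $\mathbb{B}_b=R_{b^{-1}}^*\bigl(L_{1,2}(\mathcal{H})/\mathfrak{b}_{1,2}^{+}(\mathcal{H})\bigr)$ reduces, after lifting $\alpha$ through $p^*$ to a covector on $\operatorname{U}_{\res}(\mathcal{H})$, to showing that $\operatorname{Ad}$-conjugation by elements of $\operatorname{GL}_{\res}(\mathcal{H})$ sends $L_{1,2}(\mathcal{H})$ into itself, as was already observed in the proof of Theorem~\ref{Poisson}.

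The core of the argument is verifying condition~(ii) of Definition~\ref{poisson_map_def}, namely
$$
\pi^{Gr_\res}_{u(b,g)H}(\alpha,\beta)=\pi^{Gr_\res}_{gH}\bigl(\alpha\circ T_{(\cdot,b)}a_B,\beta\circ T_{(\cdot,b)}a_B\bigr)+\pi^{B^+_\res}_{b}\bigl(\alpha\circ T_{(gH,\cdot)}a_B,\beta\circ T_{(gH,\cdot)}a_B\bigr).
$$
Transporting the three terms to the unit element via the right trivializations $R_{u^{-1}}^*$, $R_{g^{-1}}^*$, $R_{b^{-1}}^*$, and lifting the $\operatorname{Gr}_{\res}$ pieces through $p^*$ to $\operatorname{U}_{\res}(\mathcal{H})$, the identity becomes a statement about $\Pi^{\operatorname{U}_\res}_r$ and $\Pi^{B^+_\res}_r$ evaluated on images of $\alpha,\beta$ in $L_{1,2}(\mathcal{H})$ modulo the appropriate subalgebras. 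The cocycle identities for $\Pi^{\operatorname{U}_\res}_r$ and $\Pi^{B^+_\res}_r$, the factorization $b^{-1}g=u(b,g)c(b,g)$, together with the isotropy of $\mathfrak{u}_2(\mathcal{H})$ and $\mathfrak{b}_2^{+}(\mathcal{H})$ in the Hilbert--Schmidt Manin triple (Proposition~\ref{triples}), should make the various cross terms cancel.

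The main obstacle is the bookkeeping of these identifications: the Poisson tensor on $\operatorname{Gr}_{\res}(\mathcal{H})$ is built from $\operatorname{U}_{\res}(\mathcal{H})$ while the action of $\operatorname{B}_{\res}^{+}(\mathcal{H})$ lives naturally in $\operatorname{GL}_{\res}(\mathcal{H})$, and the restricted Iwasawa decomposition is the bridge. Throughout the calculation one must take care that every operator appearing stays in the Banach ideal $L_{1,2}(\mathcal{H})$ and that all projections onto $\mathfrak{u}_2(\mathcal{H})$ and $\mathfrak{b}_2^{+}(\mathcal{H})$ remain bounded, so as to avoid the unboundedness phenomena discussed in Section~\ref{no_Manin}.
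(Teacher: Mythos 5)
There is a genuine gap at the heart of your argument: the factorization you take as your key tool, namely $b^{-1}g=u(b,g)\,c(b,g)$ with $u(b,g)\in\operatorname{U}_{\res}(\mathcal{H})$ and $c(b,g)\in\operatorname{B}_{\res}^{+}(\mathcal{H})$, is a global Iwasawa-type decomposition of $\operatorname{GL}_{\res}(\mathcal{H})$ that is nowhere established in the paper and is exactly the kind of statement the introduction warns may fail in the Banach setting (it is tied to triangular factorization, hence to the unboundedness of the triangular truncation on $\operatorname{L}_\infty$ and $\operatorname{L}_1$ discussed in Section~\ref{no_Manin}); Section~\ref{Bres_section} does not provide it. The paper's proof of Theorem~\ref{actionB} deliberately uses only the much weaker factorization $b^{-1}g=g_1p_1$ with $g_1\in\operatorname{U}_{\res}(\mathcal{H})$ and $p_1\in\operatorname{P}_{\res}(\mathcal{H})$, which is immediate from the transitivity of $\operatorname{U}_{\res}(\mathcal{H})$ on $\operatorname{Gr}_{\res}(\mathcal{H})=\operatorname{GL}_{\res}(\mathcal{H})/\operatorname{P}_{\res}(\mathcal{H})$, together with the elementary but crucial observation that conjugation by $p_1\in\operatorname{P}_{\res}(\mathcal{H})$ preserves covectors of the form $p_{\mathfrak{b}_2^+}(x_1)=\left(\begin{smallmatrix}0 & B+C^*\\ 0 & 0\end{smallmatrix}\right)$ with $x_1\in\mathfrak{h}^0$. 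Without replacing your $u(b,g)c(b,g)$ factorization by this $g_1p_1$ factorization, the step identifying $\Im\Tr\,(b^{-1}g\,p_{\mathfrak{b}_2^+}(x_1)\,g^{-1}b)\,[p_{\mathfrak{b}_2^+}(b^{-1}g\,p_{\mathfrak{b}_2^+}(x_2)\,g^{-1}b)]$ with $\pi^{Gr_\res}_{b^{-1}gP_{\res}}(\alpha,\beta)$ cannot be carried out.

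A second, lesser weakness is that the core identity (condition (ii) of Definition~\ref{poisson_map_def}) is left at the level of ``the cocycle identities should make the cross terms cancel.'' In the paper's proof the cocycle identity of Proposition~\ref{cela} is not what does the work here (it is used for Theorem~\ref{Poisson}); instead one computes both tensors explicitly, using the formulas for $\Pi^{B^+_\res}_r$ and $\Pi^{\operatorname{U}_\res}_r$, the isotropy of $\mathfrak{u}_2(\mathcal{H})$ and $\mathfrak{b}_2^+(\mathcal{H})$ from Proposition~\ref{triples}, and the identity $\operatorname{Tr}\,xy=\operatorname{Tr}\,yx$ for suitable Schatten-class products, and then observes an exact cancellation between the term $-\Im\Tr\,(g\,p_{\mathfrak{b}_2^+}(x_1)\,g^{-1})\,[p_{\mathfrak{b}_2^+}(g\,p_{\mathfrak{b}_2^+}(x_2)\,g^{-1})]$ coming from $\pi^{B^+_\res}_b$ and the term $\pi^{Gr_\res}_{gH}(L_{b^{-1}}^*\alpha,L_{b^{-1}}^*\beta)$. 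Your outline of the pullback of covectors (the splitting into a $\operatorname{Gr}_{\res}$-covector and a $\operatorname{B}_{\res}^+$-covector, with membership in $\mathbb{B}_b$ checked via stability of $\operatorname{L}_{1,2}(\mathcal{H})$ under conjugation) does match the paper's step (a), but the remaining computation is precisely where the content of the theorem lies and needs to be carried out, not asserted.
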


\begin{proof}
The tangent map to the action $a_B$ reads
$$
\begin{array}{lcll}
T_{(gH, b)}a_B~: &T_{gH}\operatorname{Gr}_{\res}(\mathcal{H})\oplus T_b\operatorname{B}_{\res}^{\pm}(\mathcal{H})& \rightarrow & T_{b^{-1}gP_{\res}}\operatorname{Gr}_{\res}(\mathcal{H})\\
&(X_{gH}, X_b) & \mapsto & (L_{(b^{-1})})_*X_{gH} - p_*(R_{g})_*(b^{-1} X_b b^{-1}).
\end{array}
$$
Therefore, for any $\alpha\in T^*_{b^{-1}gP_{\res}}\operatorname{Gr}_{\res}(\mathcal{H})$, 
$$
\begin{array}{ll}
\alpha \circ T_{(gH, b)}a_B(X_{gH}, X_b) &=  \alpha((L_{(b^{-1})})_*X_{gH}) - \alpha(p_*(R_{g})_*b^{-1} X_b b^{-1})\\
& = L_{b^{-1}}^*\alpha(X_{gH}) - R_{b^{-1}}^*L_{b^{-1}}^*R_g^* p^*\alpha(X_b),
\end{array}
$$
and $$\alpha \circ T_{(gH, b)}a_B = L_{b^{-1}}^*\alpha - R_{b^{-1}}^*L_{b^{-1}}^*R_g^* p^*\alpha,$$ where $L_{b^{-1}}^*\alpha\in T^*_{gH}\operatorname{Gr}_{\res}(\mathcal{H})$ and $R_{b^{-1}}^*L_{b^{-1}}^*R_g^* p^*\alpha \in T^*_b\operatorname{B}_{\res}^{\pm}(\mathcal{H})$. 
\begin{enumerate}
\item[(a)]
Let us show that for any  $\alpha\in T^*_{b^{-1}gP_{\res}}\operatorname{Gr}_{\res}(\mathcal{H})$ and any $b\in \operatorname{B}_{\res}^{\pm}(\mathcal{H})$, the form $R_{b^{-1}}^*L_{b^{-1}}^*R_g^* p^*\alpha$ belongs to $\mathbb{B}_b = R_{b^{-1}}^*L_{1,2}(\mathcal{H})/\mathfrak{b}_{1,2}(\mathcal{H})$. Recall that $\alpha$ can be identified with an element $L_{(b^{-1}g)^{-1}}^*x_1$ where $x_1\in \mathfrak{h}^0$. For $X\in T_e \operatorname{B}_{\res}^{\pm}(\mathcal{H})$, one has
$$
\begin{array}{ll}
L_{b^{-1}}^*R_g^* p^*\alpha(X) &= \alpha(p_* R_{g^*} (L_{b^{-1}})_* X) = \Im \Tr x_1 (L_{g^{-1}b})_*p_* R_{g^*} (L_{b^{-1}})_* X)
\\ &=  \Im \Tr x_1 p_*(\textrm{Ad}(g^{-1}) X)
%\\ & = L_{b^{-1}}^*R_g^*L_{g^{-1} b}^*x_1(X) \\ & 
= \Im \Tr p_{\mathfrak{b}_2^+}(x_1) g^{-1}X g \\ &= \Im \Tr g p_{\mathfrak{b}_2^+}(x_1) g^{-1} X.
\end{array}
%R_g^*L_{g^{-1} b}^*x_1(b^{-1}X) 
%\\
%= L_{g^{-1} b}^*x_1(b^{-1}X g ) = x_1(g^{-1} b b^{-1}X g) = 
$$
Recall that for $x_1 = \left(\begin{smallmatrix} A & B\\ C & D\end{smallmatrix}\right) \in \mathfrak{h}^0$, $p_{\mathfrak{b}_2^+}(x_1) =  \left(\begin{smallmatrix} 0& B+C^*\\ 0 & 0\end{smallmatrix}\right)$.
Since for any $g\in GL_{\res}(\mathcal{H})$ and any $x_1\in \mathfrak{h}^0$, $g p_{\mathfrak{b}_2^+}(x_1) g^{-1}\in L_{1,2}(\mathcal{H})$, the form $R_{b^{-1}}^*L_{b^{-1}}^*R_g^* p^*\alpha$ belongs to $\mathbb{B}_b$.
\item[(b)]
Let us show that the Poisson tensors $\pi^{B^+_\res}$ and $\pi^{Gr_\res}$ are related by
$$
\pi^{Gr_\res}_{b^{-1} gP_{\res}}(\alpha, \beta) =  \pi^{Gr_\res}_{gH}(L_{b^{-1}}^*\alpha, L_{b^{-1}}^*\beta)+ \pi^{B^+_\res}_b(R_{b^{-1}}^*L_{b^{-1}}^*R_g^* p^*\alpha, R_{b^{-1}}^*L_{b^{-1}}^*R_g^* p^*\beta ).
$$
One has
$$
\begin{array}{ll}
 \pi^{B^+_\res}_b(R_{b^{-1}}^*L_{b^{-1}}^*R_g^* p^*\alpha, R_{b^{-1}}^*L_{b^{-1}}^*R_g^* p^*\beta)
& = \Pi^{B^+_\res}_{r}(b)([g p_{\mathfrak{b}_2^+}(x_1) g^{-1}]_{\mathfrak{b}_{1,2}^+}, [g p_{\mathfrak{b}_2^+}(x_2) g^{-1}]_{\mathfrak{b}_{1,2}^+})\\
& = \Im \Tr \left(b^{-1} p_{\mathfrak{u}_{2}^+}(g p_{\mathfrak{b}_2^+}(x_1) g^{-1}) b\right) \left[p_{\mathfrak{b}_{2}^+}(b^{-1}p_{\mathfrak{u}_2^+}(g p_{\mathfrak{b}_2^+}(x_2) g^{-1})b)\right]\\
& = \Im \Tr p_{\mathfrak{u}_{2}^+}(g p_{\mathfrak{b}_2^+}(x_1) g^{-1})  b  \left[p_{\mathfrak{b}_{2}^+}(b^{-1}p_{\mathfrak{u}_2^+}(g p_{\mathfrak{b}_2^+}(x_2) g^{-1})b)\right] b^{-1}\\
& = \Im \Tr (b^{-1}g p_{\mathfrak{b}_2^+}(x_1) g^{-1}b) \left[p_{\mathfrak{b}_{2}^+}(b^{-1}p_{\mathfrak{u}_2^+}(g p_{\mathfrak{b}_2^+}(x_2) g^{-1})b)\right] \\
& = \Im \Tr (b^{-1}g p_{\mathfrak{b}_2^+}(x_1) g^{-1}b) \left[p_{\mathfrak{b}_{2}^+}(b^{-1}g p_{\mathfrak{b}_2^+}(x_2) g^{-1}b)\right] \\
&\quad  - \Im \Tr (b^{-1}g p_{\mathfrak{b}_2^+}(x_1) g^{-1}b) \left[p_{\mathfrak{b}_{2}^+}(b^{-1}p_{\mathfrak{b}_2^+}(g p_{\mathfrak{b}_2^+}(x_2) g^{-1})b)\right] \\

%& = \Im \Tr (b^{-1}g p_{\mathfrak{b}_2^+}(x_1) g^{-1}b)  (b^{-1}p_{\mathfrak{u}_2^+}(g p_{\mathfrak{b}_2^+}(x_2) g^{-1})b) \\&\quad- \Im \Tr(b^{-1}g p_{\mathfrak{b}_2^+}(x_1) g^{-1}b) \left[p_{\mathfrak{u}_{2}^+}(b^{-1}p_{\mathfrak{u}_2^+}(g x_2 g^{-1})b)\right]. \\
%& = \Im \Tr (g p_{\mathfrak{b}_2^+}(x_1) g^{-1}) p_{\mathfrak{u}_2^+}(g p_{\mathfrak{b}_2^+}(x_2) g^{-1}) \\&\quad- \Im \Tr b^{-1}g p_{\mathfrak{b}_2^+}(x_1) g^{-1} b \left[p_{\mathfrak{u}_{2}^+}(b^{-1}p_{\mathfrak{u}_2^+}(g p_{\mathfrak{b}_2^+}(x_2) g^{-1})b)\right] \\
\end{array}
$$
Therefore
\begin{equation}\label{sommedesdeux}
\begin{array}{l}
 \pi^{B^+_\res}_b(R_{b^{-1}}^*L_{b^{-1}}^*R_g^* p^*\alpha, R_{b^{-1}}^*L_{b^{-1}}^*R_g^* p^*\beta) = \\
%& = \pi_{r}(b)([g p_{\mathfrak{b}_2^+}(x_1) g^{-1}]_{\mathfrak{b}_{1,2}^+}, [g p_{\mathfrak{b}_2^+}(x_2) g^{-1}]_{\mathfrak{b}_{1,2}^+})\\
%& = \Im \Tr \left(b^{-1} p_{\mathfrak{u}_{2}^+}(g p_{\mathfrak{b}_2^+}(x_1) g^{-1}) b\right) \left[p_{\mathfrak{b}_{2}^+}(b^{-1}p_{\mathfrak{u}_2^+}(g p_{\mathfrak{b}_2^+}(x_2) g^{-1})b)\right]\\
%& = \Im \Tr p_{\mathfrak{u}_{2}^+}(g p_{\mathfrak{b}_2^+}(x_1) g^{-1})  b  \left[p_{\mathfrak{b}_{2}^+}(b^{-1}p_{\mathfrak{u}_2^+}(g p_{\mathfrak{b}_2^+}(x_2) g^{-1})b)\right] b^{-1}\\
%& = \Im \Tr (b^{-1}g p_{\mathfrak{b}_2^+}(x_1) g^{-1}b) \left[p_{\mathfrak{b}_{2}^+}(b^{-1}p_{\mathfrak{u}_2^+}(g p_{\mathfrak{b}_2^+}(x_2) g^{-1})b)\right] \\
%& = \Im \Tr (b^{-1}g p_{\mathfrak{b}_2^+}(x_1) g^{-1}b)  (b^{-1}p_{\mathfrak{u}_2^+}(g p_{\mathfrak{b}_2^+}(x_2) g^{-1})b) \\&\quad- \Im \Tr(b^{-1}g p_{\mathfrak{b}_2^+}(x_1) g^{-1}b) \left[p_{\mathfrak{u}_{2}^+}(b^{-1}p_{\mathfrak{u}_2^+}(g x_2 g^{-1})b)\right] \\
\quad  \Im \Tr (b^{-1}g p_{\mathfrak{b}_2^+}(x_1) g^{-1}b) \left[p_{\mathfrak{b}_{2}^+}(b^{-1}g p_{\mathfrak{b}_2^+}(x_2) g^{-1}b)\right]  - \Im \Tr (g p_{\mathfrak{b}_2^+}(x_1) g^{-1}) \left[p_{\mathfrak{b}_2^+}(g p_{\mathfrak{b}_2^+}(x_2) g^{-1})\right].
\end{array}
\end{equation}
On the other hand
$$
\begin{array}{ll}
 \pi^{Gr_\res}_{gH}(L_{b^{-1}}^*\alpha, L_{b^{-1}}^*\beta) & = \Pi^{\operatorname{U}_{\res}}_r(g)([gp_{\mathfrak{b}_2^+}(x_1) g^{-1}], [gp_{\mathfrak{b}_2^+}(x_2) g^{-1}])\\
 & = \Im \Tr (g^{-1} p_{\mathfrak{b}_2^+}(g p_{\mathfrak{b}_2^+}(x_1) g^{-1}) g) \left[p_{\mathfrak{u}_2^+} (g^{-1} p_{\mathfrak{b}_2^+}(g p_{\mathfrak{b}_2^+}(x_1) g^{-1}) g)\right]\\
  & = \Im \Tr  p_{\mathfrak{b}_2^+}(x_1) \left[p_{\mathfrak{u}_2^+} (g^{-1} p_{\mathfrak{b}_2^+}(g p_{\mathfrak{b}_2^+}(x_1) g^{-1}) g)\right]\\
  & = \Im \Tr  p_{\mathfrak{b}_2^+}(x_1)  (g^{-1} p_{\mathfrak{b}_2^+}(g p_{\mathfrak{b}_2^+}(x_1) g^{-1}) g\\
  & = \Im \Tr g p_{\mathfrak{b}_2^+}(x_1) g^{-1} p_{\mathfrak{b}_2^+}(g p_{\mathfrak{b}_2^+}(x_1) g^{-1})
\end{array}
$$
which is the second term in the right hand side of equation \eqref{sommedesdeux} with the opposite sign. Moreover, since 
$$\operatorname{Gr}_{\res}(\mathcal{H}) = \operatorname{GL}_{\res}(\mathcal{H})/\operatorname{P}_{\res}(\mathcal{H}) =  \operatorname{U}_{\res}(\mathcal{H})/\left(\operatorname{U}(\mathcal{H}_+)\times\operatorname{U}(\mathcal{H}_-)\right)
$$
there exist $g_1\in \operatorname{U}_{\res}(\mathcal{H})$ and $p_1 \in P_{\res}(\mathcal{H})$ such that $b^{-1} g = g_1 p_1$. In fact, the pair $(g_1, p_1)$ is defined modulo the right action by $H$ given by $ (g_1, p_1) \cdot h = (g_1 h, h^{-1} p_1)$. It follows that the first term in the right hand side of equation \eqref{sommedesdeux} reads
$$
\begin{array}{l}
\Im \Tr (b^{-1}g p_{\mathfrak{b}_2^+}(x_1) g^{-1}b) \left[p_{\mathfrak{b}_{2}^+}(b^{-1}g p_{\mathfrak{b}_2^+}(x_2) g^{-1}b)\right]\\
\quad = \Im \Tr (g_1 p_1 p_{\mathfrak{b}_2^+}(x_1) p_1^{-1} g_1^{-1}) \left[p_{\mathfrak{b}_{2}^+}(g_1 p_1 p_{\mathfrak{b}_2^+}(x_2) p_1^{-1} g_1^{-1})\right]\\
\end{array}
$$
Recall that for any $x_1 = \left(\begin{smallmatrix} A & B\\ C & D\end{smallmatrix}\right)\in \mathfrak{h}^0$, one has 
$$
x_1 = \left(\begin{smallmatrix} A & -C^*\\ C & D\end{smallmatrix}\right) + \left(\begin{smallmatrix} 0 & B+C^*\\ 0 & 0\end{smallmatrix}\right),
$$
with $p_{\mathfrak{u}_2^+}(x_1) = \left(\begin{smallmatrix} A & -C^*\\ C & D\end{smallmatrix}\right)$ and $p_{\mathfrak{b}_2^+}(x_1) = \left(\begin{smallmatrix} 0 & B+C^*\\ 0 & 0\end{smallmatrix}\right)$. 
Note that for any $p_1 = \left(\begin{smallmatrix} P_1 & P_2\\ 0 & P_3\end{smallmatrix}\right)\in P_{\res}(\mathcal{H})$, one has 
$$
p_1^{-1} = \left(\begin{smallmatrix} P_1^{-1} & -P_1^{-1} P_2 P_3^{-1}\\ 0 & P_3^{-1}\end{smallmatrix}\right)\in P_{\res}(\mathcal{H}), 
$$
and 
$$
p_1 p_{\mathfrak{b}_2^+}(x_1) p_1^{-1} = \left(\begin{smallmatrix} 0 & P_1 (B+C^*) P_3^{-1}\\ 0 & 0\end{smallmatrix}\right)\in \mathfrak{b}_{2}^+(\mathcal{H}).
$$
Therefore
$$
\begin{array}{l}
\Im \Tr (b^{-1}g p_{\mathfrak{b}_2^+}(x_1) g^{-1}b) \left[p_{\mathfrak{b}_{2}^+}(b^{-1}g p_{\mathfrak{b}_2^+}(x_2) g^{-1}b)\right]\\
\quad = \Im \Tr (g_1 p_1 p_{\mathfrak{b}_2^+}(x_1) p_1^{-1} g_1^{-1}) \left[p_{\mathfrak{b}_{2}^+}(g_1 p_1 p_{\mathfrak{b}_2^+}(x_2) p_1^{-1} g_1^{-1})\right]\\
\quad = \Pi^{\operatorname{U}_{\res}}_r(g_1)([g_1 p_1p_{\mathfrak{b}_2^+}(x_1) p_1^{-1}g_1^{-1}], [g_1 p_1 p_{\mathfrak{b}_2^+}(x_2) p_1^{-1}g_1^{-1}])\\
\quad = \Pi^{\operatorname{U}_{\res}}_r(g_1)([b^{-1}gp_{\mathfrak{b}_2^+}(x_1) g^{-1}b], [b^{-1}g p_{\mathfrak{b}_2^+}(x_2) g^{-1}b])\\
\quad = \pi^{Gr_\res}_{g_1 H}(\alpha, \beta) = \pi^{Gr_\res}_{b^{-1} g P_{\res}}(\alpha, \beta).
\end{array}
$$
It follows that the right action of $\operatorname{B}_{\res}^{+}(\mathcal{H})$ on $\operatorname{Gr}_{\res}(\mathcal{H}) $ is a Poisson map.
\end{enumerate}

\end{proof}
\subsection{Schubert cells of the restricted Grassmannian}

In this Section, $\mathcal{H}$ will be specified to be the space $L^{2}(\mathbb{S}^1, \mathbb{C})$ of complex square-integrable functions defined almost everywhere on the unit circle $\mathbb{S}^1 = \{z\in\mathbb{C}, |z| = 1\}$ modulo the equivalence relation that identifies two functions that are equal almost everywhere. In that case, the inner product of two elements $f$ and $g$ in $L^2(\mathbb{S}^1, \mathbb{C})$ reads $\langle f , g \rangle = \int_{\mathbb{S}^1} \overline{f(z)} g(z) d\mu(z)$, where $d\mu(z)$ denotes the Lebesgue mesure on the circle.
Let us recall some geometric facts about the restricted Grassmannian that were established in \cite{PS88}, Chapter~7.
Set $\mathcal{H}_+ = \textrm{span}\{z^n, n\geq 0\}$ and $\mathcal{H}_- = \textrm{span}\{z^n , n<0\}$.

The restricted Grassmannian admits  a stratification $\{\Sigma_S, S\in\mathcal{S}\}$ as well as a decomposition into Schubert cells $\{\mathcal{C}_S, S\in \mathcal{S}\}$, which are dual to each other in the following sense~:
\begin{enumerate}
\item[(i)] the same set $\mathcal{S}$ indexes the cells $\{\mathcal{C}_S\}$ and the strata $\{\Sigma_S\}$;
\item[(ii)] the dimension of $\mathcal{C}_S$ is the codimension of $\Sigma_S$;
\item[(iii)] $\mathcal{C}_S$ meets $\Sigma_S$ transversally in a single point, and meets no other stratum of the same codimension.
\end{enumerate}
A element $S$ of the set $\mathcal{S}$ is a subset of $\mathbb{Z}$, which is bounded from below and contains all sufficiently large integers. Given $S\in\mathcal{S}$, define the subspace $\mathcal{H}_S$ of the restricted Grassmannian $\operatorname{Gr}_{\res}(\mathcal{H})$ by~:
$$
\mathcal{H}_S = \overline{\textrm{span}\{z^s, s\in S\}}.
$$
Recall the following Proposition~:
\begin{proposition}[Proposition~7.1.6 in \cite{PS88}]
For any $W\in \operatorname{Gr}_{\res}(\mathcal{H})$ there is a set $S\in \mathcal{S}$ such that the orthogonal projection $W\rightarrow \mathcal{H}_S$ is an isomorphism. In other words the sets $\{\mathcal{U}_S, S\in\mathcal{S}\}$, where 
$$
\mathcal{U}_S = \{ W \in  \operatorname{Gr}_{\res}(\mathcal{H}), \textrm{the orthogonal projection  } W\rightarrow \mathcal{H}_S \textrm{ is an isomorphism}\},
$$
form an open covering of $\operatorname{Gr}_{\res}(\mathcal{H})$.
\end{proposition}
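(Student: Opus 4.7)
The plan is to prove the two claims separately: openness of each $\mathcal{U}_S$, and surjectivity of the covering. Both rely on Fredholm theory for the projection maps, which is available because $W \in \operatorname{Gr}_{\res}(\mathcal{H})$ implies that $p_-|_W$ is Hilbert--Schmidt, and $\mathcal{H}_S$ itself differs from $\mathcal{H}_+$ only in finitely many coordinates (since any $S \in \mathcal{S}$ is bounded below and equals $\mathbb{Z}_{\geq N}$ outside a finite set). In particular, for every $S \in \mathcal{S}$, the orthogonal projection $p_S|_W \colon W \to \mathcal{H}_S$ is a Fredholm operator, and its index equals $\mathrm{ind}(p_+|_W) - (\text{virtual dimension of } \mathcal{H}_S)$.

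\emph{Openness.} If $p_S|_{W_0}$ is an isomorphism for some $W_0 \in \operatorname{Gr}_{\res}(\mathcal{H})$, then automatically $\mathrm{ind}(p_S|_{W_0})=0$, so $p_S|_{W_0}$ lies in the open subset of index-zero Fredholm operators consisting of isomorphisms. Since the map $W \mapsto p_S|_W$ is continuous in the Banach manifold topology of $\operatorname{Gr}_{\res}(\mathcal{H})$ (which can be described via graphs of Hilbert--Schmidt operators in charts around $W_0$), a neighborhood of $W_0$ maps to isomorphisms, proving $\mathcal{U}_S$ is open.

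\emph{Covering.} Given $W$, I would construct $S$ explicitly. Since $p_+|_W$ is Fredholm, the kernel $K := W \cap \mathcal{H}_-$ is finite dimensional and consists of vectors in $\mathcal{H}_-$, and there is a finite-dimensional complement $C \subset \mathcal{H}_+$ to $p_+(W)$. By a finite Gaussian elimination on a basis of $K$ expanded in the standard basis $\{z^n\}_{n<0}$, one can select a finite pivot set $I_- \subset \mathbb{Z}_{<0}$ with $|I_-| = \dim K$ such that the projection $K \to \operatorname{span}\{z^i : i \in I_-\}$ is an isomorphism. Similarly, choose a finite $I_+ \subset \mathbb{Z}_{\geq 0}$ with $|I_+| = \dim C$ such that $\mathcal{H}_+ = p_+(W) \oplus \operatorname{span}\{z^i : i \in I_+\}$. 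Now set
$$
S \;=\; \bigl(\mathbb{Z}_{\geq 0}\setminus I_+\bigr) \cup I_-.
$$
Because $I_+$ and $I_-$ are finite, $S$ is bounded below and contains all sufficiently large integers, i.e.\ $S \in \mathcal{S}$.

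It remains to show $p_S|_W$ is an isomorphism. Injectivity: if $w \in W$ with $p_S(w)=0$, then in particular the component of $w$ on all $z^i$ with $i \in \mathbb{Z}_{\geq 0}\setminus I_+$ vanishes, so $p_+(w) \in \operatorname{span}\{z^i : i \in I_+\} \cap p_+(W) = 0$, giving $w \in K$; but then $w$ also has zero component on $I_-$, hence $w=0$ by the pivot choice for $K$. Surjectivity then follows from an index count: both $W$ and $\mathcal{H}_S$ have the same virtual dimension by construction (matching the modifications $I_\pm$ with the kernel/cokernel of $p_+|_W$), so $p_S|_W$ is Fredholm of index $0$, and being injective forces it to be an isomorphism. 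This completes the proof, and together with openness shows that $\{\mathcal{U}_S\}_{S \in \mathcal{S}}$ is an open cover of $\operatorname{Gr}_{\res}(\mathcal{H})$.

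The main obstacle is the bookkeeping in step 2: verifying that the specific $S$ built from the pivots really does yield the correct virtual dimension so that the Fredholm index of $p_S|_W$ is zero, and assembling injectivity correctly from the decomposition $\mathcal{H}_S = \operatorname{span}\{z^i : i \in I_-\} \oplus \operatorname{span}\{z^j : j \geq 0,\ j \notin I_+\}$. Once this combinatorial/Fredholm bookkeeping is done cleanly, the rest reduces to standard Banach-space arguments about continuity of projections and openness of the invertible operators.
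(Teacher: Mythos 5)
The paper does not actually prove this statement: it is quoted verbatim from Pressley--Segal (Proposition~7.1.6 in \cite{PS88}) and used as an external input, so there is no internal proof to compare against. Judged on its own, your argument is correct and is essentially the standard Fredholm/pivot argument behind the Pressley--Segal result: the index bookkeeping works out, since with $k=\dim\ker(p_+|_W)$ and $c=\operatorname{codim}p_+(W)$ one gets $\operatorname{ind}(p_S|_W)=\operatorname{ind}(p_+|_W)+|I_+|-|I_-|=(k-c)+c-k=0$, and your injectivity argument via the direct sum $\mathcal{H}_+=p_+(W)\oplus\operatorname{span}\{z^i:i\in I_+\}$ and the pivot set $I_-$ for $K=W\cap\mathcal{H}_-$ is sound; injective plus index zero then gives an isomorphism, and the chart/graph description near $W_0$ gives openness of $\mathcal{U}_S$.

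Two small points deserve tightening. First, the Fredholm property of $p_+|_W$ does not follow from the Hilbert--Schmidt condition on $p_-|_W$ alone (a closed infinite-codimensional subspace of $\mathcal{H}_+$ has $p_-|_W=0$ but non-Fredholm $p_+|_W$); it is part of the usual definition of $\operatorname{Gr}_{\res}(\mathcal{H})$, or follows from writing $W=g\mathcal{H}_+$ with $g\in\operatorname{U}_{\res}(\mathcal{H})$, and you should invoke it from there rather than from the Hilbert--Schmidt hypothesis. Second, the existence of $I_+$ with $\mathcal{H}_+=p_+(W)\oplus\operatorname{span}\{z^i:i\in I_+\}$ needs the one-line justification that the images of the basis vectors $z^i$, $i\ge 0$, span the finite-dimensional quotient $\mathcal{H}_+/p_+(W)$ (their span is dense, hence everything), after which one selects a basis of the quotient among them, exactly as in your Gaussian-elimination choice of $I_-$. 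With these justifications added, the proof is complete.
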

Following \cite{PS88}, let us introduce the Banach Lie groups  $\operatorname{N}^+_\res(\mathcal{H})$ and $\operatorname{N}^-_\res(\mathcal{H})$~:
$$
\operatorname{N}^+_\res(\mathcal{H}) =  \{A \in \operatorname{GL}_\res(\mathcal{H}), A(z^k\mathcal{H}_+) = z^{k}\mathcal{H}_+ \textrm{ and } (A-\textrm{Id})(z^k\mathcal{H}_+)\subset z^{k+1}\mathcal{H}_+, ~\forall k\in\mathbb{Z}\},
$$
$$
\operatorname{N}^-_\res(\mathcal{H}) =  \{A \in \operatorname{GL}_\res(\mathcal{H}), A(z^k\mathcal{H}_-) = z^{k}\mathcal{H}_- \textrm{ and } (A-\textrm{Id})(z^k\mathcal{H}_-)\subset z^{k+1}\mathcal{H}_-, ~\forall k\in\mathbb{Z}\}.
$$
In other words, the group $\operatorname{N}^\pm_\res(\mathcal{H})$ is the subgroup of $\operatorname{B}_{\res}^{\pm}(\mathcal{H})$ consisting of the triangular operators with respect to the basis $\{|n\rangle := z^n, n\in\mathbb{Z}\}$ which have only $1$'s on the diagonal.

\begin{proposition}
The Banach Lie group $\operatorname{N}^\pm_\res(\mathcal{H})$ is a normal subgroup of $\operatorname{B}_{\res}^{\pm}(\mathcal{H})$ and the quotient group $\operatorname{B}_{\res}^{\pm}(\mathcal{H})/\operatorname{N}^\pm_\res(\mathcal{H})$ is isomorphic to the group of bounded linear positive definite operators which are diagonal with respect to the orthonormal basis $\{|z^{k}\rangle, k\in\mathbb{Z}\}$.
\end{proposition}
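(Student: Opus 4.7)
The strategy is to exhibit an explicit surjective group homomorphism
$D \colon \operatorname{B}_{\res}^{\pm}(\mathcal{H}) \to \operatorname{Diag}^+(\mathcal{H})$,
sending an element to its diagonal part with respect to $\{|z^k\rangle\}_{k\in\mathbb{Z}}$, whose kernel is exactly $\operatorname{N}^\pm_{\res}(\mathcal{H})$, and then conclude by the first isomorphism theorem. I shall treat the $+$ case; the $-$ case is identical.

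First I would verify that $D$ is a group homomorphism. This reduces to the observation that for $b_1,b_2 \in \operatorname{B}_{\res}^{+}(\mathcal{H})$, the matrix coefficient $\langle z^k | b_1 b_2 | z^k\rangle = \sum_{j \in \mathbb{Z}} \langle z^k | b_1 | z^j\rangle \langle z^j | b_2 | z^k\rangle$ reduces to the single term $j=k$, because upper triangularity forces $\langle z^k|b_1|z^j\rangle = 0$ for $j>k$ and $\langle z^j|b_2|z^k\rangle = 0$ for $j<k$. Hence $D(b_1 b_2) = D(b_1) D(b_2)$. Since the diagonal coefficients of $b \in \operatorname{B}_{\res}^{+}(\mathcal{H})$ are strictly positive reals and both $b$ and $b^{-1}$ are bounded, $D(b)$ is a bounded invertible positive diagonal operator, so $D$ lands in the claimed group.

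Next I would compute the kernel. If $D(b) = \mathrm{Id}$, then $b$ is upper triangular with $1$'s on the diagonal, so for any $k\in\mathbb{Z}$ and any $j\ge k$ one has $b z^j \in \mathrm{span}\{z^m : m\ge j\} \subset z^k\mathcal{H}_+$, giving $b(z^k\mathcal{H}_+) \subset z^k\mathcal{H}_+$; the same argument applied to $b^{-1}$ (which also lies in $\operatorname{B}_{\res}^{+}$ with trivial diagonal, using that $D$ is a homomorphism) yields the reverse inclusion, so $b(z^k\mathcal{H}_+) = z^k \mathcal{H}_+$. Moreover $(b-\mathrm{Id})z^j = \sum_{m>j} \langle z^m|b|z^j\rangle z^m \in z^{j+1}\mathcal{H}_+$ for $j\ge k$, showing $(b-\mathrm{Id})(z^k\mathcal{H}_+)\subset z^{k+1}\mathcal{H}_+$. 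Thus $b \in \operatorname{N}^+_{\res}(\mathcal{H})$. The reverse inclusion is immediate from the defining conditions. Normality of $\operatorname{N}^+_{\res}(\mathcal{H})$ is then automatic as the kernel of a homomorphism; alternatively one can check it directly by noting that $D(bnb^{-1}) = D(b)D(n)D(b)^{-1} = \mathrm{Id}$ for $n\in \operatorname{N}^+_{\res}$ and $b\in \operatorname{B}_{\res}^{+}$.

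Finally I would verify surjectivity: any bounded diagonal operator $d$ with $\langle z^k|d|z^k\rangle \in \mathbb{R}^{+*}$ and with both $\sup_k \langle z^k|d|z^k\rangle <\infty$ and $\sup_k \langle z^k|d|z^k\rangle^{-1} <\infty$ automatically belongs to $\operatorname{B}_{\res}^{+}(\mathcal{H})$, because $[d,\text{diagonal}] = 0$ puts $d$ trivially in $\operatorname{L}_{\res}(\mathcal{H})$, and satisfies $D(d) = d$. The first isomorphism theorem then gives the claimed identification $\operatorname{B}_{\res}^{+}(\mathcal{H})/\operatorname{N}^+_{\res}(\mathcal{H}) \simeq \operatorname{Diag}^+(\mathcal{H})$. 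The only subtle point is keeping track of the convention that $\{|n\rangle\}_{n\in\mathbb{Z}}$ is ordered by decreasing $n$ so that the notion of upper triangularity matches the one used in Section~\ref{Triangular truncations of operators}; once this is fixed, the argument is a direct matrix computation with no analytic subtlety.
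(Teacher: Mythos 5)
Your proof is correct and follows essentially the same route as the paper: both rest on the facts that the diagonal-part map $D$ is multiplicative on triangular operators, that its kernel in $\operatorname{B}_{\res}^{\pm}(\mathcal{H})$ is exactly $\operatorname{N}^{\pm}_{\res}(\mathcal{H})$, and that its range is the group of bounded positive definite diagonal operators. The only (cosmetic) difference is that you obtain normality for free from the kernel characterization and invoke the first isomorphism theorem explicitly, whereas the paper checks normality directly via the conjugation computation $D(ghg^{-1})=\mathrm{Id}$.
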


\begin{proof}
For a triangular operator $g\in \operatorname{B}_{\res}^{\pm}(\mathcal{H})$, the diagonal coefficients of $g$ and $g^{-1}$ are inverses of each other~:  $\langle n| g^{-1}n\rangle = \langle n| gn\rangle^{-1}$, $\forall n\in\mathbb{Z}$. Therefore, for any element $h\in \operatorname{N}^\pm_\res(\mathcal{H})$, the composed operator $g h g^{-1}$ has only 1's on it's diagonal and belong to 
$\operatorname{N}^\pm_\res(\mathcal{H})$.
This implies that $\operatorname{N}^\pm_\res(\mathcal{H})$ is a normal subgroup of $\operatorname{B}_{\res}^{\pm}(\mathcal{H})$.
Recall that $D$ denotes the  linear transformation consisting in taking the diagonal part of a linear operator (see equation \eqref{diagonal}).
%\begin{equation}\label{diagonal}
%\langle n| D(A) m\rangle = \left\{\begin{array}{cc} \langle n| A m\rangle &  \quad \textrm{if} \quad n=m\\ 0 & \quad\textrm{if}\quad n\neq m\end{array}\right.
%\end{equation}
Since $|\langle n| D(A) m\rangle|\leq \|A\|$ and $D(A)$ is diagonal, the linear transformation $D$ maps bounded operators to bounded operators. By the definition of $\operatorname{B}_{\res}^{\pm}(\mathcal{H}) $,  the range of $D~: \operatorname{B}_{\res}^{\pm}(\mathcal{H}) \rightarrow L_{\infty}(\mathcal{H})$ is the group of bounded linear positive definite operators which are diagonal with respect to the orthonormal basis $\{|z^{k}\rangle~: k\in\mathbb{Z}\}$. Moreover, the kernel of $D~: \operatorname{B}_{\res}^{\pm}(\mathcal{H}) \rightarrow L_{\infty}(\mathcal{H})$ is exactly $\operatorname{N}^\pm_\res(\mathcal{H})$.
\end{proof}

\begin{proposition}\label{CsB}
\begin{enumerate}
\item[(i)] The cell $\mathcal{C}_S$ is the orbit of $\mathcal{H}_S$ under $\operatorname{B}_{\res}^{+}(\mathcal{H})$.
\item[(ii)] The stratum $\Sigma_S$ is the orbit of $\mathcal{H}_S$ under $\operatorname{B}_{\res}^{-}(\mathcal{H})$.
\end{enumerate}
\end{proposition}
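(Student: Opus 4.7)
The plan is to reduce both statements to the standard basis description of Schubert cells and strata from \cite{PS88}, Section~7.2, and then exhibit explicit operators in $\operatorname{B}_{\res}^{\pm}(\mathcal{H})$ realising the required transitivity. Recall that for $S \in \mathcal{S}$, the cell $\mathcal{C}_S$ can be identified with the set of $W \in \operatorname{Gr}_{\res}(\mathcal{H})$ admitting an admissible basis of the form $\{w_s = z^s + \sum_{m \in S^c,\, m > s} \lambda_{s,m}\, z^m\}_{s \in S}$, while the stratum $\Sigma_S$ consists of the $W$ admitting a basis of the form $\{w_s = z^s + \sum_{n \in S^c,\, n < s} \mu_{s,n}\, z^n\}_{s \in S}$. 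In both descriptions the origin $\mathcal{H}_S$ corresponds to the trivial basis (all $\lambda=0$, resp.\ all $\mu=0$), so $\mathcal{H}_S \in \mathcal{C}_S \cap \Sigma_S$.

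For assertion (i), I would first verify that the orbit $\operatorname{B}_{\res}^{+}(\mathcal{H}) \cdot \mathcal{H}_S$ lies in $\mathcal{C}_S$. Any $b \in \operatorname{B}_{\res}^{+}(\mathcal{H})$ sends $z^s$ into $\overline{\textrm{span}\{z^m : m \geq s\}}$ with strictly positive diagonal entry $\langle s|bz^s\rangle > 0$, so $\{bz^s\}_{s \in S}$ is a basis of $b\mathcal{H}_S$ whose entries beyond position $s$ may be cleaned up, using the subgroup of $\operatorname{B}_{\res}^{+}(\mathcal{H})$ stabilising $\mathcal{H}_S$, into the cell form above; this shows $b\mathcal{H}_S \in \mathcal{C}_S$. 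Conversely, given $W \in \mathcal{C}_S$ with basis $\{w_s\}_{s \in S}$ of the stated form, I would define $b$ by $bz^s := w_s$ for $s \in S$ and $bz^n := z^n$ for $n \in S^c$; this $b$ is upper triangular with strictly positive diagonal and clearly satisfies $W = b\mathcal{H}_S$. The inverse $b^{-1}$ has the same triangular structure because $b - \operatorname{Id}$ is strictly upper triangular, so $(\operatorname{Id} + (b - \operatorname{Id}))^{-1}$ is given by a block-finite Neumann-type expansion in each column, staying inside $\operatorname{Id} + \mathfrak{b}_{\res}^{+}(\mathcal{H})$.

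Assertion (ii) is proven by the mirror argument: an element $b \in \operatorname{B}_{\res}^{-}(\mathcal{H})$ sends $z^s$ into $\overline{\textrm{span}\{z^m : m \leq s\}}$, so $b\mathcal{H}_S \in \Sigma_S$, and the reverse direction is obtained by setting $bz^s := w_s$ and $bz^n := z^n$ for $s \in S$, $n \in S^c$, using the stratum-form basis of $W$. Both directions are formally identical to the cell case with ``upper'' exchanged for ``lower''.

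The main obstacle is the verification that the explicitly constructed $b$ really lies in the restricted-class group, i.e.\ that $[d, b] \in \operatorname{L}_2(\mathcal{H})$. I would argue this by translating the defining condition $W \in \operatorname{Gr}_{\res}(\mathcal{H})$, namely that the orthogonal projection $p_- : W \to \mathcal{H}_-$ is Hilbert--Schmidt, into an $\ell^2$-summability statement on the coefficients: in case (i), $\sum_{s \in S \cap \mathbb{Z}_{\geq 0}} \sum_{m \in S^c \cap \mathbb{Z}_{<0}} |\lambda_{s,m}|^2 < \infty$, and analogously in case (ii). Since any $S \in \mathcal{S}$ differs from $\mathbb{Z}_{\geq 0}$ in only finitely many elements, only finitely many coefficients occupy the ``wrong-sign'' off-diagonal block that does not contribute to this Hilbert--Schmidt norm, and these produce a finite-rank perturbation; the remaining coefficients furnish exactly the Hilbert--Schmidt block of $[d, b]$. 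This gives $b \in \operatorname{B}_{\res}^{\pm}(\mathcal{H})$ and completes the identification of orbits with Schubert cells and strata.
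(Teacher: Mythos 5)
Your route is genuinely different from the paper's: the paper disposes of the statement in two lines by citing Propositions 7.3.3 and 7.4.1 of Pressley--Segal, which identify $\Sigma_S$ and $\mathcal{C}_S$ with the orbits of $\mathcal{H}_S$ under the unipotent groups $\operatorname{N}^{\pm}_{\res}(\mathcal{H})$, and then upgrades to $\operatorname{B}^{\pm}_{\res}(\mathcal{H})$ via the preceding proposition (every $b\in\operatorname{B}^{\pm}_{\res}(\mathcal{H})$ is a unipotent element times a positive diagonal operator, and diagonal operators fix $\mathcal{H}_S$). You instead try to re-derive the orbit statements from echelon-form bases. Your converse constructions (cell, resp.\ stratum, contained in the orbit) are essentially sound and can be made clean: the operator $F:=b-\operatorname{Id}$ you build maps $\mathcal{H}_S$ into $\overline{\operatorname{span}\{z^n : n\in S^c\}}$ and vanishes on that subspace, so $F^2=0$ and $b^{-1}=\operatorname{Id}-F$; in case (i) $F$ is even of finite rank, because there are only finitely many pairs $s\in S$, $m\in S^c$ with $m>s$, so the restricted condition is automatic and your summability discussion there is vacuous. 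In case (ii) you should also record why $b$ is bounded (this is part of admissibility of the basis, or follows from the open mapping theorem applied to the projection $W\to\mathcal{H}_S$). Be aware that your stated reason for invertibility, a Neumann-type expansion for the identity plus a strictly triangular operator, is not a valid general principle (the unilateral shift $S$ is strictly triangular and $\operatorname{Id}-S$ is not invertible); it is the identity $F^2=0$ that saves you.

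The genuine gap is in the forward inclusions, above all in (ii). That $b\,\mathcal{H}_S\in\Sigma_S$ for $b\in\operatorname{B}^{-}_{\res}(\mathcal{H})$ does not follow merely from the fact that $b$ lowers indices, and it is not ``formally identical'' to the cell case. In case (i) the clean-up is a finite-dimensional Gaussian elimination: choosing $N$ with $\{n\geq N\}\subset S$, triangularity of $b$ and $b^{-1}$ gives $b\,\mathcal{H}_{\geq N}=\mathcal{H}_{\geq N}\subset b\,\mathcal{H}_S$, where $\mathcal{H}_{\geq N}:=\overline{\operatorname{span}\{z^m : m\geq N\}}$, so only the finitely many vectors $bz^s$ with $s<N$ have to be echelonized modulo this tail. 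In case (ii) there is no such reduction: you must re-echelonize infinitely many vectors and then prove that the resulting family is again an admissible basis of $b\,\mathcal{H}_S$ of the prescribed shape, equivalently that the projection of $b\,\mathcal{H}_S$ onto $\mathcal{H}_S$ along $\overline{\operatorname{span}\{z^n : n\in S^c\}}$ is still an isomorphism. This is not a consequence of triangularity alone (the compression of $b$ to $\mathcal{H}_S$ is triangular with positive diagonal, but that does not make it invertible), and it is precisely the nontrivial content of Proposition 7.3.3 in Pressley--Segal; as written, your argument either assumes it silently or reuses the statement you set out to reprove. Either cite that proposition for the inclusion of the orbit in the stratum (your explicit construction then gives the reverse inclusion), or supply this missing argument.
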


\begin{proof}
It follows from Proposition~7.4.1 in \cite{PS88}, that the cell $\mathcal{C}_S$ is the orbit of $\mathcal{H}_S$ under $\operatorname{N}^+_\res(\mathcal{H})$. Symmetrically, it follows from Proposition~7.3.3 in \cite{PS88}, that the stratum $\Sigma_S$ is the orbit of $\mathcal{H}_S$ under $\operatorname{N}^-_\res(\mathcal{H})$.
Since the diagonal part of an operator in $\operatorname{B}_{\res}^{\pm}(\mathcal{H})$ acts trivially, one gets the same result replacing $\operatorname{N}^\pm_\res(\mathcal{H})$ by $\operatorname{B}_{\res}^{\pm}(\mathcal{H})$.
\end{proof}

%\begin{proposition}[Proposition~7.3.3 in \cite{PS88}]
%\begin{enumerate}
%\item[(i)] The stratum $\Sigma_S$ is a contractible closed submanifold of the open set $\operatorname{U}_S$, of codimension $l(S)$.
%\item[(ii)] $\Sigma_S$ is  the orbit of $\mathcal{H}_S$ under $\operatorname{N}_-$.
%\item[(iii)] If $W\in\operatorname{U}_S$, then $S\geq S_W$.
%\item[(iv)] The closure of $\Sigma_S$ is the union of the strata $\Sigma_{S'}$, with $S'\geq S$.
%\end{enumerate}
%\end{proposition}

%\subsection{Double extension of the restricted linear algebra}%, link with $\mathfrak{g}_{\rm DO}$ and $\mathfrak{g}_{\rm INT}$}
%\subsection{Lax equations on the restricted Grassmannian}
%cite Golinski
%\section{Complete integrability : algebraic vs formal; the infinite-dimensional case}

Recall that the restricted Grassmannian is a Hilbert manifold endowed with the Poisson structure constructed in Theorem~\ref{Poisson}. In this Hilbert context, the Poisson tensor 
$\pi^{Gr_{\res}}$ defines a bundle map $\left(\pi^{Gr_{\res}}\right)^\sharp~:T^*\operatorname{Gr}_{\res}(\mathcal{H}) \rightarrow T\operatorname{Gr}_{\res}(\mathcal{H})$. The range of this map is called the \textbf{characteristic distribution} of the Poisson structure, and the maximal integral submanifolds of this distribution are called \textbf{symplectic leaves} (see \cite{OR03} Section~7 for a general discussion on characteristic distributions and symplectic leaves in the Banach context).
\begin{theorem}\label{cells}
The Schubert cells $\{\mathcal{C}_S, S\in \mathcal{S}\}$ are the symplectic leaves of $\operatorname{Gr}_{\res}(\mathcal{H})$.
\end{theorem}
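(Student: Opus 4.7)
The plan is to show that at every point $W \in \operatorname{Gr}_{\res}(\mathcal{H})$, the image of the Poisson anchor $(\pi^{\operatorname{Gr}_\res})^\sharp_W$ coincides with the tangent space $T_W(\operatorname{B}_\res^+ \cdot W)$ of the $\operatorname{B}_\res^+(\mathcal{H})$-orbit through $W$. By Proposition~\ref{CsB}(i), these orbits are precisely the Schubert cells, and by \cite{PS88} Chapter~7 they are locally closed smooth submanifolds of $\operatorname{Gr}_\res(\mathcal{H})$ partitioning it; the desired equality will therefore identify them with the maximal integral submanifolds of the characteristic distribution, i.e., the symplectic leaves.

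To compute $\text{Im}((\pi^{\operatorname{Gr}_\res})^\sharp_{gH})$ explicitly, I would start from the definition $\pi^{\operatorname{Gr}_\res}_{gH}(\alpha,\beta) = \Pi^{\operatorname{U}_\res}_r(g)(R_g^*p^*\alpha, R_g^*p^*\beta)$ and use that $R_g^*p^*\alpha = \operatorname{Ad}^*(g^{-1})[x_1] = [gx_1g^{-1}]$ for a representative $x_1 \in \mathfrak{h}^0 \subset \operatorname{L}_{1,2}(\mathcal{H})$. Using the isotropy of $\mathfrak{u}_2(\mathcal{H})$ and $\mathfrak{b}_2^+(\mathcal{H})$ under $\Im\Tr$ together with cyclicity of the trace, the pairing reduces to a single linear combination of $p_{\mathfrak{b}_2^+}$-projections and conjugations by $g$. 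Identifying $T_{gH}\operatorname{Gr}_\res$ with $L_{g*}\mathfrak{m}_-$, where $\mathfrak{m}_- = \{\left(\begin{smallmatrix} 0 & 0 \\ C & 0 \end{smallmatrix}\right) : C \in \operatorname{L}_2(\mathcal{H}_+,\mathcal{H}_-)\}$ is a transversal to the Lie algebra of $\operatorname{P}_\res(\mathcal{H})$ inside $\operatorname{L}_\res(\mathcal{H})$, the anchor image is then a subspace of $L_{g*}\mathfrak{m}_-$ parametrized by $\mathfrak{b}_2^+(\mathcal{H})$. Meanwhile the orbit tangent is $T_{gH}(\operatorname{B}_\res^+ \cdot gH) = L_{g*}\{(g^{-1}Xg)_{-+} : X \in \mathfrak{b}_\res^+(\mathcal{H})\}$, obtained from the infinitesimal $\operatorname{B}_\res^+$-action whose fundamental vector field is $\hat X_{gH} = -L_{g*}[(g^{-1}Xg)_{-+}]$. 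The inclusion $\text{Im}((\pi^{\operatorname{Gr}_\res})^\sharp_{gH}) \subseteq T_{gH}(\operatorname{B}_\res^+ \cdot gH)$ follows because $\mathfrak{b}_2^+(\mathcal{H}) \subseteq \mathfrak{b}_\res^+(\mathcal{H})$ and the anchor formula exhibits each of its tangent vectors as the $(-,+)$-block of $g^{-1}Xg$ for a suitable $X$ built from the projections; for the reverse inclusion, one uses the $\operatorname{N}_\res^+$-parametrization of the Schubert cell through $gH$ (see \cite{PS88} Chapter~7, Proposition~7.4.1), which presents $T_{gH}\mathcal{C}_S$ as a Hilbert--Schmidt space, so that every orbit direction is already realized from $\mathfrak{b}_2^+(\mathcal{H})$.

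The main obstacle will be the precise matching of the two subspaces, which is delicate because the triangular truncation is unbounded on $\operatorname{L}_1(\mathcal{H})$ (see Remark~\ref{prendre_triangle} and Theorem~\ref{ccl}), so no global Iwasawa decomposition of $\operatorname{L}_\res(\mathcal{H})$ into $\mathfrak{u}_\res\oplus\mathfrak{b}_\res^+$ is available; the proof must exploit that the projections $p_{\mathfrak{u}_2}$ and $p_{\mathfrak{b}_2^+}$ act on the Hilbert--Schmidt ideal $\operatorname{L}_2(\mathcal{H})$, where the triangular truncation is bounded (Proposition~\ref{triples}). A practical implementation is to verify the equality at the distinguished base point $\mathcal{H}_S$ of each Schubert cell (taking $g = g_S$ to be a finite permutation-type unitary), where both sides match combinatorially the set of inversions of $S$, and then to propagate along the cell using the $\operatorname{B}_\res^+$-action together with the Poisson compatibility of Theorem~\ref{actionB}.
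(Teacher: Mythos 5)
Your overall strategy --- identify the image of $(\pi^{\operatorname{Gr}_\res})^\sharp$ at every point with the tangent space to the $\operatorname{B}^+_{\res}(\mathcal{H})$-orbit and then invoke Proposition~\ref{CsB} --- is the right idea, and it is precisely the content that the paper outsources: its proof consists of citing Theorem~6 of \cite{P12} for integrability of the characteristic distribution (using that $\operatorname{Gr}_{\res}(\mathcal{H})$ is a Hilbert manifold), asserting that the leaves are the $\operatorname{B}^+_{\res}$-orbits ``as in the finite-dimensional case'' via Theorem~4.6(3) of \cite{LW90}, and then applying Proposition~\ref{CsB}. However, your proposal has a genuine gap: the pointwise equality $\mathrm{Im}\,(\pi^{\operatorname{Gr}_\res})^\sharp_{gH}=T_{gH}\bigl(\operatorname{B}^+_{\res}(\mathcal{H})\cdot gH\bigr)$, which you yourself flag as ``the main obstacle'', is never actually established. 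Saying that the anchor image is ``a subspace of $L_{g*}\mathfrak{m}_-$ parametrized by $\mathfrak{b}_2^+(\mathcal{H})$'' and that every orbit direction ``is already realized from $\mathfrak{b}_2^+(\mathcal{H})$'' only shows that orbit directions come from Hilbert--Schmidt generators; it does not show that each fundamental vector field $\hat X_{gH}$, $X\in\mathfrak{b}_2^+(\mathcal{H})$, lies in the image of $\pi^\sharp$, which requires the explicit anchor computation (the analogue of the Lu--Weinstein pointwise identity) that is the whole point of the step.

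Moreover, the fallback you propose --- verify the equality at the base point $\mathcal{H}_S$ and ``propagate along the cell'' using Theorem~\ref{actionB} --- fails for the inclusion you need most. A Poisson action does not transport the Poisson tensor by pushforward: Poisson-ness of $a_B$ gives $\pi^{\operatorname{Gr}_\res}_{W\cdot b}=(a_b)_*\pi^{\operatorname{Gr}_\res}_{W}+(a_W)_*\pi^{\operatorname{B}^+_\res}_{b}$, and since both terms take values tangent to the orbit, the inclusion $\mathrm{Im}\,\pi^\sharp\subseteq T(\mathrm{orbit})$ does propagate from $\mathcal{H}_S$; but the reverse inclusion $T(\mathrm{orbit})\subseteq\mathrm{Im}\,\pi^\sharp$ does not, because the two terms can interfere, so the rank of $\pi$ at other points of the cell is not controlled by its value at the base point. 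Without the reverse inclusion you only conclude that each symplectic leaf is contained in a Schubert cell, not that it equals one; this inclusion must be obtained by a direct computation at each $gH$ (as in \cite{LW90}), or by citing that argument as the paper does. (Your bypass of \cite{P12} would be legitimate once the pointwise equality is proved, since a distribution coinciding with the tangent spaces of a partition into connected submanifolds is automatically integrable with those submanifolds as its maximal integral manifolds.)
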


\begin{proof}
The integrability of the characteristic distribution follows from Theorem~6 in \cite{P12}, since $\operatorname{Gr}_{\res}(\mathcal{H})$ is a Hilbert manifold. 
%Moreover $i_{[x_1]}\tilde{\pi}_r(g) =- g\,p_{\mathfrak{u}_2^+}(g^{-1}\,p_{\mathfrak{b}_2^+}(x_1)\, g)\,g^{-1}$
The fact that the symplectic leaves are the orbits of $\operatorname{B}^+_{\res}(\mathcal{H})$ follows from the construction as in the finite-dimensional case (see Theroem~4.6 (3) in \cite{LW90}). It follows from Proposition~\ref{CsB} that the orbits of $\operatorname{B}^+_{\res}(\mathcal{H})$ coincide with the Schubert cells $\{\mathcal{C}_S, S\in \mathcal{S}\}$.
\end{proof}

%Let us recall the construction of G.~Segal and G.~Wilson explained in \cite{SW85} which gives a correspondence between some elements of the restricted Grassmannian $\operatorname{Gr}_{\res}(\mathcal{H})$ and solutions of the KdV hierarchy. 

%We will need to introduce some additional notation.
%In this section, $\mathcal{H} = L^{2}(\mathbb{S}^1, \mathbb{C})$, and the inner product of two elements $f$ and $g$ in $L_2(\mathbb{S}^1, \mathbb{C})$ reads $\langle f , g \rangle = \int_{\mathbb{S}^1} \overline{f(z)} g(z) d\mu(z)$, where $d\mu(z)$ denotes the Lebesgue mesure on the circle. Set $\mathcal{H}_+ = \textrm{span}\{z^n, n\geq 0\}$ and $\mathcal{H}_- = \textrm{span}\{z^n , n<0\}$.
%
\begin{remark}{\rm
Let $\Gamma^+$ be the group of real-analytic functions $g~:\mathbb{S}^1 \rightarrow \mathbb{C}^*$, which extend to holomorphic functions $g$ from the unit disc $\mathbb{D} = \{z\in\mathbb{C}~:|z|\leq 1\}$ to $\mathbb{C}^*$, satisfying $g(0) = 1$.  Any such function $g\in\Gamma^+$ can be written $g = e^f$, where $f$ is a holomorphic function on $\mathbb{D}$ such that $f(0) = 0$.
The group $\Gamma^+$ acts by multiplication operators on $\mathcal{H}$ and therefore also on $\operatorname{Gr}_{\res}(\mathcal{H})$. As explained in \cite{SW85} (see  Proposition~5.13), the action of $\Gamma^+$  on (some subgrassmannians of) $\operatorname{Gr}_{\res}(\mathcal{H})$ generates the KdV hierarchy. It is easy to see that $\Gamma^+\subset \operatorname{B}^+_{\res}(\mathcal{H})$.
Indeed, by Proposition~2.3 in \cite{SW85}, $\Gamma^+ \subset \operatorname{GL}_{\res}(\mathcal{H}) := \operatorname{GL}(\mathcal{H})\cap \operatorname{L}_{\res}(\mathcal{H})$. Since $g\in\Gamma^+$ is holomorphic in $\mathbb{D}$ and satisfies $g(0) = 1$, the Fourier decomposition of $g$ reads $g(z) = 1 + \sum_{k>0} g_k z^k$. Therefore $g(z)\cdot z^n = z^n + \sum_{k>0} g_k z^{k+n}$. It follows that the multiplication operator by $g$ is a upper triangular operator $M_g\in \operatorname{B}^+_{\res}(\mathcal{H})$, with diagonal elements equal to $1$. Therefore $\Gamma^+\subset \operatorname{B}^+_{\res}(\mathcal{H})$ and, by Theorem~\ref{actionB}, the action of  $\operatorname{B}^+_{\res}(\mathcal{H})$ on the restricted Grassmannian is a Poisson action. }
\end{remark}

\begin{acknowledgment}{\footnotesize \rm
%The acknowledgments will be added after the reviewing process is completed.
I would first and foremost like to thank T.S.~Ratiu  who brought the theory of Poisson--Lie groups to my attention when I was a post-graduate in Lausanne. It was only much later that I understood it to be the key to understanding   \cite{SW85} via the Poisson theory. I could not have written my paper without the help of D.~Belti\c t\u a who not only gave me the references  indicating where the problem of triangulating an operator was studied, but who also gave me the electronic version of the documents when I was unable to go to any library. My paper was finally produced thanks to the support of the CNRS, of the University of Lille (France), in particular thanks to the CEMPI Labex  (ANR-11-LABX-0007-01), as well as to the Pauli Institute in Vienna (Austria) offering very nice working conditions. The special assistance given by the CNRS for the promotion of women scientists was crucial for the inception of this paper  which also benefited from the exchange in the early months of 2015 with other researchers during the Shape Analysis programme at the Erwin Schrodinger Institute  in Vienna. The discussions with C. Vizman, K.-H. Neeb and F. Gay-Balmaz were particularly fruitful as were the WGMP lectures (exchanges with D.~Belti\c t\u a, T.~Golinski,  F.\,Pelletier, C.~Roger and, last but not least, G.~Larotonda). I~am grateful to the anonymous referees for their pertinent comments which made me improve the quality of this paper. Finally I heartily thank Y.~Kosmann-Schwarzbach and D.~Bennequin for their  valuable appreciations of a presentation of my paper.

%Our first thanks go to T.S.~Ratiu who put the notion of Poisson--Lie groups to our attention during our Post-doc in Lausanne. As I understood many years later, this is a central notion in understanding \cite{SW85} in a Poisson fashion.
%The present paper could not have been written without the help of D.~Belti\c t\u a, who not only mentionned to us the references
%were the problem of taking the ``triangular" part of an operator was adressed, but for some of them even found electronic versions for us at a time were
%%, stuck at home by pregnancy, 
%we add little access to the literature. The present paper finally came to live thanks to the support of the CNRS Institution, the University of Lille, France, notably through the Labex CEMPI  (ANR-11-LABX-0007-01)  and, last but not least, the Pauli Institut in Vienna, Austria. The special support offered by the CNRS Institution to women in science played a decisive role in the delivery of this paper. This paper benefited from interactions with other researchers during the program on Shape Analysis at the Erwin Schr\"odinger Institut, Vienna,  at the beginning of 2015 (especially from discussions with C. Vizman, K.-H. Neeb and F. Gay-Balmaz), as well as during WGMP conferences (exchanges with D.~Belti\c t\u a, T.~Golinski, G.~Larotonda, F.\,Pelletier, C.~Roger). We thank the anonymous referees for their relevant comments that help improve the paper. Finally we would like to acknowledge the precious feedback of Y.~Kosmann-Schwarzbach and D.~Bennequin after a talk devoted to this paper.  
}
\end{acknowledgment}

\end{document}